\newcommand{\mytitle}{Root Statistics of Random Polynomials with
  Bounded Mahler Measure} 
\newcommand{\keywords}{Pfaffian point process, Mahler measure, random
  polynomial, eigenvalue statistics, skew-orthogonal polynomials,
  matrix kernel.}
\newtheorem{thm}{Theorem}[section]
\newtheorem{cor}[thm]{Corollary}
\newtheorem{lemma}[thm]{Lemma}
\newtheorem{prop}[thm]{Proposition}
\newtheorem{thm*}{Theorem}[]
\newtheorem{cor*}[thm*]{Corollary}
\newtheorem{claim*}[thm*]{Claim}
\newtheorem{lemma*}[thm*]{Lemma}
\newtheorem{prop*}[thm*]{Proposition}
\newtheorem{conj*}[thm*]{Conjecture}
\theoremstyle{definition}
\newtheorem{problem*}{Problem}[section]
\newtheorem{question*}{Question}[section]
\newtheorem{defn*}{Definition}
\theoremstyle{remark}
\newtheorem*{rem}{Remark}
\newcommand{\qq}[1]{\qquad \mbox{#1} \qquad}
\newcommand{\BB}[1]{\ensuremath{\mathbb{#1}}}
\newcommand{\HH}{\ensuremath{\BB{H}}}
\newcommand{\N}{\ensuremath{\BB{N}}}
\newcommand{\R}{\ensuremath{\BB{R}}}
\newcommand{\Z}{\ensuremath{\BB{Z}}}
\newcommand{\C}{\ensuremath{\BB{C}}}
\newcommand{\T}{\ensuremath{\BB{T}}}
\newcommand{\D}{\ensuremath{\BB{D}}}
\newcommand{\Om}{\ensuremath{\BB{O}}}
\newcommand{\bs}{\ensuremath{\boldsymbol}}
\newcommand{\wt}{\ensuremath{\widetilde}}
\newcommand{\Arg}{\ensuremath{\mathrm{Arg}}}
\newcommand{\upd}{\ensuremath{\mathrm{d}}}
\newcommand{\la}{\ensuremath{\langle}}
\newcommand{\ra}{\ensuremath{\rangle}}
\newcommand{\transpose}{\ensuremath{\mathsf{T}}}
\newcommand{\ip}[1]{\mathrm{Im}(#1)}
\newcommand{\rp}[1]{\mathrm{Re}(#1)}
\newcommand{\wh}[1]{\widehat{#1}}
\newcommand{\G}[1]{\Gamma\left( #1 \right)}
\DeclareMathOperator{\sgn}{sgn}
\DeclareMathOperator{\vol}{vol}
\DeclareMathOperator{\Tr}{Tr}
\DeclareMathOperator{\Pf}{Pf}
\numberwithin{equation}{section} 
\numberwithin{equation}{section}
\begin{document}
\title{\bfseries\sffamily \mytitle}  
\author{{\sc Christopher D.~Sinclair} and {\sc Maxim L.~Yattselev}}
\maketitle

\begin{abstract}
  The Mahler measure of a polynomial is a measure of complexity formed
  by taking the modulus of the leading coefficient times the modulus
  of the product of its roots outside the unit circle.  The roots of a
  real degree $N$ polynomial chosen uniformly from the set of
  polynomials of Mahler measure at most 1 yields a Pfaffian point
  process on the complex plane.  When $N$ is large, with probability
  tending to 1, the roots tend to the unit circle, and we investigate
  the asymptotics of the scaled kernel in a neighborhood of a point on
  the unit circle.  When this point is away from the real axis (on
  which there is a positive probability of finding a root) the scaled
  process degenerates to a determinantal point process with the same
  local statistics ({\em i.e.} scalar kernel) as the limiting process
  formed from the roots of complex polynomials chosen uniformly from
  the set of polynomials of Mahler measure at most 1.  Three new matrix
  kernels appear in a neighborhood of $\pm 1$ which encode information
  about the correlations between real roots, between complex roots and
  between real and complex roots.  Away from the unit circle, the
  kernels converge to new limiting kernels, which imply among other
  things that the expected number of roots in any open subset of $\C$
  disjoint from the unit circle converges to a positive number.  We
  also give ensembles with identical statistics drawn from
  two-dimensional electrostatics with potential theoretic weights, and
  normal matrices chosen with regard to their topological entropy as
  actions on Euclidean space.
\end{abstract}

\noindent {\bf Keywords:} \keywords

\vspace{.25cm}

\noindent {\bf MSC2010:} 15B52, 11C08, 11G50, 82B21, 60G55, 33C15, 42C05.

\vspace{.5cm}

\tableofcontents

\section{Introduction}

The study of roots of random polynomials is an old subject extending
back at least as far as the early 1930s.  Several early results
revolve around estimating, as a function of degree $N$, the number of
real roots of polynomials with variously proscribed integer or real
coefficients.  These begin with Bloch and P\'olya, who gave bounds for
the maximum number of real roots of polynomials with coefficients in
$\{-1, 0, 1\}$ (their lower bound for this maximum being
$O(N^{1/4}/\sqrt{\log N})$) and showed that the expected number of
real roots does not exceed is $O(\sqrt{N})$ \cite{MR1576817}. Shortly
thereafter, Littlewood and Offord proposed the same sort of questions,
but for polynomials with independent standard normal coefficients, and
for coefficients chosen uniformly from $[-1,1]$ or $\{-1, 1\}$.  They
proved that the expected number of real roots in these cases is
eventually bounded by $25 (\log N)^2 + 12 \log N$, but were unable to
determine if this bound was of the right order \cite{MR1574980,
  PSP:2031108}.  In the 1940s, Kac determined not only the correct
asymptotic ($2 \log N/\pi$) for the expected number of real roots in
the i.i.d.~normal case, but in
fact gave, for each fixed $N$, an explicit function when integrated
over an interval gives the expected number of real roots in that
interval \cite{MR0007812}. (Such a function is called an {\em
  intensity} or {\em correlation function}, and will be central in the
present work).  Kac later extended his results for coefficients which
are (arbitrary) i.i.d. continuous random variables with unit variance
\cite{MR0030713} (in particular the asymptotic estimate remains
unchanged in this situation).  Since then, many results on the
expected number of real roots of random polynomials have been
presented for various meanings of the word `random'; of particular
note are \cite{MR0073870}, \cite{MR1231689} and \cite{MR1915821}.

An obvious question beyond `how many roots of a random polynomial are
real?' is `where do we expect to find the roots of a random
polynomial?'.  Certainly, knowing that some expected number of roots
are real gives some information about where we expect to find them.
Moreover, Kac's formula for the intensity, when applicable, gives
detailed information about where the real roots are expected to
reside.  In the 1990s Shepp and Vanderbei extended Kac's intensity
result to the complex roots of random polynomials with i.i.d.~real
coefficients by producing a complimentary intensity supported on $\C
\setminus \R$ \cite{MR1308023}.  This together with Kac's intensity
specifies the spatial density of roots of random Gaussian complex
polynomials, and in particular shows that such roots have a tendency,
when $N$ is large, to clump near the unit circle.  Some part of this
observation was made much earlier---the early 1950s---by Erd\H{o}s and
Tur\'an, who prove (in their own paraphrasing) 
\begin{quote}
$\ldots$ that the roots of a polynomial are
uniformly distributed in the different angles with vertex at the
origin if the coefficients ``in the middle'' are not too large
compared with the extreme ones \cite{MR0033372}.
\end{quote}
Strictly speaking, the result of Erd\H{o}s and Tur\'an is not a result
about random polynomials, but rather gives an upper bound, as a
function of the coefficients of the polynomials, for the difference
between the number of roots in an angular segment of a polynomial from
the number assuming radial equidistribution, .  This can be translated
into a result about random polynomials given information about the
distribution of coefficients.

Erd\H{o}s and Tur\'an's result presaged the fact that for many types
of random polynomials, the zeros have a tendency to be close to
uniformly distributed on the unit circle, at least when the degree is
large.  One way of quantifying this accumulation is to form a
probability measure from a random polynomial by placing equal point
mass at each of its roots.  This measure is sometimes called the {\em
  empirical} measure, and given a sequence of polynomials of increasing
degree we can ask whether or not the resulting sequence of empirical
measure weak-$\ast$ converges (or perhaps in some other manner) to
uniform measure on the unit circle (or perhaps some other measure).
Given a random sequence of such polynomials we can then investigate in
what probabilistic manner (almost surely, in probability, etc.) this
convergence occurs, if it occurs at all.  Another way of encoding
convergence of roots to the unit circle (or some other region) can be
given by convergence of intensity/correlation functions (assuming such
functions exist).  For Gaussian polynomials this convergence of
intensity functions appears in the work of Shepp and Vanderbei, and
was later extended to i.i.d.~coefficients from other stable
distributions (and distributions in their domain of attraction) by
Ibragimov and Zeitouni in \cite{MR1390040}.  For more general
conditions which imply convergence of roots to the unit circle, see
the recent work of Hughes and Nikeghbali \cite{MR2422348}.

The random polynomials we consider here will {\em not} have
i.i.d.~coefficients---a situation first considered rigorously in
generality by Hammersley\cite{MR0084888}.  Our polynomials will be
selected uniformly from a certain compact subset of $\R^{N+1}$ (or
$\R^N$) as identified with coefficient vectors of degree $N$
polynomials (or monic degree $N$ polynomials).  Specifically we will
be concerned with polynomials chosen uniformly from the set with
Mahler measure at most 1.  Definitions will follow, but for now we
define the Mahler measure of a polynomial to be the absolute value of
the leading coefficient times the modulus of the product of the roots
outside the unit circle\footnote{It goes without saying that Mahler
  measure is not a measure in the sense of integration theory.
  Perhaps a better name would be `Mahler height' but that ship has
  already sailed.}.  The set of coefficient vectors of degree $N$
polynomials with Mahler measure at most 1, which we denote for the
moment by $B^{(N)}$, is a compact subset of $\R^{N+1}$, and we will
primarily be concerned with the roots of polynomials and monic
polynomials chosen uniformly from this region, especially in the limit
as $N \rightarrow \infty$.  We remark that Mahler measure is
homogeneous, and thus the set of polynomials with Mahler measure
bounded by $T > 0$ is a dilation (or contraction) of $B^{(N)}$.  That
is, whether one chooses uniformly from the set of Mahler measure at
most 1 or $T$, the distribution of roots is the same.  (This latter
fact is {\em not} true for monic polynomials).

The choice of this region is not arbitrary; Mahler measure is an
important {\em height}, or measure of complexity, of polynomials, and
appears frequently in the study of integer polynomials and algebraic
numbers.  Of particular note in this regard (a result of Kronecker,
though not phrased in this manner) is that the set of integer
polynomials with Mahler measure equal to 1 is exactly equal to the
product of monomials and cyclotomic polynomials; that is an integer
polynomial with Mahler measure 1 has all roots in $\T \cup \{0\}$
where $\T$ is the unit circle \cite{kron1957}.  An unresolved problem,
posed by D.H.~Lehmer in 1933, is to determine whether or not 1 is a
non-isolated point in the range of Mahler measure restricted to
integer polynomials\footnote{The current reigning champion
  non-cyclotomic polynomial with smallest Mahler measure is $z^{10} +
  z^9 - z^7 - z^6 - z^5 - z^4 - z^3 + z + 1$ and has Mahler measure
  $\approx 1.18$.  Remarkably, this polynomial was discovered by
  Lehmer in 1933, and has survived the advent of computers.}
\cite{MR1503118}.  On one hand, this is a question about how the sets
$B^{(N)}$ are positioned relative to the integer lattices $\Z^N$---in
particular if we denote by $r_N > 1$ the smallest number such that the
dilated star body $r_N B^{(N)}$ contains an integer polynomial with a
non-cyclotomic factor, Lehmer's question reduces to whether or not
$r_N \rightarrow 1$.  On the other hand, since Mahler measure is a
function of the roots of a polynomial, Lehmer's problem can be
translated as a question about how quickly the roots of a sequence of
non-cyclotomic polynomials can approach the unit circle.

One motivation for studying the roots of polynomials chosen uniformly
from $B^{(N)}$ is that such results might suggest analogous results
for integer polynomials with small Mahler measure.  
Lehmer's problem has been resolved for various classes of polynomials,
for instance a sharp lower bound for the Mahler measure of
non-cyclotomic polynomials with all real roots has been given by
Schinzel \cite{MR0360515}, and a sharp lower bound for non-cyclotomic,
non-reciprocal polynomial was given by Smyth \cite{MR0289451}; both of
these results appeared in the early 1970s, and reflect the fact that
polynomials with small Mahler measure have roots which are in some
manner constrained.  Along these lines, a result of Langevin from the
1980s says that, the roots of a sequence of irreducible integer
polynomials with unbounded degree and bounded Mahler measure cannot
avoid any open set in $\C$ which contains a point on the unit circle
\cite{MR812558}.  In fact, the result is stronger: any sequence of
irreducible integer polynomials with unbounded degree whose roots
avoid such a set have Mahler measure which grows exponentially with
the degree.  A more recent result of Bilu, and one which is
similar in spirit to that of Erd\H{o}s and Tur\'an, states that the
empirical measures of any sequence of irreducible, integer polynomials
with increasing degree and Mahler measure tending to 1 converges
weak-$\ast$ to uniform measure on the unit circle \cite{MR1470340}.

Another motivation for studying the roots of polynomials chosen
uniformly from $B^{(N)}$ comes from random matrix theory.  Indeed, the
results presented here will be familiar, in form at least, to results
about the eigenvalues of random matrices.  It is beyond the scope of
the current work to present a survey of random matrix theory (see
however the collection \cite{oxford} for a glimpse into the current
state of random matrix theory and its applications).  We will,
however, underscore two major themes here.  The first is, for certain
very well-behaved ensembles\footnote{{\em Ensemble} is physics
  parlance for a probability space.} of random matrices, the
intensity/correlation functions are expressible in terms of
determinants or Pfaffians of matrices formed from a single object---a
{\em kernel}.  Pioneering work on such {\em determinantal} and {\em
  Pfaffian} ensembles was done in the 1960s and 1970s by Mehta
\cite{MR0112645, MR0112895, MR0277221}, Gaudin \cite{MR0112895,
  Gaudin1961447} and Dyson \cite{MR0143558, MR0278668}.  For these
ensembles, this kernel depends on the size of the matrices, and the
limiting eigenvalue statistics can be determined from the limit of
this kernel in various scaling regimes.  The scaled limits of these
kernels yield {\em limiting, local} eigenvalue statistics on scales
where the eigenvalues have some prescribed spacing or density.  We
will find that our ensemble of random polynomials has such a kernel
(of the Pfaffian form) and will present the scaling limits of this
kernel here.

A second major theme in random matrix theory is that of {\em
  universality}.  Loosely stated, universality says that the limiting,
local statistics of eigenvalues of ensembles of random matrices fall
into one of a handful of {\em universality classes} \footnote{We are
  being woefully imprecise here.  For instance, Lubinsky
  \cite{Lub09} showed that each reproducing kernel of a de Branges
  space gives rise to a universality class that may arise in the bulk
  (that is, in the limiting support of eigenvalues) for some unitary
  ensemble; however, he also showed \cite{Lub12} that in measure in
  the bulk it is always the universality class of the ``sine
  kernel''.} based on large scale structure (for instance symmetries
on the entries which may geometrically constrain the position of
eigenvalues) but largely independent of the actual distribution on the
matrices.  (See Kuijlaars' essay \cite{MR2932626} for a more precise
definition of universality).  Thus, a universality class is akin to
the basin of attraction for stable distributions.

Universality is important in combination with the identification of
ensembles whose limiting local statistics are well understood.  Such
ensembles, for instance those which have kernels whose scaling limits
are explicitly described, play the role of stable distributions, in
the sense that they provide prototypes for their universality class.
We will present an ensemble of random matrices whose eigenvalue
statistics are identical to those of the roots of polynomials chosen
at random from $B^{(N)}$.  Thus, the results here have repercussion
beyond just the study of the statistics of roots of polynomials with
bounded Mahler measure, but also as a prototypical ensemble in a newly
discovered universality class.

\subsection{Mahler measure}

The Mahler measure of a polynomial $f(z) \in \C[z]$ is given by
\[
M(f) = \exp\left\{
\int_0^1 \log|f(e^{2 \pi i \theta})| \, \upd\theta
\right\}.
\]
By Jensen's formula, if $f$ factors as $f(z) = a \prod_{n=1}^N (z - \alpha_n)$, then
\begin{equation}
\label{eq:25}
M(f) = |a|\prod_{n=1}^N \max\big\{1, |\alpha_n| \big\}.
\end{equation}
Mahler measure is not a measure in the sense of measure theory, but an example of a height---a measure of
complexity---of polynomials, and is of primary interest when
restricted to polynomials with integer (or other arithmetically
defined) coefficients.  

In \cite{MR1868596}, Vaaler and Chern compute the volume (Lebesgue
measure) of the set of real degree $N$ polynomials (as identified with
a subset of $\R^{N+1}$ via their coefficient vectors).  This volume
arises in the main term for the asymptotic estimate for the number of
integer polynomials of degree $N$ with Mahler measure bounded by $T$
as $T \rightarrow \infty$.  Amazingly, Chern and Vaaler's calculation
showed that this volume was a rational number with a simple product
description for each $N$.  A similar simple expression was found for the
related volume of monic polynomials.

The first author in \cite{sinclair-2005} gave a Pfaffian formulation
for Chern and Vaaler's product formulation, which was later shown to
be related to a normalization constant for an ensemble of random
matrices in \cite{sinclair-2007}.  

The purpose of this article is to explore the statistics of zeros of
polynomials chosen at random from Chern and Vaaler's volumes,
especially in the limit as $N \rightarrow \infty$ in an appropriate
scaling regime.  In fact, we will look at a natural one-parameter
family of ensembles of random polynomials which interpolate between
the volumes of monic and non-monic polynomials considered by Chern and
Vaaler.

We will also introduce ensembles of random matrices and a
two-dimensional electrostatic model which have the same statistics as
our ensembles of random polynomials.  

\subsection{Volumes of Star Bodies}

Mahler measure is an example of a distance function in the sense of
the geometry of numbers, and therefore, when restricted to coefficient
vectors of degree $N$ polynomials satisfies all the axioms of a vector
norm except the triangle inequality.  Specifically, $M$ is continuous,
positive definite and homogeneous: $M(c f) = |c| M(f)$.  We will
generalize the situation slightly by introducing a parameter $\lambda
\geq 0$ and define the $\lambda$-{\em homogeneous} Mahler measure by
\begin{equation}
\label{eq:1}
M_{\lambda}\left(a \prod_{n=1}^N (z - \alpha_n)\right) = |a|^{\lambda}
\prod_{n=1}^N \max\big\{1, |\alpha_n| \big\}.
\end{equation}
Generalizing in this manner, $M_{\lambda}$ is no longer continuous as
a function of coefficient vectors (except when
$\lambda=1$) however, as we shall see, the parameter $\lambda$ will appear
naturally in our subsequent calculations.  

The `unit-ball' of $M$  is not convex, but rather is a symmetric star
body.  We define this set as
\[
B_{\lambda}  = \left\{ \boldsymbol a \in \R^{N+1} : M_{\lambda}\left(\sum_{n=0}^N a_n
      z^n\right) \leq 1 \right\},
\]
and we define the star body of radius $T > 0$ by $B_{\lambda}(T) =
T^{1/\lambda} B_{\lambda}$.  We also define the related sets of monic
polynomials given by $\wt B = \wt B(1)$ where
\[
\wt B(T) = \left\{ \boldsymbol b \in \R^N : M_{\lambda}\left(z^N +
  \sum_{n=0}^{N-1} b_n z^n\right) \leq T \right\}.
\]
Note that, when restricted to monic polynomials, $M_{\lambda} = M$,
and $\wt B$ corresponds to the set of monic degree $N$ real
polynomials with all roots in the closed unit disk.  We remark that
$B_{\lambda}$ and $\wt B$ are rather complicated geometrically.  For
instance, note that both $(z-1)^N$ and $z^N - 1$ lie in $\wt B$
while $(z^{N-10} - 1)(z^{10} + z^9 - z^7 - z^6 -z^5-z^4 -z^3 + z + 1)$ 
is {\em not} in $\wt B$.  That is, $\wt B$ contains both points of large
and small 2-norm ($2^{N/2}$ and $\sqrt{2}$, respectively), but there also
exist integer points of relatively small norm which are not in $\wt B$.  

For the convenience of the next calculation, we shall write
$M_{\lambda}(c,\boldsymbol b)$ and $\wt M(\boldsymbol b)$ for
\[
M_{\lambda}(c,\boldsymbol b) = M_{\lambda}\left(cz^N+\sum_{n=0}^{N-1} b_n z^n\right)
\]
and
\[
\wt M(\boldsymbol b) = M_{\lambda}(1, \boldsymbol b) =
M_{\lambda}\left(z^N + \sum_{n=0}^{N-1} b_n z^n\right). 
\]
The second of these functions, we shall call the {\em monic} Mahler
measure (and is obviously independent of $\lambda$).  

The following manipulations are elementary, but of central importance
in Chern and Vaaler's determination of the volume of $B_{\lambda}$
and $\wt B$, and in our analysis of the statistics of roots
of polynomials chosen uniformly from these sets.  By volume, we shall of
course mean Lebesgue measure.  
\begin{align*}
\vol B_{\lambda} &= \int_{-\infty}^{\infty} \vol \big\{ \boldsymbol b: M_{\lambda}(c, \boldsymbol b) \leq 1 \big\} \, \upd c \\
&= \int_{-\infty}^{\infty} \vol \big \{c \boldsymbol b: M_{\lambda}(c, c \boldsymbol b) \leq 1 \big \} \, \upd c \\
& = 2 \int_0^{\infty} c^N \vol\left\{\boldsymbol b: \wt M(\boldsymbol b) \leq c^{-\lambda} \right\} \, \upd c,
\end{align*}
where at this last step we have used the $\lambda$-homogeneity of
$M_{\lambda}$.  The change of variables $\xi = c^{-\lambda}$ gives 
\[
\vol B_{\lambda} = \frac{2}{\lambda} \int_0^{\infty} \xi^{-(N+1)/\lambda}
\vol\left\{\boldsymbol b: \wt M(\boldsymbol b) \leq \xi \right\}  \, \frac{\upd\xi}{\xi}.
\]
If we denote the distribution function of $\wt M$ by $f_N(\xi) := \vol
\wt B(\xi) $, then 
\begin{equation}
\label{eq:23}
\vol B_{\lambda} = \frac{2}{\lambda} \int_0^{\infty} \xi^{-s-1} f_N(\xi)  \,
\upd\xi, \qquad s = \frac{N + 1}{\lambda}.
\end{equation}
The astute reader will notice the appearance of the Mellin transform
of $f_N$ appearing in this expression for the volume of $B_{\lambda}$.  
Interpreting the integral in (\ref{eq:23}) as a Lebesgue-Stieltjes
integral we can apply integration by parts to find
\[
\vol B_{\lambda} = \frac{2}{N+1} F(s), \qquad F(s) = \int_{\R^N} \wt M(\boldsymbol b)^{-s} \, \upd\mu_\R^N(\boldsymbol b),
\]
where $\mu_\R$ is Lebesgue measure on $\R$ and $\mu_{\R}^N$ the
resulting product measure on $\R^N$. Clearly, also the volume of the
set of monic polynomials with Mahler measure at most 1 is given by
\[
\vol \wt B = \lim_{s \rightarrow \infty} F(s). 
\]

By a random polynomial we shall mean a polynomial chosen with respect
to the density $M(\boldsymbol b)^{-s}/F(s)$, where we view $s$ as a
parameter.  The goal of this paper is to explore the statistics of
roots of random polynomials in the limit as $N \rightarrow \infty$ and
$(N+1)/s \rightarrow \lambda$.  This is equivalent to studying the
statistics of zeros of polynomials chosen uniformly from $B_{\lambda}$
as the degree goes to $\infty$.  

\begin{figure}[!ht]
\centering
\includegraphics[scale=.38]{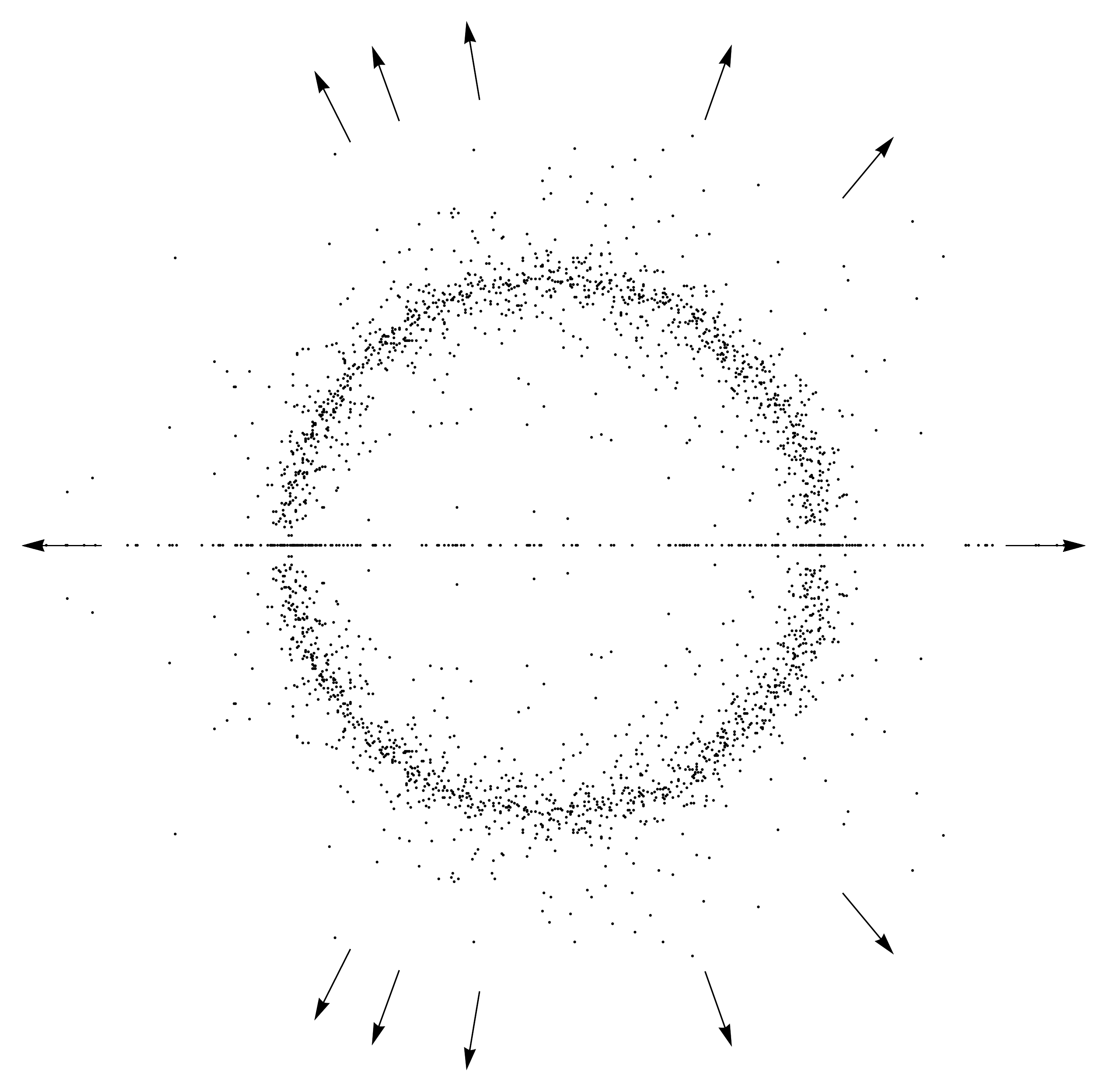}
\caption{A simultaneous plot of the roots of 100 random
  polynomials of degree 28 corresponding to $\lambda = 1$.  Note that, since we are
  sampling uniformly from a region in $\R^N$, the complicated
  geometric nature of $\widetilde B$ makes it difficult to accurately sample
  a truly (pseudo) random polynomial.  This example was achieved by
  doing (for each polynomial) a ball-walk of 10,000 steps of length
  $.01$ starting from $x^{28}$.  The arrows indicate directions of
  outlying roots.}
\label{fig:3}
\end{figure}

\subsection{The Joint Density of Roots}
\label{sec:joint-density-roots}

Since the coefficients of our random polynomials are real, the roots
are either real or come in complex conjugate pairs; that is we may
identify the set of roots of degree $N$ polynomials with
\[
\bigcup_{(L, M) \atop L + 2M = N} \R^L \times \C^M.
\]
In the context of our random polynomials $L$ and $M$ are integer
valued random variables representing the number of real and complex
conjugate pairs of roots respectively.  The density on coefficients
induces a different density on each component $\R^L \times
\C^M$---that is the joint density on roots is naturally described as a
set of conditional densities based on the number of real roots.  

The map $E_{L,M}: \R^L \times
\C^M \rightarrow \R^N$ specified by $(\bs \upalpha, \bs \upbeta)
\mapsto \boldsymbol b$ where
\[
\prod_{l=1}^L \big(x - \alpha_l\big) \prod_{m=1}^M \big(x - \beta_m\big)\big(x - \overline
\beta_m\big) = x^N + \sum_{n=1}^N b_n x^{N-n} 
\]
is the map from roots to coefficients, and a generic $\boldsymbol b \in
\R^N$ has $2^M M! L!$ preimages under this map.  The conditional joint
density of roots is determined from the Jacobian of $E_{L,M}$, which
was computed in \cite{sinclair-2005}.  Specifically, the conditional
joint density of roots of random polynomials with exactly $L$ real
roots is given by $P_{L,M} : \R^L \times \C^M \rightarrow 
[0,\infty)$ where
\begin{equation}
\label{eq:6}
P_{L,M}(\bs \upalpha, \bs \upbeta) = \frac{2^M}{Z_{L,M}} \prod_{l=1}^L \max\big\{1,
|\alpha_l|\big\}^{-s} \prod_{m=1}^M \max\big\{1, |\beta_m|\big\}^{-2s} \left|
  \Delta(\bs \upalpha, \bs \upbeta) \right|,
\end{equation}
and $\Delta(\bs \upalpha, \bs \upbeta)$ is the Vandermonde
determinant in the variables $\alpha_1, \ldots, \alpha_L$, $\beta_1, 
\overline \beta_1, \ldots, \beta_M, \overline \beta_M$, and $Z_{L,M}$
is the conditional partition function\footnote{{\em Partition
    function} is physics parlance for ``normalization constant.''},
\[
Z_{L,M}(s) = 2^M \int_{\R^L} \int_{\C^M} \prod_{l=1}^L \max\big\{1,
|\alpha_l|\big\}^{-s} \prod_{m=1}^M \max\big\{1, |\beta_m|\big\}^{-2s} \left|
  \Delta(\bs \upalpha, \bs \upbeta) \right| \, \upd\mu_{\R}^L(\bs \upalpha) \,
  \upd\mu_{\C}^M(\bs \upbeta)
\]
($\mu_{\R}, \mu_{\R}^L, \mu_{\C}, \mu_{\C}^M$ are Lebesgue measure on
$\R, \R^L, \C$ and $\C^M$ respectively).

The total partition function is then given by
\begin{equation}
\label{eq:2}
Z(s) = \sum_{(L,M) \atop L + 2M = N} \frac{Z_{L,M}(s)}{2^M L! M!} = F(s).
\end{equation}
That is, the total partition function of the system is, up to a
trivial constant, the volume of $B_{(N+1)/s}$. 
\begin{thm}[S.-J. Chern, J. Vaaler \cite{MR1868596}]
Let $J$ be the integer part of $N/2$, and suppose $s > N$.  Then,
\begin{equation}
\label{eq:3}
F(s) = C_N \prod_{j=0}^J \frac{s}{s - (N-2j)} \qq{where} C_N = 2^N
\prod_{j=1}^J \left(\frac{2j}{2j+1}\right)^{N-2j}.
\end{equation}
\end{thm}
This theorem is surprising, in part because of the simplicity of the
representation as a function of $s$, but also because by (\ref{eq:2}),
$F(s)$ is initially given as a rather complicated sum whereas
(\ref{eq:3}) reveals it as a relatively simple product.  It should be
remarked that constituant summands, $Z_{L,M}(s)$, are {\em not} simple,
and Chern and Vaaler's remarkable product identity followed from
dozens of pages of complicated rational function identities.  

A conceptual explanation for Chern and Vaaler's product formulation of
$F(s)$ is given by the following theorem.
\begin{thm}[Sinclair \cite{sinclair-2005}] 
\label{thm:4}
Suppose $N$ is even, $p_0, p_1, \ldots, p_{N-1}$ are any family of
monic polynomials with $\deg p_n = n$ and define the skew-symmetric
inner products  
\begin{equation}
\label{eq:4}
\la f | g \ra_{\R} = \int_{\R} \int_{\R} \max\big\{1,
|x|\big\}^{-s} \max\big\{1, |y|\big\}^{-s} f(x) g(y) \sgn(y - x) \, \upd\mu_\R(x)
\upd\mu_\R(y)
\end{equation}
and
\begin{equation}
\label{eq:5}
\la f | g \ra_{\C} = -2\mathrm{i} \int_{\C} \max\big\{1, |z|\big\}^{-2s}
\overline{f(z)} g(z) \sgn(\ip{z}) \, \upd \mu_{\C}(z).
\end{equation}
Then,
\[
F(s) = \Pf \mathbf U \qq{where} \mathbf U = \left[\la p_{m-1} |
  p_{n-1} \ra_{\R} + \la p_{m-1} | 
  p_{n-1} \ra_{\C} \right]_{m,n=1}^N,
\]
and $\Pf \mathbf U$ denotes the Pfaffian of the antisymmetric matrix
$\mathbf U$.\footnote{The Pfaffian is an invariant
  of antisymmetric matrices with an even number of rows and columns.
  For our purposes here it suffices to note that the square of the
  Pfaffian is the determinant.}
\end{thm}
A similar formulation is valid when $N$ is odd, but for the purposes
of exposition, the details are unimportant here.  

The reason this theorem suggests a product formulation like
(\ref{eq:3}) for $F(s)$ is that the independence of $F(s)$ from the
specifics of $\{p_n\}$ means that by choosing the polynomials to be
skew-orthogonal with respect to $\la p_{m-1} | p_{n-1} \ra_{\R} + \la p_{m-1} |
  p_{n-1} \ra_{\C}$, (see Section~\ref{sec:family-skew-orth} below for
  the definition of skew-orthogonal polynomials), $\mathbf U$ looks like
\[
\begin{bmatrix}
0 & r_1 \\
-r_1 & 0 \\
 & & 0 & r_2 \\
 & & -r_2 & 0 \\
& & & & \ddots \\
& & & & & 0 & r_J \\
& & & & & -r_J & 0
\end{bmatrix} \qq{and} \Pf \mathbf U = \prod_{j=1}^J r_j.
\]
One of the results presented here is an explicit description of the
skew-orthogonal polynomials which make Chern and Vaaler's formula a
trivial consequence of Theorem~\ref{thm:4}.  

\subsection{Pfaffian Point Processes}

The fact that $F(s)$ can be written as a Pfaffian not only gives a
simple(r) proof of Chern and Vaaler's volume calculation, but it also
is the key observation necessary to show that the point process on the
roots is a Pfaffian point process.  We recall a few definitions and
salient features of this Pfaffian point process here.  

Loosely speaking, the types of questions we are interested in are
along the lines of:  Given non-negative integers $m_1, m_2, \ldots, m_n$
and pairwise disjoint sets $A_1, A_2, \ldots,
A_n$ in the complex plane, what is the probability that a random
polynomial has $m_1$ roots in $A_1$, $m_2$ roots in $A_2$, etc.?
Since the roots of our random polynomials come in two species: real
and complex conjugate pairs, we will specialize our definitions to
reflect this.  Suppose $l$ and $m$ are integers with $l + 2m \leq N$
and suppose $A_1, A_2, \ldots, A_{\ell}$ are pairwise disjoint subsets of
$\R$ and $B_1, B_2, \ldots, B_m$ are pairwise disjoint subsets of the
open upper half plane $\HH$.  If $\alpha_1, \ldots, \alpha_L$ denote
the random variables representing the real roots of a random
polynomial and $\beta_1, \ldots, \beta_M$ represent the complex
roots in the open upper half-plane (here $L$ and $M$ are random
variables too), then given $A \subseteq \R$ and $B \subseteq \HH$ we
define the random variables $N_A$ and $N_B$ by
\[
N_A = \# A \cap \{\alpha_1, \ldots, \alpha_L\} \qq{and} N_B = \# B
\cap \{\beta_1, \ldots \beta_M\}.
\]
That is, $N_A$ counts the number of roots in $A$ and takes values in
$1,2\ldots,N$.  Similarly, $N_B$ takes values in $1,2,\ldots,J=\lfloor
N/2 \rfloor$.  

If there exists a function $R_{\ell,m}: \R^{\ell} \times \HH^m \rightarrow
[0,\infty)$ so that 
\begin{equation}
\label{eq:16}
E\left[ N_{A_1} \cdots  N_{A_{l}} N_{B_1} \cdots N_{B_m} \right] =
\int\limits_{A_1} \cdots \int\limits_{A_{\ell}} \int\limits_{B_1} \cdots \int\limits_{B_m} R_{\ell,m}(\boldsymbol x, \boldsymbol z) \, \upd \mu_{\R}^{\ell}(\boldsymbol x)  \,
\upd \mu_{\C}^m(\boldsymbol z),
\end{equation}
then we call $R_{\ell,m}$ the $(\ell,m)$-{\em correlation} (or {\em
  intensity}) {\em function}.  See
\cite{borodin-2008} for a discussion of these types of two-species
correlation functions, or \cite{MR2552864} for a more in-depth
discussion of one-species correlation functions.  

Of particular note, at least for understanding the importance of
correlation functions, is the fact that $N_{\R} = L$, $N_{\HH}=M$, and 
\[
E[L] = \int_{\R} R_{1, 0}(x, -) \upd\mu_{\R}(x), \qquad E[M] =
\int_{\HH} R_{0,1}(-,z) \upd\mu_{\C}(z). 
\]

If we extend $R_{0,1}(-,z)$ to all of $\C$ by demanding
$R_{0,1}(-,\overline z) = R_{0,1}(-,z)$ (which we could likewise do for
the other correlation functions), we find 
\[
E[2M] = \int_{\C} R_{0,1}(-,z) \upd\mu_{\C}(z),
\]
and 
\begin{equation}
\label{eq:20}
N = \int_{\R} R_{1, 0}(x,-) \upd\mu_{\R}(x) + \int_{\C} R_{0, 1}(-,z)
\upd\mu_{\C}(z).
\end{equation}
When the random polynomials have i.i.d.~coefficients,
The functions $R_{1,0}$ and $R_{0,1}$ are exactly those given by Kac
\cite{MR0007812} and Shepp and Vanderbei \cite{MR1308023}.

Equation~(\ref{eq:20}) implies that $R_{1,0}/N$ gives the spatial
density of real roots of random polynomials, and $R_{0,1}/N$ gives the
spatial density of complex roots.

\begin{thm}[Borodin, Sinclair \cite{borodin-2007,borodin-2008}]
\label{thm:6}
The roots of our random polynomials form a {\em Pfaffian point
  process}.  That is, there exist a $2\times 2$ {\em
  matrix kernel} $\mathbf K_N: \C \times \C \rightarrow \C^{2 \times
  2}$, such that $R_{\ell,m}$ exists, and
\begin{equation}
\label{MatrixKernels}
R_{\ell,m}(\mathbf x, \mathbf z) = \Pf 
\begin{bmatrix}
\left[ \mathbf K_N(x_i, x_j) \right]_{i,j=1}^{\ell} & \left[ \mathbf K_N(x_i, z_n) \right]_{i,n=1}^{\ell,m} \smallskip\\
-\left[ \mathbf K_N^{\transpose}(z_k, x_j) \right]_{k,j=1}^{m,l} &
\left[ \mathbf K_N(z_k, z_n) \right]_{k,n=1}^{m}
\end{bmatrix}.
\end{equation}
\end{thm}
This kernel takes different forms depending on whether the arguments
are real or not; the exact details of this are described
below.  The importance of (\ref{MatrixKernels}) is
the fact that $\mathbf K_N$ is independent of $\ell$ and $m$; that is,
the same kernel appears in the Pfaffian formulation of all correlation
functions.  Moreover, $N$ appears as a parameter in the definition of
$\mathbf K_N$ in a way that allows for us to compute its limit as $N
\rightarrow \infty$ in various scaling regimes.  

The entries of $\mathbf K_N(u,v)$ are traditionally denoted something like
\[
\mathbf K_N(u,v) = 
\begin{bmatrix}
S_ND(u,v) & S_N(u,v) \smallskip\\
-S_N(v,u) & IS_N(u,v) + \frac12 \sgn(u - v)
\end{bmatrix}.
\]
This notation stems from the fact that, for $\beta=1$ Hermitian
ensembles (the eigenvalues of which also form Pfaffian point
processes---see \cite{MR2129906}), the analogous $(1,1)$ and
$(2,2)$-entries are given more-or-less by the {\em D}erivative (with
respect to the second variable) and the (running) {\em I}ntegral (with
respect to the first variable) of the $S_N$ term.  For the kernels we
consider here---those appearing in (\ref{MatrixKernels})---there is a
still a relationship between the various entries of $\mathbf K_N$,
though this relationship is dependent on whether the arguments are
real or complex.  We will thoroughly explain this relationship in the
sequel, but for now we remark only that when both arguments are real
the derivative/running integral relationship between $S_ND$, $IS_N$
and $S_N$ persists.  Once the relationship between the entries of
$\mathbf K_N$ is explained, it suffices to report on only one entry,
and for us it will be more convenient to describe one of $S_ND$ and
$IS_N$ instead of $S_N$.  Thus, we will use the notation for $\mathbf
K_N$ given as in the following theorem.
\begin{thm}[Borodin, Sinclair \cite{borodin-2007,borodin-2008}]
\label{thm:5}
Suppose $N$ is even. With $p_0, p_1, \ldots, p_{N-1}$ and $\mathbf U$ as in
Theorem~\ref{thm:4}, write $\mu_{m,n}$ for the $(m,n)$-entry of $\mathbf
U^{-1}$, and define
\[
\varkappa_N(u,v) = -2 \max\{1, |u|\}^{-s} \max\{1, |v|\}^{-s}
\sum_{n,m=1}^{N} \mu_{m,n} p_{m-1}(u) p_{n-1}(v)
\]
Then,
\begin{equation}
\label{eq:21}
\mathbf K_N(u,v) = 
\begin{bmatrix}\varkappa_N(u,v) & \varkappa_N \epsilon(u,v) \smallskip\\
\epsilon \varkappa_N(u,v) & \epsilon \varkappa_N \epsilon(u,v) +
\frac{1}{2} \sgn(u - v)
\end{bmatrix},
\end{equation}
where $\sgn(u - v)$ is taken to be 0 if either $u$ or $v$ is non-real and $\epsilon$ is the operator 
\begin{equation}
\label{eq:22}
\epsilon f(u) := \left\{
\begin{array}{ll}
{\displaystyle \frac{1}{2} \int_{\R} f(t) \sgn(t - u) \, \upd \mu_\R(t)} &
\mbox{if $u \in \R$,} \\ & \\
\mathrm i \sgn \big(\ip{u}\big) f(\overline u) & \mbox{if $u \in \C \setminus \R$,}
\end{array}
\right.
\end{equation}
which acts on $\varkappa_N(u,v)$ as a
function of $u$, when written on the left and acts on $\varkappa_N(u,v)$
as a function of $v$ when written on the right.
\end{thm}

Notice that if $x \in \R$ and $F$ is an antiderivative of $f$, then
\[
\epsilon f(x) = -\frac12 \int_{-\infty}^x f(t) \, \upd \mu_R(t) + \frac12
\int_x^{\infty} f(t) \, \upd \mu_R(t) = -F(x) + \frac{F(\infty) +
  F(-\infty)}{2}.
\]
It follows then that, if $D$ stands for differentiation, then $D
\epsilon f(x) = -f(x)$.  Hence, we can write the entries of $\mathbf K_N$ in
terms of $\epsilon \varkappa_N  \epsilon(x,y)$, $x,y \in \R$,
differentiation and complex conjugation.

\subsection{Scaling Limits}

Our primary interest is in the scaling limits of the various matrix
kernels as $N \rightarrow \infty$.

The scaled kernels hold
information about the limiting distribution of roots in a neighborhood
of a point on a scale where the expected number of roots in a
neighborhood is of order 1.  In order to describe the relevant kernels
we will use the heuristic assumption that, with probability tending to
1, the roots of random polynomials of large degree are nearly
uniformly distributed on the unit circle.  We will not prove this
assumption since it motivates the discussion here, but is not
logically necessary for the present discussion.

Under this assumption, if $\zeta$ is a point on the unit circle, then
when $N$ is large we expect to see $O(1)$ roots of a random polynomial
in a ball about $\zeta$ of radius $1/N$.  Supposing momentarily
that $\zeta$ is a non-real point on the unit circle, and $\varepsilon >
0$ is sufficiently small so that the ball of radius $\varepsilon$ about
$\zeta$ does not intersect the real line, then the expected number of
roots in this ball is given by 
\[
\int_{\zeta + \varepsilon \D} \Pf\left( \mathbf K_N(z, z) \right)\, \upd\mu_{\C}(z),
\]
where $\D \subset \C$ is the unit disk.  After a change of variables
this becomes
\begin{equation}
\label{scale1}
\int_\D \varepsilon^2 \Pf\left( \mathbf K_N(\zeta + \varepsilon z, \zeta +
\varepsilon z)\right) \, \upd\mu_{\C}(z).
\end{equation}
When $\varepsilon = 1/N$, using properties of Pfaffians (akin to the
multilinearity of the determinant), we have 
\begin{equation}
\label{scale2}
\int_\D \Pf\left( \frac{1}{N^2} \mathbf K_N\left(\zeta +
    \frac{z}{N}, \zeta + \frac{z}{N}\right)\right) \, \upd\mu_{\C}(z).
\end{equation}
We expect that this quantity will converge as $N \rightarrow \infty$,
and for $\zeta$ a non-real point on the unit circle, we define
\begin{equation}
\label{eq:19a}
\mathbf K_{\zeta}(z,w) := \lim_{N \rightarrow \infty} \frac{1}{N^2} \mathbf K_N\left(\zeta +
    \frac{z}{N}, \zeta + \frac{w}{N}\right).
\end{equation}
We will see that this scaled limit is essentially independent of
$\zeta$ (more specifically it depends only trivially on the argument
of $\zeta$).  

The real points on the unit circle, $\xi = \pm1$ are not generic,
since for any neighborhood, and all finite $N$, there will be a
positive (expected) proportion of real roots.  This fact is reflected
in the emergence of a new limiting kernel in scaled neighborhoods of
$\pm1$.  In this case, suppose $A_1, A_2, \ldots, A_{\ell}$ and $B_1,
B_2, \ldots, B_m$ are measurable subsets of $\R$ and $\C \setminus \R$
with positive measure, $\xi = \pm 1$, and define the shifted and
dilated sets
\[
\wt A_j := \xi + \frac{1}N A_j \qq{and} \wt B_k = \xi + \frac{1}N B_k,
\]
for $j = 1, \ldots , \ell$ and $k = 1, \ldots, m$.  By our previous
reasoning, and the fact that the point process on the roots has
Pfaffian correlations (equations~(\ref{eq:16}) and
(\ref{MatrixKernels})) we have
\begin{align}
& E\left[ N_{\wt A_1} \cdots  N_{\wt A_{l}} N_{\wt B_1} \cdots N_{\wt B_m} \right] =
\int\limits_{A_1} \cdots \int\limits_{A_{\ell}} \int\limits_{B_1}
\cdots \int\limits_{B_m} N^{-\ell - 2m}  \label{eq:18} \\
& \hspace{.5in} \times \Pf 
\begin{bmatrix}
\left[ \mathbf K_N(\xi + \frac{x_i}{N}, \xi + \frac{x_j}{N})
\right]_{i,j=1}^{\ell} & \left[ \mathbf K_N(\xi + \frac{x_i}{N}, \xi +
  \frac{z_n}{N})
\right]_{i,n=1}^{\ell,m} \\ 
-\left[ \mathbf K_N^{\transpose}(\xi + \frac{z_k}N, \xi + \frac{x_j}N)
\right]_{k,j=1}^{m,l} & \left[ \mathbf K_N(\xi + \frac{z_k}N, \xi +
  \frac{z_n}N) \right]_{k,n=1}^{m}  
\end{bmatrix}
 \,  \upd \mu_{\R}^{\ell}(\boldsymbol x) 
\upd \mu_{\C}^m(\boldsymbol z). \nonumber 
\end{align}
Note that the Jacobian of the change of variables that allows us to
integrate over the unscaled $A_j$ and $B_k$ (instead of their scaled
and shifted counterparts) introduced a factor of $N^{-1}$ for each
real variable and a factor of $N^{-2}$ for each complex variable.   

There are many ways to `move' the $N^{-\ell - 2m}$ factor inside the
Pfaffian and attach various powers of $N$ to the various matrix
entries; we wish to do this in a manner so that the resulting matrix
entries converge as $N \rightarrow \infty$.  We will be overly
pedantic here and use the fact that for any antisymmetric matrix $\mathbf
K$ and square matrix $\mathbf N$ (of the same size),
\begin{equation}
\label{eq:17}
\Pf(\mathbf N \mathbf K \mathbf{N}^{\transpose}) = \Pf \mathbf K \cdot
\det \mathbf N.
\end{equation}

Here we will use this observation with $\mathbf K$ the $2(\ell + m)
\times 2(\ell + m)$ antisymmetric matrix in the integrand of
(\ref{eq:18}), and 
\[
\mathbf N = \begin{bmatrix}
\mathbf D_{\R} \\
& \ddots \\
& & \mathbf D_{\R} \\
& & & \mathbf D_{\C} \\
& & & & \ddots \\
& & & & & \mathbf D_{\C}
\end{bmatrix}
\]
where for every real and every complex variable we introduce a $2 \times 2$
blocks of the 
form 
\[
\mathbf D_{\R} = \begin{bmatrix}
\frac{1}N & 0 \\
0 & 1 
\end{bmatrix} 
\quad \text{and} \quad
 \mathbf D_{\C} = \begin{bmatrix}
\frac{1}{N} & 0 \\
0 & \frac{1}{N}
\end{bmatrix},
\]
respectively. Clearly $\det \mathbf N = N^{-\ell -2m}$.  

From (\ref{eq:21}) we see that the entries of $\mathbf N \mathbf K \mathbf
N^{\transpose}$ are blocks of the form
\begin{align}
\label{everything}
\left\{
\begin{array}{ll}
\begin{bmatrix}
\frac{1}{N^2} \varkappa_N\left(\xi + \frac{u}{N}, \xi + \frac{v}{N}\right) & \frac1N
\varkappa_N \epsilon\left(\xi + \frac{u}{N}, \xi + \frac{v}{N}\right) \smallskip\\
\frac1N \epsilon \varkappa_N\left(\xi + \frac{u}{N}, \xi +
  \frac{v}{N}\right) &  
 \epsilon \varkappa_N \epsilon\left(\xi + \frac{u}{N}, \xi +
   \frac{v}{N}\right) + \frac{1}{2} \sgn(u - v)
\end{bmatrix}, & \quad u,v \in \R;  \\ & \\
\begin{bmatrix}
\frac{1}{N^2} \varkappa_N\left(\xi + \frac{u}{N}, \xi + \frac{v}{N}\right) & \frac1{N^2}
\varkappa_N \epsilon\left(\xi + \frac{u}{N}, \xi + \frac{v}{N}\right)
\smallskip\\ 
\frac1N \epsilon \varkappa_N\left(\xi + \frac{u}{N}, \xi + \frac{v}{N}\right) & 
\frac1N \epsilon \varkappa_N \epsilon\left(\xi + \frac{u}{N}, \xi + \frac{v}{N}\right) 
\end{bmatrix}, & \quad u \in \R, v \in \C \setminus \R. \\ & \\
\begin{bmatrix}
\frac1{N^2} \varkappa_N\left(\xi + \frac{u}{N}, \xi + \frac{v}{N}\right) & \frac1{N^2}
\varkappa_N \epsilon \left(\xi + \frac{u}{N}, \xi + \frac{v}{N}\right) \smallskip\\
\frac1{N^2} \epsilon \varkappa_N\left(\xi + \frac{u}{N}, \xi +
  \frac{v}{N}\right) &  
\frac1{N^2} \epsilon \varkappa_N \epsilon\left(\xi + \frac{u}{N}, \xi + \frac{v}{N}\right) 
\end{bmatrix}, & \quad u, v \in \C \setminus \R;
\end{array} \right.
\end{align}
This way of distributing powers of $N$ among the entries of $\mathbf
K$ will ensure that the matrix entries all converge in the limit as $N
\rightarrow \infty$.  And we define $\mathbf K_{\xi}(u,v)$ to be the
$N \rightarrow \infty$ limit of these scaled matrix kernels.  We will
find a limiting kernel $\varkappa_{\xi}$, which depends on $\xi$ in a
trivial manner, so that 
\begin{equation}
\label{limit_of_everything}
\mathbf K_{\xi}(u,v) = \begin{bmatrix}\varkappa_{\xi}(u,v) &
  \varkappa_{\xi} \epsilon(u,v) \smallskip\\ 
\epsilon \varkappa_{\xi}(u,v) & \epsilon \varkappa_{\xi} \epsilon(u,v) +
\frac12 \sgn(u - v)
\end{bmatrix}.
\end{equation}
This together with (\ref{eq:19}) defines the scaling limit of $\mathbf
K_N$ near every point on the unit circle. 

Besides the explicit identification of $\mathbf K_{\xi}$ and $\mathbf
K_{\zeta}$ we will also produce unscaled limits of $\mathbf K_N(u,v)$
when $u$ and $v$ are away from the unit circle. For $u,v$ in the open
unit disk we will find that this unscaled limit exists, and is
non-zero; a fact that implies (among other things) that the number of
roots in an open subset of the open unit disk has positive
expectation, and this expectation converges to a finite number as $N
\rightarrow \infty$.  When $u$ and $v$ are outside the closed disk,
the convergence of the unscaled limit depends on the asymptotic
behavior of $N/s$, and we will give an account of the situation
there\footnote{The dependence on the limit of $N/s$ is to be expected,
  since the larger $s$ is relative to $N$, the smaller the joint
  density of roots outside the closed unit disk.}.  These results
reflect the fact that although `most' of the roots accumulate nearby
the unit circle as $N \rightarrow \infty$ one expects that there will
always be a finite number of outliers away from the unit circle.

\subsection{Notation}

In order to expedite the presentation of the various kernels and their
scaling limits we introduce some simplifying notation.  Firstly we
will use $\T$ for the unit circle in the complex plane, $\mathbb D$ to
be the open unit disk, and $\mathbb O := \C \setminus \overline \D$.
We will continue to use $\zeta$ for a non-real point on $\T$, $\xi$
for $\pm 1$.  We will also use $u, v$ for generic complex variables,
$x, y$ for real variables and $w, z$ for non-real complex variables,
so that for instance $\mathbf K_{\xi}(x, y)$ will mean the scaling
limit of the kernel in a neighborhood of $\pm 1$ corresponding to
correlations between real roots.  Note that $\mathbf K_{\xi}(z, x) =
-\mathbf K_{\xi}(x, z)^{\transpose}$ so we need only report one of
these kernels.  

Notice that the kernels $\mathbf K_N$ depend on the parameter $s$. In what
follows we always assume that $s=s(N)>N$.  Since $s$ must
scale with $N$ in some manner, we shall always assume that 
\[
\lambda := \lim_{N\to\infty}Ns^{-1} \in [0,1]
\]
exists.  In our previous discussion, as the parameter for homogeneity
in the $M_{\lambda}$, $\lambda$ was {\em exactly} equal to $(N +
1)s^{-1}$ (or rather, $s$ was defined to be $(N+1)/\lambda$), however
for the following results we only need that $s(N) > N$ and an
asymptotic description for $\lambda$.  This generalization will also
be useful in Section~\ref{sec:conn-with-other}, where we introduce
other models with the same statistics as the roots of our random
polynomials, and in which the parameters $s$ and $\lambda$ have a
different meaning.  We include the possibility $s=\infty$, in which
case we take $\lambda = 0$ and we interpret
$\max\big\{1,|z|\big\}^{-\infty}$ as the characteristic function of
the closed unit disk.  

We remark that, since $\mathbf K_N$ is implicitly dependent on $s$,
and similarly $\mathbf K_{\zeta}$ and $K_{\xi}$ are dependent on
$\lambda$.  To simplify notation we will often leave any dependence on
$s$ and $\lambda$ implicit.   

\subsection{The Mahler Ensemble of Complex Polynomials}

Before proceeding to our results we will review the complex version of
the Mahler ensemble since it both provides context and sets us up to
demonstrate a non-trivial connection between these two ensembles.  

The complex Mahler ensemble of random polynomials is that formed by
choosing degree $N$ polynomials with {\em complex} coefficients uniformly
(with respect to Lebesgue measure on coefficient vectors) from the set
of polynomials with Mahler measure at most 1.  The complex Mahler
ensemble has many features in common with its real counterpart, for
instance we still expect the empirical measure for the roots of a
random complex polynomial to weakly converge to uniform measure on the
unit circle (in fact, in this case a large deviation result---due to the
second author---quantifying this convergence is known \cite{uY2}).
There are striking differences as well, most conspicuously the roots are
generically complex and the spatial density of roots is radial for
finite $N$.

The joint density of roots is easily computed in this situation
(in fact, it is much easier to compute than the conditional joint
densities for the real ensemble) and is proportional to
\[
\prod_{n=1}^N \max\big\{1, |z_n|\big\}^{-2s} \prod_{m < n} \log|z_n - z_m|^2.
\]

From the joint density of roots, it is straightforward to show that
the spatial process on roots forms a determinantal point process on
$\C$.  That is, there exists a function $K_N : \C \times \C \rightarrow
\C$ such that if $B_1, \ldots, B_n$ are disjoint subsets of $\C$, then
\[
E[N_{B_1} \cdots N_{B_n}] = \int_{B_1} \cdots \int_{B_n} \det\left[
  K_N(z_j, z_k) \right]_{j,k=1}^n \, \upd\mu_{\C}^n(\boldsymbol z).
\]
In other words, the $n$th correlation function can be expressed
as the determinant of an $n \times n$ matrix, the entries of which are
formed from the scalar kernel $K_N$.   

Our previous arguments suggest the relevant scaling limit is
\[
K_{\zeta}(z,w) := \lim_{N \rightarrow \infty} \frac{1}{N^2}
K_N\left(\zeta + \frac{z}{N}, \zeta + \frac{w}{N}\right).
\]
We recount the scaling limit of the kernel
here, since we will find a relationship between it and the scaling
limit for the matrix kernel(s) for the real Mahler ensemble.
\begin{thm}[Sinclair, Yattselev \cite{Sinclair2012682}]
\label{thm:1}
Let $\lambda=\lim_{N\to\infty}Ns^{-1}$. Then $K_{\zeta}(z,w)$, $\zeta \in \T$,
 exists and \footnote{In \cite{Sinclair2012682},  
  the integral in formula \eqref{complexkernel} is evaluated explicitly.
  The form given there can be easily obtained from \cite[Eq. (23) \&
  (71)]{Sinclair2012682} and elementary integration.  The form given
  here is more readily generalized, a fact which becomes useful later.} 
\begin{equation}
\label{complexkernel}
K_{\zeta}(z,w) =
\omega\big(z \overline \zeta\big) \omega\big(\overline w
  \zeta\big) \frac1\pi\int_0^1x(1-\lambda 
x)e^{\left(z\overline \zeta + \overline w \zeta\right) x}\upd x, 
\end{equation}
where 
\begin{equation}
\label{omega}
\omega(\tau) := \min\left\{1, e^{-\rp{\tau}/\lambda}\right\} = \lim_{N\to\infty}\max\left\{1,\left|1+\frac\tau N\right|\right\}^{-s}.
\end{equation}
Moreover, it holds that
\[
\lim_{N\to\infty} K_N(z,w) = \frac1\pi\frac1{(1-z\overline w)^2}
\]
locally uniformly in $\D\times\D$, and, if $s<\infty$ for each finite $N$,
\[
\lim_{N\to\infty} \frac{|z\overline w|^s}{(z\overline w)^N}\frac{K_N(z,w)}{s-N} = \frac1\pi\frac1{z\overline w-1}\left[1+\frac{c^{-1}}{z\overline w-1}\right]
\]
locally uniformly in $\Om\times\Om$, where $c:=\lim_{N\to\infty}(s-N)$.
\end{thm}
The main result of \cite{Sinclair2012682} was the universality of the
kernel $K_{\zeta}$ under conformal maps which map the exterior of the
unit disk onto the exterior of compact sets with smooth boundary.  The
details of this universality are unimportant here; instead we recount
Theorem~\ref{thm:1} since the entries for the kernel(s) for the real
Mahler ensemble depend also on~$K_{\zeta}$.

\section{Main Results}
\label{sec:main}

\subsection{The Expected Number of Real Roots}

\begin{thm}
\label{thm:expected}
Let $N_\mathsf{in}$ and $N_\mathsf{out}$ be the number of real roots on $[-1,1]$ and $\R\setminus(-1,1)$, respectively, of a random degree $N$ polynomial chosen from the real Mahler ensemble.  Then
\[
\left\{
\begin{array}{lcl}
E[N_\mathsf{in}] &=& \displaystyle \frac1\pi\log N+O_N(1) \smallskip \\
E[N_\mathsf{out}] &=& \displaystyle - \frac1\pi\frac{\sqrt{N(2s-N)}}s\log\big(1-Ns^{-1}\big) + \sqrt{Ns^{-1}}O_N(1),
\end{array}
\right.
\]
where the implicit constants depend on~$s$ but are uniform with respect to $N$.  
\end{thm}

Observe that
\[
E[N_\mathsf{out}] = \left\{ 
\begin{array}{rl}
\sqrt{Ns^{-1}}O_N(1), & \limsup_{N\to\infty}Ns^{-1}<1, \smallskip \\
\frac\alpha\pi\log N+O_N(1), & s=N+N^{1-\alpha}, \alpha\in[0,1], \smallskip \\
\frac1\pi\log N+O_N(1), & \limsup_{N\to\infty} (s-N)<\infty.
\end{array}
\right.
\]
In particular, in the third case, the leading term of the expected number of real roots of a random polynomial from the real Mahler ensemble is $\frac2\pi\log N$, which matches exactly the leading term of the expected number of real roots of a random polynomial with independent standard normal coefficients obtained by Kac \cite{MR0007812}.

\subsection{Kernel Limits Near the Unit Circle}

Our first result states that the limiting local correlations at $\zeta
\in \T \setminus \{-1,1\}$ can be given in terms of the scaled scalar kernel for
the complex Mahler ensemble.  
\begin{thm}
\label{thm:Scaling}
Let $\zeta\in\T\setminus\{\pm1\}$ and $\mathbf K_{\zeta}$ be the scaling limit of the matrix kernel \eqref{eq:21} defined by \eqref{eq:19a}. Then 
\[
\mathbf K_{\zeta}(z,w) = \left[
\begin{array}{cc}
0 & K_{\zeta}(z,w) \\
-K_{\zeta}(w,z) & 0
\end{array}
\right]
\]
with the limit in \eqref{eq:19a} holding locally uniformly for $z,w\in\C$, where $K_{\zeta}(z,w)$ is the
scaling limit of the scalar kernel for the complex Mahler ensemble
given by \eqref{complexkernel}. 
\end{thm}

Observe that
\[
\Pf\left[
\begin{array}{cc}
0 & K_{\zeta}(z_j, z_k) \\
-K_{\zeta}(z_k,z_j) & 0
\end{array}
\right]_{j,k=1}^n = \det \left[ K_{\zeta}(z_j, z_k) \right]_{j,k=1}^n,
\]
and thus we have that the limiting local distribution of roots of
real random polynomials at a point in $\T \setminus \{-1,1\}$ collapses to a
determinantal point process identical to 
that for the complex Mahler ensemble at the same point. 

A new kernel arises in a neighborhood of $\xi = \pm 1$.  As in
(\ref{omega}), $\omega(\tau) = \min\{1, e^{-\rp{\tau}/\lambda}\}$,
interpreting this as the characteristic function on the closed unit
disk in the case $\lambda = 0$.  We also define 
\[
\displaystyle M(z):=\frac1{\Gamma(3/2)}\sum_{n=0}^\infty\frac{\G{n+ 3/2}}{\G{n+1}}\frac{z^n}{n!} = {}_1F_1(3/2, 1; z).
\]
We remark that $M(z)$ can be expressed
rather succinctly in terms of modified Bessel functions, though we
have no reason to do that here.  

\begin{thm}
\label{thm:7}
For $\xi=\pm1$, let $\varkappa_{\xi}(u,v)$ be defined by \eqref{everything} \& \eqref{limit_of_everything}. Then
\begin{equation}
\label{kappa}
\varkappa_{\xi}(u,v) = 
\omega\left(u\xi\right)\omega\left(v\xi\right)\frac\xi4\int_0^1\tau(1-\lambda \tau)\big[M^\prime(u\xi \tau)M(v\xi \tau)-M(u\xi \tau)M^\prime(v\xi \tau)\big]\upd \tau,
\end{equation}
where $\omega(\tau)$ is defined by \eqref{omega} and the convergence in \eqref{everything}
is uniform on compact subsets of $\C \times \C$.  
\end{thm}

For the sake of brevity, we shall use the following notation:
\begin{equation}
\label{iota}
\iota(z) := \mathrm{i}\sgn\big(\ip{z}\big).
\end{equation}

Because $\epsilon$ operator \eqref{eq:22} amounts to conjugation and multiplication by $\iota$ for complex arguments and since $\overline{\xi+\frac uN}=\xi+\frac{\overline u}N$, it is clear that \eqref{limit_of_everything} is indeed the limit of \eqref{everything} when $u,v\in\C\setminus\R$ and the following corollary takes place.

\begin{cor}
\label{cor:Scaling1}
If $z,w \in \C \setminus \R$, then
\begin{equation}
\label{1complex}
\mathbf K_{\xi}(z,w) = \left[ 
\begin{array}{cc}
\varkappa_{\xi}\big(z,w\big) & \iota(w)\varkappa_\xi\big(z,\overline w\big) \smallskip \\ 
-\iota(z)\varkappa_\xi\big(w,\overline z\big) & \iota(z)\iota(w)
\varkappa_\xi\big(\overline z,\overline w\big) 
\end{array}
\right].
\end{equation}
\end{cor}

As already been suggested in \eqref{limit_of_everything} following the discussion after Theorem~\ref{thm:5}, the remaining kernels are most conveniently reported in terms of the
$(2,2)$ entry of $\mathbf K_{\xi}(x,y)$, and thus we introduce the following
notation.  
\begin{equation}
\label{three-kernels}
\left\{
\begin{array}{lcl}
\displaystyle \mathbf K[A](x,y) &=& 
\displaystyle \left[ 
\begin{array}{cc}
DAD(x,y) & -DA(x,y) \smallskip \\ 
AD(x,y) & A\big(x,y\big) + \frac12\sgn(y-x)
\end{array}
\right], \bigskip \\
\displaystyle \mathbf K[A](z,y) &=& 
\displaystyle \left[ 
\begin{array}{cc}
DAD(z,y) & -DA(z,y) \smallskip \\ 
\iota(z)DAD\big(\overline z,y\big) &
-\iota(z) DA\big(\overline z,y\big)  
\end{array}
\right], \bigskip \\
\displaystyle \mathbf K[A](y,z) &=& -\mathbf K[A]^{\transpose}(z,y), \bigskip \\
\displaystyle \mathbf K[A](z,w) &=& \displaystyle \left[ 
\begin{array}{cc}
DAD\big(z,w\big) & \iota(w)DAD\big(z,\overline w\big) \smallskip \\ 
\iota(z) DAD\big(\overline z, w\big) & \iota(z)\iota(w)
DAD\big(\overline z,\overline w\big)  
\end{array}
\right],
\end{array}
\right.
\end{equation}
where $D$ is differentiation with respect to the first (second) variable when
written on the left (right).  

\begin{thm}
\label{thm:Scaling2}
For $\xi=\pm1$, define
\begin{equation}
\label{auxkappa}
A_\xi(a,b):= \int_0^a\int_0^b\varkappa_\xi(u,v)\upd u \upd v +
\left(\int_0^b-\int_0^a\right)\left[\frac{\omega(v\xi)}4\int_0^1(1-\lambda
  u)M(v\xi u)\upd u\right]\upd v.
\end{equation}
Then $\mathbf K_{\xi}(u,v) = \mathbf K[A_{\xi}](u,v)$
where the convergence in \eqref{everything} is locally uniform on $\C \times \C$.  
\end{thm}

Expression \eqref{auxkappa} can be simplified when $u\xi,v\xi<0$ as in
this case $\omega\big(u\xi\big)=\omega\big(v\xi\big)=1$. Recall that
$M$ is a confluent hypergeometric function and therefore is a solution
of the second order differential equation, namely,
$zM^{\prime\prime}(z)+(1-z)M^\prime(z)-\frac32M(z)=0$. Then, if we
define $I(z):=2z\big(M^\prime-M\big)(z)$ (and therefore 
$I^\prime(z)=M(z)$), we can write 
\[
A_\xi(a,b) =
\frac{\xi}4\int_0^1\frac{1-\lambda\tau}{\tau}\bigg(I'(a\xi\tau)I(b\xi\tau)
- I(a\xi\tau)I'(b\xi\tau)\bigg) \upd \tau, 
\]
which bears a striking structural resemblance to $\varkappa_{\xi}$.

Let us compare one consequence of Theorem~\ref{thm:7} (via
Corollary~\ref{cor:Scaling1}) with the analogous situation for complex
random polynomials.  This theorem implies that we can compute the
large $N$ limit of the expected number of roots in a set of the form
$\xi + \frac{1}{N} B$, disjoint from the real axis, by integrating
\[
\Pf~ \mathbf K_\xi(z,z) = \iota(z)\varkappa_{\xi}\big(z, \overline z\big)
\]
over $B$.  (This function is the scaled {\em intensity} of complex roots near $\xi$).  
\begin{figure}[!ht]
\centering
\includegraphics[trim = 0pt 2pt 0pt 0pt, clip,scale=.48]{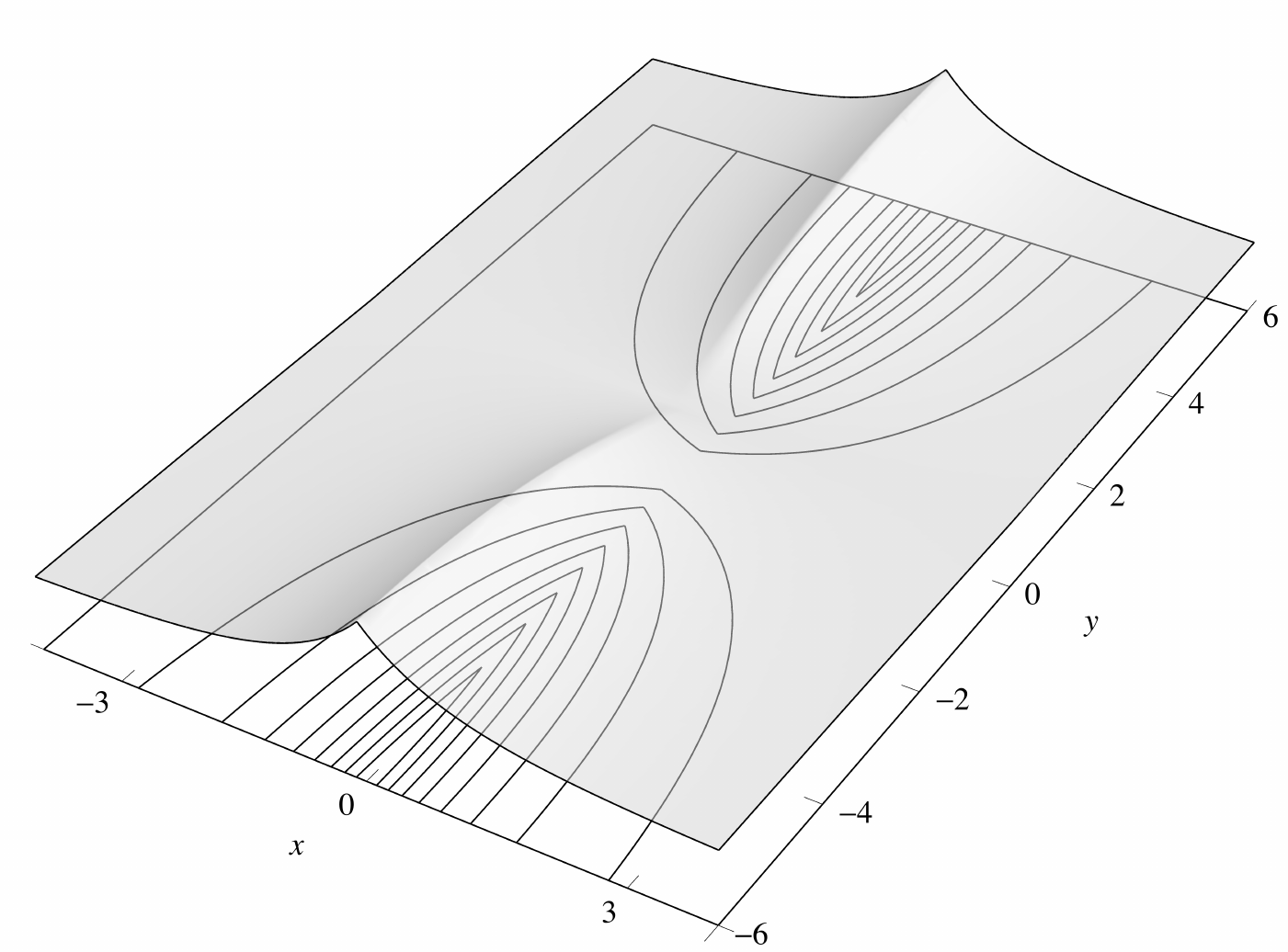}
\includegraphics[trim = 0pt 0pt 0pt 2pt, clip,scale=.4]{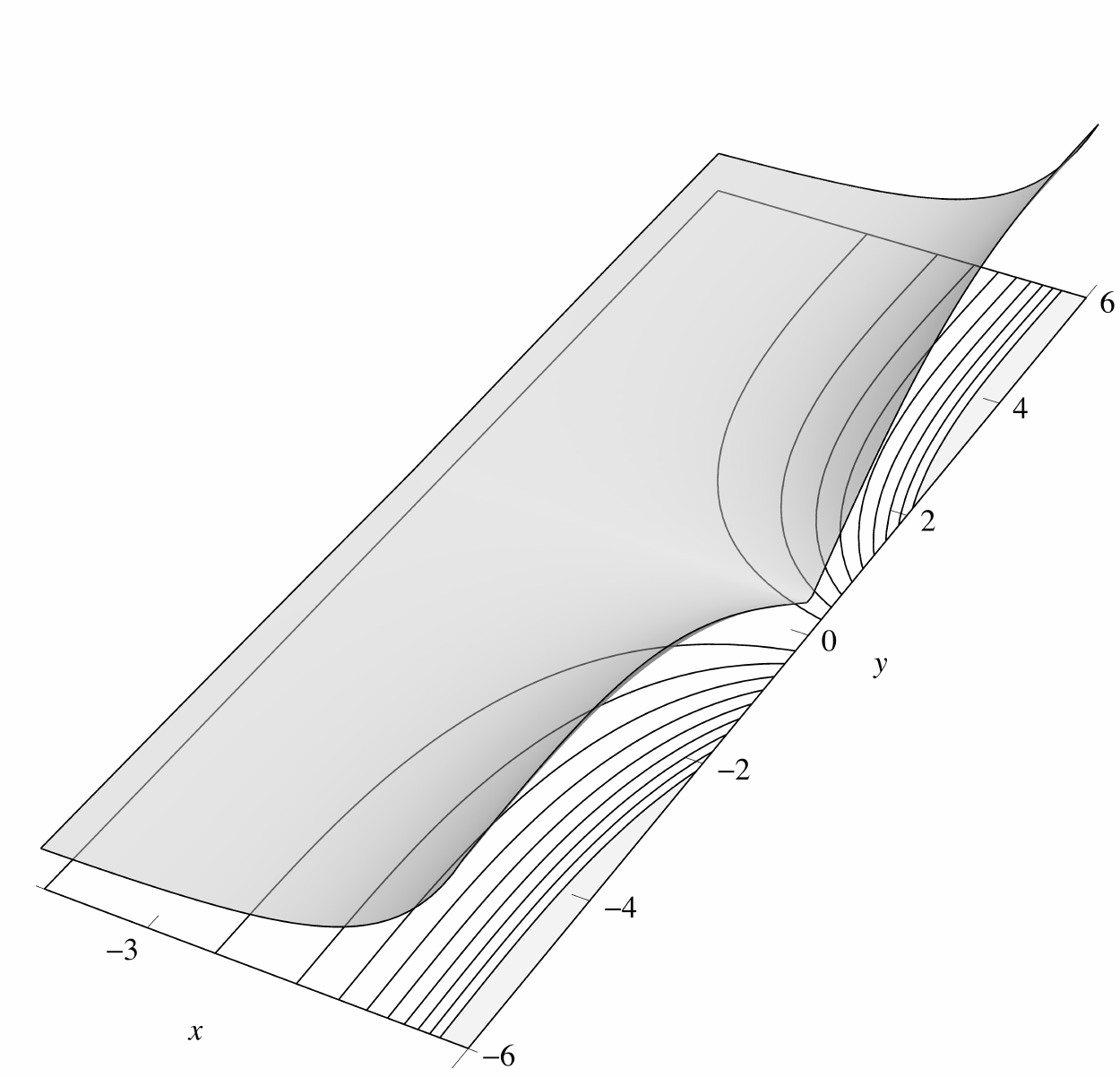}
\caption{The scaled intensity of complex roots
  near 1, for $\lambda = 1$ (left) and $\lambda = 0$ (right). Note how the roots tend
  to accumulate near the unit disk (the $y$-axis here) and repel from
  the real axis.}
\label{fig:1}
\end{figure}

This formula is not valid when $z = \overline z$, since in that case,
the Pfaffian of $\mathbf K[A_\xi](x,x)$ is responsible for the expected
number of (real) roots.  However, as $z \rightarrow \overline z~(=w)$,
we see that the integrand in (\ref{kappa}), and hence the intensity of
complex roots goes to zero.  This is not surprising, since the
conditional joint densities of roots, (\ref{eq:6}), vanishes there.
Loosely speaking, roots repel each other, including those that are
complex conjugates of each other, and this causes an expected scarcity
of complex roots near the real axis.  The fact that this phenomenon is
visible on the scale $1/N$ is worth noting if not particularly
surprising.

Since the complex ensemble has no conjugate symmetry, we should not
expect the corresponding integrand to vanish.  In this case, according
to Theorem~\ref{thm:1}
the local intensity of roots near $\xi$ is governed by
\[
\omega\big(z\xi\big) \omega\big(\overline z\xi\big) \frac{1}{\pi} \int_0^1 \tau(1 - \lambda \tau)
e^{2\rp{z}\xi \tau} \upd \tau.
\]
which is remarkably similar to $\iota(w) \varkappa_{\xi}(z,\overline
z)$ aside from the obvious considerations due to root symmetry. 

One expects that, as $|\ip{z}| \rightarrow \infty$ the local intensity
of complex roots in the real ensemble will approach the local intensity
of roots in the complex ensemble, since in this limit the repulsion
from complex conjugates vanishes.  This is indeed the case.

\begin{lemma}
\label{lem:compare}
It holds that
\[
\lim_{|\ip{z}|\to\infty} \frac{\iota(z)}4\bigg[M^\prime\big(z\big)M\big(\overline z\big)-M\big(z\big)M^\prime\big(\overline z\big)\bigg] = \frac{e^{2\rp{z}}}\pi
\]
\end{lemma}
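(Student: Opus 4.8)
The plan is to reduce the claimed limit to the known large-argument asymptotics of the confluent hypergeometric function $M(z) = {}_1F_1(3/2,1;z)$. Write $z = x + \mathrm{i} y$ and recall that $\overline z = x - \mathrm{i} y$, so the bracket $M'(z) M(\overline z) - M(z) M'(\overline z)$ is $2\mathrm{i}$ times the imaginary part of $M'(z) M(\overline z)$; this already signals that the $\iota(z) = \mathrm{i}\,\sgn(y)$ prefactor will conspire with that $\mathrm{i}$ to produce a real quantity, matching the right-hand side $e^{2x}/\pi$. First I would record the standard asymptotic expansion of ${}_1F_1(a,b;w)$ as $|w| \to \infty$: in the right half-plane the dominant term is $\frac{\Gamma(b)}{\Gamma(a)} e^{w} w^{a-b}\big(1 + O(|w|^{-1})\big)$, while in the left half-plane it is $\frac{\Gamma(b)}{\Gamma(b-a)} (-w)^{-a}\big(1 + O(|w|^{-1})\big)$; differentiating, $M'(w) = \frac{a}{b}\,{}_1F_1(a+1,b+1;w)$ has the analogous expansions. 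Here $a = 3/2$, $b = 1$, so in a sector containing the positive real axis $M(w) \sim \frac{1}{\Gamma(3/2)} e^{w} w^{1/2}$ and $M'(w) \sim \frac{1}{\Gamma(3/2)} e^{w} w^{1/2}$ as well (the leading terms of $M$ and $M'$ coincide to leading order because $\frac{a}{b}\cdot\frac{\Gamma(b+1)}{\Gamma(a+1)} = 1$).

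The subtlety is that $|\ip{z}| \to \infty$ does \emph{not} force $|z| \to \infty$ with $z$ confined to a fixed sector about $\mathbb{R}_+$; rather $z$ can wander toward the imaginary axis or into the left half-plane while $y \to \pm\infty$. So the main obstacle is a uniformity issue: I need the asymptotics of $M$ and $M'$ to hold locally uniformly, or at least with controlled error, as $|y| \to \infty$ with $x$ ranging over compact sets, possibly negative. The clean way to handle this is to use the connection formula expressing ${}_1F_1(3/2,1;w)$ for large $|w|$ as a sum of a term $\sim e^{w} w^{1/2}$ and a term $\sim w^{-3/2}$ (arising from the two standard asymptotic solutions of Kummer's equation), valid with explicit error bounds in a full neighborhood of $\infty$ away from the negative real axis, and to track which term dominates. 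When $x$ is bounded and $|y| \to \infty$, we have $|e^{z}| = e^{x}$ bounded below and above, while $|w^{1/2}| = |z|^{1/2} \to \infty$ and $|w^{-3/2}| = |z|^{-3/2} \to 0$; hence the exponential term dominates uniformly on compacta in $x$, and $M(z) = \frac{1}{\Gamma(3/2)} e^{z} z^{1/2}(1 + o(1))$, $M'(z) = \frac{1}{\Gamma(3/2)} e^{z} z^{1/2}(1 + o(1))$, with the $o(1)$ uniform for $x$ in a compact set.

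Given this, the remaining computation is routine but I would carry it out as follows. Substituting the leading-order forms,
\[
M'(z) M(\overline z) - M(z) M'(\overline z) = \frac{1}{\Gamma(3/2)^2} e^{z + \overline z}\, z^{1/2}\,\overline z^{1/2}\big[(1+o(1))(1+o(1)) - (1+o(1))(1+o(1))\big],
\]
so the bracketed difference is $e^{2x} |z| \cdot o(1)$ at this crude level, which is not enough — I must keep the next-order terms. So instead I would write $M(z) = \frac{1}{\Gamma(3/2)} e^{z} z^{1/2}\big(1 + c_1/z + O(|z|^{-2})\big)$ and $M'(z) = \frac{1}{\Gamma(3/2)} e^{z} z^{1/2}\big(1 + c_1'/z + O(|z|^{-2})\big)$ with \emph{explicit} constants $c_1, c_1'$ coming from the standard ${}_1F_1$ expansion (for $M'$ one uses $M' = \frac{3}{2}\,{}_1F_1(5/2,2;\cdot)$ and shifts indices), then form
\[
M'(z)M(\overline z) - M(z)M'(\overline z) = \frac{e^{2x}|z|}{\Gamma(3/2)^2}\Big[\big(1 + \tfrac{c_1'}{z} + \cdots\big)\big(1 + \tfrac{c_1}{\overline z} + \cdots\big) - \big(1 + \tfrac{c_1}{z} + \cdots\big)\big(1 + \tfrac{c_1'}{\overline z} + \cdots\big)\Big].
\]
The $O(1)$ parts cancel, the $1/z$ and $1/\overline z$ coefficients combine to $(c_1' - c_1)\big(\tfrac{1}{z} - \tfrac{1}{\overline z}\big) = (c_1' - c_1)\cdot\tfrac{\overline z - z}{|z|^2} = (c_1' - c_1)\cdot\tfrac{-2\mathrm{i} y}{|z|^2}$, and the $O(|z|^{-2})$ tails contribute $O(|z|^{-2})$. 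Multiplying by $\frac{\iota(z)}{4} = \frac{\mathrm{i}\,\sgn(y)}{4}$ and using $|z| \cdot \tfrac{2|y|}{|z|^2} = \tfrac{2|y|}{|z|} \to 2$ as $|y| \to \infty$ (for $x$ bounded), I get a finite limit; pinning down $c_1' - c_1$ from the series $\frac{\G{n+3/2}}{\G{n+1}\,n!}$ should yield exactly $(c_1' - c_1) = \frac{\Gamma(3/2)^2}{2\pi}\cdot(\text{the right constant})$ so that everything collapses to $e^{2x}/\pi = e^{2\rp{z}}/\pi$. I expect the only place where genuine care is needed is the uniformity of the ${}_1F_1$ asymptotics as $z$ approaches the imaginary axis, which is handled by the classical error bounds for Kummer's function valid in $|\arg w| \le \pi - \delta$; everything else is bookkeeping of the first two terms of a known expansion.
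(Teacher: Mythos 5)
Your proof is correct in outline and takes a genuinely different route from the paper. The paper exploits the observation that, with $z = a + \mathrm{i}b$, the bracketed quantity is (up to $\iota(z)$) the $b$-derivative of $F(b) := \tfrac14|M(z)|^2$; it then cites the DLMF asymptotic $|M(z)|^2 \sim |z|e^{2a}\Gamma(3/2)^{-2}$ (i.e., Olver, Eq.\ 13.7.2) and "differentiates" this asymptotic via L'Hopital's rule to read off the limit of $F'(b)$. You instead expand $M$ and $M'$ directly to two terms of their large-argument Kummer expansions, observe that the leading terms cancel in the antisymmetric combination, and extract the limit from the $1/z$-corrections. Both work. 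The paper's route is shorter and sidesteps the second-order coefficients entirely, at the cost of the slightly informal reverse-direction use of L'Hopital (which, to be fully rigorous, still needs the differentiated expansion you use); your route is more explicit and self-contained but requires tracking the subleading constants. For the record, your unresolved constant is $c_1 = \tfrac14$ (from $(b-a)_1(1-a)_1 = \tfrac14$ for $a=\tfrac32,b=1$) and $c_1' = \tfrac34$ (from $(b-a)_1(1-a)_1 = \tfrac34$ for $a=\tfrac52,b=2$), so $c_1' - c_1 = \tfrac12$; combined with $4\Gamma(3/2)^2 = \pi$ and $|y|/|z| \to 1$, this does produce exactly $e^{2\rp{z}}/\pi$ as you anticipated. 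Your remark on uniformity is also correct and not actually a problem: with $\rp{z}$ bounded and $|\ip{z}|\to\infty$, $\arg z \to \pm\pi/2$, so one stays in a sector $|\arg z| \le \pi - \delta$ where the Kummer expansion is uniformly valid, and the recessive $z^{-3/2}$-branch is smaller than even the first correction to the dominant branch.
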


\subsection{Kernel Limits Away from the Unit Circle}

In this section we discuss the asymptotics of the matrix kernels
$\mathbf K_N$ away from the unit circle. In this case, the scale of
the neighborhood about a point $\omega \in\C\setminus\T$  will not
depend on $N$.  As we show below, the kernels $\mathbf K_N$ converge uniformly in a neighborhood of any such point, and
hence it suffices to investigate the asymptotics of $\mathbf K_N(u,
v)$.  It will turn out that the asymptotics are different depending on
whether or not $z, w$ are inside or outside the disk.

We start by considering the case of the unit disk.

\begin{thm}
\label{thm:Inside}
For $u,v\in\D$, set
\begin{equation}
\label{ISNInside}
A_\D\big(u,v\big) :=
\frac1{4\pi}\int_\T\frac{\left(v\sqrt{-\tau}-u\sqrt{-\overline\tau}\right)|\upd\tau|}{\big(1-u^2\overline\tau\big)^{1/2}\big(1-v^2\tau\big)^{1/2}},  
\end{equation}
where $\sqrt{-\tau}$ is the branch defined by
$-\frac2\pi\sum_{-\infty}^\infty \frac{\tau^m}{2m-1}$,
$\tau\in\T$. Then $\displaystyle\lim_{N\to\infty} \mathbf K_N(u,v)=\mathbf K[A_\D](u,v)$ locally uniformly on $\D \times \D$.
\end{thm}

Among the implications of this theorem is that the unscaled intensity
of roots converges on compact subsets of $\D$.  In particular, this
implies that if $B \subset \D$ has non-empty interior, then the expected
number of roots in $B$ converges to a positive quantity (obviously
dependent on $B$).  In particular, for any $\varepsilon > 0$ the
expected number of roots in the disk $\{z : |z| < 1 - \varepsilon \}$ converges to a positive
number.  In words, even though we expect the roots to accumulate
uniformly on the unit circle, we also should expect to find a positive
number of roots away from $\T$. 

In order to investigate the situation when $z, w$ are outside the
closed unit disk, we first record the following theorem. In what follows we always assume that $s<\infty$ for each finite $N$ as otherwise $\mathbf K_N$ is identically zero in $\Om\times\Om$. 

\begin{thm}
\label{thm:DSN}
Set $c:=\lim_{N\to\infty}\big( s - N \big)\in[1,\infty]$ and let $\varkappa_N(u,v)$ be the $(1,1)$-entry of $\mathbf K_N$ as in \eqref{eq:21}. Then, for
$u, v \in \Om$, 
\begin{equation}
\label{eq:19}
\lim_{N \rightarrow \infty} \frac{|uv|^s}{(uv)^N} \frac{\varkappa_N(u,v)}{s-N} = \frac\lambda\pi\left[1+\frac{c^{-1}}{uv-1}\right] \frac1{uv-1} \frac{v - u}{\sqrt{u^2 - 1}\sqrt{v^2 - 1}}
\end{equation}
locally uniformly in $\Om\times\Om$, where $c^{-1}=0$ when $c=\infty$.
\end{thm}

Note the factor of $|uv|^s/(uv)^N$.  When $\lambda<1$, this factor
diverges at least geometrically fast which yields that  
\begin{equation}
\label{DSNtoZero}
\lim_{N \rightarrow \infty} \varkappa_N(u,v) = 0
\end{equation}
locally uniformly in $\Om\times\Om$.
Furthermore, if $\lambda=1$ but $s-N\to\infty$, then
$|uv|^{s-N}/(s-N)$ diverges and the conclusion \eqref{DSNtoZero}
holds again. Only in the case where $c<\infty$, do we get the non-trivial
limit
\begin{equation}
\label{C}
\lim_{N \rightarrow \infty}\frac{|uv|^N}{(uv)^N} \varkappa_N(u,v) = \frac1{\pi}\left[c+\frac1{uv-1}\right] \frac1{|uv|^c}\frac{1}{uv-1} \frac{v - u}{\sqrt{u^2 - 1}\sqrt{v^2 - 1}} =: B(u,v).
\end{equation}

It remains to explain the seemingly superfluous term
$\big(|uv|/(uv)\big)^N$.  To do this, let $u_1, u_2, \ldots, u_M$ be
points outside the unit disk (some could be real, some complex).  
A general correlation function of $M$ roots, is given as the
Pfaffian of a $2M \times 2M$  matrix of the form 
\[
\mathbf A = \left [
\begin{array}{cc}
\mathbf K_N\big(u_i, u_k\big)
\end{array}
\right]_{i,k=1}^M,
\]
where the exact structure of  $\mathbf K_N\big(u_i, u_k\big)$ will
depend on whether $u_i$ and/or $u_k$ are real or not.  (This is
essentially the content of (\ref{MatrixKernels})). It readily follows from \eqref{eq:21} \& \eqref{eq:22} that for $u,v\in\C\setminus\R$ each entry of $\mathbf K_N$ possesses a limit identical up to conjugation to \eqref{eq:19}. Thus, we define the block diagonal matrix 
\[
\mathbf D := \begin{bmatrix}
\delta_{i,k} \left(|u_i|~/~ u_i\right)^N & 0\\
0 & \delta_{i,k} \left(|u_i|~/~\overline u_i\right)^N 
\end{bmatrix}_{i,k=1}^M.
\]
As $\mathbf D$ is diagonal, we have that $\Pf(\mathbf D \mathbf A \mathbf D^{\ast}) = \det
\mathbf D \cdot \Pf \mathbf A = \Pf \mathbf A$, and that
\[
\mathbf D \mathbf A \mathbf D^{\ast} = \begin{bmatrix}
   \left( \frac{|u_i|}{u_i} \frac{|u_k|}{u_k} \right)^N \mathbf
   \varkappa_N \big(u_i, u_k\big) & \left( 
    \frac{|u_i|}{u_i} \frac{|u_k|}{\overline u_k} \right)^N
  \varkappa_N \epsilon \big(u_i,u_k\big) \smallskip\\ 
  \left( \frac{|u_i|}{\overline u_i}  \frac{|u_k|}{u_k} \right)^N
  \mathbf \epsilon \varkappa_N \big(u_i, u_k\big) & \left( 
  \frac{|u_i|}{\overline u_i} \frac{|u_k|}{\overline u_k} \right)^N
\left\{ \epsilon \varkappa_N \epsilon \big(u_i, u_k\big) + 
\frac12 \sgn(u_i - u_k) \right\}
\end{bmatrix}_{i,k=1}^m.
\]
Since our primary interest is in $\Pf(\mathbf A)$ and not $\mathbf A$
itself and in light of Theorem~\ref{thm:DSN} we shall study 
\begin{equation}
\label{K_N-hat}
\wh{\mathbf{K}}_N(u,v) := \left [
\begin{array}{cc}
\left( \frac{|uv|}{uv}\right)^N \varkappa_N\big(u, v\big) & \left(
  \frac{|uv|}{u\overline v}\right)^N \varkappa_N \epsilon \big(u,
v\big) \smallskip \\  \left( \frac{|uv|}{\overline uv}\right)^N
\epsilon \varkappa_N^{(2,1)}\big(u,v\big) & \left( 
  \frac{|uv|}{\overline u\overline v}\right)^N
\left\{ \epsilon \varkappa_N \epsilon \big(u, v\big) + 
\frac12 \sgn(u - v) \right\} 
\end{array}
\right]
\end{equation}
rather than $\mathbf K_N$.

\begin{thm}
\label{thm:Outside}
Let $c = \lim_{N \rightarrow \infty} (s - N) \in [1,\infty]$. Set
$A_\Om \equiv 0$ when $c = \infty$, and otherwise define 
\begin{equation}
\label{A_Om}
\begin{array}{l}
\displaystyle A_\Om(x,y) := \int_{\sgn(x)\infty}^x\int_{\sgn(y)\infty}^y B(u,v)\,
\upd u \, \upd v \\
\displaystyle \hspace{3cm} +  \frac1{\sqrt\pi}\frac{\Gamma\left(\frac
    {c+1}2\right)}{\Gamma\left(\frac
    c2\right)}\left(\sgn(x)
  \int_{\sgn(y)\infty}^y-\sgn(y)\int_{\sgn(x)\infty}^x\right)\frac{\upd 
  u}{|u|^c\sqrt{u^2-1}},
\end{array}
\end{equation}
where $B(u,v)$ is defined in \eqref{C}. Then $\displaystyle \lim_{N\to\infty} \wh{\mathbf K}_N(u,v) = \mathbf K[A_\Om](u,v)$ locally uniformly in $\Om \times \Om$. 
\end{thm}

\begin{rem}
The root $\sqrt{u^2-1}$ occurring within the integrals in \eqref{A_Om} should be understood as the trace on $\R\setminus(-1,1)$ of $\sqrt{z^2-1}$ that is holomorphic in $\C\setminus[-1,1]$. In particular, it is negative for negative $x$.
\end{rem}

\begin{rem}
Even though the function $A_\Om$ is defined for real arguments only,
it is a simple algebraic computation to see that $D A_\Om$ is well
defined when the first argument is complex 
and $D A_\Om D$ is nothing else but $B(u,v)$ in \eqref{C}. 
\end{rem}

The intensity of complex roots outside the unit circle is given by
$\iota(z) B(z,\overline z)$.  Integrating this over a set $B \subset
\mathbb O$ yields the expected number of complex roots of random
degree $N$ polynomial in $B$.  When $\lambda < 1$ (or more generally
when $c = \infty$), we see from \eqref{DSNtoZero} that the limiting expectation goes to 0.
In particular, in this situation, the expected number of roots outside
the unit disk goes to 0 with $N$.  When $c$ is finite and $B$ is
bounded away from $\T$ with positive Lebesgue measure, then the
expected number of roots in $B$ will converge to a positive number
dependent on $B$ and $c$.  

\begin{figure}[!ht]
\centering
\includegraphics[scale=.5]{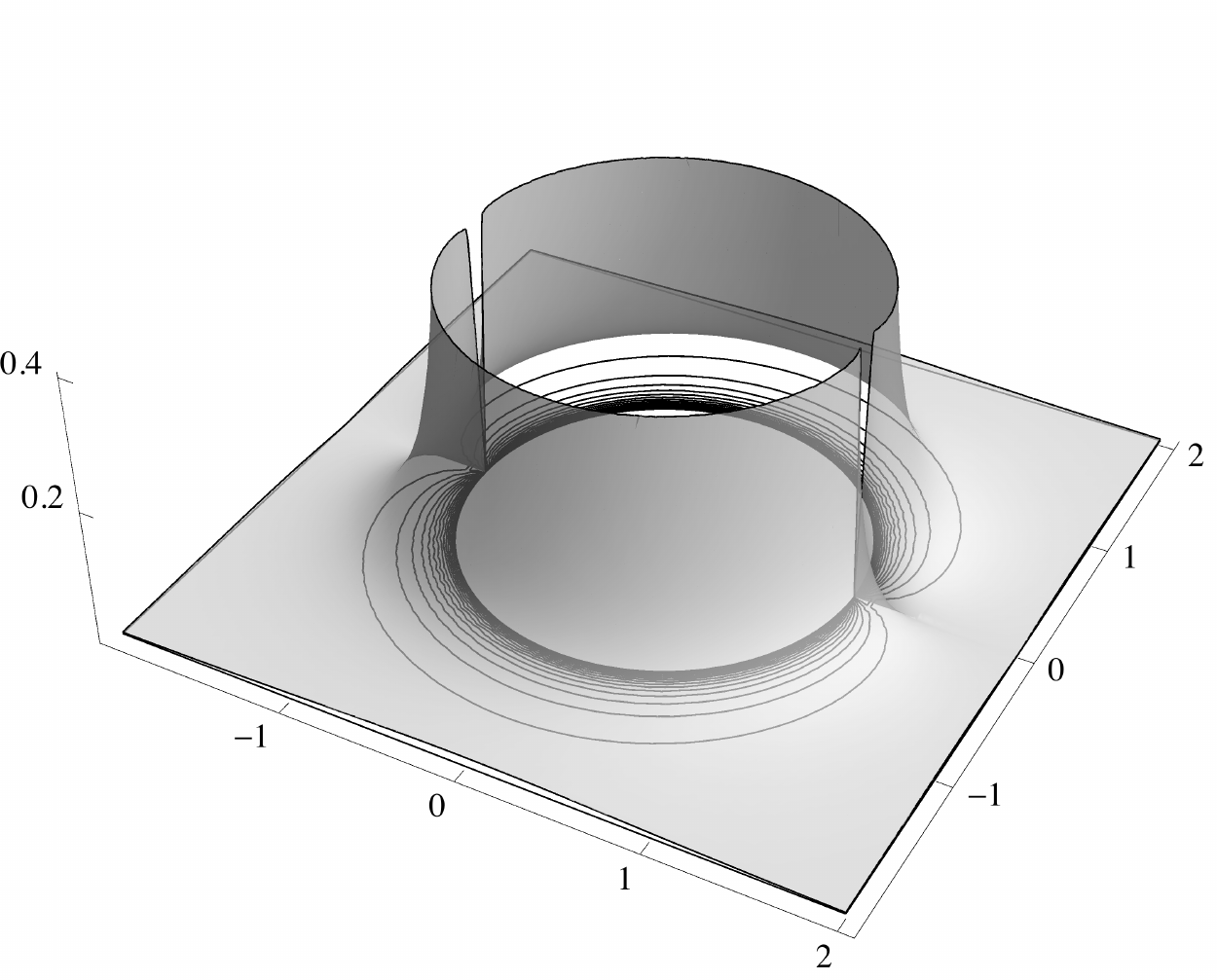}
\includegraphics[scale=.5]{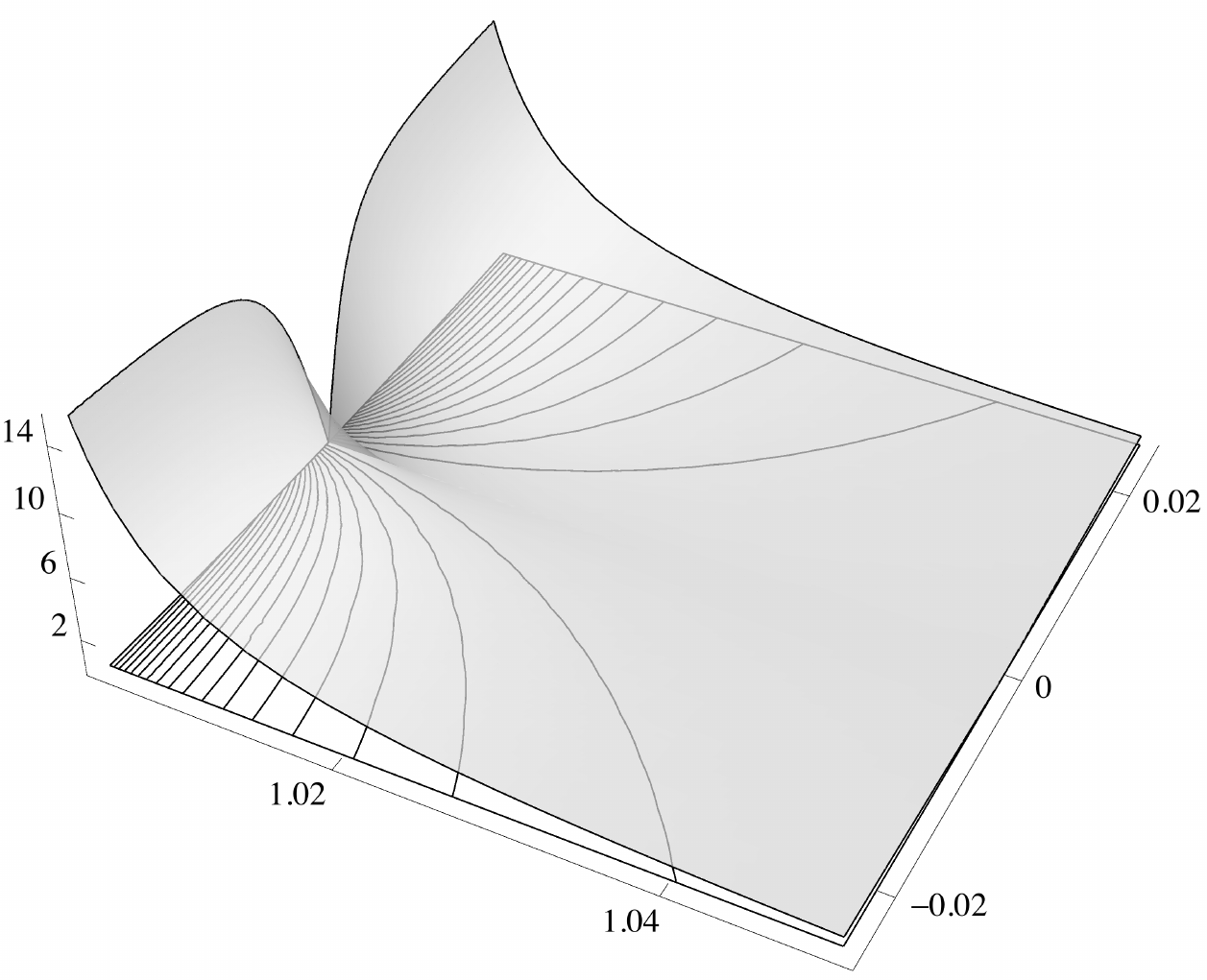}
\caption{The limiting intensity of complex roots outside the
  disk, with a close up view near $z=1$, for the Mahler measure
  ($c=1$) case.  }
\label{fig:4}
\end{figure}

\section{Other Ensembles with the Same Statistics}
\label{sec:conn-with-other}

Before proving our main results we also present an electrostatic
model and a matrix model which produce the same point process.  

\subsection{An Electrostatic Model}

In two dimensional electrostatics, we identify charged particles with
points in the complex plane.  An electrostatic system with unit
charges located at $z, z'$ has potential energy $-\log|z - z'|$ and a
system with $N$ unit charges located at the coordinates of $\boldsymbol z
\in \C^N$ has potential energy
\[
-\sum_{m < n} \log|z_n - z_m|.
\]
The states which minimize this energy correspond to those where the
particles are infinitely far apart ({\it i.e.} the particles repel
each other) and in order to counteract this repulsion we introduce a
confining potential.  There are many possibilities for this confining
potential, but in order to arrive at a model with particle statistics
identical to those of the roots of our random polynomials, the
potential we introduce is that formed from an oppositely charged
region, identified with the unit disk, with total charge $s$ and
charge density representing its equilibrium state.  More precisely,
the equilibrium charge density is given by its equilibrium measure
(in the sense of potential theory in the complex plane
\cite{MR1334766}) and by symmetry this is simply normalized Lebesgue
measure on the unit circle.  That is, the interaction energy between
the charged unit disk at equilibrium (with total charge $s$) and a unit charge
particle at $z \in \C$ is given by
\[
s \int_0^1 \log|z - e^{2 \pi i \theta}| \, \upd\theta = s \log\max\big\{1, |z|\big\},
\]
where equality is a consequence of Jensen's formula.  It follows that the
total energy of the system of particles located at the coordinates of
$\boldsymbol z$ in the presence of the equilibrized charged disk is given by
\[
E(\boldsymbol z) = s \sum_{n=1}^N \log\max\big\{1, |z_n|\big\}  -\sum_{m < n} \log|z_n - z_m|.
\]
When the system is at temperature specified by the dimensionless
inverse temperature parameter $\beta = (k T)^{-1}$ the 
probability density of finding the system in state $\boldsymbol z$
is given by
\[
\frac{e^{-\beta E(\boldsymbol z)}}{Z} = \frac{1}{Z} \bigg\{ \prod_{n=1}^N \max\{1,
|z_n|\}^{-\beta s} \bigg\} \prod_{m<n} \log|z_n - z_m|^{\beta}
\]
where $Z$ is the {\em partition function} of the system,
\begin{equation}
\label{eq:7}
Z = \int_{\C^N} e^{-\beta E(\boldsymbol z)} \, \mathrm{d}\mu_{\C}^N(\boldsymbol z).
\end{equation}

If we specify that $\beta=1$, exactly $L$ coordinates of $\boldsymbol z$
are real and the remaining $2M = N - L$ come in complex conjugate
pairs (which we can think of as mirrored particles) then the
probability density of states is exactly the conditional density of
roots of random polynomials with exactly $L$ real roots $P_{L,M}$, and the
partition function is the conditional partition function $Z_{L,M}$.  

If we specify that the total charge of all particles is $N$, but allow
the number of real and complex conjugate pairs of particles to vary, then
we arrive at a zero-current (i.e., conserved charge) grand canonical
ensemble, whose conditional density for the population vector $(L,M)$
is given by $X^L P_{L,M}$ where $X$ is the {\em fugacity}, a quantity
that encodes how easily the system can change population vectors.  The
partition function, as a function of the fugacity and the charge on
the unit disk, is given by 
\[
Z(X; s) = \sum_{(L,M) \atop L + 2M = N} \frac{X^L Z_{L,M}(s)}{2^M L! M!},
\]
which, when $X = 1$ is, up to the factor $2/(N+1)$, is the volume of
the Mahler measure star body, and as a function of $X$ is the
probability generating function for the probability that the
electrostatic configuration has exactly $L$ real particles, or
equivalently that a random polynomial has exactly $L$ real roots.

\subsection{A Normal Matrix Model}

Given a self-map on a metric space, the entropy is a measure of orbit
complexity under repeated iteration of this map.  Loosely speaking
this quantity measures how far neighboring points can get away from
each other under iteration by this map.  We will not give a definition
of this quantity, since the formulation is complicated and not really
necessary here; see \cite{MR648108} for a discussion.  When the metric
space is $\R^N$ (or $\C^N$) and the self-map is an $N \times N$
matrix, then a theorem of Yuzvinsky has that the entropy is the
logarithm of the Mahler measure of the characteristic polynomial
\cite{Yuzvinski:1965lr}.  That is, if $\mathbf M$ is an $N \times N$
matrix with eigenvalues $\lambda_1, \lambda_2, \ldots, \lambda_N$ in
$\C$, then the entropy of $\mathbf M$ is given by
\[
h(\mathbf M) = \sum_{n=1}^N \log \max\big\{1, |\lambda_n|\big\}.
\]
Despite not giving a definition of entropy, this is a sensible result
since it is clear that the `big' eigenvalues are responsible for
nearby points moving away from each other under repeated iteration of
$\mathbf M$.  

If we wish to use the entropy to form a probability measure on some
set of matrices (equipped with some natural reference measure), an
obvious choice is to have the Radon-Nikodym derivative with respect to
the reference measure be given by $e^{-s h(\mathbf M)}$ where $s$ is
some sufficiently large constant necessary so that the resulting
measure is actually finite.  Under such a probability measure we would
be more likely to choose a matrix with small entropy, large entropies
being exponentially rare.  

A natural choice for the set of matrices is that of normal matrices.
That is matrices which commute with their conjugate transpose.  (One
reason this is a natural choice is that normal matrices are unitarily
equivalent to diagonal matrices, and the entries in the diagonal
matrix are independent of the random variables parametrizing the
unitary group).  When restricting to real normal matrices, the
eigenvalues come in real and complex conjugate pairs, and the joint
density of eigenvalues naturally decomposes into conditional densities
dependent on the number of real eigenvalues.  These conditional
densities are identical to those given for the real Mahler ensemble of
polynomials given in equation (\ref{eq:6}). The derivation of the
joint density of eigenvalues uses standard random matrix theory
techniques, augmented to deal with the two species of eigenvalues, the
details of which are given in Appendix~\ref{sec:rand-norm-matr}.

We conclude that the eigenvalue statistics for this {\em entropic}
ensemble of random normal matrices is identical to the particle
statistics in the electrostatic model and the root statistics of
polynomials chosen randomly from Mahler measure star bodies.

\section{Matrix Kernels and Skew-Orthogonal Polynomials}
\label{sec:matrixkernel}

The matrix kernel $\mathbf K_N$ can be most simply represented via
weighted sums of the \emph{skew-orthonormal polynomials} for the
skew-symmetric bilinear form  
\begin{eqnarray}
\label{BilinearForm}
\la f | h \ra &=& \la f | h \ra_{\R} + \la f | h \ra_{\C} \\
&=& \int\left(\widetilde f(u)\epsilon \widetilde h(u)-\epsilon \widetilde f(u)\widetilde h(u)\right)\upd \big(\mu_\C+\mu_\R\big)(u) \nonumber
\end{eqnarray}
where $\la \cdot | \cdot \ra_{\R}$ and $\la \cdot | \cdot \ra_{\C}$
are given as in Theorem~\ref{thm:4}, the operator $\epsilon$ is given by \eqref{eq:22}, $\widetilde f(u):=f(u)\max\big\{1,|u|\big\}^{-s}$, and the functions $f,h$ satisfy the symmetry $g(\overline u)=\overline{g(u)}$. Namely, let
$\{\pi_n\}$, $\deg \pi_n = n$, be a sequence of polynomials such that 
\[
\la \pi_{2n} | \pi_{2m} \ra = \la \pi_{2n+1}, \pi_{2m+1} \ra = 0
\qq{and} 
\la \pi_{2n} | \pi_{2m+1} \ra = -\la \pi_{2m+1} | \pi_{2n} \ra = \delta_{m,n}.
\]
Note that this sequence is not uniquely determined since we may
replace $\pi_{2n+1}$ with $\pi_{2n+1} + c \pi_{2n}$ without disturbing
the skew-orthogonality.

\begin{thm}
\label{thm:2}
For each fixed $s$, one possible family of skew-orthonormal polynomials corresponding to bilinear form \eqref{BilinearForm} is given by\footnote{When $s=\infty$, it is understood that $(s-(2k+2))/s=1$.}
\[
\left\{
\begin{array}{lll}
\pi_{2n}(z) &=& \displaystyle\frac{2}{\pi} \sum_{k=0}^n \frac{\G{k +3/2} \G{n - k + 1/2}}{\G{k+1} \G{n-k+1} }z^{2k}, \smallskip \\
\pi_{2n+1}(z) &=& \displaystyle-\frac{1}{2 \pi} \sum_{k=0}^n \frac{s -(2k+2)}{2s} \frac{ \Gamma(k + 3/2) \Gamma(n - k - 1/2)}{\G{k+1}\G{n -k + 1} } z^{2k+1}.
\end{array}
\right.
\]
\end{thm}

These polynomials were originally produced using the skew analog of the
Gram-Schmidt procedure from the previously computed {\em skew}-moments, see \cite[Lemma 4.1]{sinclair-2005}.
\begin{lemma} 
\label{lemma:1}
$\la z^{2n} | z^{2m} \ra = \la z^{2n-1} | z^{2m-1} \ra = 0$ and, 
\begin{equation} 
\label{eq:8}
\la z^{2n} | z^{2m+1} \ra = \left(\frac{s}{s-2m-2}\right)
\frac{1}{\left(n+\frac12\right)\left(m-n+\frac12\right)}.
\end{equation}
\end{lemma}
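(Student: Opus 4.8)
The plan is to compute the skew-moments $\la z^{2n} \mid z^{2m} \ra$, $\la z^{2n-1} \mid z^{2m-1} \ra$, and $\la z^{2n} \mid z^{2m+1} \ra$ directly from the defining bilinear form \eqref{BilinearForm}, splitting into the real and complex parts \eqref{eq:4} and \eqref{eq:5}. First I would record the elementary one-dimensional integrals. For the real part, with $\widetilde{z^k} = z^k \max\{1,|z|\}^{-s}$, the relevant quantity is
\[
\la x^a \mid x^b \ra_{\R} = \int_{\R}\int_{\R} \max\{1,|x|\}^{-s}\max\{1,|y|\}^{-s} x^a y^b \sgn(y-x)\,\upd x\,\upd y,
\]
which vanishes by parity unless $a+b$ is odd; and for the complex part, writing $z = re^{i\theta}$,
\[
\la z^a \mid z^b \ra_{\C} = -2\mathrm{i}\int_{\C}\max\{1,|z|\}^{-2s}\overline{z^a} z^b \sgn(\ip{z})\,\upd\mu_\C(z),
\]
where the angular integral $\int_0^{2\pi} e^{i(b-a)\theta}\sgn(\sin\theta)\,\upd\theta$ is zero unless $b-a$ is odd, in which case it evaluates to $\frac{4}{\mathrm{i}(b-a)}$ (with the obvious caveat at $b=a$, excluded here). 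This immediately gives the vanishing statements $\la z^{2n}\mid z^{2m}\ra = \la z^{2n-1}\mid z^{2m-1}\ra = 0$, since in both cases the exponents have the same parity and the integrands are odd in the appropriate sense.

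For the nontrivial moment, I would take $a = 2n$, $b = 2m+1$ so that $b-a = 2(m-n)+1$ is odd. The complex part contributes $\frac{4}{\mathrm{i}(2(m-n)+1)}\cdot(-2\mathrm{i})$ times a radial integral $\int_0^\infty \max\{1,r\}^{-2s} r^{2n+2m+1}\cdot r\,\upd r$, i.e.\ $\int_0^1 r^{2n+2m+2}\,\upd r + \int_1^\infty r^{-2s+2n+2m+2}\,\upd r = \frac{1}{2n+2m+3} + \frac{1}{2s-2n-2m-3}$. The real part is a double integral; using the antiderivative trick noted after Theorem~\ref{thm:5} (that $\epsilon$ applied to a function is essentially minus its running integral plus a constant), $\la x^{2n}\mid x^{2m+1}\ra_{\R}$ reduces to a single integral of $\max\{1,|x|\}^{-s}\, x^{2n}$ against $\epsilon$ of $\max\{1,|x|\}^{-s} x^{2m+1}$ over $\R$, and by parity only the ``constant'' part $\tfrac12(F(\infty)+F(-\infty))$ of the $\epsilon$ survives against the even function $x^{2n}$, with $F$ an antiderivative of $\max\{1,|x|\}^{-s}x^{2m+1}$; this is again a product of elementary $\Gamma$-type factors. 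Adding the two contributions and simplifying should collapse to $\bigl(\tfrac{s}{s-2m-2}\bigr)\cdot\tfrac{1}{(n+\frac12)(m-n+\frac12)}$.

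The main obstacle I anticipate is purely bookkeeping: tracking the constants and signs so that the real and complex pieces combine into the single clean rational expression \eqref{eq:8}, rather than something with spurious extra terms. In particular one must be careful that the $\sgn$ in the real inner product and the $\sgn(\ip{z})$ in the complex one are handled consistently with the convention $g(\overline u) = \overline{g(u)}$ built into \eqref{BilinearForm}, and that the apparent pole at $m = n$ (where $m-n+\tfrac12 = \tfrac12$ is harmless, but where one might worry about the angular integral) does not actually occur because $b-a$ is always odd and nonzero-denominator. A useful sanity check along the way is that setting $s = \infty$ (so $\max\{1,|\cdot|\}^{-s}$ becomes the indicator of the closed unit disk / interval) should give $\la z^{2n}\mid z^{2m+1}\ra = \tfrac{1}{(n+\frac12)(m-n+\frac12)}$, matching the $s\to\infty$ limit of \eqref{eq:8}; this pins down the normalization of both pieces. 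Once Lemma~\ref{lemma:1} is in hand, Theorem~\ref{thm:2} follows by the skew Gram--Schmidt procedure, exactly as indicated in the text.
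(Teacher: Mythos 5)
The paper states Lemma~\ref{lemma:1} without proof (it is quoted from Lemma~4.1 of \cite{sinclair-2005}), so there is no in-paper argument to compare against and your proposal must stand on its own. Your overall strategy --- split $\la\,\cdot\,|\,\cdot\,\ra$ into $\la\,\cdot\,|\,\cdot\,\ra_\R$ and $\la\,\cdot\,|\,\cdot\,\ra_\C$, use parity for the vanishing claims, and separate the complex piece into a radial and an angular integral --- is the right one, and the vanishing statements do follow as you say. But there is a genuine error in your handling of the real piece.

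You claim that after writing $\epsilon\big(\cdot^{2m+1}\max\{1,|\cdot|\}^{-s}\big)(x) = -F(x)+\tfrac12\big(F(\infty)+F(-\infty)\big)$, ``by parity only the constant part survives'' when integrated against the even integrand $x^{2n}\max\{1,|x|\}^{-s}$. This is false: $h(t)=t^{2m+1}\max\{1,|t|\}^{-s}$ is odd, so its antiderivative $F$ is \emph{even}, as is $\epsilon h$ (one checks $\epsilon h(-x)=\epsilon h(x)$ directly from \eqref{eq:22}); the product of two even functions does not integrate to zero, so the $-F(x)$ term does not drop out. Your own $s=\infty$ sanity check exposes this: there $\epsilon h(x)=\frac{1-x^{2m+2}}{2m+2}$ on $(-1,1)$, giving
\[
\la x^{2n}\,|\,x^{2m+1}\ra_\R = 2\int_{-1}^1 x^{2n}\,\frac{1-x^{2m+2}}{2m+2}\,\upd x = \frac{4}{(2n+1)(2n+2m+3)},
\]
and only this value, combined with the complex contribution $\tfrac{8}{(2(m-n)+1)(2n+2m+3)}$, telescopes to $\tfrac{1}{(n+\frac12)(m-n+\frac12)}$ as \eqref{eq:8} requires. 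Keeping only the constant part would instead give $\tfrac{4}{(2m+2)(2n+1)}$, which is wrong. You must evaluate the full iterated integral for $\la\,\cdot\,|\,\cdot\,\ra_\R$, carrying both the $-F(x)$ term and the constant. (There is also a sign slip in your angular integral --- for odd $k$ one gets $\int_0^{2\pi}e^{\mathrm{i}k\theta}\sgn(\sin\theta)\,\upd\theta=-4/(\mathrm{i}k)$, not $4/(\mathrm{i}k)$ --- but that is cosmetic compared with the parity gap.)
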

Theorem~\ref{thm:2} follows from Lemma~\ref{lemma:1} and the following
theorem.
\begin{thm}
\label{thm:3}
Suppose $\{C_m\}$ is a sequence of non-zero real numbers, and $\alpha,
\beta \in \R \setminus \{-1, -2, \ldots\}$ and suppose $\la \cdot |
\cdot \ra^{\alpha, \beta}$ is a skew-symmetric inner product with $\la
z^{2n} | z^{2m} \ra^{\alpha, \beta} = \la z^{2n+1} |
z^{2m+1} \ra^{\alpha, \beta} = 0$, and 
\begin{equation}
\label{eq:9}
\la z^{2n} | z^{2m+1} \ra^{\alpha, \beta} = C_m  \bigg\{\prod_{j = 1}^n \frac{j
  - 1 - \beta}{j + \alpha} \bigg\} \frac{1}{m - n + 1 + \beta}.
\end{equation}
Define
\[
\pi_{2n}^{\alpha,
  \beta}(z) = \frac{1}{\G{1+\alpha} \G{1+\beta}}
\sum_{\ell=0}^n \frac{\G{\ell + \alpha + 1} \G{n - \ell +  \beta + 1}}{\G{\ell+1}
  \G{n - \ell + 1}}  z^{2\ell},  
\]
and
\[
\pi_{2n+1}^{\beta}(z) = \frac{1}{\G{-\beta - 1} \G{1+\beta}}
\sum_{\ell=0}^n \frac{\G{\ell + \beta  
    + 2} \G{n - \ell -
    \beta - 1} }{\G{\ell+1} \G{n - \ell + 1}} \frac{z^{2\ell+1}}{C_\ell}.
\]
Then, $\big\{ \pi_0^{\alpha, \beta}, \pi_1^{\beta}, \pi_2^{\alpha,
  \beta}, \pi_3^{\beta}, \ldots \big\}$ is a family of skew-orthonormal
polynomials for the skew-symmetric inner product $\la \cdot | \cdot
\ra^{\alpha, \beta}$. 
\end{thm}

It follows immediately from \eqref{eq:8} and \eqref{eq:9} that $\pi_{2n} = \pi_{2n}^{1/2,-1/2}$ and $\pi_{2n+1} = \pi_{2n+1}^{-1/2}$, where $C_m=2s/(s-2m-2)$. Thus, with a little bit of algebra, we get that
\begin{equation}
\label{pies}
\left\{
\begin{array}{lll}
\pi_{2n}(z) &=& \pi_{2n}^{1/2,-1/2}(z) \smallskip \\
\pi_{2n+1}(z) &=& \displaystyle\frac z4\left[\left(1+\frac1s\right)\pi_{2n}^{1/2,-3/2}(z) - \frac 3s\pi_{2n}^{3/2,-3/2}(z)\right].
\end{array}
\right.
\end{equation}

To be able to write down an explicit expression for $\mathbf K_N$ we
shall need the weighted versions of the skew orthogonal polynomials
defined by 
\[
\widetilde \pi_n(z) := \pi_n(z)\max\big\{1,|z|\big\}^{-s}.
\]

Then according to Theorem~\ref{thm:5}, the entries of $\mathbf K_N$
are, when $N=2J$ is even, 
\begin{equation}
\label{IDS1}
\left\{
\begin{array}{rrl}
\mathbf K^{(1,1)}_N(u,v) &:=& \varkappa_{2J}(u,v) = {\displaystyle
  2\sum_{j=0}^{J-1}\big[\widetilde\pi_{2j}(u)\widetilde\pi_{2j+1}(v) -
  \widetilde\pi_{2j}(v)\widetilde\pi_{2j+1}(u) \big]} \smallskip \\ 
\mathbf K^{(1,2)}_N(u,v) &:=& \varkappa_{2J} \epsilon(u,v) = {\displaystyle
  2\sum_{j=0}^{J-1}\big[\widetilde\pi_{2j}(u)\epsilon\widetilde\pi_{2j+1}(v)
  - \epsilon\widetilde\pi_{2j}(v)\widetilde\pi_{2j+1}(u)
  \big]} \smallskip \\ 
\mathbf K^{(2,2)}_N(u,v) &:=& \displaystyle \epsilon \varkappa_{2J} \epsilon (u,v) + \frac12 \sgn(u - v) \smallskip \\
&=& {\displaystyle 2\sum_{j=0}^{J-1}\big[\epsilon\widetilde\pi_{2j}(u)\epsilon\widetilde\pi_{2j+1}(v) - \epsilon\widetilde\pi_{2j}(v)\epsilon\widetilde\pi_{2j+1}(u) \big] + \frac12 \sgn(u - v)}
\end{array}
\right..
\end{equation}
We introduce this new notation for the matrix entries, because when $N
= 2J+1$ is odd, the entries are not given as simply as in
Theorem~\ref{thm:5}.  However, in this situation, the 
entries of $\mathbf K_N$ can be computed from \cite{sinclair-2008} or
\cite{mays-forrester} to be 
\begin{equation}
\label{IDS2}
\left\{
\begin{array}{rll}
\mathbf{K}^{(1,1)}_{2J+1}(u,v) &:=& \displaystyle \mathbf{K}^{(1,1)}_{2J}(u,v) - 2\sum_{j=0}^{J-1}\frac{s_{2j}}{s_{2J}}\big[\widetilde\pi_{2J}(u)\widetilde\pi_{2j+1}(v)-\widetilde\pi_{2J}(v)\widetilde\pi_{2j+1}(u)\big] \smallskip \\ 
\mathbf{K}^{(1,2)}_{2J+1}(u,v) &:=& \displaystyle \mathbf{K}^{(1,2)}_{2J}(u,v) - 2\sum_{j=0}^{J-1}\frac{s_{2j}}{s_{2J}}\big[\widetilde\pi_{2J}(u)\epsilon\widetilde\pi_{2j+1}(v)-\epsilon\widetilde\pi_{2J}(v)\widetilde\pi_{2j+1}(u)\big] \smallskip \\
&& \displaystyle + \frac{\widetilde\pi_{2J}(u)\chi_\R(v)}{s_{2J}} \smallskip \\ 
\mathbf{K}^{(2,2)}_{2J+1}(u,v) &:=& \displaystyle
\mathbf{K}^{(2,2)}_{2J}(u,v) - 2\sum_{j=0}^{J-1}\frac{s_{2j}}{s_{2J}}\big[\epsilon\widetilde\pi_{2J}(u)\epsilon\widetilde\pi_{2j+1}(v)-\epsilon\widetilde\pi_{2J}(v)\epsilon\widetilde\pi_{2j+1}(u)\big] \smallskip \\
&&\displaystyle+\frac{\epsilon\widetilde\pi_{2J}(u)\chi_\R(v)-\epsilon\widetilde\pi_{2J}(v)\chi_\R(u)}{s_{2J}}
\end{array}
\right.,
\end{equation}
where $\chi_{\R}$ is the characteristic function of $\R$, and 
\begin{equation}
\label{s_k}
s_k := \int_\R\widetilde\pi_k(x)\upd x.
\end{equation}
In general, expressions in \eqref{IDS2} must contain terms corresponding to constants $s_{2n+1}$ as well. However, it is easy to see from Theorem~\ref{thm:2} that $s_{2n+1}=0$ for all $n$. Thus, only the terms corresponding to $s_{2n}$ remain.
\begin{lemma}
\label{lemma:2}
It holds that
\[
s_{2n} = 2\frac{\Gamma\left(\frac{s+2}2\right)\Gamma\left(\frac{s-2n-1}2\right)}{\Gamma\left(\frac{s+1}2\right)\Gamma\left(\frac{s-2n}2\right)},
\]
where it is understood that $s_{2n}=2$ when $s=\infty$. 
\end{lemma}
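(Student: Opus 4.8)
The plan is to substitute the explicit formula for $\pi_{2n}$ from Theorem~\ref{thm:2} into the definition \eqref{s_k} of $s_{2n}$, evaluate the weighted monomial integrals term by term, and then recognise the resulting finite sum as a terminating balanced (Saalsch\"utzian) ${}_3F_2$ at argument $1$, to which the Pfaff--Saalsch\"utz summation formula applies. First I would record the elementary moment
\[
\int_\R x^{2k}\max\big\{1,|x|\big\}^{-s}\,\upd\mu_\R(x)=2\left(\frac{1}{2k+1}+\frac{1}{s-2k-1}\right)=\frac{2s}{(2k+1)(s-2k-1)},
\]
which is legitimate because $2k+1\le 2n+1\le N<s$ for every index $k$ that appears (from \eqref{IDS1}--\eqref{IDS2}, $s_{2n}$ is only ever needed with $2n\le N-1$). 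Substituting this together with the formula for $\pi_{2n}$, and using $\Gamma(k+3/2)/(2k+1)=\tfrac12\Gamma(k+1/2)$ to absorb the factor $(2k+1)^{-1}$, gives
\[
s_{2n}=\frac{2s}{\pi}\sum_{k=0}^{n}\frac{\Gamma(k+\tfrac12)\,\Gamma(n-k+\tfrac12)}{\Gamma(k+1)\,\Gamma(n-k+1)\,(s-2k-1)}.
\]

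Next I would rewrite every factor in Pochhammer notation: $\Gamma(k+\tfrac12)=\sqrt{\pi}\,(\tfrac12)_k$, the reflection identities $(\tfrac12)_{n-k}=(-1)^k(\tfrac12)_n/(\tfrac12-n)_k$ and $1/(n-k)!=(-n)_k/\big((-1)^k n!\big)$, and $1/(s-2k-1)=(s-1)^{-1}\big(\tfrac{1-s}{2}\big)_k\big/\big(\tfrac{3-s}{2}\big)_k$; the powers of $(-1)$ then cancel and
\[
s_{2n}=\frac{2s}{s-1}\,\frac{(\tfrac12)_n}{n!}\;{}_3F_2\Big(-n,\ \tfrac12,\ \tfrac{1-s}{2};\ \tfrac12-n,\ \tfrac{3-s}{2};\ 1\Big).
\]
Since $1+\tfrac12+\tfrac{1-s}{2}-(\tfrac12-n)-n=\tfrac{3-s}{2}$, this ${}_3F_2$ is balanced, so the Pfaff--Saalsch\"utz theorem gives its value as
\[
\frac{(-n)_n\,(\tfrac{s}{2}-n)_n}{(\tfrac12-n)_n\,(\tfrac{s-1}{2}-n)_n}.
\]
Converting the Pochhammer symbols back to Gamma functions ($(-n)_n=(-1)^n n!$, $(\tfrac12-n)_n=(-1)^n(\tfrac12)_n$, $(\tfrac{s}{2}-n)_n=\Gamma(\tfrac{s}{2})/\Gamma(\tfrac{s}{2}-n)$, $(\tfrac{s-1}{2}-n)_n=\Gamma(\tfrac{s-1}{2})/\Gamma(\tfrac{s-1}{2}-n)$), the signs and the factor $(\tfrac12)_n/n!$ cancel, leaving
\[
s_{2n}=\frac{2s}{s-1}\,\frac{\Gamma(\tfrac{s}{2})\,\Gamma(\tfrac{s-2n-1}{2})}{\Gamma(\tfrac{s-2n}{2})\,\Gamma(\tfrac{s-1}{2})}=2\,\frac{\Gamma(\tfrac{s+2}{2})\,\Gamma(\tfrac{s-2n-1}{2})}{\Gamma(\tfrac{s+1}{2})\,\Gamma(\tfrac{s-2n}{2})},
\]
the last equality using $\Gamma(\tfrac{s+2}{2})=\tfrac{s}{2}\Gamma(\tfrac{s}{2})$ and $\Gamma(\tfrac{s+1}{2})=\tfrac{s-1}{2}\Gamma(\tfrac{s-1}{2})$. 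For $s=\infty$ the weighted integral is simply $\int_{-1}^{1}x^{2k}\,\upd\mu_\R(x)=\tfrac{2}{2k+1}$, the ${}_3F_2$ collapses to a terminating ${}_2F_1$ summed by Chu--Vandermonde, and the identical bookkeeping yields $s_{2n}=2$, which is also the $s\to\infty$ limit of the closed form (as $\Gamma(\tfrac{s+2}{2})/\Gamma(\tfrac{s+1}{2})\sim(s/2)^{1/2}$ while $\Gamma(\tfrac{s-2n-1}{2})/\Gamma(\tfrac{s-2n}{2})\sim(s/2)^{-1/2}$).

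Everything here is essentially mechanical; the one genuinely substantive step is recognising the sum as a balanced ${}_3F_2(1)$ and matching its parameters correctly so that Pfaff--Saalsch\"utz applies, and the place where a slip is most likely is the Pochhammer bookkeeping --- the reflection formulas and the tracking of the signs $(-1)^k$ and $(-1)^n$. One can phrase the same computation by writing $1/(s-2k-1)=\int_0^1 u^{s-2k-2}\,\upd u$ and interchanging the summation with the integrals (after which $\sum_k\binom{n}{k}(t/u^2)^k(1-t)^{n-k}$ sums to $\big(t+(1-t)u^2\big)^n u^{-2n}$ and the substitution $t=\sin^2\theta$ tidies the outer integral); however this still leaves a nontrivial summation at the end and is no shorter, so I would keep it only as a consistency check.
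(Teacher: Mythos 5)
Your proof is correct, and the computation matches the paper's up to the key summation step. Both arguments begin identically: evaluate the moment $\int_\R x^{2k}\max\{1,|x|\}^{-s}\,\upd x = 2s/[(2k+1)(s-2k-1)]$, substitute the explicit $\pi_{2n}$ from Theorem~\ref{thm:2} into \eqref{s_k}, and use $\Gamma(k+3/2)/(2k+1)=\tfrac12\Gamma(k+1/2)$ to arrive at
\[
s_{2n}=\frac{2s}{\pi}\sum_{k=0}^n\frac{\Gamma(k+\tfrac12)\Gamma(n-k+\tfrac12)}{\Gamma(k+1)\Gamma(n-k+1)(s-2k-1)}.
\]
Where you diverge is in how the sum is closed. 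The paper notices that this is a special case ($a=b=1/2$, $x=s/2-1$) of its Lemma~\ref{lemma:7}, a partial-fractions identity already proved and used earlier to establish the skew-orthogonality in Theorem~\ref{thm:3}; the sum telescopes immediately to a ratio of falling factorials. You instead recast everything in Pochhammer notation, identify a terminating balanced ${}_3F_2(1)$, and apply Pfaff--Saalsch\"utz. These are the same hypergeometric content in two presentations---Lemma~\ref{lemma:7} is essentially a partial-fractions phrasing of the Saalsch\"utzian summation---so the difference is cosmetic rather than structural. Your route is self-contained (it does not lean on the paper's internal Lemma~\ref{lemma:7}) and makes the classical pedigree of the identity explicit, at the cost of more Pochhammer bookkeeping; the paper's route is shorter here because Lemma~\ref{lemma:7} has already been established and reused. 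Your treatment of $s=\infty$---both by direct Chu--Vandermonde collapse and as a limit of the closed form---is a small nicety the paper handles by the laconic phrase ``taking the limit,'' and it is a reasonable consistency check.
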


\section{A Family of Polynomials}
\label{sec:family-skew-orth}

As one can see from \eqref{pies}, all the skew orthogonal polynomials $\{\pi_{2n},\pi_{2n+1}\}$ can be expressed solely via even degree polynomials $\big\{\pi_{2n}^{\alpha,\beta}\big\}$ for three pairs of parameters $(\alpha,\beta)$. Hence, to derive the results announced in Section~\ref{sec:main}, we shall study polynomials
\begin{equation}
\label{pes}
P_n^{\alpha,\beta}(z) := \frac{1}{\G{1+\alpha} \G{1+\beta}}
\sum_{k=0}^n \frac{\G{k + 1+ \alpha} \G{n - k + 1 + \beta}}{\G{k+1}
  \G{n - k + 1}}  z^k,
\end{equation}
where $\alpha,\beta\not\in\{-1,-2,\ldots\}$. Clearly,
$ \pi_{2n}^{\alpha,\beta}(z) = P_n^{\alpha,\beta}(z^2)$. 

\subsection{Algebraic Properties of the Polynomials}
\label{ssec:algebraic}

The polynomials $P_n^{\alpha,\beta}$ satisfy the following relations.

\begin{prop}
\label{prop:algebraic}
It holds that
\begin{eqnarray}
\label{nn-1}
P_n^{\alpha,\beta}(z) &=&  P_n^{\alpha,\beta-1}(z) + P_{n-1}^{\alpha,\beta}(z) \\
\label{reciprocal}
&=&  z^nP_n^{\beta,\alpha}(1/z) \\
\label{recurrence}
 &=& \left[\frac{n+\alpha} nz + \frac{n+\beta} n\right]P_{n-1}^{\alpha,\beta}(z) - \frac{n+\alpha+\beta}nzP_{n-2}^{\alpha,\beta}(z),  \\
 \label{integral}
 &=& \frac{\Gamma(n+2+\alpha+\beta)}{\Gamma(1+\alpha)\Gamma(1+\beta)\Gamma(n+1)} \int_CB_{\alpha,\beta}(t)(1 - t + tz)^n\mathrm{d}t, 
\end{eqnarray}
where recurrence relations \eqref{recurrence} hold for $n\geq2$ with $P_0^{\alpha,\beta}(z)\equiv1$, $P_1^{\alpha,\beta}(z)=(1+\beta)+(1+\alpha) z$, $C$ is the Pochhammer contour, and $B_{\alpha,\beta}(t):=t^{\alpha}(1-t)^\beta/(1-e^{2\pi\mathrm{i}\alpha})(1-e^{2\pi\mathrm{i}\beta})$.
\end{prop}

Recall that the Pochhammer contour is a contour that winds clockwise around $1$, then clockwise around another $-1$, then counterclockwise around $1$, and then counterclockwise around $-1$. 

The polynomials $P_n^{\alpha,\beta}$ can be expressed via non-standard Jacobi polynomials.

\begin{prop}
\label{prop:Jacobi}
It holds that
\[
P_n^{\alpha,\beta}(z) = (1-z)^n J_n^{-n-1-\alpha,-n-1-\beta}\left(\frac{z+1}{z-1}\right),
\]
where $J_n^{a,b}$ is the $n$-th Jacobi polynomial with parameters $a,b$.
\end{prop}

For a large set of parameters the zeros of $P_n^{\alpha,\beta}$ exhibit definite behavior with respect to the unit circle. Observe that due to \eqref{reciprocal}, we only need to consider the case $\alpha\geq\beta$. Recall also that $\alpha,\beta\not\in\{-1,-2,\ldots\}$.

\begin{prop}
\label{prop:zeros}
\begin{itemize}
\item [(i)] $P_n^{\alpha,\beta}$ has a zero of order $m$ at 1 if and
  only if $n\geq m$ and $m+1+\alpha+\beta=0$ for some $m\in\N$. 
\item [(ii)] The zeros of $P_n^{\alpha,\beta}$ in $\C\setminus\{1\}$
  are simple. 
\item [(iii)] Let $\alpha>\beta$. If either $2+\alpha+\beta>0$ or
  $m+1+\alpha+\beta=0$ for some $m\in\N$, then the zeros of
  $P_n^{\alpha,\beta}$ are contained in $\D\cup\{1\}$. 
\item [(iv)] Let $\alpha=\beta$. If $3+2\alpha>0$ or $m+1+2\alpha=0$
  for some $m$ even, then the zeros of $P_n^{\alpha,\alpha}$ belong to
  $\T$. 
\end{itemize}
\end{prop}

\begin{figure}[!ht]
\centering
\includegraphics[scale=.2]{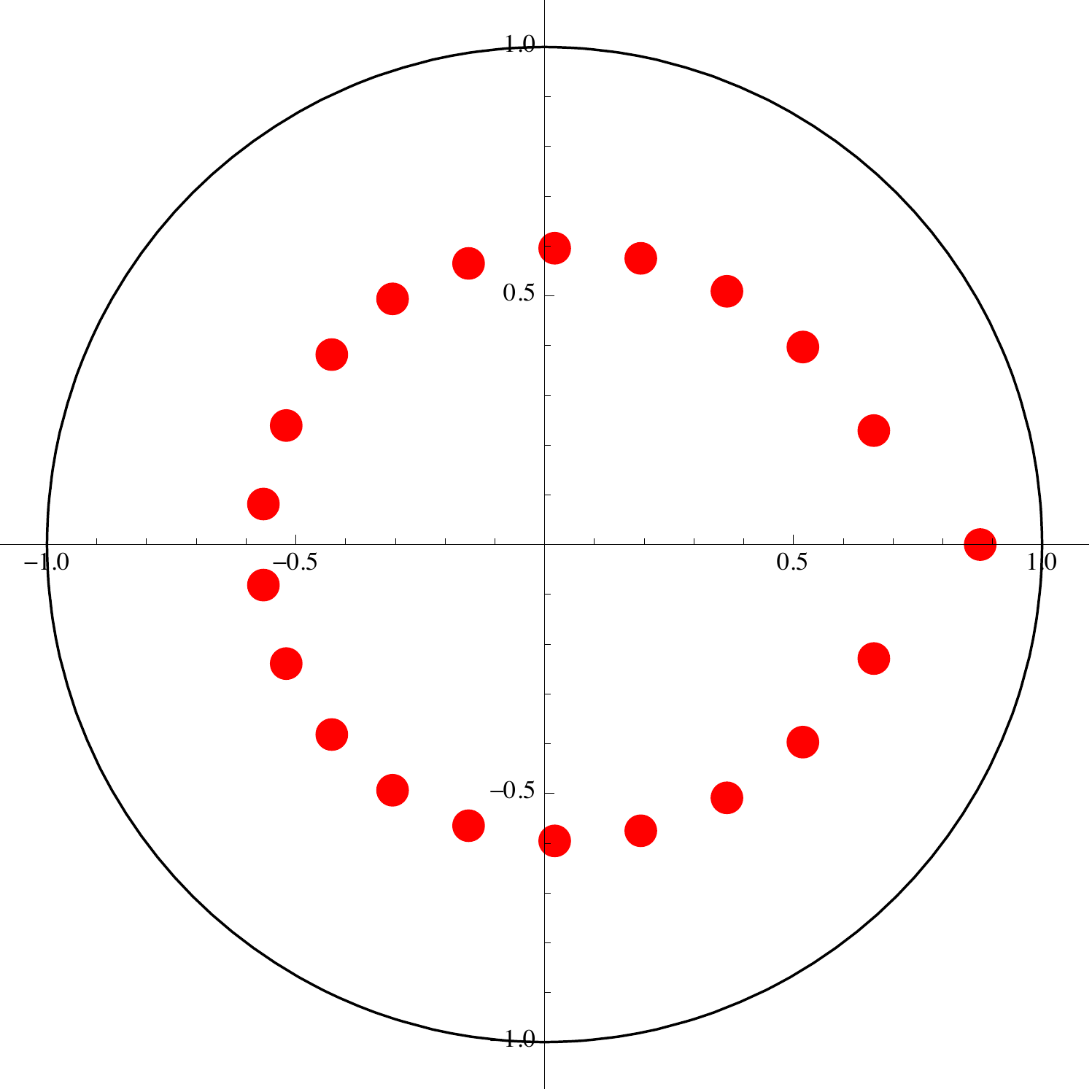}\quad
\includegraphics[scale=.2]{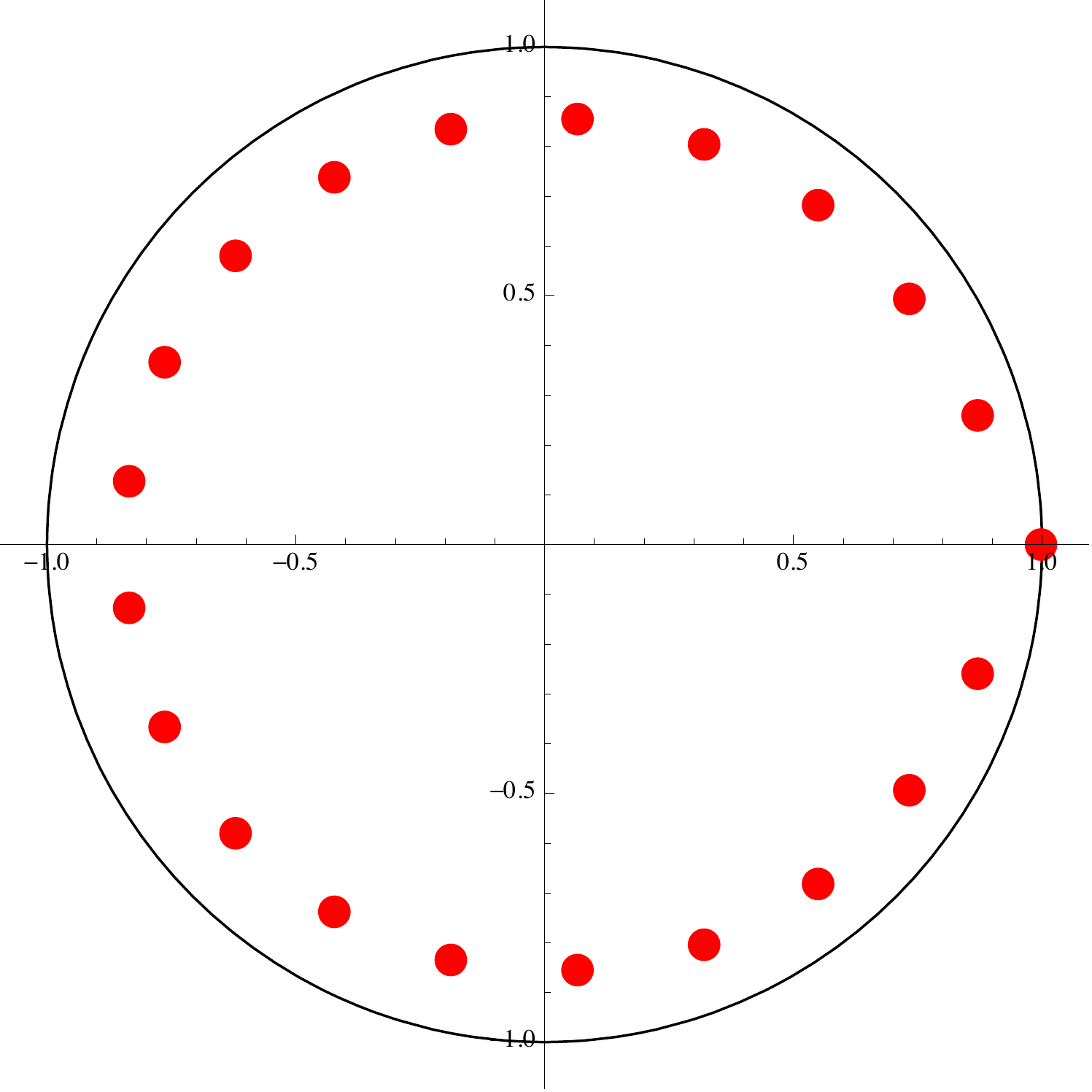}\quad
\includegraphics[scale=.2]{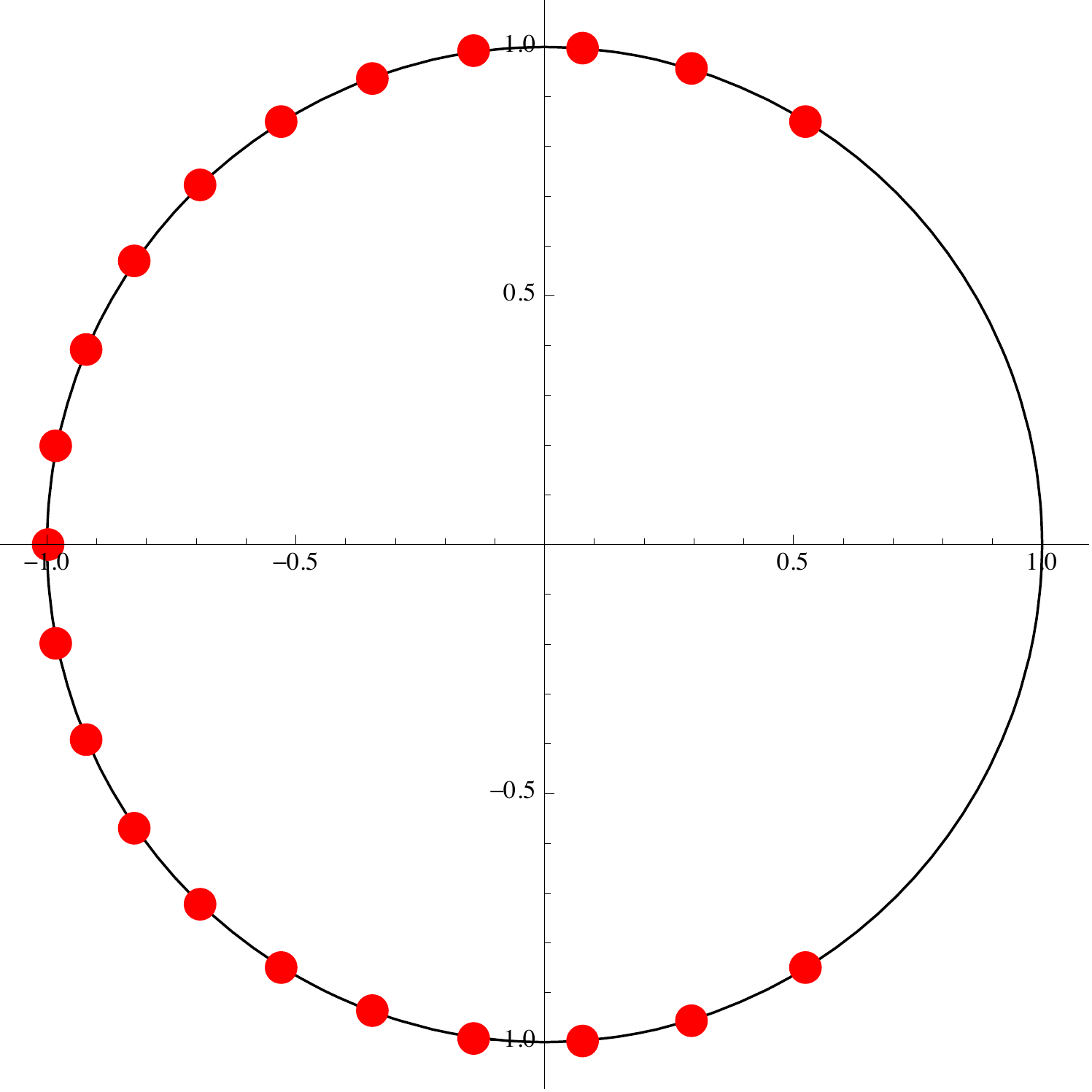}\quad
\includegraphics[scale=.2]{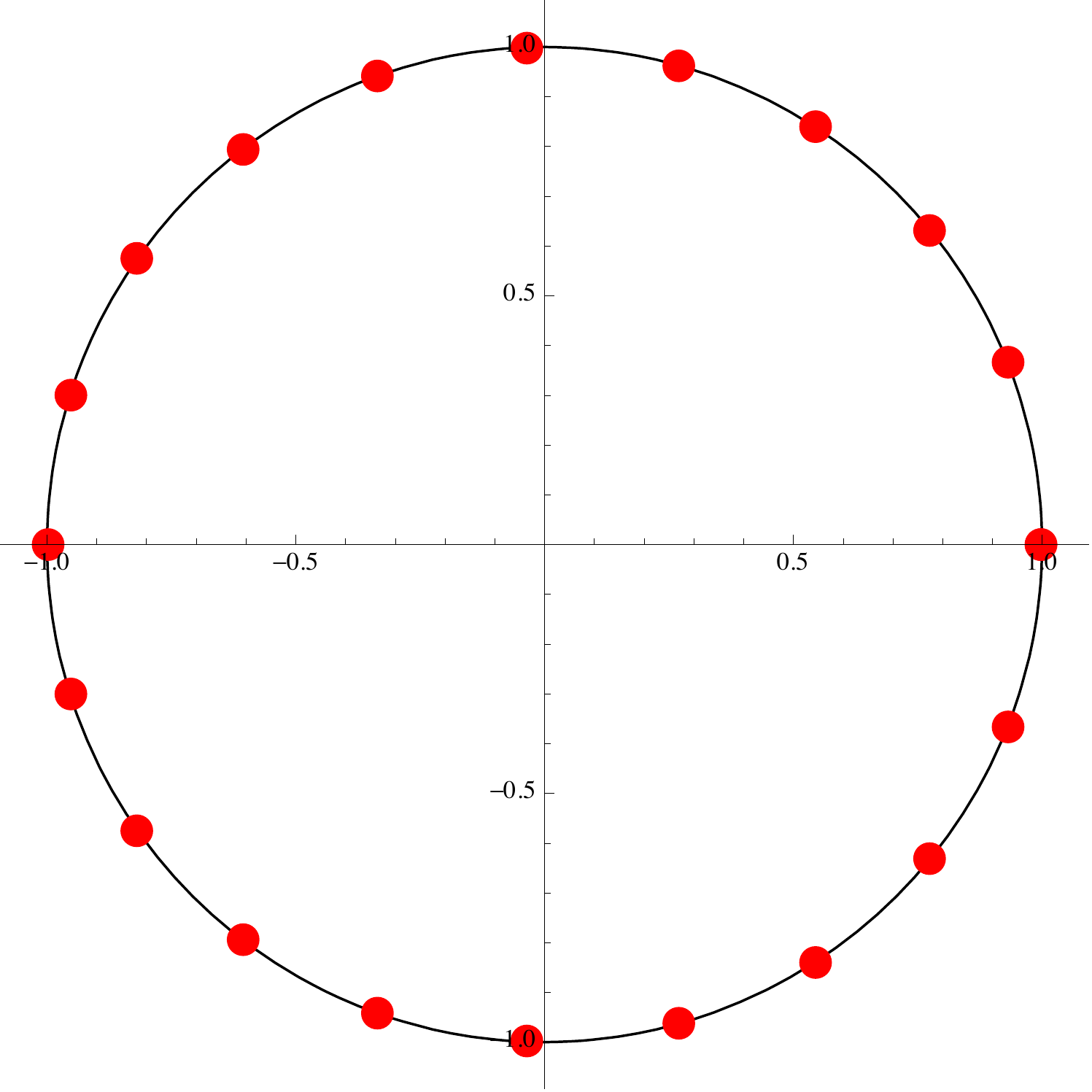}
\caption{\small From left to right: the zeros of $P_{21}^{1.5,-1.5}$, $P_{21}^{-.6,-1.4}$, $P_{21}^{11.5,11.5}$, and $P_{21}^{-1.5,-1.5}$. }
\label{figure1}
\end{figure}

\subsection{Asymptotic Properties of the Polynomials}
\label{ssec:asymptotic}

The polynomials $P_n^{\alpha,\beta}$ enjoy the following asymptotic properties.

\begin{thm}
\label{thm:asymptotic}
Let $(1-z)^{-(1+\gamma)}$ be the branch holomorphic in
$\C\setminus[1,\infty)$ and positive for $z \in (-\infty, 1)$. Then
$as n \rightarrow \infty$, 
\begin{equation}
\label{convergence1}
\left|\frac{P_n^{\alpha,\beta}(z)}{P_n^{\alpha,\beta}(0)} - \frac1{(1-z)^{1+\alpha}}\right| = \frac{o(1)}{|1-z|},
\end{equation}
where $o(1)$ holds uniformly in $\overline\D$ when $\alpha<0$ and $\beta>0$, and in $\D$ otherwise. Furthermore,
\begin{equation}
\label{convergence2}
\left|\frac1{z^n}\frac{P_n^{\alpha,\beta}(z)}{P_n^{\beta,\alpha}(0)} - \frac{1}{(1-1/z)^{1+\beta}}\right| = \frac{o(1)}{|1-1/z|},
\end{equation}
where $o(1)$ holds uniformly in $\overline\Om$ when $\alpha>0$ and $\beta<0$, and in $\Om$ otherwise.
\end{thm}

Observe that
\[
P_n^{\alpha,\beta}(0) = \frac{\Gamma(n+1+\beta)}{\Gamma(1+\beta)\Gamma(n+1)} = \big(1+o_n(1)\big)\frac{(n+1)^\beta}{\Gamma(1+\beta)}
\]
by the properties of the Gamma function.

When $\zeta +z/n\in\overline\Om$, it readily follows from \eqref{convergence2}, the maximum modulus principle, and normal family argument that the following corollary takes place.

\begin{cor}
\label{cor:asymptotic}
Let $\zeta\in\T\setminus\{1\}$, $\alpha>0$, and $\beta<0$. Then
\[
\lim_{n\to\infty}\frac{\Gamma(1+\alpha)}{(n+1)^\alpha}\left(\zeta+\frac zn\right)^{-n}P_n^{\alpha,\beta}\left(\zeta+\frac zn\right) = \left(1-\overline\zeta\right)^{-1-\beta}
\]
locally uniformly in $\C$.
\end{cor}

It is obvious from the previous results that the point $1$ is a special point for polynomials $P_n^{\alpha,\beta}$. To describe the behavior of the polynomials there we need the following definition:
\begin{equation}
\label{Mab}
M_{\alpha,\beta}(z) := \frac{\Gamma(1+\gamma)}{\Gamma(1+\alpha)}\sum_{n=0}^\infty\frac{\Gamma(n+1+\alpha)}{\Gamma(n+1+\gamma)}\frac{z^n}{n!}, \quad  \gamma:=1+\alpha+\beta\not\in\{-1,-2,\ldots\},
\end{equation}
which is a confluent hypergeometric function with parameters $\alpha,\gamma$. 

\begin{prop}
\label{prop:asymptoticat1} 
If $\gamma=1+\alpha+\beta\not\in\{-1,-2,\ldots\}$, then as $n
\rightarrow \infty$, 
\begin{equation}
\label{Pnat1}
P_n^{\alpha,\beta}\left(1+\frac zn\right) = \big(1+o_n(1)\big)\frac{\Gamma(n+1+\gamma)}{\Gamma(1+\gamma)\Gamma(n+1)}M_{\alpha,\beta}(z)
\end{equation}
locally uniformly in $\C$, where $o_n(1)=0$ when $z=0$. 
\end{prop}

\subsection{Asymptotic Properties of the Sums away from the Unit Circle}
\label{ssec:away}

As apparent from \eqref{pies} ---\eqref{IDS2}, the main focus of this work is the asymptotic behavior of the sums
\begin{equation}
\label{sums}
K_N^{\alpha_1,\beta_1,\alpha_2,\beta_2}(z,w) :=\sum_{n=0}^{N-1}P_n^{\alpha_1,\beta_1}(z)P_n^{\alpha_2,\beta_2}(w). 
\end{equation}
Properly renormalized, these sums converge locally uniformly in
$\D\times\D$ and $\Om\times\Om$. To state the results, we shall need
the following notation. For $\beta_1+\beta_2+1<0$ set
$\Lambda_{\beta_1,\beta_2}(\zeta)$ to be 
\[
\left\{
\begin{array}{ll}
\frac{\Gamma(-\beta_1-\beta_2-1)}{\Gamma(-\beta_1)\Gamma(-\beta_2)}\left[{}_2F_1\big(1,1+\beta_1;-\beta_2;\overline\zeta\big) + {}_2F_1\big(1,1+\beta_2;-\beta_1;\zeta\big) - 1\right], & \beta_1,\beta_2\notin\Z_+, \bigskip \\
\frac{\Gamma(-\beta_1-\beta_2-1)}{\Gamma(-\beta_2)\Gamma(1+\beta_2)}\left[\zeta^{1+\beta_1}{}_2F_1\big(1,2+\beta_1+\beta_2;1;\zeta\big)\right], & \beta_1\in\Z_+, \bigskip \\
\frac{\Gamma(-\beta_1-\beta_2-1)}{\Gamma(-\beta_2)\Gamma(1+\beta_2)}\left[\overline\zeta^{1+\beta_2}{}_2F_1\big(1,2+\beta_1+\beta_2;1;\overline\zeta\big)\right], & \beta_2\in\Z_+,
\end{array}
\right.
\]
where
\begin{equation}
\label{hyper_geom}
{}_2F_1\big(a,b;c;z\big) := \frac{\Gamma(c)}{\Gamma(a)\Gamma(b)}\sum_{n=0}^\infty\frac{\Gamma(n+a)\Gamma(n+b)}{\Gamma(n+c)\Gamma(n+1)}z^n, \quad a,b,c\not\in\Z_-.
\end{equation}
The function $\Lambda_{\beta_1,\beta_2}$ is continuous on $\T\setminus\{-1\}$ with an integrable singularity at $1$ when $\beta_1+\beta_2+1\geq-1$ \cite[Sec.~15.4]{Olver} and is continuous on the whole circle when $\beta_1+\beta_2+1<-1$. It can be verified that the Fourier series corresponding to $\Lambda_{\beta_1,\beta_2}$ is given by
\begin{equation}
\label{Lambda12}
\frac{\Gamma(-\beta_1-\beta_2-1)}{\Gamma(1+\beta_1)\Gamma(1+\beta_2)}\left[\sum_{m=0}^\infty \frac{\Gamma(1+\beta_1)}{\Gamma(-\beta_2)}\frac{\Gamma(m+1+\beta_2)}{\Gamma(m-\beta_1)}\zeta^m + \sum_{m=1}^\infty\frac{\Gamma(1+\beta_2)}{\Gamma(-\beta_1)}\frac{\Gamma(m+1+\beta_1)}{\Gamma(m-\beta_2)}\overline\zeta^m\right],
\end{equation}
where it is understood that the terms containing $\Gamma(-\beta_j)$ are zero when $\beta_j$ is a non-negative integer. 

\begin{thm}
\label{thm:Kasymp}
It holds that
\begin{equation}
\label{KN00}
\lim_{N\to\infty}K_N^{\alpha_1,\beta_1,\alpha_2,\beta_2}(0,0) = \left\{
\begin{array}{ll}
\pm\infty, & \beta_1+\beta_2 +1 \geq 0, \smallskip \\
\frac{\Gamma(-\beta_1-\beta_2-1)}{\Gamma(-\beta_1)\Gamma(-\beta_2)}, & \beta_1+\beta_2+1<0,
\end{array}
\right.
\end{equation}
where the sign in the first case is the same as the sign of $\Gamma(1+\beta_1)\Gamma(1+\beta_2)$. Moreover, we have that
\begin{equation}
\label{SumInside1}
\lim_{N\to\infty}\frac{K_N^{\alpha_1,\beta_1,\alpha_2,\beta_2}(z,w)}{K_N^{\alpha_1,\beta_1,\alpha_2,\beta_2}(0,0)} = \frac1{(1-z)^{1+\alpha_1}(1-w)^{1+\alpha_2}}
\end{equation}
locally uniformly in $\D\times\D$ when $\beta_1+\beta_2+1\geq0$, and
\begin{equation}
\label{SumInside2}
\lim_{N\to\infty}K_N^{\alpha_1,\beta_1,\alpha_2,\beta_2}(z,w) = \frac1{2\pi}\int_\T\frac{\Lambda_{\beta_1,\beta_2}(\zeta)|\upd\zeta|}{\big(1-z\overline\zeta\big)^{1+\alpha_1}\big(1-w\zeta\big)^{1+\alpha_2}}
\end{equation}
locally uniformly in $\D\times\D$ when $\beta_1+\beta_2+1<0$. Finally, it holds that
\begin{equation}
\label{SumOutside}
\lim_{N\to\infty}\frac{K_N^{\alpha_1,\beta_1,\alpha_2,\beta_2}(z,w)}{N^{\alpha_1+\alpha_2}(zw)^N} = \frac1{\Gamma(1+\alpha_1)\Gamma(1+\alpha_2)}\frac1{(zw-1)(1-1/z)^{1+\beta_1}(1-1/w)^{1+\beta_2}}
\end{equation}
uniformly on closed subsets of $\Om\times\Om$.
\end{thm}

\subsection{Asymptotic Properties of the Sums on the Unit Circle}
\label{ssec:on}

Theorem~\ref{thm:Kasymp} shows that non-trivial scaling limits of the sums $K_N^{\alpha_1,\beta_1,\alpha_2,\beta_2}$ can appear only on $\T\times\T$. To derive such limits we need rather precise knowledge of the behavior of the polynomials on the unit circle. Hence, in the light of \eqref{convergence2}, we shall only consider parameters satisfying $\alpha>0$ and $\beta<0$. To describe the aforementioned scaling limits, set
\begin{equation}
\label{Egamma}
E_\gamma(\tau) := (1+\gamma)\int_0^1x^\gamma e^{\tau x}\mathrm{d}x, \qquad \gamma>-1,
\end{equation}
where the normalization is chosen so $E_\gamma(0)=1$. Clearly, it holds that $E_0(\tau):=\frac{e^\tau-1}{\tau}$ and $E_\gamma^\prime(\tau)=\frac{\gamma+1}{\gamma+2}E_{\gamma+1}(\tau)$.

\begin{thm}
\label{thm:scaling}
Let $\alpha_i>0$ and $\beta_i<0$, $i\in\{1,2\}$. Then for every $\zeta\in\T\setminus\{1\}$ it holds that
\[
K_N^{\alpha_1,\beta_1,\alpha_2,\beta_2}\big(\zeta,\overline\zeta\big) = \left[\frac{(1-\overline\zeta)^{-1-\beta_1}(1-\zeta)^{-1-\beta_2}}{1+\alpha_1+\alpha_2}+o_N(1)\right]\frac{\Gamma(N+1+\alpha_1+\alpha_2)}{\Gamma(1+\alpha_1)\Gamma(1+\alpha_2)\Gamma(N)} 
\]
as $N\to\infty$ and
\[
\lim_{N\to\infty} \frac{K_N^{\alpha_1,\beta_1,\alpha_2,\beta_2}\left(\zeta+\frac {a_1}N,\overline\zeta+\frac {\overline a_2}N\right)}{K_N^{\alpha_1,\beta_1,\alpha_2,\beta_2}\big(\zeta,\overline\zeta\big)} = E_{\alpha_1+\alpha_2}\left(a_1\overline\zeta+\overline a_2\zeta\right),
\]
uniformly for $a_1,a_2$ on compact subsets of $\C$. 
\end{thm}

The scaling limit at 1 is no longer described by \eqref{Egamma}, but rather by 
\begin{equation}
\label{secondE}
E_{\alpha_1,\beta_1,\alpha_2,\beta_2}(\tau_1,\tau_2) := (1+\gamma) \int_0^1 x^\gamma M_{\alpha_1,\beta_1}(\tau_1x)M_{\alpha_2,\beta_2}(\tau_2x) \mathrm{d}x,
\end{equation}
where $\gamma:=2+\alpha_1+\beta_1+\alpha_2+\beta_2>-1$ and $M_{\alpha,\beta}$ was defined in \eqref{Mab}. 

\begin{thm}
\label{thm:scaling1}
Let $\alpha_i>0$ and $\beta_i<0$, $i\in\{1,2\}$. If $\gamma=2+\alpha_1+\beta_1+\alpha_2+\beta_2>-1$, then
\[
K_N^{\alpha_1,\beta_1,\alpha_2,\beta_2}(1,1) = \frac{1+o_N(1)}{\Gamma(2+\alpha_1+\beta_1)\Gamma(2+\alpha_2+\beta_2)}\frac1{1+\gamma}\frac{\Gamma(N+1+\gamma)}{\Gamma(N)}
\]
as $N\to\infty$ and
\begin{equation}
\label{scaling1}
\lim_{N\to\infty} \frac{K_N^{\alpha_1,\beta_1,\alpha_2,\beta_2}\left(1+\frac {a_1}N,1+\frac {a_2}N\right)}{K_N^{\alpha_1,\beta_1,\alpha_2,\beta_2}(1,1)} = E_{\alpha_1,\beta_1,\alpha_2,\beta_2}(a_1,a_2)
\end{equation}
uniformly for $a_1,a_2$ on compact subsets of $\C$.
\end{thm}

\section{Proofs}

\subsection{Proof of Theorem~\ref{thm:3}}

The following lemma is needed both for the proofs of Theorem~\ref{thm:3} and Lemma~\ref{lemma:2}.
\begin{lemma}
\label{lemma:7}
Let $a,b \in \R$ with $b \not\in \{0, -1, -2, \ldots\}$.  Then, 
\begin{align*}
& \frac{1}{\G{b} \G{a}} \sum_{k=0}^n \frac{\G{k +
    a} \G{n - k + b}}{\G{k+1} \G{n - k + 1}} \bigg\{
\prod_{j = 1}^k \frac{j - b}{j + a - 1} \bigg\} \frac{1}{x -
  k +b} \\
 & \hspace{7cm}  = \frac{x(x-1) \cdots (x - n + 1)}{(x + b) (x +
  b - 1) \cdots (x + b - n) }.
\end{align*}
\end{lemma}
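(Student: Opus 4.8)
The plan is to prove the identity by treating both sides as rational functions of $x$ and identifying them through their partial fraction expansions. Write $\Phi(x)$ for the left-hand side and $\Psi(x)$ for the right-hand side. Then $\Psi$ is a proper rational function (numerator of degree $n$, denominator $\prod_{k=0}^n(x+b-k)$ of degree $n+1$) whose only possible singularities are simple poles at the points $x=k-b$, $0\le k\le n$, and $\Psi(x)\to0$ as $x\to\infty$. The sum $\Phi$ is manifestly a linear combination $\sum_{k=0}^n c_k/(x-k+b)$ of simple proper fractions with $x$-independent coefficients $c_k$, so it too is a proper rational function whose poles lie among the same finite set and which vanishes at infinity. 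Consequently $\Phi-\Psi$ is a rational function that vanishes at infinity and whose only possible singularities are simple poles at $x=k-b$; if the residues of $\Phi$ and $\Psi$ agree at each of these $n+1$ points, then $\Phi-\Psi$ has no poles at all, hence is a polynomial tending to $0$, hence is identically zero.

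It then remains to compare residues. At $x=k-b$ only the $k$-th term of $\Phi$ contributes, and using $\prod_{j=1}^k(j+a-1)=\Gamma(k+a)/\Gamma(a)$ one sees that the entire dependence on $a$ cancels, leaving the residue of $\Phi$ at $x=k-b$ equal to
\[
\frac{\Gamma(n-k+b)}{\Gamma(b)\,k!\,(n-k)!}\prod_{j=1}^k(j-b).
\]
For $\Psi$, the pole at $x=k-b$ comes from the single denominator factor $x+b-k$, so its residue there is
\[
\frac{(k-b)(k-b-1)\cdots(k-b-n+1)}{\prod_{0\le j\le n,\;j\ne k}(k-j)},
\]
whose denominator equals $k!\,(-1)^{n-k}(n-k)!$. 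Equality of the two residues thus reduces to the elementary identity
\[
(k-b)(k-b-1)\cdots(k-b-n+1) \;=\; (-1)^{n-k}\,\frac{\Gamma(n-k+b)}{\Gamma(b)}\prod_{j=1}^k(j-b),
\]
which I would prove by writing the $n$-fold product on the left as the factors $(m-b)$ with $m$ running over $k,k-1,\ldots,k-n+1$, splitting off the $k$ factors with $1\le m\le k$ (which reproduce $\prod_{j=1}^k(j-b)$) and reindexing the remaining $n-k$ factors with $m\le0$ as $-(\ell+b)$, $0\le\ell\le n-k-1$, so that they contribute $(-1)^{n-k}\prod_{\ell=0}^{n-k-1}(\ell+b)=(-1)^{n-k}\Gamma(n-k+b)/\Gamma(b)$. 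The hypothesis $b\notin\{0,-1,-2,\ldots\}$ is precisely what keeps $\Gamma(b)$ (which sits in a denominator) and all the other Gamma values finite and nonzero, so this manipulation is legitimate; the cases $n=k$ and $n>2k$ are absorbed uniformly into this Gamma-function bookkeeping.

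The only place demanding care is this final bookkeeping step: keeping the sign $(-1)^{n-k}$ and the possibly sign-changing finite products straight in the residue of $\Psi$, and spotting the cancellation of $a$ in the residue of $\Phi$. Conceptually there is no real obstacle, only the standard fact that a proper rational function is determined by its principal parts. As a consistency check, when $b$ is a positive integer small enough that some $k-b$ is a zero of $x(x-1)\cdots(x-n+1)$ (so numerator and denominator of $\Psi$ share a factor), the residue of $\Phi$ at that point vanishes as well, matching the removable singularity of $\Psi$; the argument is unaffected since it asserts only equality of residues, zero or not.
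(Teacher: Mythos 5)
Your proposal is correct and follows essentially the same route as the paper's proof: after cancelling the $a$-dependence, both arguments compute the coefficient of $(x-k+b)^{-1}$ in the partial fraction expansion of the right-hand side and match it, factor by factor, against the explicit coefficient in the sum. The extra scaffolding you supply (both sides vanish at infinity, so matching residues forces equality) is a correct and slightly more explicit framing of the same partial-fractions argument the paper uses implicitly.
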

\begin{proof}
Since
\[
\frac{\G{k + a}}{\G{a}} = \prod_{j=1}^k \frac{1}{j+a-1},
\]
it suffices to prove that
\[
\frac{1}{\G{b}} \sum_{k=0}^n \frac{\G{n - k + b}}{\G{k+1} \G{n - k +
    1}} \bigg\{ 
\prod_{j = 1}^k (j - b) \bigg\} \frac{1}{x -
  k +b}  = \frac{x(x-1) \cdots (x - n + 1)}{(x + b) (x +
  b - 1) \cdots (x + b - n) }.
\]
The coefficient of $(x-k+b)^{-1}$ in the partial fractions decomposition
of the rational function on the right hand side is 
\begin{align*}
& \frac{x(x-1) \cdots (x - n + 1)}{(x + b) (x +
  b - 1) \cdots (x + b - n) } (x + b -k) \bigg|_{x = k-b} \\
& \hspace{2cm}  =
\bigg( \prod_{j=0}^{k-1} (k - b - j)  \prod_{\ell=k}^{n-1} (k - b - \ell) \bigg) \bigg/ \bigg( \prod_{j=0}^{k-1}
(k - j) \prod_{\ell = k+1}^n (k - \ell) \bigg) \\
& \hspace{2cm}  =
\bigg( \bigg\{\prod_{j=1}^{k} (j - b)\bigg\}  (-1)^{n-k}
\prod_{\ell=0}^{n-k-1} (b + \ell) 
\bigg) \bigg/ \bigg( \Gamma(k+1) 
(-1)^{n-k} \Gamma(n-k+1) \bigg) \\
& \hspace{2cm}  =
\bigg( \bigg\{\prod_{j=1}^{k} (j - b) \bigg\} \Gamma(n-k+b)
\bigg) \bigg/ \bigg(\Gamma(b) \Gamma(k+1) 
\Gamma(n-k+1) \bigg),
\end{align*}
which proves the lemma.
\end{proof}

\begin{proof}[Proof of Theorem~\ref{thm:3}]
For the moment, let us write $\pi_{2n} = \Gamma(\alpha+1) \Gamma(\beta+1)
\pi_{2n}^{\alpha, \beta}$ and $\pi_{2n+1} = \Gamma(-\beta-1)
\Gamma(\beta + 1)\pi_{2n+1}^{\beta}$. Then
\begin{align*}
&\la  \pi_{2n}(z) | z^{2m+1} \ra^{\alpha, \beta} =
 \sum_{k=0}^n \frac{\G{k + \alpha + 1} \G{n - k + 
    \beta + 1}}{\G{k+1} \G{n - k + 1}} \la z^{2k} | z^{2m+1}
\ra^{\alpha, \beta} \\
& \hspace{2cm} = C_m \sum_{k=0}^n \frac{\G{k + \alpha + 1} \G{n - k +
    \beta + 1}}{\G{k+1} \G{n - k + 1}} \bigg\{\prod_{j = 1}^k \frac{j
  - 1 - \beta}{j + \alpha} \bigg\} \frac{1}{m -
  k + 1 + \beta},
\end{align*}
which by setting $a = \alpha + 1$ and $b = \beta + 1$ in
Lemma~\ref{lemma:7}, is equal to 0 for $m = 0, 1, \ldots, n$. 
Since $\pi_{2m+1}^{\alpha, \beta}$ is odd of degree $2m+1$, we have
$\la \pi_{2n}^{\alpha, \beta} | \pi_{2m+1}^{\alpha, \beta}
\ra^{\alpha, \beta} = 0$ for these values of $m$.  

Similarly, looking at 
\begin{align*}
\la z^{2m} |  \pi_{2n+1}(z) \ra^{\alpha, \beta} &=
 \sum_{k=0}^n \frac{\G{k + \beta
    + 2} \G{n - k -
    \beta - 1} }{\G{k+1} \G{n - k + 1} C_k} \la z^{2m}
| z^{2k+1} \ra^{\alpha, \beta}  \\
&=
- \bigg\{\prod_{j = 1}^m \frac{j
  - 1 - \beta}{j + \alpha} \bigg\} \sum_{k=0}^n \frac{\G{k + \beta
    + 2} \G{n - k -
    \beta - 1} }{\G{k+1} \G{n - k + 1}} \frac{1}{m - k - 1 - \beta} \\
&=
-   (\beta + 1) \bigg\{\prod_{j = 1}^m \frac{j
  - 1 - \beta}{j + \alpha} \bigg\} \\
& \hspace{1cm} \times \sum_{k=0}^n \frac{\G{k + \beta
    + 1} \G{n - k -
    \beta - 1} }{\G{k+1} \G{n - k + 1}} \bigg\{ \frac{k + \beta +
  1}{\beta + 1} 
\bigg\}\frac{1}{m - k - 1 - \beta},
\end{align*}
which by Lemma~\ref{lemma:7} is equal to 0 for $m=0,1,\ldots, n-1$ by
setting $a = \beta + 1$ and $b = -\beta - 1$.

Turning to $\la  \pi^{\alpha, \beta}_{2n} |  \pi^{\beta}_{2n+1} \ra^{\alpha, \beta}$,
\begin{align*}
\la  \pi_{2n}^{\alpha, \beta} |  \pi_{2n+1}^\beta \ra^{\alpha, \beta} &=  \frac{1}{\G{\alpha
    + 1} \G{\beta + 1}} \sum_{k=0}^n
\frac{\G{k + \alpha + 1} \G{n - k + \beta + 1}}{\G{k+1} \G{n - k + 1}}
\la z^{2k} |  \pi_{2n+1}^{\beta}(z) \ra^{\alpha, \beta} \\
&= \frac{\G{n + \alpha + 1}}{\G{\alpha + 1} \G{n+1}}
\la z^{2n} |  \pi^{\beta}_{2n+1}(z) \ra^{\alpha, \beta} \\
&= -\frac{\G{n + \alpha + 1} (\beta+1)}{\G{\alpha + 1} \G{n+1}}
\bigg\{\prod_{j = 1}^n \frac{j - 1 - \beta}{j + \alpha} \bigg\}
\frac{1}{\G{-\beta-1} \G{\beta+1}} \\
& \hspace{1cm} \times \sum_{k=0}^n \frac{\G{k + \beta
    + 1} \G{n - k -
    \beta - 1} }{\G{k+1} \G{n - k + 1}} \bigg\{ \frac{k + \beta +
  1}{\beta + 1} 
\bigg\}\frac{1}{n - k - 1 - \beta} \\
&= -\frac{ (\beta+1)}{\G{n+1}}
\bigg\{\prod_{j = 1}^n (j - 1 - \beta) \bigg\} \Gamma(n+1)
\prod_{j=0}^n \frac{1}{n-\beta-j-1} =1,
\end{align*}
where again, we use Lemma~\ref{lemma:7} with $a = \beta+1$ and
$b=-\beta-1$.
\end{proof}
\begin{proof}[Proof of Lemma~\ref{lemma:2}]
It holds that
\[
\int_\R x^{2k}\max\big\{1,|x|\big\}^{-s}\upd x = 2\int_0^1 x^{2k}\upd x + 2\int_1^\infty x^{2k-s}\upd x = \frac{2s}{(2k+1)(s-2k-1)},
\]
Then it follows from Theorem~\ref{thm:2} and Lemma~\ref{lemma:7} that
\begin{eqnarray}
s_{2n} &=& \frac s{\Gamma(1/2)\Gamma(1/2)}\sum_{k=0}^n \frac{\G{k +1/2} \G{n - k + 1/2}}{\G{k+1} \G{n-k+1} }\left\{\prod_{j=1}^k\frac{j-1/2}{j+1/2-1}\right\}\frac{1}{s/2-1-k+1/2} \nonumber\\
&=& 2 \frac{(s/2)(s/2-1)(s/2-2) \cdots (s/2 - n)}{(s/2 -1/2) (s/2 - 1/2-1) \cdots (s/2 -1/2 - n) } = 2\frac{\Gamma\left(\frac{s+2}2\right)\Gamma\left(\frac{s-2n-1}2\right)}{\Gamma\left(\frac{s+1}2\right)\Gamma\left(\frac{s-2n}2\right)} \nonumber
\end{eqnarray}
and the case $s=\infty$ follows by taking the limit.
\end{proof}

\subsection{Proofs of Propositions~\ref{prop:algebraic}---\ref{prop:zeros}}
\begin{proof}[Proof of Proposition~\ref{prop:algebraic}]
By the very definition the difference $\Gamma(1+\alpha)\Gamma(1+\beta)\big(P_n^{\alpha,\beta}(z) - P_{n-1}^{\alpha,\beta}(z)\big)$ is equal to
\begin{align*}
&
\frac{\Gamma(n+1+\alpha)}{\Gamma(n+1)}\frac{\Gamma(\beta+1)}{\Gamma(1)}z^n 
  +
  \sum_{k=0}^{n-1}\frac{\Gamma(k+1+\alpha)}{\Gamma(k+1)}\left(\frac{\Gamma(n-k+1+\beta)}{\Gamma(n-k+1)}-\frac{\Gamma(n-k+\beta)}{\Gamma(n-k)}\right)z^k 
  \\ 
  & \hspace{3cm}
  =\frac{\Gamma(n+\alpha)}{\Gamma(n+1)}\frac{\Gamma(\beta+1)}{\Gamma(1)}z^n
  +
  \beta\sum_{k=0}^{n-1}\frac{\Gamma(k+1+\alpha)}{\Gamma(k+1)}\frac{\Gamma(n-k+\beta)}{\Gamma(n-k+1)}z^k
  \\
 & \hspace{3cm} 
  =\beta\sum_{k=0}^n\frac{\Gamma(k+1+\alpha)}{\Gamma(k+1)}\frac{\Gamma(n-k+\beta)}{\Gamma(n-k+1)}z^k 
  = \beta \Gamma(1+\alpha)\Gamma(\beta) P_n^{\alpha,\beta-1}(z),
\end{align*}
which establishes \eqref{nn-1}. Relation \eqref{reciprocal} is rather obvious. 

Consider the right-hand side of \eqref{recurrence} multiplied by $\Gamma(1+\alpha)\Gamma(1+\beta)$. The coefficient of $z^n$ is
\[
\frac{n+\alpha} n\frac{\Gamma(n+\alpha)}{\Gamma(n)}\frac{\Gamma(1+\beta)}{\Gamma(1)} = \frac{\Gamma(n+1+\alpha)}{\Gamma(n+1)}\frac{\Gamma(1+\beta)}{\Gamma(1)};
\]
the constant coefficient is
\[
\frac{n+\beta} n\frac{\Gamma(1+\alpha)}{\Gamma(1)}\frac{\Gamma(n+\beta)}{\Gamma(n)} = \frac{\Gamma(1+\alpha)}{\Gamma(1)}\frac{\Gamma(n+1+\beta)}{\Gamma(n+1)};
\]
and the coefficient of $z^k$, $k\in\{1,\ldots,n-1\}$, is
\[
\begin{array}{l}
\displaystyle \frac{n+ \alpha} n\frac{\Gamma(k+\alpha)}{\Gamma(k)}\frac{\Gamma(n-k+1+\beta)}{\Gamma(n-k+1)} + \frac{n+\beta} n\frac{\Gamma(k+1+\alpha)}{\Gamma(k+1)}\frac{\Gamma(n-k+\beta)}{\Gamma(n-k)}  \smallskip \\
\displaystyle \hspace{2.5in}- \frac{n+\alpha+\beta}n\frac{\Gamma(k+\alpha)}{\Gamma(k)}\frac{\Gamma(n-k+\beta)}{\Gamma(n-k)},
\end{array}
\]
which is equal to
\begin{align*}
& \frac{\Gamma(k+\alpha)}{\Gamma(k)}\frac{\Gamma(n-k+\beta)}{\Gamma(n-k)}\left[\frac{n+ \alpha} n\frac {n-k+\beta} {n-k} + \frac{n+\beta} n\frac{n+\alpha} k - \frac{n+ \alpha + \beta} n\right] \\
& \hspace{3cm} =
\frac{\Gamma(k+\alpha)}{\Gamma(k)}\frac{\Gamma(n-k+\beta)}{\Gamma(n-k)}
\frac{\beta k+\alpha(n-k)+\alpha\beta+k(n-k)}{k(n-k)} \\
& \hspace{3cm} =
\frac{\Gamma(k+1+\alpha)}{\Gamma(k+1)}\frac{\Gamma(n-k+1+\beta)}{\Gamma(n-k+1)}. 
\end{align*}

Finally, recall that for any values $u,v$ it holds that
\begin{equation}
\label{beta}
\frac{\Gamma(u)\Gamma(v)}{\Gamma(u+v)} = B(u,v) = \frac1{(1-e^{2\pi\mathrm{i}u})(1-e^{2\pi\mathrm{i}v})}\int_C t^{u-1}(1-t)^{v-1}\mathrm{d}t,
\end{equation}
where $C$ is the Pochhammer contour. Then
\begin{eqnarray}
P_n^{\alpha,\beta}(z) &=& \frac{\Gamma(n+2+\alpha+\beta)}{\Gamma(1+\alpha)\Gamma(1+\beta)\Gamma(n+1)}\sum_{k=0}^n B\big(k+1+\alpha,n-k+1+\beta\big)\binom{n}{k}z^k\nonumber \\
&=& \frac{\Gamma(n+2+\alpha+\beta)}{(1-e^{2\pi\mathrm{i}\alpha})(1-e^{2\pi\mathrm{i}\beta})\Gamma(n+1)} \int_C\sum_{k=0}^n\binom nk (1-t)^{n-k+\beta}t^{k+\alpha}z^k\mathrm{d}t, \nonumber
\end{eqnarray}
from which \eqref{integral} easily follows.
\end{proof}

\begin{proof}[Proof of Proposition~\ref{prop:Jacobi}]
By definition, $(1-z)^n
J_n^{-n-1-\alpha,-n-1-\beta}\left(\frac{z+1}{z-1}\right)$ is equal to 
\[
\frac{(1-z)^n}{n!}\frac{\Gamma(-\alpha)}{\Gamma(-n-1-\alpha-\beta)}\sum_{m=0}^n\binom nm \frac{\Gamma(-n-1+m-\alpha-\beta)}{\Gamma(-n+m-\alpha)}\frac1{(z-1)^m},
\]
which can be rewritten as
\[
(-1)^n\frac{\Gamma(-\alpha)}{\Gamma(-n-1-\alpha-\beta)}\sum_{j=0}^n\frac1{j!(n-j)!} \frac{\Gamma(-j-1-\alpha-\beta)}{\Gamma(-j-\alpha)}(z-1)^j.
\]
Expanding $(z-1)^j$ into the powers of $z$, we get that the above polynomial can be expressed as
\[
(-1)^n\sum_{k=0}^n\left(\sum_{j=k}^n\frac{(-1)^{j-k}}{(n-j)!(j-k)!}\frac{\Gamma(-j-1-\alpha-\beta)}{\Gamma(-n-1-\alpha-\beta)}\frac{\Gamma(-\alpha)}{\Gamma(-j-\alpha)}\right)\frac{z^k}{\Gamma(k+1)}
\]
and respectively as
\[
\sum_{k=0}^n\left(\sum_{j=k}^n(-1)^{j-k}\binom{n-k}{j-k}\frac{\Gamma(j+1+\alpha)}{\Gamma(1+\alpha)}\frac{\Gamma(n+2+\alpha+\beta)}{\Gamma(j+2+\alpha+\beta)}\right)\frac{z^k}{\Gamma(k+1)\Gamma(n-k+1)}.
\]
Thus, the claim of the proposition will follow if we show that
\begin{equation}
\label{Ineedaname}
\sum_{m=0}^M(-1)^m\binom Mm\frac{\Gamma(m+x)}{\Gamma(m+x+y)} = \frac{\Gamma(x)}{\Gamma(y)}\frac{\Gamma(M+y)}{\Gamma(M+x+y)},
\end{equation}
where $M:=n-k$, $m:=j-k$, $x=k+1+\alpha$, and $y=1+\beta$. As the left-hand side of \eqref{Ineedaname} is equal to
\begin{align*}
& \sum_{m=0}^M(-1)^m\binom
Mm\frac{\Gamma(m+x)}{\Gamma(m+x+y)}+\sum_{m=1}^{M+1}(-1)^m\binom
M{m-1}\frac{\Gamma(m+x)}{\Gamma(m+x+y)} \\ 
& \hspace{4cm} = \sum_{m=0}^M(-1)^m\binom Mm\left(\frac{\Gamma(m+x)}{\Gamma(m+x+y)} - \frac{\Gamma(m+1+x)}{\Gamma(m+1+x+y)}\right)
\end{align*}
and since
\[
\frac{\Gamma(x)}{\Gamma(y)}\frac{\Gamma(M+y)}{\Gamma(M+x+y)} - \frac{\Gamma(1+x)}{\Gamma(y)}\frac{\Gamma(M+y)}{\Gamma(M+1+x+y)} = \frac{\Gamma(x)}{\Gamma(y)}\frac{\Gamma(M+1+y)}{\Gamma(M+1+x+y)}
\]
claim \eqref{Ineedaname} follows by induction.
\end{proof}

\begin{proof}[Proof of Proposition~\ref{prop:zeros}]
We start by showing that
\begin{equation}
\label{m1}
P_m^{\alpha,\beta}(z) = \frac{\Gamma(m+1+\alpha)}{\Gamma(m+1)\Gamma(1+\alpha)}(z-1)^m
\end{equation}
if and only if $m+1+\alpha+\beta=0$. Indeed, if \eqref{m1} takes
place, it is enough to compare the coefficient next to $z^{m-1}$ in
\eqref{m1} and \eqref{pes} to deduce that $m+1+\alpha+\beta=0$. To
prove the claim in the other direction, assume that \eqref{m1} holds for some fixed $m$ and all $\alpha,\beta\notin\{-1,-2,\ldots\}$ such that $m+1+\alpha+\beta=0$. Then it follows from recurrence formula \eqref{recurrence} that
\begin{equation}
\label{m2}
P_{m+1}^{\alpha,\beta}(z) = \left[\frac{m+1+\alpha}{m+1}z+\frac{m+1+\beta}{m+1}\right] P_m^{\alpha,\beta}(z).
\end{equation}
Now, take $\alpha,\beta$ such that $m+2+\alpha+\beta=0$. Then \eqref{m1} and \eqref{m2} hold with $\beta$ replaced by $\beta+1$. Hence, we get from \eqref{nn-1} that
\[
P_{m+1}^{\alpha,\beta}(z) = \left[\frac{m+1+\alpha}{m+1}z+\frac{\beta+1}{m+1}\right] P_m^{\alpha,\beta+1}(z) = \frac{\Gamma(m+2+\alpha)}{\Gamma(m+2)\Gamma(1+\alpha)}(z-1)^{m+1}.
\]
Thus, to prove \eqref{m1} in full generality it only remains to
establish the base case $m=1$, which follows easily since
$P_1^{\alpha,\beta}(z) = (1+\alpha)(z-1)$ when $2+\alpha+\beta=0$.  

We just established in \eqref{m1} and \eqref{m2} that $P_m^{\alpha,\beta}$ and $P_{m+1}^{\alpha,\beta}$ vanish at 1 with order $m$ whenever $m+1+\alpha+\beta=0$. Recurrence relations \eqref{recurrence} immediately yield that the same is true for all $P_n^{\alpha,\beta}$, $n\geq m$. Reciprocally, assume that $P_n^{\alpha,\beta}$ vanishes at $1$ with order $m$. We can suppose that $n>m$ as the case $n=m$ is covered by \eqref{m1}. By Proposition~\ref{prop:Jacobi}, we get that
\[
P_n^{\alpha,\beta}(z) = (1-z)^n\frac{(-1)^n}{2^nn!}J_n(x); \qquad
x=\frac{z+1}{z-1}, 
\]
where $J_n$ is a constant multiple of the Jacobi
polynomial $J_n^{-n-1-\alpha,-n-1-\beta}$. Observe that the map
$z\mapsto(z+1)/(z-1)$ is conformal, maps 1 to $\infty$ and sends the
unit disk $\D$ onto the left half-plane (the unit circle into the
imaginary axis). In particular, $P_n^{\alpha,\beta}$ vanishes at $1$
with order $m$ if and only if $\deg(J_n)=n-m$. According to the
Rodrigues' formula for the Jacobi polynomials it holds that 
\[
\frac{\mathrm{d}^n}{\mathrm{d}x^n}\left\{\frac1{(1-x)^{1+\alpha}(1+x)^{1+\beta}}\right\} = \frac{J_n(x)}{(1-x)^{n+1+\alpha}(1+x)^{n+1+\beta}}
\]
and therefore
\begin{equation}
\label{Jn+1}
J_{n+1}(x) = (1-x^2)J_n^\prime(x) + \big[(2n+2+\alpha+\beta)x+\alpha-\beta\big]J_n(x).
\end{equation}
Particularly, it follows that $\deg(J_{n+1})=n+1-m$. That is, $P_{n+1}^{\alpha,\beta}$ vanishes at 1 with order $m$ as well. Furthermore, recurrence formula \eqref{recurrence} yields in this case that $P_{n-1}^{\alpha,\beta}$ is divisible by $(z-1)^m$. Repeatedly applying \eqref{recurrence}, we obtain that $P_m^{\alpha,\beta}$ must be a multiple of $(z-1)^m$ too and therefore $m+1+\alpha+\beta=0$ by \eqref{m1}.  This finishes the prove of the first claim of the proposition.

Proving the second claim of the proposition is tantamount to showing that the zeros of $J_n$ are simple. To the contrary, assume that $J_n$ has a zero, say $x_0$, of multiplicity $k\geq2$. Observe that $x_0\neq\pm1$ as otherwise $P_n^{\alpha,\beta}$ would have to vanish at 0 or had a degree less than $n$, which contradicts the very definition of this polynomial. From our assumption, $x_0$ is a zero of $J_n^\prime$ of multiplicity $k-1$ and therefore it is a zero of $J_{n+1}$ of multiplicity exactly $k-1$ by \eqref{Jn+1}. Then we can infer from \eqref{recurrence} that $J_{n-1}$ must vanish at $x_0$ with order exactly $k-1$. Further, using \eqref{Jn+1} with $n$ replaced by $n-1$, we get that $J_{n-1}^\prime$ has to vanish at $x_0$ with the same order which is clearly impossible.

Now suppose $\alpha>\beta$. Assume that either $2+\alpha+\beta>0$, in
which case set $m=0$, or $m+1+\alpha+\beta=0$ for some $m\in\N$. Under
these conditions we have that $2m+2+\alpha+\beta>0$ and
$\deg(J_n)=n-m$, $n\geq m$. Recall that the interior of the unit disk
gets mapped into the left half-plane and therefore we want to
establishe that this is where the zeros of $J_n$ are. As $J_m$
is a constant by \eqref{m1}, it holds that
\begin{equation}
\label{Jm1}
J_{m+1}(x) = \big[(2m+2+\alpha+\beta)x+\alpha-\beta\big]J_m(x)
\end{equation}
is of degree 1 and vanishes on the negative real axis. Denote by $-x_i$ the zeros of $J_n$. Then for $n\geq m+1$ we have that
\begin{equation}
\label{Jn1Jn}
\big(J_{n+1}/J_n\big)(x)= \sum\frac{1+xx_i}{x+x_i} + (n+m+2+\alpha+\beta)x+(\alpha-\beta).
\end{equation}
It can be easily verified that the ratio $J_{n+1}/J_n$ has strictly positive real part in the closed right half-plane when the numbers $x_i$ have positive real parts. That is, if all the zeros of $J_n$ belong to the left half-plane, then all the zeros of $J_{n+1}$ belong to the left half-plane as well. The proof of the third claim of the proposition now follows from the principle of mathematical induction.

Finally, let $\alpha=\beta$. In this case $J_{m+1}$ in \eqref{Jm1} is a linear function vanishing at the origin. Furthermore, \eqref{Jn1Jn} implies that the ratio $J_{n+1}/J_n$ has positive real part in the right half-plane and negative real part in the left half-plane when $n+m+2+2\alpha\geq0$ and $x_i\in\mathrm{i}\R$. That is, if the zeros of $J_n$ are on the imaginary axis so are the zeros of $J_{n+1}$. This finishes the proof of the proposition.
\end{proof}

\subsection{Proofs of Theorem~\ref{thm:asymptotic} and Proposition~\ref{prop:asymptoticat1}}

For reasons of brevity, below we shall often employ the following notation:
\begin{equation}
\label{ckalpha}
c_x(\alpha) := \frac{\Gamma(x+1+\alpha)}{\Gamma(\alpha+1)\Gamma(x+1)}.
\end{equation}
Using this notation we can write $P_n^{\alpha,\beta}(z)=\sum_{k=0}^nc_k(\alpha)c_{n-k}(\beta)z^k$. Since
\[
\Gamma(x)=\sqrt{2\pi/x}(x/e)^x\big(1+O(1/x)\big) \quad  \mbox{as} \quad x\to\infty,
\]
it holds that
\begin{equation}
\label{ckalphaas}
\Gamma(\alpha+1)c_x(\alpha) = (x+1)^\alpha\big(1+O(1/x)\big) \quad  \mbox{as} \quad x\to\infty.
\end{equation}
and respectively
\begin{equation}
\label{prodckalpha}
B(\alpha_1,\alpha_2)c_n(\alpha_1)c_n(\alpha_2) = \big(1+O(1/n)\big)c_n(\alpha_1+\alpha_2),
\end{equation}
where $B(\alpha_1,\alpha_2)$ is the beta function; see \eqref{beta}.

We also employ the notation $f(x)\sim g(x)$ as $x\to\infty$ which means that $g(x)\ll f(x)\ll g(x)$ where $f(x)\ll_cg(x)$ stands for $f(x)\leq A(c)g(x)$ and $A(c)$ is a constant depending only on~$c$.

The following simple lemma is needed for the proof of
Theorem~\ref{thm:asymptotic} and is an application of summation by
parts. 

\begin{lemma}
\label{lem:Dirichlet}
Suppose $\{a_k\}_{k=0}^\infty$ is a non-increasing sequence of positive
numbers and let $\big\{\{b_{k,m}\}_{k=0}^m\big\}_{m=0}^\infty$ be a
collection of non-decreasing, non-negative sequences. Then 
\[
\left|\sum_{k=0}^m a_kb_{k,m}z^k\right|\leq 4\frac{a_0b_{m,m}}{|1-z|}, \quad z\in\overline\D.
\]
\end{lemma}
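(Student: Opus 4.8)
The estimate is a clean application of Abel summation (summation by parts), so the plan is to reduce the bound to two simpler facts: the partial sums of $\sum z^k$ on $\overline{\mathbb D}$ are uniformly bounded, and the coefficients $a_k b_{k,m}$ have controlled total variation. First I would set $c_k := a_k b_{k,m}$ for $0 \le k \le m$ (with $m$ fixed throughout) and write $S_j := \sum_{k=0}^{j} z^k$ for the geometric partial sums, so that $\sum_{k=0}^{m} c_k z^k = \sum_{k=0}^{m} c_k (S_k - S_{k-1})$ with the convention $S_{-1} = 0$. Abel summation then gives
\[
\sum_{k=0}^{m} c_k z^k = c_m S_m + \sum_{k=0}^{m-1} (c_k - c_{k+1}) S_k.
\]

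Next I would record that for $z \in \overline{\mathbb D}$, $z \neq 1$, one has $S_j = (1 - z^{j+1})/(1-z)$, hence $|S_j| \le 2/|1-z|$ for every $j \ge 0$ (and the bound is trivially true, indeed sharper, at $z=1$ in the sense that the inequality we are proving degenerates there as $|1-z|\to 0$, so we may assume $z \neq 1$). Plugging this in,
\[
\left| \sum_{k=0}^{m} c_k z^k \right| \le \frac{2}{|1-z|} \left( |c_m| + \sum_{k=0}^{m-1} |c_k - c_{k+1}| \right).
\]
The remaining task is to bound the bracket by $2 c_0 b_{m,m} = 2 a_0 b_{m,m}$ (note $c_0 = a_0 b_{0,m}$, and I will absorb the discrepancy $b_{0,m} \le b_{m,m}$ into the monotonicity argument).

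The main — though still elementary — point is the total variation estimate: I would show $\sum_{k=0}^{m-1} |c_k - c_{k+1}| + |c_m| \le 2 a_0 b_{m,m}$. The difficulty is that $c_k = a_k b_{k,m}$ is a product of a non-increasing sequence and a non-decreasing one, so $c_k$ need not be monotone and the differences $c_k - c_{k+1}$ can change sign. I would handle this by the bound $|c_k - c_{k+1}| \le a_k(b_{k+1,m} - b_{k,m}) + b_{k+1,m}(a_k - a_{k+1})$, valid because $c_k - c_{k+1} = a_k b_{k,m} - a_{k+1}b_{k+1,m} = -a_k(b_{k+1,m}-b_{k,m}) + b_{k+1,m}(a_k - a_{k+1})$ and both correction terms are nonnegative. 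Using $a_k \le a_0$ and $b_{k+1,m} \le b_{m,m}$, the sum telescopes: $\sum_{k=0}^{m-1}|c_k - c_{k+1}| \le a_0 \sum_{k=0}^{m-1}(b_{k+1,m}-b_{k,m}) + b_{m,m}\sum_{k=0}^{m-1}(a_k - a_{k+1}) = a_0(b_{m,m} - b_{0,m}) + b_{m,m}(a_0 - a_m) \le 2 a_0 b_{m,m}$. Adding $|c_m| = a_m b_{m,m} \le a_0 b_{m,m}$ would give $3 a_0 b_{m,m}$; tightening is easy — e.g. combine $a_0 b_{m,m} - a_0 b_{0,m} + |c_m|$ with $b_{m,m}(a_0 - a_m) + a_m b_{m,m} = a_0 b_{m,m}$ directly — to land at $2 a_0 b_{m,m}$, and the factor $2$ from $|S_j|\le 2/|1-z|$ then yields the claimed constant $4$. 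Finally I would note the estimate holds trivially at $z = 1$ only in the vacuous sense; more honestly, the inequality as stated has $|1-z|$ in the denominator and so is understood for $z \in \overline{\mathbb D}\setminus\{1\}$, which is the only case used in the sequel.
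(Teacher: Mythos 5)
Your proof is correct, and it is a slightly different route from the paper's. The paper performs Abel summation twice in a nested fashion: first it writes $B_{k,m}=\sum_{j=0}^k b_{j,m}z^j$, Abel-sums against the geometric partial sums to get the uniform bound $|B_{k,m}|\le 4b_{m,m}/|1-z|$, and then Abel-sums the outer sum $\sum a_k b_{k,m}z^k$ against the $B_{k,m}$ using the monotonicity of $\{a_k\}$. You instead do a single Abel summation with the product coefficients $c_k=a_kb_{k,m}$ against the geometric partial sums $S_j$, and then control the total variation $\sum|c_k-c_{k+1}|+|c_m|$ directly via the discrete product rule $c_k-c_{k+1}=-a_k(b_{k+1,m}-b_{k,m})+b_{k+1,m}(a_k-a_{k+1})$, which telescopes cleanly to $2a_0b_{m,m}-a_0b_{0,m}\le 2a_0b_{m,m}$. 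Both approaches are summation-by-parts at heart and yield the same constant $4$ (coming from $|S_j|\le 2/|1-z|$ times the factor $2$ from the variation bound); the paper's nested version is slightly more mechanical to write, while your single-pass version exposes the total-variation estimate for a product of two monotone sequences more explicitly. Your remark about $z=1$ is fine: the bound is vacuous there, and only $z\ne 1$ is used downstream.
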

\begin{proof}
Define $B_{k,m}:=\sum_{j=0}^k b_{j,m}z^j$. According to the summation by parts, it holds that
\[
B_{k,m} = b_{k,m}\frac{1-z^{k+1}}{1-z} + \sum_{j=0}^{k-1}\frac{1-z^{j+1}}{1-z}(b_{j,m}-b_{j+1,m})
\]
and consequently that
\[
| B_{k,m} | \leq \frac{2b_{k,m}}{|1-z|} + \frac2{|1-z|}\sum_{k=0}^{k-1}(b_{j+1,m}-b_{j,m})  \leq \frac{4b_{k,m}}{|1-z|} \leq \frac{4b_{m,m}}{|1-z|}
\]
for $z\in\overline\D$. Hence, applying summation by parts once more, we get that
\[
\left|\sum_{k=0}^m a_kb_{k,m}z^k\right| = a_m|B_{m,m}| + \sum_{k=0}^{m-1}|B_{k,m}|(a_k-a_{k+1}) \leq 4\frac{b_{m,m}a_0}{|1-z|}. \qedhere
\]
\end{proof}

\begin{proof}[Proof of Theorem~\ref{thm:asymptotic}]
Let $c_k(\alpha)$ be defined by \eqref{ckalpha}. The sequence $\{c_k(\alpha)\}_{k=0}^\infty$ is positive and increasing when $\alpha>0$. If $\alpha<0$, let $k_\alpha$ be the first integer such that $k_\alpha+\alpha>-1$. Then the numbers $c_k(\alpha)$ have the same sign for all $k\geq k_\alpha$ and the sequence $\{|c_k(\alpha)|\}_{k=k_\alpha}^\infty$ is decreasing. Observe also that
\[
(1-z)^{-(1+\alpha)}=\sum_{k=0}^\infty c_k(\alpha)z^k,
\]
where the series converges for all $z\in\overline\D\setminus\{1\}$ when $\alpha<0$ by the virtue of Lemma~\ref{lem:Dirichlet}. Further, put
\[
d_{k,n} := \frac{\beta}{|\beta|}\left[1-\frac{\Gamma(n+1)}{\Gamma(n+1+\beta)}\frac{\Gamma(n-k+1+\beta)}{\Gamma(n-k+1)}\right].
\]
Then the sequences $\{d_{k,n}\}_{k=0}^n$ are positive and increasing for all $n$ large enough, and bounded above by 1 when $\beta>0$. Moreover, it holds that $d_{m,n} \to 0$ as $n\to\infty$, where $m=m(n)$ is such that $m/n\to0$ as $n\to\infty$.

The left-hand side of \eqref{convergence1} can be estimated from above by
\begin{equation}
\label{brr}
\begin{array}{l}
\displaystyle \left|\sum_{k=0}^{k_\alpha-1} c_k(\alpha)d_{k,n}z^k\right| + \left|\sum_{k=k_\alpha}^m |c_k(\alpha)|d_{k,n}z^k\right| + |c_{n+1}(\alpha)||z|^{n+1}\left|\sum_{k=0}^\infty \bigg|\frac{c_{k+n+1}(\alpha)}{c_{n+1}(\alpha)}\bigg|z^k\right| \smallskip \\
\displaystyle \hspace{2in} + |c_{m+1}(\alpha)||z|^{m+1}\left|\sum_{k=0}^{n-m-1} \bigg|\frac{c_{k+m+1}(\alpha)}{c_{m+1}(\alpha)}\bigg|d_{k+m+1,n}z^k\right|,
\end{array}
\end{equation}
where $m=m(n)$ is such that $m\to\infty$ and $m/n\to0$ as $n\to\infty$. When $\alpha<0$ and $\beta>0$ the sum \eqref{brr} is bounded by
\begin{align} 
\label{bound}
& |d_{k_\alpha-1,n}|\max_{0 \leq k < k_\alpha}|c_k(\alpha)| \\
& \hspace{2cm} + 4|1-z|^{-1}\left(|c_{k_\alpha}(\alpha)|d_{m,n} +
  |c_{m+1}(\alpha)|d_{n,n}|z|^{m+1} +
  |c_{n+1}(\alpha)||z|^{n+1}\right)  \nonumber
\end{align}
according to Lemma~\ref{lem:Dirichlet} from which \eqref{convergence1}
clearly follows. When $\alpha<0$ and $\beta<0$, the bound
in \eqref{bound} still holds with the only difference being that
$d_{n,n}$ is no longer bounded by 1 but rather grows like
$n^{-\beta}$. Hence, if $m$ is chosen so that $m/\log n\to\infty$ as
$n\to\infty$, the term $d_{n,n}|z|^{m+1}$ converges to zero locally
uniformly in $\D$. When $\alpha>0$, the bound in \eqref{bound} is replaced by 
\[
4|1-z|^{-1}\left(c_m(\alpha)d_{m,n} + c_n(\alpha)d_{n,n}|z|^{m+1} + c_{n+1}(\alpha)|z|^{n+1}\right),
\]
again, due to Lemma~\ref{lem:Dirichlet}. Since $c_n(\alpha)$ grows
like $n^{\alpha}$ and $d_{n,n}$ grows no faster than $n^{|\beta|}$,
the second and the third terms in the parenthesis converge to zero
locally uniformly in $\D$. If, in addition, we require that
$m^{1+\alpha}/n\to0$ as $n\to\infty$, The first term converges to
zero.

Finally, \eqref{convergence2} follows immediately from \eqref{reciprocal} and \eqref{convergence1}.
\end{proof}

\begin{proof}[Proof of Proposition~\ref{prop:asymptoticat1}]
It holds that
\begin{equation}
\label{integralMab}
M_{\alpha,\beta}(z) = \int_C\frac{B_{\alpha,\beta}(t)}{B(1+\alpha,1+\beta)}e^{tz}\upd t
\end{equation}
by \eqref{beta} and the definition of $B_{\alpha,\beta}$, see Proposition~\ref{prop:algebraic}. Hence, using the notation from \eqref{beta} and \eqref{ckalpha}, we get that \eqref{Pnat1} follows from \eqref{integral} and the computation
\[
\begin{array}{lcl}
\displaystyle P_n^{\alpha,\beta}\left(1+\frac zn\right) &=&\displaystyle c_n(\gamma)\int_C\frac{B_{\alpha,\beta}(t)}{B(1+\alpha,1+\beta)}\left(1+\frac {zt}n\right)^n\upd t  \smallskip \\
&=& \displaystyle \big(1+o_n(1)\big)c_n(\gamma)\int_C\frac{B_{\alpha,\beta}(t)}{B(1+\alpha,1+\beta)}e^{zt}\upd t.
\end{array} 
\qedhere
\]
\end{proof}

\subsection{Proof of Theorem~\ref{thm:Kasymp}}

\begin{proof}[Proof of Theorem~\ref{thm:Kasymp}]
Set for brevity $K_N=K_N^{\alpha_1,\beta_1,\alpha_2,\beta_2}$. It follows from \eqref{prodckalpha} that
\[
K_N(0,0) = \sum_{n=0}^{N-1}c_n(\beta_1)c_n(\beta_2) = B^{-1}(\beta_1,\beta_2)\sum_{n=0}^{N-1}\big(1+\mathcal{O}(1/n)\big)c_n(\beta_1+\beta_2).
\]
Hence, the sequence $\{K_N(0,0)\}_N$ is divergent when $\beta_1+\beta_2+1\geq0$ by \eqref{ckalphaas}. If $\beta_1+\beta_2+1<0$, this sequence is eventually increasing or decreasing, depending on the sign of $\Gamma(1+\beta_1)\Gamma(1+\beta_2)$, and converges to
\begin{equation}
\label{sumcs}
\sum_{n=0}^{\infty}c_n(\beta_1)c_n(\beta_2) = {}_2F_1(1+\beta_1,1+\beta_2;1;1) = \frac{\Gamma(-1-\beta_1-\beta_2)}{\Gamma(-\beta_1)\Gamma(-\beta_2)},
\end{equation}
where the second equality follows from \cite[Eq. 15.4.20]{Olver}. This proves \eqref{KN00}.

In order to establish \eqref{SumInside1}, we first deduce from \eqref{convergence1} that
\begin{eqnarray}
K_N(z,w) &=& \sum_{n=0}^{N-1}c_n(\beta_1)c_n(\beta_2)\left(\frac1{(1-z)^{1+\alpha_1}}- o_n(1)\right)\left(\frac1{(1-w)^{1+\alpha_2}}- o_n(1)\right) \nonumber \\
&=& \frac{K_N(0,0)}{(1-z)^{1+\alpha_1}(1-w)^{1+\alpha_2}} + \sum_{n=0}^{N-1}o_n(1)c_n(\beta_1)c_n(\beta_2), \nonumber
\end{eqnarray}
where the functions $o_n(1)$ hold locally uniformly in either in $\D$ or $\D\times\D$. If $\beta_1+\beta_2+1\geq0$, the sequence $\{K_N(0,0)\}$ diverges to either $\infty$ or $-\infty$ and therefore
\[
\sum_{n=0}^{N-1}o_n(1)c_n(\beta_1)c_n(\beta_2) =  o_N(1)K_N(0,0),
\]
which shows the validity of \eqref{SumInside1}. If $\beta_1+\beta_2+1<0$, the limit $\lim_{N\to\infty} K_N(0,0)$ is finite and therefore
\[
\left|\sum_{n=0}^{N-1}o_n(1)c_n(\beta_1)c_n(\beta_2)\right| \ll 1
\]
locally uniformly in $\D\times\D$. That is, the family $\{K_N(z,w)\}_N$ is normal. In this case it is sufficient to examine the behavior of the Fourier coefficients of $K_N(z,w)$. It holds that
\[
K_N(z,w) = \sum_{j,k=0}^{N-1}c_j(\alpha_1)c_k(\alpha_2)\left[\sum_{n=\max\{j,k\}}^{N-1}c_{n-j}(\beta_1)c_{n-k}(\beta_2)\right]z^jw^k.
\]
Using \cite[Eq. 15.4.20]{Olver} as in \eqref{sumcs}, one can compute that the limit of the term in square brackets is equal to
\[
\frac{\Gamma(-\beta_1-\beta_2-1)}{\Gamma(1+\beta_1)\Gamma(1+\beta_2)}\left\{
\begin{array}{ll}
\displaystyle\frac{\Gamma(1+\beta_1)\Gamma(1+\beta_2+j-k)}{\Gamma(-\beta_2)\Gamma(j-k-\beta_1)}, & j-k\geq 0, \smallskip \\
\displaystyle\frac{\Gamma(1+\beta_1+k-j)\Gamma(1+\beta_2)}{\Gamma(k-j-\beta_2)\Gamma(-\beta_1)}, & j-k<0,
\end{array}
\right.
\]
which is exactly the $(j-k)$-th Fourier coefficient of $\Lambda_{\beta_1,\beta_2}$, see \eqref{Lambda12}. As $\Lambda_{\beta_1,\beta_2}$ is integrable on $\T$, we get from Fubini-Tonelli's theorem that
\begin{eqnarray}
 \sum_{j,k\geq0}c_j(\alpha_1)c_k(\alpha_2)\Lambda_{j-k}z^jw^k &=&  \sum_{j,k\geq0}c_j(\alpha_1)c_k(\alpha_2)z^jw^k\frac{1}{2\pi}\int_\T\overline\zeta^{j-k}\Lambda(\zeta)|\upd\zeta| \nonumber \\
 &=& \frac{1}{2\pi}\int_\T \frac{\Lambda_{\beta_1,\beta_2}(\zeta)}{\big(1-z\overline\zeta\big)^{1+\alpha_1}\big(1-w\zeta\big)^{1+\alpha_2}}|\upd\zeta| \nonumber
\end{eqnarray}
which finishes the proof of \eqref{SumInside2}.

To prove \eqref{SumOutside}, write
\begin{eqnarray}
\frac{K_N(z,w)}{(zw)^N} &=& \frac1{zw}\sum_{n=0}^{N-1}\frac{c_n(\alpha_1)c_n(\alpha_2)}{(zw)^{N-1-n}} \left(\frac1{(1-1/z)^{1+\beta_1}}- o_n(1)\right)\left(\frac1{(1-1/w)^{1+\beta_2}} - o_n(1)\right)\nonumber \\
&=&\frac1{zw}\frac1{(1-1/z)^{1+\beta_1}}\frac1{(1-1/w)^{1+\beta_2}}\sum_{n=0}^{N-1}\big(1+o_{N-1-n}(1)\big)\frac{c_{N-1-n}(\alpha_1)c_{N-1-n}(\alpha_2)}{(zw)^n}, \nonumber
\end{eqnarray}
where we used \eqref{convergence2} and the estimates $o_n(1)$ hold uniformly on closed subsets of either $\Om$ or $\Om\times\Om$. Observe that
\[
c_{N-1}(\alpha_1)c_{N-1}(\alpha_2) = \left(1+O\big(N^{-1}\big)\right)\frac{N^{\alpha_1+\alpha_2}}{\Gamma(1+\alpha_1)\Gamma(1+\alpha_2)}
\]
as $N\to\infty$ by \eqref{ckalphaas} and that
\[
\left|\frac{c_m(\alpha_1)c_m(\alpha_2)}{c_{N-1}(\alpha_1)c_{N-1}(\alpha_2)}\right| \ll N^{|\alpha_1+\alpha_2|}, \quad 0\leq m\leq N-1.
\]
This, in particular, implies that the family $\bigg\{\frac{K_N(z,w)}{(zw)^NN^{\alpha_1+\alpha_2}}\bigg\}_N$ is normal in $\Om\times\Om$. As before, this means that we only need to examine the asymptotic behavior of the Fourier coefficients. As
\[
\lim_{N\to\infty}\big(1+o_{N-1-n}(1)\big)\frac{c_{N-1-n}(\alpha_1)c_{N-1-n}(\alpha_2)}{c_{N-1}(\alpha_1)c_{N-1}(\alpha_2)} = 1
\]
for each fixed $n$, the proof of \eqref{SumOutside} follows.
\end{proof}

\subsection{Proofs of Theorems~\ref{thm:scaling}~\&~\ref{thm:scaling1}}

For the proof of Theorem~\ref{thm:scaling} we shall need the following fact.

\begin{lemma}
\label{lem:identity}
For $\gamma>-1$, it holds that
\begin{equation}
\label{identity}
\frac{\Gamma(N)}{\Gamma(N+1+\gamma)}\sum_{n=0}^{N-1}\frac{\Gamma(n+1+\gamma)}{\Gamma(n+1)}(1+\eta)^n = \int_0^1x^\gamma(1+\eta x)^{N-1}\mathrm{d}x.
\end{equation}
\end{lemma}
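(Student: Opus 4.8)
The plan is to observe that both sides are polynomials in $\eta$ of degree at most $N-1$, so it suffices to check that the coefficient of $\eta^k$ agrees on the two sides for each $k\in\{0,1,\dots,N-1\}$. On the right, the binomial theorem together with term-by-term integration (legitimate because $\gamma+k+1>0$ for every such $k$, as $\gamma>-1$) gives
\[
\int_0^1 x^\gamma(1+\eta x)^{N-1}\,\mathrm{d}x=\sum_{k=0}^{N-1}\binom{N-1}{k}\eta^k\int_0^1 x^{\gamma+k}\,\mathrm{d}x=\sum_{k=0}^{N-1}\frac{1}{\gamma+k+1}\binom{N-1}{k}\eta^k .
\]
On the left, I would expand $(1+\eta)^n=\sum_{k\ge0}\binom nk\eta^k$, interchange the two \emph{finite} summations over $n$ and $k$, write $\binom nk=\Gamma(n+1)/\big(\Gamma(k+1)\Gamma(n-k+1)\big)$ so that the factor $\Gamma(n+1)$ cancels, and reindex by $m=n-k$. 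This rewrites the coefficient of $\eta^k$ on the left as
\[
\frac{\Gamma(N)}{\Gamma(N+1+\gamma)}\,\frac{1}{\Gamma(k+1)}\sum_{m=0}^{N-1-k}\frac{\Gamma(m+k+1+\gamma)}{\Gamma(m+1)} .
\]

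The arithmetic core is then to evaluate this inner sum, for which I would first establish the telescoping identity
\[
\sum_{m=0}^{M}\frac{\Gamma(m+a)}{\Gamma(m+1)}=\frac1a\,\frac{\Gamma(M+1+a)}{\Gamma(M+1)},\qquad a\neq0,\ M\ \text{a non-negative integer}.
\]
This is immediate from the decomposition
\[
\frac{\Gamma(m+a)}{\Gamma(m+1)}=\frac1a\left(\frac{\Gamma(m+1+a)}{\Gamma(m+1)}-\frac{\Gamma(m+a)}{\Gamma(m)}\right),
\]
which itself follows from $\Gamma(m+1+a)=(m+a)\Gamma(m+a)$ and $\Gamma(m+1)=m\,\Gamma(m)$; summing over $m$ telescopes, the lower endpoint vanishing under the usual convention $1/\Gamma(0)=0$. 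Applying this with $a=k+1+\gamma$ (which is nonzero since $\gamma>-1$ and $k\ge0$) and $M=N-1-k$, the coefficient of $\eta^k$ on the left becomes
\[
\frac{\Gamma(N)}{\Gamma(N+1+\gamma)}\cdot\frac{1}{\Gamma(k+1)}\cdot\frac{1}{k+1+\gamma}\cdot\frac{\Gamma(N+1+\gamma)}{\Gamma(N-k)}=\frac{1}{\gamma+k+1}\,\frac{\Gamma(N)}{\Gamma(k+1)\Gamma(N-k)}=\frac{1}{\gamma+k+1}\binom{N-1}{k},
\]
which matches the coefficient found on the right. Since this holds for all $k$, the lemma follows.

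The argument is elementary and I do not anticipate a genuine obstacle; the only point demanding a little care is that the hypothesis $\gamma>-1$ keeps all the relevant Gamma arguments $\gamma+k+1$ away from $0$, so that both the term-by-term integration and the telescoping identity are valid. If a less computational proof is preferred, an alternative is to verify that both sides, as functions of $\eta$, solve the first-order linear ODE $(\gamma+1)y+\eta y'=(1+\eta)^{N-1}$ with $y(0)=1/(\gamma+1)$ — the left side does so because $(\gamma+1+n)\Gamma(n+1+\gamma)/\Gamma(n+1)=(n+1)\Gamma(n+2+\gamma)/\Gamma(n+2)$, and the right side by integration by parts in $x$ — and then to note that the homogeneous equation $(\gamma+1)y+\eta y'=0$ has only the solutions $y=c\,\eta^{-(\gamma+1)}$, none of which is a nonzero polynomial, so the two polynomial solutions must coincide.
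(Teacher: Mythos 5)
Your proof is correct and follows essentially the same route as the paper's: expand $(1+\eta)^n$ binomially, interchange the finite sums, and reduce the inner $\Gamma$-ratio sum to the closed form
\[
\sum_{m=0}^{M}\frac{\Gamma(m+a)}{\Gamma(m+1)}=\frac1a\,\frac{\Gamma(M+1+a)}{\Gamma(M+1)},
\]
which is exactly the paper's equation \eqref{induction} (the paper establishes it by induction, you by a telescoping decomposition, but these are the same computation). The only cosmetic difference is that you phrase the conclusion as matching coefficients of $\eta^k$, whereas the paper writes it as a single chain of equalities ending in the integral; your supplementary ODE argument is also sound, but the main proof is already the paper's.
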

\begin{proof}
It can be readily verified by the principle of mathematical induction that
\begin{equation}
\label{induction}
\sum_{j=0}^J\frac{\Gamma(j+1+x)}{\Gamma(j+1)} = \frac1{1+x}\frac{\Gamma(J+2+x)}{\Gamma(J+1)},
\end{equation}
which, upon setting $J=N-k-1$, $j=n-k$, and $x=k+\gamma$, is the same as
\[
\sum_{n=k}^{N-1}\frac{\Gamma(n+1+\gamma)}{\Gamma(n+1-k)} = \frac1{k+\gamma+1}\frac{\Gamma(N+1+\gamma)}{\Gamma(N-k)}.
\]
Hence, it holds that
\begin{eqnarray}
\sum_{n=0}^{N-1}\frac{\Gamma(n+1+\gamma)}{\Gamma(n+1)}(1+\eta)^n &=& \sum_{k=0}^{N-1}\left[\sum_{n=k}^{N-1}\frac{\Gamma(n+1+\gamma)}{\Gamma(n+1)}\binom nk\right]\eta^k\nonumber \\
&=& \sum_{k=0}^{N-1} \frac{1}{k+\gamma+1}\frac{\Gamma(N+1+\gamma)}{\Gamma(k+1)\Gamma(N-k)}\eta^k \nonumber \\
&=& \frac{\Gamma(N+1+\gamma)}{\Gamma(N)}\int_0^1\sum_{k=0}^{N-1} \binom{N-1}k x^\gamma(x\eta)^k\mathrm{d}x,
\end{eqnarray}
which finishes the proof of the lemma.
\end{proof}

Theorem~\ref{thm:scaling} is an easy corollary to the following more
general claim. 

\begin{lemma}
\label{lem:kernel}
Let $\alpha_1,\alpha_2>-1/2$ and $\left\{P_n^{\alpha_j}\right\}_{n\in\N}$ be two sequences of polynomials such that
\begin{equation}
\label{exterior}
P_n^{\alpha_j}(z) = \left[F_j(z)+o_n(1)\right]\frac{\Gamma(n+1+\alpha_j)}{\Gamma(n+1)}z^n, \quad z\in\overline\Om,
\end{equation}
for all $n\in\N$, where the functions $F_j$ are holomorphic in $\Om$,
 continuous in $\overline\Om$, and $o_n(1)$ holds uniformly in
$\overline\Om$. Then for any $\zeta\in\T$ it holds that 
\[
K_N^{\alpha_1,\alpha_2}(\zeta,\zeta) = \left[\frac{F_1(\zeta)\overline{F_2(\zeta)}}{1+\alpha_1+\alpha_2}+o_N(1)\right]\frac{\Gamma(N+1+\alpha_1+\alpha_2)}{\Gamma(N)},
\]
where $\displaystyle K_N^{\alpha_1,\alpha_2}(z,w):=\sum_{n=0}^{N-1}P_n^{\alpha_1}(z)\overline{P_n^{\alpha_2}(w)}$. Moreover, if $F_1(\zeta)F_2(\zeta)\neq0$, then
\[
\lim_{N\to\infty} \frac{K_N^{\alpha_1,\alpha_2}\left(\zeta+\frac{a_1}N,\zeta+\frac{a_2}N\right)}{K_N^{\alpha_1,\alpha_2}(\zeta,\zeta)} = E_{\alpha_1+\alpha_2}\left(a_1\overline\zeta+\overline a_2\zeta\right),
\]
uniformly for $a_1,a_2$ on compact subsets of $\C$.
\end{lemma}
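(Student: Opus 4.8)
The plan is to establish the two assertions separately, reducing each to the integral identity of Lemma~\ref{lem:identity} together with routine Gamma-function asymptotics. Throughout put $\gamma:=\alpha_1+\alpha_2>-1$, write $K_N:=K_N^{\alpha_1,\alpha_2}$, and set $\tau:=a_1\overline\zeta+\overline a_2\zeta$. \emph{Diagonal value.} Evaluating \eqref{exterior} at $z=\zeta\in\T$ and using $|\zeta|^{2n}=1$ gives
\[
P_n^{\alpha_1}(\zeta)\,\overline{P_n^{\alpha_2}(\zeta)}=\big[F_1(\zeta)\overline{F_2(\zeta)}+o_n(1)\big]\frac{\Gamma(n+1+\alpha_1)}{\Gamma(n+1)}\frac{\Gamma(n+1+\alpha_2)}{\Gamma(n+1)},
\]
and by \eqref{ckalphaas} the product of the two Gamma ratios equals $\big(1+O(1/n)\big)\Gamma(n+1+\gamma)/\Gamma(n+1)$. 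Folding the $O(1/n)$ into the $o_n(1)$, summing over $n$, and applying Lemma~\ref{lem:identity} with $\eta=0$ (which yields $\sum_{n=0}^{N-1}\Gamma(n+1+\gamma)/\Gamma(n+1)=\frac1{1+\gamma}\Gamma(N+1+\gamma)/\Gamma(N)$) then gives the claimed value, once I note that for $\gamma>-1$ the weights $\Gamma(n+1+\gamma)/\Gamma(n+1)$ are positive with partial sums tending to $+\infty$, so a Stolz--Ces\`aro argument absorbs the $o_n(1)$-contribution into $o_N(1)\cdot\Gamma(N+1+\gamma)/\Gamma(N)$.

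\emph{A scaled asymptotic.} For the ratio I first upgrade \eqref{exterior} to scaled form near $\zeta$. For $j\in\{1,2\}$ let $g_n(w):=\frac{\Gamma(n+1)}{\Gamma(n+1+\alpha_j)}\big(\zeta+\frac wn\big)^{-n}P_n^{\alpha_j}\big(\zeta+\frac wn\big)$, which is holomorphic on $\{|w|<n\}$. Fix $R>0$; for $|w|\le R$ and $n$ large, $|\zeta+w/n|\ge 1-R/n$, so $|\zeta+w/n|^{-n}$ stays bounded. If $\zeta+w/n\in\overline\Om$ then, by \eqref{exterior} and the boundedness of $F_j$ on $\overline\Om$, $|g_n(w)|\le|F_j(\zeta+w/n)|+o_n(1)$ is bounded; if instead $|\zeta+w/n|\le1$, the maximum principle for the polynomial $P_n^{\alpha_j}$ on $\overline\D$ together with the $z\in\T$ case of \eqref{exterior} bounds $|P_n^{\alpha_j}(\zeta+w/n)|$ by a constant times $\Gamma(n+1+\alpha_j)/\Gamma(n+1)$, so $|g_n(w)|$ is again bounded. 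Hence $\{g_n\}$ is normal on compact subsets of $\C$; on the set where $\zeta+w/n\in\overline\Om$ one has $g_n(w)=F_j(\zeta+w/n)+o_n(1)\to F_j(\zeta)$ by continuity, and this set has nonempty interior, so every locally uniform limit of $\{g_n\}$ is the constant $F_j(\zeta)$. Therefore $g_n\to F_j(\zeta)$ locally uniformly in $w\in\C$, the abstract analogue of Corollary~\ref{cor:asymptotic}.

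\emph{The scaling limit.} Applying this with $w=a_jn/N$, which ranges over the compact segment $[0,a_j]$, I obtain, uniformly for $n\le N$ and $(a_1,a_2)$ on compacts and after absorbing the $O(1/n)$ Gamma-ratio corrections into the $o_n(1)$,
\[
P_n^{\alpha_1}\big(\zeta+\tfrac{a_1}{N}\big)\,\overline{P_n^{\alpha_2}\big(\zeta+\tfrac{a_2}{N}\big)}=\big[F_1(\zeta)\overline{F_2(\zeta)}+o_n(1)\big]\frac{\Gamma(n+1+\gamma)}{\Gamma(n+1)}(1+\eta_N)^n,
\]
where $\eta_N:=\big(\zeta+\frac{a_1}{N}\big)\overline{\zeta+\frac{a_2}{N}}-1=\frac\tau N+O(\frac1{N^2})$, so $|\eta_N|\le C/N$ and $(1+\eta_N)^n$ is uniformly bounded for $n\le N$. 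Summing over $n$ and invoking Lemma~\ref{lem:identity} with $\eta=\eta_N$,
\[
\sum_{n=0}^{N-1}\frac{\Gamma(n+1+\gamma)}{\Gamma(n+1)}(1+\eta_N)^n=\frac{\Gamma(N+1+\gamma)}{\Gamma(N)}\int_0^1 x^\gamma(1+\eta_N x)^{N-1}\,\mathrm{d}x,
\]
and since $(N-1)\eta_N\to\tau$ the integrand converges to $x^\gamma e^{\tau x}$ uniformly in $x\in[0,1]$ (and on compacts in $(a_1,a_2)$), with $L^1(0,1)$-dominant $x^\gamma e^{C+1}$ (this is where $\gamma>-1$ enters); dominated convergence gives $\int_0^1x^\gamma(1+\eta_N x)^{N-1}\,\mathrm{d}x\to\frac1{1+\gamma}E_\gamma(\tau)$. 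The $o_n(1)$-part again contributes only $o_N(1)\cdot\Gamma(N+1+\gamma)/\Gamma(N)$ by Stolz--Ces\`aro. Thus $K_N\big(\zeta+\frac{a_1}{N},\zeta+\frac{a_2}{N}\big)=\frac{\Gamma(N+1+\gamma)}{\Gamma(N)}\big[\frac{F_1(\zeta)\overline{F_2(\zeta)}}{1+\gamma}E_\gamma(\tau)+o_N(1)\big]$, and dividing by the diagonal value from the first part, which is a nonzero constant times $\Gamma(N+1+\gamma)/\Gamma(N)$ precisely because $F_1(\zeta)F_2(\zeta)\ne0$, produces the limit $E_{\alpha_1+\alpha_2}(a_1\overline\zeta+\overline a_2\zeta)$, all $o_N(1)$ being uniform on compacts.

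\emph{Main difficulty.} The one genuinely delicate step is the scaled-asymptotic upgrade of the second paragraph: the hypothesis \eqref{exterior} says nothing about $P_n^{\alpha_j}$ at points that drift just inside the unit circle, and the a priori bound needed to run the normal-families argument there rests on the maximum principle for polynomials on $\overline\D$ together with the precise size of $P_n^{\alpha_j}$ along $\T$. Everything else is bookkeeping of error terms against the divergent weight $\Gamma(n+1+\gamma)/\Gamma(n+1)$ and a single application of Lemma~\ref{lem:identity}.
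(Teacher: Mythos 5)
Your proof is correct, and it takes a genuinely different route from the paper's on the key \emph{a priori} estimate step. The paper establishes normality of the ratios $K_N^{\alpha_1,\alpha_2}\big(\zeta+\tfrac{a_1}N,\zeta+\tfrac{a_2}N\big)/K_N^{\alpha_1,\alpha_2}(\zeta,\zeta)$ directly at the level of the aggregated kernel: it applies Cauchy--Schwarz to reduce to diagonal values $K_N^{\alpha_j}(z,z)$, observes that these are subharmonic (and that $K_N^{\alpha_j}(z,z)/|z|^{2N-2}$ is subharmonic in $\overline\C\setminus\D$), and thereby bounds $K_N^{\alpha_j}(z,z)$ on the annulus $|z|\le 1+c/N$ by its maximum over $\T$; combined with the diagonal asymptotics and \eqref{comparealphas}, this yields normality of the ratio, after which it suffices to compute the limit on the open set where $\zeta+a_j/N\in\Om$, which is exactly where hypothesis \eqref{exterior} applies. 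You instead push the normal-families argument down to the level of the individual polynomials: you upgrade the exterior asymptotic \eqref{exterior} to the locally uniform scaled limit $g_n(w)\to F_j(\zeta)$ on all of $\C$, using the maximum modulus principle to control $P_n^{\alpha_j}$ on $\overline\D$ and then identifying the holomorphic limit on the half-plane $\{\mathrm{Re}(w\overline\zeta)>0\}$. This is the same device the paper uses to derive Corollary~\ref{cor:asymptotic}, but you apply it abstractly and at arbitrary $\zeta\in\T$, so after that step the sum can be handled directly via Lemma~\ref{lem:identity} with no separate normality argument for $K_N$ needed. Your route is somewhat more elementary (no subharmonicity of $K_N^{\alpha_j}(z,z)$), at the cost of an extra polynomial-level lemma; the paper's route avoids having to prove scaled asymptotics of $P_n^{\alpha_j}$ inside $\D$ at all, by isolating the good half-plane after normality. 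Both are complete; the bookkeeping of the $o_n(1)$ contribution via Stolz--Ces\`aro against the divergent weight $\Gamma(n+1+\gamma)/\Gamma(n+1)$ is the same in both.
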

\begin{proof}
Let us first show that the functions $K_N^{\alpha_1,\alpha_2}\left(\zeta+\frac{a_1}N,\zeta+\frac{a_2}N\right)/K_N^{\alpha_1,\alpha_2}(\zeta,\zeta)$ form a normal family with respect to $a_1,\overline a_2$ whenever the latter belong to a bounded set. To this end, observe that
\begin{equation}
\label{CSch}
\big| K_N^{\alpha_1,\alpha_2}(z,w)\big|^2 \leq \big| K_N^{\alpha_1}(z,z)\big| \big|K_N^{\alpha_2}(w,w)\big|
\end{equation}
by the Cauchy-Schwartz inequality, where $K_N^{\alpha_j}(z,w):=\sum_{n=0}^{N-1}P_n^{\alpha_j}(z)\overline{P_n^{\alpha_j}(w)}$. It follows immediately from their definition that the functions $ K_N^{\alpha_j}(z,z)$ are subharmonic in $\C$. Therefore
\[
\left|K_N^{\alpha_j}(z,z)\right| \leq \max_{\eta\in\T}K_N^{\alpha_j}(\eta,\eta), \quad z\in\overline\D,
\]
by the maximum principle. Furthermore, as the functions $K_N^{\alpha_j}(z,z)/|z|^{2N-2}$ are subharmonic in $\overline\C\setminus\D$, it holds that
\[
\left| K_N^{\alpha_j}(z,z)\right| \leq |z|^{2N-2}\max_{\eta\in\T} K_N^{\alpha_j}(\eta,\eta), \quad z\in\overline\Om.
\]
Hence, for any constant $c>0$ it is true that
\begin{equation}
\label{normality-bound}
\left|K_N^{\alpha_j}(z,z)\right| \ll_c \max_{\eta\in\T}K_N^{\alpha_j}(\eta,\eta), \quad |z|\leq 1+\frac cN.
\end{equation}

Bounds \eqref{CSch} and \eqref{normality-bound} are already sufficient for establishing normality, but we still have to show that the claimed normalization constant is proportional to the one coming from \eqref{CSch} and \eqref{normality-bound}. To accomplish this goal, 
observe that by \eqref{prodckalpha}
\[
\frac{\Gamma(n+1+x_1)}{\Gamma(n+1)}\frac{\Gamma(n+1+x_2)}{\Gamma(n+1)} = \big(1+o_n(1)\big)\frac{\Gamma(n+1+x_1+x_2)}{\Gamma(n+1)}
\]
and therefore
\begin{equation}
\label{ateta1}
K_N^{\alpha_j}(\zeta,\zeta) = \sum_{n=0}^{N-1}\big|F_j(\zeta)+o_n(1)\big|^2\frac{\Gamma(n+1+2\alpha_j)}{\Gamma(n+1)} = \left(\frac{|F_j(\zeta)|^2}{1+2\alpha_j}+o_N(1)\right)\frac{\Gamma(N+1+2\alpha_j)}{\Gamma(N)}
\end{equation}
for $\zeta\in\T$ by the conditions of the lemma and \eqref{induction}. Analogously, it holds that
\begin{eqnarray}
K_N^{\alpha_1,\alpha_2}(\zeta,\zeta) &=& \sum_{n=0}^{N-1}\left[F_1(\zeta)\overline{F_2(\zeta)}+o_n(1)\right]\frac{\Gamma(n+1+\alpha_1+\alpha_2)}{\Gamma(n+1)}\nonumber \\
\label{ateta2}
&=& \left[\frac{F_1(\zeta)\overline{F_2(\zeta)}}{1+\alpha_1+\alpha_2}+o_N(1)\right]\frac{\Gamma(N+1+\alpha_1+\alpha_2)}{\Gamma(N)}
\end{eqnarray}
for $\zeta\in\T$. Formulas \eqref{ateta1} and \eqref{ateta2} show that
$\left|K_N^{\alpha_1,\alpha_2}(\zeta,\zeta)\right|^2$ and
$\left|K_N^{\alpha_1}(\zeta,\zeta) K_N^{\alpha_2}(\zeta,\zeta)\right|$
are of the same order of magnitude when $F_1(\zeta)F_2(\zeta)\neq0$,
since as $N \rightarrow \infty$, 
\begin{equation}
\label{comparealphas}
\frac{\Gamma(N+1+\alpha_1+\alpha_2)^2}{\Gamma(N+1+2\alpha_1)\Gamma(N+1+2\alpha_2)} = 1+ o_N(1)
\end{equation}
Combining \eqref{CSch} and \eqref{normality-bound} with the last
observation, we deduce that
\[
\left|K_N^{\alpha_1,\alpha_2}(\zeta,\zeta)\right|^{-1}\left| K_N^{\alpha_1,\alpha_2}\left(\zeta+\frac{a_1}N,\zeta+\frac{a_2}N\right)\right| \ll_c 1
\]
for all $|a_1|,|a_2|<c$ whenever $F_1(\zeta)F_2(\zeta)\neq0$ as claimed.

Given normality, it is enough to establish convergence in some
subregion of $|a_1|,|a_2|<c$. Hence, in what follows we can assume
that $\zeta+\frac{a_j}{N}\in\Om$ for all $N$ large. Then we deduce
from the asymptotic formulae for $P_n^{\alpha_j}$ that
$K_N^{\alpha_1,\alpha_2}\left(\zeta+\frac{a_1}N,\zeta+\frac{a_2}N\right)$
is equal to 
\[
\sum_{n=0}^{N-1}\left[F_1(\zeta)\overline{F_2(\zeta)}+o_n(1)\right]\frac{\Gamma(n+1+\alpha_1+\alpha_2)}{\Gamma(n+1)}\left(1+\frac{a_1\overline\zeta+\overline a_2\zeta} N + \frac{a_1\overline a_2}{N^2}\right)^n
\]
and therefore to
\begin{equation}
\label{ineedthisone}
\left[F_1(\zeta)\overline{F_2(\zeta)}+o_N(1)\right]\sum_{n=0}^{N-1}\frac{\Gamma(n+1+\alpha_1+\alpha_2)}{\Gamma(n+1)}\left(1+\frac{a_1\overline\zeta+\overline a_2\zeta} N + \frac{a_1\overline a_2}{N^2}\right)^n
\end{equation}
since $1+\alpha_1+\alpha_2>0$. The latter expression can be rewritten as
\[
\left[F_1(\zeta)\overline{F_2(\zeta)}+o_N(1)\right]\frac{\Gamma(N+1+\alpha_1+\alpha_2)}{\Gamma(N)}\int_0^1x^{\alpha_1+\alpha_2}\left(1+x\left[\frac{a_1\overline\zeta+\overline a_2\zeta} N + \frac{a_1\overline a_2}{N^2}\right]\right)^{N-1}\mathrm{d}x
\]
by Lemma~\ref{identity}. Since
\[
\int_0^1x^{\alpha_1+\alpha_2}\left(1+x\left[\frac{a_1\overline\zeta+\overline a_2\zeta} N + \frac{a_1\overline a_2}{N^2}\right]\right)^{N-1}\mathrm{d}x =  \frac{1+o_N(1)}{1+\alpha_1+\alpha_2}E_{\alpha_1+\alpha_2}\big(a_1\overline\zeta+\overline a_2\zeta\big),
\]
the lemma follows.
\end{proof}

For the proof of the main results, we shall need the following lemma.
\begin{lemma}
\label{lem:weird}
In the setting of Lemma~\ref{lem:kernel}, it holds that
\[
\lim_{N\to\infty}\frac1{N^{1+\alpha_2+\alpha_2}}K_N^{\alpha_1,\alpha_2}\left(\zeta+\frac{a_1}N,\overline\zeta+\frac{\overline a_2}N\right) = 0, \quad \zeta\in\T\setminus\{\pm1\},
\]
uniformly for $a_1,a_2$ on compact subsets of $\C$.
\end{lemma}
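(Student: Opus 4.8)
The plan is to adapt the template from the proof of Lemma~\ref{lem:kernel}. Write
\[
G_N(a_1,a_2):=\frac{1}{N^{1+\alpha_1+\alpha_2}}\,K_N^{\alpha_1,\alpha_2}\!\left(\zeta+\frac{a_1}{N},\ \overline\zeta+\frac{\overline a_2}{N}\right).
\]
First I would show that $\{G_N\}$ is a normal family of holomorphic functions of $(a_1,a_2)$ on each set $\{|a_1|\le c,\ |a_2|\le c\}$, and then that $G_N\to0$ pointwise on some open subset of $\C^2$; a normal family argument then upgrades this to locally uniform convergence to $0$, which is the assertion. Normality is immediate from the ingredients already assembled for Lemma~\ref{lem:kernel}: combining the Cauchy--Schwartz bound \eqref{CSch} with the subharmonicity/maximum estimate \eqref{normality-bound} gives, for $|a_1|,|a_2|\le c$,
\[
|G_N(a_1,a_2)|^2\ \ll_c\ \frac{\big(\max_{\eta\in\T}K_N^{\alpha_1}(\eta,\eta)\big)\big(\max_{\eta\in\T}K_N^{\alpha_2}(\eta,\eta)\big)}{N^{2+2\alpha_1+2\alpha_2}}\ \ll_c\ 1 ,
\]
the last step because $\max_{\eta\in\T}K_N^{\alpha_j}(\eta,\eta)\ll N^{1+2\alpha_j}$ by \eqref{ateta1}.

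For the pointwise statement I would restrict to the open set $\{(a_1,a_2):\Re(\overline\zeta a_1)>0,\ \Re(\overline\zeta a_2)>0\}$, on which $z_{1,N}:=\zeta+\frac{a_1}{N}$ and $\overline\zeta+\frac{\overline a_2}{N}=\overline{z_{2,N}}$, with $z_{2,N}:=\zeta+\frac{a_2}{N}$, both lie in $\overline\Om$ once $N$ is large. There the hypothesis \eqref{exterior}, the continuity of $F_1,F_2$ on $\overline\Om$, and \eqref{prodckalpha} give
\[
P_n^{\alpha_1}(z_{1,N})\,\overline{P_n^{\alpha_2}(\overline{z_{2,N}})}=\big[F_1(\zeta)\,\overline{F_2(\overline\zeta)}+o_n(1)+o_N(1)\big]\,\frac{\Gamma(n+1+\gamma)}{\Gamma(n+1)}\,w_N^n,\qquad \gamma:=\alpha_1+\alpha_2 ,
\]
where $w_N:=z_{1,N}z_{2,N}=\big(\zeta+\frac{a_1}{N}\big)\big(\zeta+\frac{a_2}{N}\big)$; note the two conjugations cancel precisely because the second slot is $\overline\zeta+\frac{\overline a_2}{N}$. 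Splitting $\sum_{n=0}^{N-1}$ at a fixed cutoff $n_0$ chosen so that $|o_n(1)|<\varepsilon$ for $n\ge n_0$, and using the crude bound $\sum_{n=0}^{N-1}\frac{\Gamma(n+1+\gamma)}{\Gamma(n+1)}|w_N|^n\ll N^{1+\gamma}$ to absorb the errors, the claim reduces to showing
\[
S_N:=\sum_{n=0}^{N-1}\frac{\Gamma(n+1+\gamma)}{\Gamma(n+1)}\,w_N^n=o\big(N^{1+\gamma}\big).
\]

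This is exactly where the hypothesis $\zeta\ne\pm1$ is used, and it is the crux. Because $|z_{1,N}|,|z_{2,N}|\le 1+c/N$, we have $|w_N|^n\le e^{2c}$ for all $0\le n\le N$, so the geometric partial sums $W_k:=\sum_{n=0}^k w_N^n=\frac{1-w_N^{\,k+1}}{1-w_N}$ are bounded uniformly in $k\le N-1$ and in $N$: indeed $w_N\to\zeta^2\ne1$ forces $|1-w_N|\ge\frac12|1-\zeta^2|>0$ for $N$ large. Writing $\frac{\Gamma(n+1+\gamma)}{\Gamma(n+1)}=(n+1)^\gamma+r_n$ with $|r_n|\ll(n+1)^{\gamma-1}$ (Stirling; cf.\ \eqref{ckalphaas}) and applying summation by parts to the leading part, $\sum_{n=0}^{N-1}(n+1)^\gamma w_N^n=N^\gamma W_{N-1}-\sum_{n=0}^{N-2}\big[(n+2)^\gamma-(n+1)^\gamma\big]W_n$, one obtains $|S_N|\ll N^\gamma+\sum_{n\le N}(n+1)^{\gamma-1}\ll N^{\max\{\gamma,0\}}+\log N=o(N^{1+\gamma})$, since $\gamma=\alpha_1+\alpha_2>-1$ by $\alpha_1,\alpha_2>-1/2$. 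Together with the reduction above this yields $G_N\to0$ on the chosen open set, and normality finishes the proof.

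The \textbf{main obstacle} is essentially bookkeeping: controlling the $o_n(1)$ in \eqref{exterior} while summing $N$ terms (handled by the fixed cutoff $n_0$), and observing that the $1/N$ scaling is precisely what keeps $|w_N|^n$ bounded for $n\le N$, so that the partial sums $W_k$ stay bounded. Conceptually, the content is the contrast with Lemma~\ref{lem:kernel}: there the second slot carries $\zeta$ rather than $\overline\zeta$, so $w_N\to1$ and Lemma~\ref{lem:identity} produces a genuine scaling limit, whereas here $w_N\to\zeta^2\ne1$ and the sum is negligible at order $N^{1+\gamma}$.
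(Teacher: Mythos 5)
Your proposal is correct and follows essentially the same route as the paper: normality via \eqref{CSch}, \eqref{normality-bound}, \eqref{ateta1}, \eqref{comparealphas}; restriction to $\zeta+a_j/N\in\Om$; and, using \eqref{exterior} and \eqref{prodckalpha}, reduction to the sum $\sum_{n<N}\frac{\Gamma(n+1+\gamma)}{\Gamma(n+1)}w_N^n$ with $w_N\to\zeta^2\neq1$, which is shown to be $O(N^\gamma\log N)=o(N^{1+\gamma})$ by Abel summation exactly as in Lemma~\ref{lem:Dirichlet}. The only differences are cosmetic: you make the summation-by-parts argument fully explicit via the decomposition $\frac{\Gamma(n+1+\gamma)}{\Gamma(n+1)}=(n+1)^\gamma+r_n$ rather than invoking Lemma~\ref{lem:Dirichlet} directly, and you correctly record the limiting factor as $F_1(\zeta)\overline{F_2(\overline\zeta)}$.
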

\begin{proof}
As in the previous lemma, we get from \eqref{CSch}, \eqref{normality-bound}, \eqref{ateta1}, and \eqref{comparealphas} that the family $\left\{N^{-1-\alpha_1-\alpha_2}K_N^{\alpha_1,\alpha_2}\left(\zeta+\frac{a_1}N,\overline\zeta+\frac{\overline a_2}N\right)\right\}$ is normal with respect to $a_1,a_2$ on compact subsets of $\C$. Hence, we can assume that $\zeta+\frac{a_j}{N}\in\Om$ for all $N$ large, $|a_1|,|a_2|<c$. In the present case, \eqref{ineedthisone} is replaced by
\[
\big[F_1(\zeta)F_2(\zeta)+o_N(1)\big]\sum_{n=0}^{N-1}\frac{\Gamma(n+1+\alpha_1+\alpha_2)}{\Gamma(n+1)}\left(\zeta^2+\frac{a_1+a_2}N\zeta + \frac{a_1a_2}{N^2}\right)^n.
\]
Then exactly as in Lemma~\ref{lem:Dirichlet} we get that 
\begin{align*}
& \left|\sum_{n=0}^{N-1}\frac{\Gamma(n+1+\alpha_1+\alpha_2)}{\Gamma(n+1)}\left(\zeta^2+\frac{a_1+a_2}N\zeta
    + \frac{a_1a_2}{N^2}\right)^n\right| \\ & \hspace{4cm} \leq
\frac4{|1-\zeta^2-\frac{a_1+a_2}N\zeta -
  \frac{a_1a_2}{N^2}|}\frac{\Gamma(N+\alpha_1+\alpha_2)}{\Gamma(N)} 
\end{align*}
As the numbers $|1-\zeta^2-\frac{a_1+a_2}N\zeta - \frac{a_1a_2}{N^2}|$ are bounded away from 0, the claim follows.
\end{proof}

\begin{proof}[Proof of Theorem~\ref{thm:scaling}]
Since $P_n^{\beta,\alpha}(0)=c_n(\alpha)$, see \eqref{ckalpha}, it follows from \eqref{convergence2} that
\[
(z-1)P_n^{\alpha_j,\beta_j}(z) = \left[(1-1/z)^{-\beta_j} + o_n(1)\right]c_n(\alpha_j)z^{n+1}
\]
uniformly in $\overline\Om$. Hence, the theorem is deduced from Lemma~\ref{lem:kernel} as
\[
(z-1)(w-1) K_N^{\alpha_1,\beta_1\alpha_2,\beta_2}(z,w)=K_{N+1}^{\alpha_1,\alpha_2}(z,\overline w) - 1
\]
where we set $P_0^{\alpha_j}(z)\equiv1$ and $P_{n+1}^{\alpha_j}(z):=(z-1)P_n^{\alpha_j,\beta_j}(z)$ for $n\geq1$.
\end{proof}

\begin{proof}[Proof of Theorem~\ref{thm:scaling1}]
Set, for brevity, $\gamma_j:=1+\alpha_j+\beta_j$. It follows from
\eqref{Pnat1}, \eqref{prodckalpha}, and \eqref{induction} that, since
$\gamma > -1$,  
\begin{align*}
K_N^{\alpha_1,\beta_1,\alpha_2,\beta_2}(1,1) &=
\sum_{n=0}^{N-1}\frac{1+o_n(1)}{\Gamma(1+\gamma_1)\Gamma(1+\gamma_2)}\frac{\Gamma(n+1+\gamma)}{\Gamma(n+1)} \\
&=
\frac{1+o_N(1)}{\Gamma(1+\gamma_1)\Gamma(1+\gamma_2)}\frac1{1+\gamma}\frac{\Gamma(N+1+\gamma)}{\Gamma(N)}.
\end{align*}
Using \eqref{integral}, we get that $K_N^{\alpha_1,\beta_1,\alpha_2,\beta_2}\left(1+\frac {a_1}N,1+\frac {a_2}N\right)$ is equal to
\[
\int_C\int_C \frac{B_{\alpha_1,\beta_1}(t)B_{\alpha_2,\beta_2}(u)}{\Gamma(1+\alpha_1)\Gamma(1+\beta_1)\Gamma(1+\alpha_2)\Gamma(1+\beta_2)} \sum_{n=0}^{N-1}\frac{\Gamma(n+1+\gamma_1)}{\Gamma(n+1)}\frac{\Gamma(n+1+\gamma_2)}{\Gamma(n+1)}v_N^n\mathrm{d}t\mathrm{d}u,
\]
where $v_N:=\displaystyle 1+\frac {a_1t+a_2u}N +\frac{a_1a_2tu}{N^2}$. As in the proof of Lemma~\ref{lem:kernel}, it holds that
\[
\sum_{n=0}^{N-1}\frac{\Gamma(n+1+\gamma_1)}{\Gamma(n+1)}\frac{\Gamma(n+1+\gamma_2)}{\Gamma(n+1)}v_N^n = \frac{1+o_N(1)}{1+\gamma}\frac{\Gamma(N+1+\gamma)}{\Gamma(N)}E_\gamma\big(a_1t+a_2u\big),
\]
from which we deduce that the left-hand side of \eqref{scaling1} is equal to
\[
\int_C\int_C\frac{B_{\alpha_1,\beta_1}(t)}{B(1+\alpha_1,1+\beta_1)}\frac{B_{\alpha_2,\beta_2}(u)}{B(1+\alpha_2,1+\beta_2)}
E_\gamma(a_1t+a_2u)\mathrm{d}t\mathrm{d}u 
\]
uniformly for $a,b$ on compact sets. By the very definition of $E_\gamma$, we have that
\begin{eqnarray}
\int_C\frac{B_{\alpha_1,\beta_1}(t)}{B(1+\alpha_1,1+\beta_1)}E_\gamma(a_1t+a_2u)\mathrm{d}t &=& \int_C\frac{B_{\alpha_1,\beta_1}(t)}{B(1+\alpha_1,1+\beta_1)}(\gamma+1)\int_0^1x^\gamma e^{(a_1t+a_2u)x}\mathrm{d}x\mathrm{d}t \nonumber \\
&=& (\gamma+1)\int_0^1x^\gamma e^{a_2ux}\int_C\frac{B_{\alpha_1,\beta_1}(t)}{B(1+\alpha_1,1+\beta_1)}e^{a_1xt}\mathrm{d}t\mathrm{d}x \nonumber \\
&=& (\gamma+1)\int_0^1x^\gamma e^{a_1ux}M_{\alpha_1,\beta_1}(a_2x)\mathrm{d}x, \nonumber
\end{eqnarray}
where the last equality follows from \eqref{integralMab}. Thus, the
left-hand side of \eqref{scaling1} is equal to 
\begin{align*}
& (\gamma+1)\int_0^1x^\gamma M_{\alpha_1,\beta_1}(a_1x)
\int_C\frac{B_{\alpha_2,\beta_2}(u)}{B(1+\alpha_2,1+\beta_2)}
e^{a_2ux} \mathrm{d}u \mathrm{d}x \\ & \hspace{4cm} =
(\gamma+1)\int_0^1x^\gamma
M_{\alpha_1,\beta_1}(a_1x)M_{\alpha_2,\beta_2}(a_2x)\mathrm{d}x,  
\end{align*}
which finishes the proof of the theorem.
\end{proof}

\subsection{Proofs of Theorems~\ref{thm:Scaling}---\ref{thm:Scaling2}}

We shall need the following lemma for the proof of Theorem~\ref{thm:Scaling}.
\begin{lemma}
\label{lem:sums}
Let $\zeta\in\T\setminus\{1\}$, $\alpha>0$ and $\beta<0$. Then for
any $\varepsilon>0$ it holds that 
\[
\lim_{J\to\infty}J^{-\varepsilon-\alpha}\sum_{j=0}^{J-1}\frac{s_{2j}}{s_{2J}}P_j^{\alpha,\beta}\left(\zeta+\frac zJ\right) = 0
\]
locally uniformly in $\C$.
\end{lemma}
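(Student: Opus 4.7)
The plan is to establish the stronger statement $|S_J(\zeta+z/J)|=O(J^\alpha)$ locally uniformly in $z$; once this is in hand, $J^{-\varepsilon-\alpha}|S_J|=O(J^{-\varepsilon})\to 0$ for every $\varepsilon>0$. The method is to convert the sum into a compact double integral in which the asymptotic behavior of $P_{J-1}^{\alpha,\beta}$ can be read off directly, rather than trying to sum the asymptotic for individual $P_j^{\alpha,\beta}$ term-by-term (the error in the latter asymptotic is only known to be $o(1)$, with no explicit rate, so a naive Abel-summation argument on the error fails).

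Using Lemma~\ref{lemma:2} together with the identity $\Gamma(u)/\Gamma(u+1/2)=B(u,1/2)/\sqrt\pi$, the ratio $a_j=s_{2j}/s_{2J}$ admits the moment representation $a_j=\int_0^1 t^{J-j}\upd\nu(t)$ with $\upd\nu(t)\propto t^{(s-2J-3)/2}(1-t)^{-1/2}\upd t$ a probability measure on $[0,1]$. Substituting the integral representation \eqref{integral} for each $P_j^{\alpha,\beta}$, summing over $j$ via Lemma~\ref{lem:identity} with $\gamma=1+\alpha+\beta$, and recognizing the remaining Pochhammer integral as another application of \eqref{integral} yields
\[
S_J\!\left(\zeta+\frac zJ\right) = (J+1+\alpha+\beta)\int_0^1 t\int_0^1 x^{1+\alpha+\beta}\sigma_t^{J-1}P_{J-1}^{\alpha,\beta}(\xi)\,\upd x\,\upd\nu(t),
\]
where $\sigma_t:=(1-x)t+x$ and $\xi:=\big[(1-x)t+x\zeta+xz/J\big]/\sigma_t$. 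The key geometric ingredient is that $|(1-x)t+x\zeta|^2=\sigma_t^2-2x(1-x)t(1-\rp{\zeta})$, so $|(1-x)t+x\zeta|\le\sigma_t$ with strict inequality whenever $\zeta\neq 1$ and $x\in(0,1)$; in particular $|\xi|\le 1+O(1/J)$ uniformly, and $\xi\to\zeta$ exactly when $x\to 1$.

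The main contribution to the double integral comes from the region $x\approx 1$: substituting $x=1-v/J$ brings $\xi$ into the form $\zeta+u/(J-1)$ for a bounded $u=u(v,t,z)$, so Corollary~\ref{cor:asymptotic} gives $P_{J-1}^{\alpha,\beta}(\xi)\sim(J^\alpha/\Gamma(1+\alpha))\xi^{J-1}(1-\overline\zeta)^{-1-\beta}$, whence $\sigma_t^{J-1}P_{J-1}^{\alpha,\beta}(\xi)$ factors as $(J^\alpha/\Gamma(1+\alpha))(1-\overline\zeta)^{-1-\beta}\cdot((1-x)t+x\zeta+xz/J)^{J-1}$ up to $o(1)$, and the last factor is the Laplace kernel $e^{-v(2-t(1+\overline\zeta))}(1+o(1))$. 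Integrating out $v$ produces a factor $1/\bigl(J(2-t(1+\overline\zeta))\bigr)$; contributions from $x$ bounded away from $1$ are negligible because $\sigma_t^{J-1}$ decays exponentially there while $|P_{J-1}^{\alpha,\beta}(\xi)|\le CJ^\alpha$ by the maximum modulus principle on $\{|\eta|\le 1+C/J\}$ combined with Corollary~\ref{cor:asymptotic}. Assembling these estimates yields $|S_J(\zeta+z/J)|\le(J+O(1))\cdot J^\alpha\cdot O(1/J)=O(J^\alpha)$, as required.

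The principal technical obstacle is making the $t$-integration uniform: the density $\upd\nu(t)$ has an integrable singularity $(1-t)^{-1/2}$ at $t=1$, while the bound $1/(J(1-t))$ that a naive Laplace analysis produces for the $x$-integration has a non-integrable singularity at $t=1$. The resolution is to keep the full Laplace kernel $e^{-v(2-t(1+\overline\zeta))}$ intact; its denominator $2-t(1+\rp{\zeta})\ge 1-\rp{\zeta}>0$ for $\zeta\in\T\setminus\{1\}$ and $t\in[0,1]$ is bounded below uniformly in $t$, so the remaining $t$-integration against $\upd\nu$ converges. Local uniformity in $z$ is inherited from the local uniformity in Corollary~\ref{cor:asymptotic}.
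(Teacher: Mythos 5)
You take a route that is genuinely different from the paper's. The paper replaces each $P_j^{\alpha,\beta}(\zeta+z/J)$ by its asymptotic form from \eqref{convergence2}, absorbs the resulting $o_j(1)$ errors into a single $o_J(1)$ factor, and then bounds the surviving sum by $4c_{J-1}(\alpha)/|1-\zeta-z/J|$ via Lemma~\ref{lem:Dirichlet}. You instead convert the whole sum into a compact double integral --- by combining the beta-integral moment representation of $s_{2j}/s_{2J}$ deduced from Lemma~\ref{lemma:2}, Lemma~\ref{lem:identity}, and two applications of the Pochhammer representation \eqref{integral} --- and then run a Laplace analysis on the $x$-integral. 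Your motivating observation is fair: since the $o_j(1)$ in \eqref{convergence2} carries no rate, the paper's step that moves the $o_j(1)$ outside the sum is not automatic (a crude term-by-term bound only yields $o(J^{1+\alpha})$, not the needed $O(J^{\alpha})$), and some implicit summation-by-parts work is being swept under the rug. Your algebraic setup is correct throughout: the moment representation of $s_{2j}/s_{2J}$, the derivation of $S_J=(J+1+\alpha+\beta)\int_0^1 t\,\upd\nu(t)\int_0^1 x^{1+\alpha+\beta}\sigma_t^{J-1}P_{J-1}^{\alpha,\beta}(\xi)\,\upd x$, and the identity $|(1-x)t+x\zeta|^2=\sigma_t^2-2x(1-x)t(1-\rp\zeta)$ all check out.

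However, two steps in the Laplace analysis have genuine gaps. First, the claimed bound $|P_{J-1}^{\alpha,\beta}(\eta)|\le CJ^\alpha$ on $\{|\eta|\le 1+C/J\}$ is false when $\beta\in(-1,0)$: by \eqref{Pnat1} one has $P_{J-1}^{\alpha,\beta}(1)=c_{J-1}(1+\alpha+\beta)\sim J^{1+\alpha+\beta}$, which exceeds $J^\alpha$ exactly in this range. (For $\beta<-1$ --- which is all the paper actually needs, via \eqref{pies} --- your bound does hold, but the lemma as stated allows any $\beta<0$, so this restriction should at least be noted.) Second, $\sigma_t^{J-1}$ does not decay exponentially uniformly in $x$ once $t$ is near $1$: $\sigma_1\equiv 1$ for every $x$. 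Thus the corner $t\approx1$, $x$ bounded away from $1$ --- in particular $x$ near $0$, where $\xi\to1$ and neither Corollary~\ref{cor:asymptotic} nor your maximum-modulus estimate applies --- is covered by none of your cases: the Laplace computation needs $v=J(1-x)$ bounded so that $u$ is bounded, while the asserted exponential decay of $\sigma_t^{J-1}$ fails. That region requires a separate estimate based on Proposition~\ref{prop:asymptoticat1} together with the large-argument asymptotics of $M_{\alpha,\beta}$, where the prefactor $x^{1+\alpha+\beta}$ saves the day. Finally, a small algebra slip: expanding $((1-x)t+x\zeta+xz/J)^{J-1}$ at $x=1-v/J$ gives the Laplace exponent $1-t\overline\zeta$, not $2-t(1+\overline\zeta)$ (the latter counts the $\sigma_t^{J-1}$ factor twice); fortunately $\rp{\big(1-t\overline\zeta\big)}=1-t\rp\zeta\ge\min\{1,1-\rp\zeta\}>0$ uniformly for $t\in[0,1]$ and $\zeta\in\T\setminus\{1\}$, so the conclusion is unchanged.
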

\begin{proof}

Assume first that $\zeta+z/J\in\overline\Om$. As $P_j^{\beta,\alpha}(0)=c_j(\alpha)$ in the notation \eqref{ckalpha}, we get from \eqref{convergence2} that
\[
\sum_{j=0}^{J-1}\frac{s_{2j}}{s_{2J}}P_j^{\alpha,\beta}\left(\zeta+\frac zJ\right) = \sum_{j=0}^{J-1}\left[\left(1-\overline\zeta\right)^{-1-\beta}+o_j(1)\right]\frac{s_{2j}}{s_{2J}}c_j(\alpha)\left(\zeta+\frac zJ\right)^j.
\]
Since $\alpha>0$, $c_j(\alpha)\to\infty$ and the numbers $s_{2j}/s_{2J}$ are increasing with $j$, it holds that
\[
\sum_{j=0}^{J-1}\frac{s_{2j}}{s_{2J}}P_j^{\alpha,\beta}\left(\zeta+\frac zJ\right) = \left[\left(1-\overline\zeta\right)^{-1-\beta}+o_J(1)\right]\sum_{j=0}^{J-1}\frac{s_{2j}}{s_{2J}}c_j(\alpha)\left(\zeta+\frac zJ\right)^j.
\]
Furthermore, we deduce from Lemma~\ref{lem:Dirichlet} that
\begin{equation}
\label{stupidlimit2}
\left|\sum_{j=0}^{J-1}\frac{s_{2j}}{s_{2J}}c_j(\alpha)\left(\eta+\frac zJ\right)^n\right| \leq \frac{4c_{J-1}(\alpha)}{|1-\zeta-\frac zJ|}.
\end{equation}
Finally, the bound in \eqref{stupidlimit2} can be extended to all
$\zeta+z/J$ by the maximum modulus principle and the normal family
argument. The claim now follows from \eqref{ckalphaas}. 
\end{proof}

\begin{proof}[Proof of Theorem~\ref{thm:Scaling}]
{\bf Case $N=2J$:} Using relations \eqref{pies}, \eqref{IDS1} and
definitions \eqref{pes}, \eqref{sums}, we get that
\begin{align}
\label{DSNK}
& \max\left\{1,|z|\right\}^s\max\left\{1,|w|\right\}^s \mathbf
K^{(1,1)}_{2J}\left(z,w\right) \\ 
& \hspace{3cm} =
\displaystyle\frac12\left(1+\frac1s\right)\left(wK_J^{(1)}\big(z^2,w^2\big)-zK_J^{(1)}\big(w^2,z^2\big)\right)
\nonumber \\
& \hspace{4cm} - \frac3{2s}
\left(wK_J^{(2)}\big(z^2,w^2\big)-zK_J^{(2)}\big(w^2,z^2\big)\right)
\nonumber 
\end{align}
where
\begin{equation}
\label{Ki}
K_J^{(i)}(z,w):=K_J^{1/2,-1/2,i-1/2,-3/2}(z,w), \quad i\in\{1,2\}.
\end{equation}
Given \eqref{omega}, we only need to compute the scaling limit of \eqref{DSNK}.

For brevity, set
\begin{equation}
\label{zN}
z_{\zeta,N} := \zeta + \frac zN, \quad z_{\zeta,N}^2 = \zeta^2+\frac{z\zeta}J\bigg(1+\frac{z\overline\zeta}{4J}\bigg).
\end{equation}
Then we deduce from Lemma~\ref{lem:weird} that
\[
\lim_{J\to\infty}J^{-1-i}K_J^{(i)}\left(z_{\zeta,N}^2,w_{\zeta,N}^2\right) = 0
\]
uniformly on compact subsets of $\C\times\C$ (as in the proof of Theorem~\ref{thm:scaling}, we need to multiply $K_J^{(i)}$ by $\big(1-z_{\zeta,N}^2\big)\big(1-w_{\zeta,N}^2\big)$ in order to apply Lemma~\ref{lem:weird}, but clearly this does not change the limit). Hence,
\[
\lim_{J\to\infty}N^{-2}\mathbf K^{(1,1)}_{2J}\left(z_{\zeta,N},w_{\zeta,N}\right) = \lim_{N\to\infty}N^{-2}\mathbf K^{(2,2)}_{2J}\left(z_{\zeta,N},w_{\zeta,N}\right) = 0
\]
uniformly on compact subsets of $\C\times\C$, where the limit for $\mathbf K^{(2,2)}_N$ follows from the relation
\begin{equation}
\label{IS-DS}
\mathbf K^{(2,2)}_{2J}\left(z_{\zeta,N},w_{\zeta,N}\right) = \iota\big(z_{\zeta,N}\big)\iota\big(w_{\zeta,N}\big)\mathbf K^{(1,1)}_{2J}\left(\overline z_{\zeta,N},\overline w_{\zeta,N}\right),
\end{equation}
see \eqref{IDS1} and \eqref{iota}.

On the other hand, we have that
\begin{equation}
\label{S-DS}
\mathbf K^{(1,2)}_{2J}\left(z_{\zeta,N},w_{\zeta,N}\right) = \iota\big(w_{\zeta,N}\big)\mathbf K^{(1,1)}_{2J}\left(z_{\zeta,N},\overline w_{\zeta,N}\right).
\end{equation}
Thus, we deduce from Theorem~\ref{thm:scaling} that
\[
\lim_{J\to\infty}N^{-1-i}K_J^{(i)}\left(z_{\zeta,N}^2,\overline w_{\zeta,N}^2\right) = \frac{1}{3^{i-1}(1+i)\pi}\frac{(1-\zeta^2)^{1/2}}{(1-\overline\zeta^2)^{1/2}}E_i\left(z\overline\zeta+\overline w\zeta\right)
\]
uniformly on compact subsets of $\C\times\C$. Recall that the function $(1-z)^{1/2}$ was defined as holomorphic in the unit disk with the branch cut along the positive reals greater than 1. Thus, $\Arg\left((1-e^{\mathrm{i}t})^{1/2}\right)=(t-\pi)/4$, $t\in[0,2\pi]$. Hence,
\[
\overline z(1-z^2)^{1/2}(1-\overline z^2)^{-1/2} = -\iota(z)
\]
and therefore
\[
\lim_{J\to\infty}N^{-2}\mathbf K^{(1,2)}_{2J}\left(z_{\zeta,N},\overline w_{\zeta,N}\right) = \omega\big(z\overline\zeta\big)\omega\big(w\overline\zeta\big)\left[\frac1{2\pi}E_1\left(z\overline\zeta+\overline w\zeta\right) - \frac{\lambda}{3\pi}E_2\left(z\overline\zeta+\overline w\zeta\right)\right]
\]
uniformly on compact subsets of $\C\times\C$, from which the claim of the theorem follows by the very definition of $E_i$, see \eqref{Egamma}.

{\bf Case $N=2J+1$:} Given \eqref{IDS2}, we only need to show that
\[
\lim_{J\to\infty}N^{-2}\pi_{2J}\left(z_{\zeta,N}\right)\sum_{j=0}^{J-1}\frac{s_{2j}}{s_{2J}}\pi_{2j+1}\left(w_{\eta,N}\right) =0
\]
uniformly on compact subsets of $\C$, where $\eta=\zeta$ or
$\eta=\overline\zeta$.  By \eqref{pies}, we wish to take the limit of
\begin{equation}
\label{pain}
\frac1{4J^2}P_{J}^{1/2,-1/2}\left(z_{\zeta,N}^2\right)
\sum_{j=0}^{J-1}\frac{s_{2j}}{s_{2J}}\frac
{w_{\eta,N}}4\left[\left(1+\frac1s\right)P_j^{1/2,-3/2}\left(w_{\eta,N}^2\right)-
  \frac 3sP_j^{3/2,-3/2}\left(w_{\eta,N}^2\right)\right],
\end{equation}
and the claim now follows from Corollary~\ref{cor:asymptotic} and
Lemma~\ref{lem:sums}. 
\end{proof}

For the proof of Theorem~\ref{thm:7} we shall need the following two lemmas.

\begin{lemma}
\label{lem:asymptoticat1} 
If $\gamma=1+\alpha+\beta>-1$, it holds that
\begin{equation}
\label{sumPnat1}
\lim_{J\to\infty}\frac{\Gamma(1+\gamma)}{J^{1+\gamma}}\sum_{j=0}^{J-1}\frac{s_{2j}}{s_{2J}}P_j^{\alpha,\beta}\left(1+\frac zJ\right) = \sqrt{1-\lambda}\int_0^1\frac{x^\gamma M_{\alpha,\beta}(zx)}{\sqrt{1-\lambda x}}\upd x
\end{equation}
locally uniformly in $\C$.
\end{lemma}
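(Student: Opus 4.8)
The plan is to combine the pointwise asymptotics of $P_j^{\alpha,\beta}(1+z/J)$ from Proposition~\ref{prop:asymptoticat1} with the explicit formula for $s_{2j}$ from Lemma~\ref{lemma:2}, and to convert the resulting weighted sum into a Riemann integral via a summation-by-parts argument analogous to the one used in Lemma~\ref{lem:kernel} (through Lemma~\ref{lem:identity}). First I would record the ratio asymptotics for $s_{2j}/s_{2J}$: using Lemma~\ref{lemma:2} and the asymptotic $\Gamma(x+a)/\Gamma(x+b) = x^{a-b}(1+O(1/x))$, one finds that for $j$ comparable to $J$ one has $s_{2j}/s_{2J} = (1+o(1))\big((s-2J)/(s-2j)\big)^{1/2}$, and since $2j/J \to 2x$ and $2J/s \to 2\lambda$ along the scaling $j = \lfloor xJ\rfloor$, this behaves like $\sqrt{(1-\lambda)/(1-\lambda x)}$. (When $s=\infty$, $\lambda=0$ and $s_{2j}/s_{2J}\to1$, consistent with the integrand.) Simultaneously, Proposition~\ref{prop:asymptoticat1} gives $P_j^{\alpha,\beta}(1+z/j) = (1+o_j(1))\,c_j(\gamma)\,M_{\alpha,\beta}(z)$, and a short manipulation (replacing $z/j$ by $z/J$ absorbs a factor $(1+o(1))$ into the argument of $M_{\alpha,\beta}$, using local uniformity) yields $P_j^{\alpha,\beta}(1+z/J) \approx c_j(\gamma) M_{\alpha,\beta}(jz/J)$. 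Here $c_j(\gamma) = \Gamma(j+1+\gamma)/(\Gamma(1+\gamma)\Gamma(j+1))$ grows like $j^\gamma/\Gamma(1+\gamma)$.

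The second step is to control the sum over the full range $0 \le j \le J-1$, not merely over $j$ proportional to $J$. For the small-$j$ block ($j = o(J)$ but $j\to\infty$) the terms $c_j(\gamma) \sim j^\gamma/\Gamma(1+\gamma)$ are negligible after dividing by $J^{1+\gamma}$ since $\sum_{j\le m} j^\gamma = O(m^{1+\gamma})$ and $m/J\to0$; the finitely many terms with $j$ below the threshold where the Proposition's asymptotics kick in contribute $O(1)$ and hence vanish after the $J^{-1-\gamma}$ normalization. For the bulk, I would invoke the summation-by-parts / normal-family machinery: write the sum as $\sum_{j=0}^{J-1} a_j c_j(\gamma) M_{\alpha,\beta}(jz/J)$ with $a_j = s_{2j}/s_{2J}$ a monotone bounded sequence, and recognize $\sum_{j=0}^{J-1} c_j(\gamma)\eta^j$ as the truncation handled by Lemma~\ref{lem:identity} (with $\eta = e^{\text{something}/J}$). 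More cleanly: approximate $M_{\alpha,\beta}(jz/J)$ and $a_j$ by their values $M_{\alpha,\beta}(xz)$ and $\sqrt{(1-\lambda)/(1-\lambda x)}$ at $x = j/J$, bound the error uniformly using the Lipschitz continuity of these functions on compacts together with $\sum c_j(\gamma)/J^{1+\gamma} = O(1)$, and pass to the Riemann sum
\[
\frac{1}{J^{1+\gamma}}\sum_{j=0}^{J-1} \frac{j^\gamma}{\Gamma(1+\gamma)}\,\sqrt{\frac{1-\lambda}{1-\lambda (j/J)}}\,M_{\alpha,\beta}\!\left(\frac{j}{J}z\right) \longrightarrow \frac{1}{\Gamma(1+\gamma)}\int_0^1 x^\gamma \sqrt{\frac{1-\lambda}{1-\lambda x}}\,M_{\alpha,\beta}(xz)\,\mathrm{d}x,
\]
which rearranges to the claimed identity. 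Local uniformity in $z$ follows because every error estimate is uniform on compact $z$-sets (the entire function $M_{\alpha,\beta}$ and its derivative are bounded on compacts) and a normal-family argument, exactly as in the proofs of Lemma~\ref{lem:kernel} and Theorem~\ref{thm:Scaling}.

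**Main obstacle.** The delicate point is the simultaneous treatment of the two competing regimes in the sum: the factor $s_{2j}/s_{2J}$ genuinely depends on the ratio $s/J \to 1/\lambda$ (not just on $j/J$), so one must be careful that the substitution $2J/s\approx2\lambda$ is legitimate uniformly in $j$, including when $\lambda=1$, where $\sqrt{1-\lambda x}$ degenerates at $x=1$ and the factor $s_{2j}/s_{2J}$ can be large for $j$ near $J$ — one needs the integrable singularity $x^\gamma/\sqrt{1-\lambda x}$ (with $\gamma>-1$) to absorb this, so the tail $j$ close to $J$ must be estimated with the correct weight rather than crudely bounded. Handling the boundary case $s=\infty$ (where the formula should reduce to $\int_0^1 x^\gamma M_{\alpha,\beta}(zx)\,\mathrm{d}x$) and verifying the threshold-index contributions are negligible are the remaining technical chores, but the $\lambda\to1$ uniformity is where the real work lies.
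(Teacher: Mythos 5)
Your proposal follows essentially the same route as the paper's proof: for $\lambda<1$, combine the Gamma-asymptotics of $s_{2j}/s_{2J}\approx\sqrt{(s-2J)/(s-2j)}$ from Lemma~\ref{lemma:2} with Proposition~\ref{prop:asymptoticat1}'s expansion $P_j^{\alpha,\beta}(1+z/J)\approx c_j(\gamma)M_{\alpha,\beta}(jz/J)$, and recognize the normalized sum as a Riemann sum for the stated integral, while the $\lambda=1$ case is dispatched by observing that the ratios $s_{2j}/s_{2J}$ force the sum to vanish. The extra machinery you mention (summation by parts via Lemma~\ref{lem:identity}, a normal-family argument) is not needed and is not used in the paper; the direct Riemann-sum passage you describe as the ``cleaner'' route is exactly what the paper does.
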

\begin{proof}
When $\lambda<1$, it is a straightforward computation using the asymptotic behavior of the Gamma function to verify 
\[
\frac{s_{2j}}{s_{2J}} = \frac{\Gamma\left(\frac{s-2j-1}2\right)}{\Gamma\left(\frac{s-2j}2\right)}\frac{\Gamma\left(\frac{s-2J}2\right)}{\Gamma\left(\frac{s-2J-1}2\right)} = \big(1+o_J(1)\big)\sqrt{\frac{s-2J}{s-2j}} = \big(1+o_J(1)\big)\sqrt{\frac{1-\lambda}{1-jJ^{-1}}},
\]
where we used Lemma~\ref{lemma:2}. Then it follows from  Proposition~\ref{prop:asymptoticat1} that
\[
\frac{\Gamma(1+\gamma)}{J^{1+\gamma}}\sum_{j=0}^{J-1}\frac{s_{2j}}{s_{2J}}P_j^{\alpha,\beta}\left(1+\frac zJ\right) = \sum_{j=0}^{J-1}\frac{1+o_j(1)}J\left(\frac{j+1}J\right)^\gamma\sqrt{\frac{1-\lambda}{1-jJ^{-1}}}M_{\alpha,\beta}\left(\frac{jz}J\right).
\]
As the right-hand side of the equation above is essentially a Riemann sum for the right-hand side of \eqref{sumPnat1}, the claim follows.

When $\lambda=1$, it holds that $s_{2j}/s_{2J}=o_J(1)$. Hence, replacing the square root by $o_J(1)$ in the last equation, we see that the left hand side of \eqref{sumPnat1} converges to zero.
\end{proof}

\begin{lemma}
\label{lem:miracle}
When $\lambda<1$, it holds that
\[
M_{1/2,-1/2}(z) = \frac1{\sqrt{1-\lambda}}\int_0^1\frac{M(xz)-\lambda xM^\prime(xz)}{\sqrt{1-\lambda x}}\upd x.
\]
\end{lemma}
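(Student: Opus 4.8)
The plan is to match Taylor coefficients at the origin on the two sides and thereby reduce the claim to a single elementary integral identity. In the notation of \eqref{Mab} the pair $(\alpha,\beta)=(1/2,-1/2)$ has $\gamma=1+\alpha+\beta=1$, so, writing $m_n:=\frac{\Gamma(n+3/2)}{\Gamma(3/2)\,\Gamma(n+1)\,n!}$, we have $M(z)=\sum_{n\ge0}m_nz^n$ (the definition of $M$ preceding Theorem~\ref{thm:7}) and, since $\Gamma(2)=1$, $M_{1/2,-1/2}(z)=\sum_{n\ge0}\frac{m_n}{n+1}z^n$; also $M^\prime(z)=\sum_{n\ge0}(n+1)m_{n+1}z^n$. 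Because $\lambda<1$, the weight $(1-\lambda x)^{-1/2}$ is bounded on $[0,1]$ and the series for $M(xz)$ and $M^\prime(xz)$ converge uniformly for $x\in[0,1]$ and $z$ in a compact subset of $\C$; hence the integral on the right of the asserted formula may be evaluated term by term, and its coefficient of $z^n$ is $\frac1{\sqrt{1-\lambda}}\big(m_nI_n-\lambda(n+1)m_{n+1}I_{n+1}\big)$, where $I_k=I_k(\lambda):=\int_0^1x^k(1-\lambda x)^{-1/2}\,dx$.

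Next I would use the elementary relation $(n+1)m_{n+1}=\frac{n+3/2}{n+1}\,m_n$ (immediate from $\Gamma(n+5/2)=(n+3/2)\Gamma(n+3/2)$) to rewrite the $z^n$-coefficient of the right-hand side as $\frac1{\sqrt{1-\lambda}}\cdot\frac{m_n}{n+1}\big((n+1)I_n-\lambda(n+3/2)I_{n+1}\big)$. Comparing with the $z^n$-coefficient $\frac{m_n}{n+1}$ of $M_{1/2,-1/2}$, the lemma is therefore equivalent to the identity
\[
(n+1)\,I_n(\lambda)-\lambda\Big(n+\tfrac32\Big)\,I_{n+1}(\lambda)=\sqrt{1-\lambda},\qquad n\ge0 .
\]

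Finally this identity follows from one integration by parts: from
\[
\frac{d}{dx}\Big[x^{n+1}\sqrt{1-\lambda x}\Big]=(n+1)\,x^{n}\sqrt{1-\lambda x}-\frac{\lambda\,x^{n+1}}{2\sqrt{1-\lambda x}}
\]
together with $\int_0^1x^{n}\sqrt{1-\lambda x}\,dx=\int_0^1x^{n}(1-\lambda x)(1-\lambda x)^{-1/2}\,dx=I_n-\lambda I_{n+1}$, integrating the displayed derivative over $[0,1]$ gives $\sqrt{1-\lambda}=(n+1)(I_n-\lambda I_{n+1})-\tfrac{\lambda}{2}I_{n+1}=(n+1)I_n-\lambda(n+\tfrac32)I_{n+1}$, as required. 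The argument is essentially bookkeeping; the only point requiring attention is the reduction to the elementary identity — correctly identifying $M_{1/2,-1/2}={}_1F_1(3/2;2;\cdot)$ and carrying out the Gamma-function manipulation $(n+1)m_{n+1}=\frac{n+3/2}{n+1}m_n$ — after which the integration by parts closes the proof. (The degenerate case $\lambda=0$, if one wishes to include it, is immediate since then $I_k(0)=1/(k+1)$, or follows by continuity in $\lambda$.)
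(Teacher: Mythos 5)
Your proof is correct and follows essentially the same route as the paper: expand both $M$ and $M'$ in Taylor series, compare coefficients to reduce the claim to the scalar integral identity $(n+1)I_n-\lambda(n+\tfrac32)I_{n+1}=\sqrt{1-\lambda}$, and establish that identity via one integration by parts together with the splitting $\sqrt{1-\lambda x}=(1-\lambda x)(1-\lambda x)^{-1/2}$. The only difference is presentational — you package the integration by parts as the derivative of $x^{n+1}\sqrt{1-\lambda x}$, whereas the paper applies it directly to $\int_0^1 \lambda x^{n+1}(1-\lambda x)^{-1/2}\,dx$ — so the two arguments are the same in substance.
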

\begin{proof}
Using the series representation for $M=M_{1/2,-3/2}$, see \eqref{Mab},
and $M^\prime$ we get that 
\begin{align*}
&\int_0^1\frac{M(xz)-\lambda xM^\prime(xz)}{\sqrt{1-\lambda x}}\upd x
\\
& \hspace{2cm} =
\frac1{\Gamma(3/2)}\sum_{n=0}^\infty\frac{\Gamma(n+3/2)}{\Gamma(n+1)}\frac{z^n}{n!}\int_0^1\left(x^n-\lambda\frac{n+3/2}{n+1}x^{n+1}\right)\frac{\upd
  x}{\sqrt{1-\lambda x}}. 
\end{align*}
Integration by parts yields that
\[
\int_0^1\frac{\lambda x^{n+1}}{\sqrt{1-\lambda x}}\upd x = -2\sqrt{1-\lambda} + 2(n+1)\int_0^1\frac{\lambda x^n}{\sqrt{1-\lambda x}}\upd x - 2(n+1)\int_0^1\frac{\lambda x^{n+1}}{\sqrt{1-\lambda x}}\upd x
\]
and therefore
\[
\int_0^1\frac{M(xz)-\lambda xM^\prime(xz)}{\sqrt{1-\lambda x}}\upd x = \frac{\sqrt{1-\lambda}}{\Gamma(3/2)}\sum_{n=0}^\infty\frac{\Gamma(n+3/2)}{\Gamma(n+1)}\frac{z^n}{n!}\frac1{n+1} = \sqrt{1-\lambda}M_{1/2,-1/2}(z)
\]
by \eqref{Mab}.
\end{proof}

\begin{proof}[Proof of Theorem~\ref{thm:7}]
{\bf Case $N=2J$:} Let $z_{\xi,N}$ and $w_{\xi,N}$ be defined by \eqref{zN}. As $\xi^2=1$, it follows from Theorem~\ref{thm:scaling1} that
\[
\lim_{J\to\infty}N^{-1-i}K_J^{(i)}\left(z_{\xi,N}^2,w_{\xi,N}^2\right) = \frac1{1+i}\frac1{2^{1+i}} E_{1/2,-1/2,i-1/2,-3/2}(z\xi,w\xi)
\]
uniformly on compact subsets of $\C\times\C$, where $K_J^{(i)}$ were defined in \eqref{Ki}. Hence, we deduce from \eqref{DSNK} that
\[
\begin{array}{l}
\displaystyle\lim_{J\to\infty}N^{-2}\mathbf
K^{(1,1)}_{2J}\left(z_{\xi,N},w_{\xi,N}\right) =
\omega\left(z\xi\right)\omega\left(w\xi\right) \bigskip \\ 
\hspace{.7in}\displaystyle\times\frac\xi{16}\bigg[\left(E^{(1)}(z\xi,w\xi)-E^{(1)}(w\xi,z\xi)\right)-\lambda\left(E^{(2)}(z\xi,w\xi)-E^{(2)}(w\xi,z\xi)\right)\bigg]
\end{array}
\]
where $E^{(i)}:=E_{1/2,-1/2,i-1/2,-3/2}$. By the very definition, see \eqref{secondE}, it holds that
\begin{equation}
\label{Ei}
\begin{array}{l}
E^{(i)}(a_1,a_2)-E^{(i)}(a_2,a_1) \bigskip \\
\displaystyle = (1+i)\int_0^1 x^i\left[M_{1/2,-1/2}(a_1x)M_{i-1/2,-3/2}(a_2x)-M_{1/2,-1/2}(a_2x)M_{i-1/2,-3/2}(a_1x)\right]\upd x.
\end{array}
\end{equation}
The following relations can be readily checked: 
\begin{equation}
\label{Ms}
M(x)=M_{1/2,-3/2}(x), \quad M_{3/2,-3/2}(x)=\frac23M^\prime(x); \quad M_{1/2,-1/2}(x)=2\big(M^\prime(x)-M(x)\big).
\end{equation}
Plugging these relations into \eqref{Ei}, we obtain the desired limit for $\mathbf K^{(1,1)}_{2J}$.

{\bf Case: $N=2J+1$:} In this case we need to deal with additional limits of the form \eqref{pain} where $\zeta=\eta=\xi$. It follows from Proposition~\ref{prop:asymptoticat1} and Lemma~\ref{lem:asymptoticat1} that the limit of \eqref{pain} is equal to
\[
\sqrt{1-\lambda}\frac\xi{16}M_{1/2,-1/2}(z\xi)\int_0^1\frac{M_{1/2,-3/2}(xw\xi)-\frac32\lambda xM_{3/2,-3/2}(xw\xi)}{\sqrt{1-\lambda x}}\upd x.
\]
Thus, this limit is zero when $\lambda=1$ and is equal to
\begin{equation}
\label{sym}
\sqrt{1-\lambda}\frac\xi{16}M_{1/2,-1/2}(z\xi)\frac\xi{16}M_{1/2,-1/2}(w\xi)
\end{equation}
when $\lambda<1$ by \eqref{Ms} and Lemma~\ref{lem:miracle}. Since \eqref{sym} is symmetric with respect to $z$ and $w$, the additional terms in \eqref{IDS2} cancel each other out.
\end{proof}

\begin{proof}[Proof of Lemma~\ref{lem:compare}]
Set $F(b):=\frac14|M(z)|^2$, where $z=a+\mathrm{i}b$. It follows from
\cite[Eq. 13.7.2]{Olver} that $\lim_{|b|\to\infty} F(b)/G(b) = 1$,
where $G(b) := |z|e^{2a}/\pi$. As both functions tend to infinity as
$|b|\to\infty$, L'Hopital's rule yields that $ \lim_{|b|\to\infty}
F^\prime(b)|z|/b = e^{2a}/\pi$. Since $F^\prime(b) =
\big[M^\prime\big(z\big)M^\prime\big(\overline
z\big)-M\big(z\big)M^\prime\big(\overline z\big)\big]/4$, the claim of
the lemma follows. 
\end{proof}

For the proof of Theorem~\ref{thm:Scaling2} we shall need the following lemma.

\begin{lemma}
\label{lem:EO}
Let $\xi=\pm1$. For $y\in\R$, it holds that
\begin{equation}
\label{EOpi}
\left\{
\begin{array}{lll}
(\epsilon\widetilde\pi_{2n})(y) &=& \displaystyle -\xi - \int_\xi^y\widetilde\pi_{2n}(u)\upd u, \smallskip \\
(\epsilon\widetilde\pi_{2n+1})(y) &=& \displaystyle \frac1{4s} - \int_\xi^y\widetilde\pi_{2n+1}(u)\upd u.
\end{array}
\right.
\end{equation}
\end{lemma}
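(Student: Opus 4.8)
The whole statement reduces to the elementary formula recorded right after Theorem~\ref{thm:5}: for $x\in\R$ and any antiderivative $F$ of an integrable function $f$ one has $\epsilon f(x)=-F(x)+\tfrac12\big(F(\infty)+F(-\infty)\big)$. The plan is to apply this with $f=\widetilde\pi_m$ and the specific antiderivative $F(y)=\int_\xi^y\widetilde\pi_m(u)\,\upd u$; the weight makes $\widetilde\pi_m$ decay like $|u|^{m-s}$ at infinity, which is integrable in the range of $m$ and $s$ occurring here (and for $s=\infty$ it is compactly supported and $\tfrac1{4s}=0$), so $F(\pm\infty)$ exist. This gives immediately
\[
(\epsilon\widetilde\pi_m)(y) = -\int_\xi^y\widetilde\pi_m(u)\,\upd u + \frac12\left(\int_\xi^\infty\widetilde\pi_m(u)\,\upd u-\int_{-\infty}^\xi\widetilde\pi_m(u)\,\upd u\right),
\]
so the entire content of the lemma is the evaluation of the constant term: it must equal $-\xi$ when $m=2n$ and $\tfrac1{4s}$ when $m=2n+1$.

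For the even index I would use that $\widetilde\pi_{2n}(u)=\pi_{2n}(u)\max\{1,|u|\}^{-s}$ is an even function. Folding the tails by means of $\int_{-\infty}^{-1}=\int_1^\infty$ (a one-line check for $\xi=1$ and for $\xi=-1$ separately) turns the constant term into $-\tfrac{\xi}{2}\int_{-1}^1\widetilde\pi_{2n}(u)\,\upd u$. On $[-1,1]$ the weight is $1$, so it remains to prove $\int_{-1}^1\pi_{2n}(u)\,\upd u=2$. Inserting the explicit coefficients of $\pi_{2n}$ from Theorem~\ref{thm:2}, integrating term by term via $\int_{-1}^1 u^{2k}\,\upd u=\tfrac{2}{2k+1}$, and simplifying $\Gamma(k+3/2)/(2k+1)=\tfrac12\Gamma(k+1/2)$, this collapses to the evaluation $P_n^{-1/2,-1/2}(1)=1$.

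For the odd index, $\widetilde\pi_{2n+1}$ is an odd function, whence $\int_\R\widetilde\pi_{2n+1}=0$ and the constant term simplifies to $\int_\xi^\infty\widetilde\pi_{2n+1}(u)\,\upd u$. When $\xi=-1$ this equals $\int_{-1}^1\pi_{2n+1}+\int_1^\infty\widetilde\pi_{2n+1}=\int_1^\infty\widetilde\pi_{2n+1}$, again because $\pi_{2n+1}$ is odd, so it suffices to treat $\xi=1$. There the point is that $\int_1^\infty u^{2k+1-s}\,\upd u=(s-2k-2)^{-1}$ exactly cancels the factor $(s-2k-2)$ appearing in the coefficients of $\pi_{2n+1}$ in Theorem~\ref{thm:2}, leaving $\int_1^\infty\widetilde\pi_{2n+1}=-\tfrac1{4\pi s}\sum_{k=0}^n\frac{\Gamma(k+3/2)\Gamma(n-k-1/2)}{\Gamma(k+1)\Gamma(n-k+1)}$, which in turn reduces to $P_n^{1/2,-3/2}(1)=1$.

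Finally, both remaining facts $P_n^{-1/2,-1/2}(1)=1$ and $P_n^{1/2,-3/2}(1)=1$ are instances of $P_n^{\alpha,\beta}(1)=1$ whenever $\alpha+\beta=-1$. From \eqref{pes}, $P_n^{\alpha,\beta}(1)=\sum_{k=0}^n c_k(\alpha)c_{n-k}(\beta)$, and by the Chu--Vandermonde identity this equals $c_n(\alpha+\beta+1)$, which is $1$ precisely when $\alpha+\beta+1=0$; alternatively one argues by induction from the recurrence \eqref{recurrence}, which for $\alpha+\beta=-1$ reads $P_n^{\alpha,\beta}(1)=\tfrac{2n-1}{n}P_{n-1}^{\alpha,\beta}(1)-\tfrac{n-1}{n}P_{n-2}^{\alpha,\beta}(1)$ with $P_0^{\alpha,\beta}(1)=1$ and $P_1^{\alpha,\beta}(1)=2+\alpha+\beta=1$. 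There is no genuine obstacle in this argument; the only thing demanding care is keeping track of the $\xi=\pm1$ signs when folding the tails by evenness and oddness, and the single combinatorial input is the identity $P_n^{\alpha,\beta}(1)=1$ for $\alpha+\beta=-1$.
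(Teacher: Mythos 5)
Your argument is correct and follows the same essential route as the paper: both reduce the claim, via linearity and the formula $\epsilon f = -F + \tfrac12\big(F(\infty)+F(-\infty)\big)$, to the evaluation of $P_n^{-1/2,-1/2}(1)$ and $P_n^{1/2,-3/2}(1)$. The only cosmetic difference is in organization and endgame: the paper first computes $\epsilon$ of each weighted monomial $x^k\max\{1,|x|\}^{-s}$ (the ``straightforward calculation'' producing the constant $-\xi/(2k+1)$, respectively $1/(s-2k-2)$) and then sums, whereas you apply the $\epsilon$--antiderivative identity to $\widetilde\pi_m$ directly and exploit its parity to fold the tails; and the paper justifies $P_n^{\alpha,\beta}(1)=1$ by citing the scaling formula \eqref{Pnat1}, while you give a self-contained Chu--Vandermonde (or recurrence) argument, which is slightly more elementary and avoids invoking the asymptotic machinery. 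Both routes are sound, so no correction is needed.
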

\begin{proof}
It is a straightforward calculation to get that
\[
\epsilon\left(x^{2k}\max\big\{1,|x|\big\}^{-s}\right)(y) = -\int_\xi^yx^{2k}\max\big\{1,|x|\big\}^{-s}\upd x - \frac\xi{2k+1}.
\]
Hence, using the representation from Theorem~\ref{thm:2}, we get that
\begin{eqnarray}
(\epsilon\widetilde\pi_{2n})(y) &=& -\int_\xi^y\widetilde\pi_{2n}(x)\upd x - \frac\xi\pi\sum_{k=0}^n\frac{\Gamma(k+1/2)}{\Gamma(k+1)}\frac{\Gamma(n-k+1/2)}{\Gamma(n-k+1)}  \nonumber \\
&=& -\int_\xi^y\widetilde\pi_{2n}(x)\upd x - \xi P_n^{-1/2,-1/2}(1). \nonumber
\end{eqnarray}
As $P_n^{-1/2,-1/2}(1)=1$ by \eqref{Pnat1}, the first claim in \eqref{EOpi} follows. Analogously, it holds that
\[
\epsilon\left(x^{2k+1}\max\big\{1,|x|\big\}^{-s}\right)(y) = -\int_\xi^yx^{2k+1}\max\big\{1,|x|\big\}^{-s}\upd x + \frac{1}{s-2k-2}
\]
and therefore
\begin{eqnarray}
(\epsilon\widetilde\pi_{2n+1})(y) &=& -\int_\xi^y\widetilde\pi_{2n+1}(u)\upd u - \frac1{4\pi s}\sum_{k=0}^n \frac{ \Gamma(k + 3/2) \Gamma(n - k - 1/2)}{\G{k+1}\G{n -k + 1} } \nonumber \\
&=& -\int_\xi^y\widetilde\pi_{2n+1}(u)\upd u  + \frac1{4s}P_n^{1/2,-3/2}(1), \nonumber
\end{eqnarray}
which finishes the proof of \eqref{EOpi} as $P_n^{1/2,-3/2}(1)=1$ by \eqref{Pnat1}.
\end{proof}

For the proof of Theorem~\ref{thm:Scaling2} we shall need the following relation
\begin{equation}
\label{Mab+1}
M_{\alpha,\beta+1}(z) = (1+\gamma)\int_0^1x^\gamma
M_{\alpha,\beta}(zx)\upd x; \qquad \gamma = 1 + \alpha + \beta.
\end{equation}

\begin{proof}[Proof of Theorem~\ref{thm:Scaling2}]
To prove the theorem we need to show that the scaling limits
\eqref{everything} of the matrix kernel \eqref{eq:21} are equal to
\eqref{three-kernels} with $A=A_\xi$ defined by \eqref{auxkappa}. As
in the previous two theorems, we use notation and expressions of
\eqref{IDS1} \& \eqref{IDS2} for the entries of $\mathbf K_N$. Notice
also that the fourth case in \eqref{three-kernels} is simply a
restatement of Corollary~\ref{cor:Scaling1}. As usual we shall use
\eqref{pies} \& \eqref{pes} throughout the analysis.

{\bf Case $N=2J$:} Recall that $\mathbf K^{(1,1)}_{2J}$ does not
contain the $\epsilon$ operator and therefore remains the same for all
cases.  Further, we get from Lemma~\ref{lem:EO} that 
\begin{eqnarray}
\mathbf K^{(1,2)}_{2J}(z,y) &=& 2\sum_{j=0}^{J-1}\left[\widetilde\pi_{2j}(z)\left(\frac1{4s}-\int_\xi^y\widetilde\pi_{2j+1}(x)\upd x\right)+\widetilde\pi_{2j+1}(z)\left(\xi+\int_\xi^y\widetilde\pi_{2j}(x)\upd x\right)\right] \nonumber \\
&=& -\int_\xi^y \mathbf K^{(1,1)}_{2J}(z,x)\upd x + \frac1{2s}\sum_{j=0}^{J-1}\widetilde\pi_{2j}(z) + 2\xi\sum_{j=0}^{J-1}\widetilde\pi_{2j+1}(z). \nonumber
\end{eqnarray}
With the notation \eqref{zN}, observe that
\begin{equation}
\label{comb1}
\lim_{J\to\infty}\frac1N\int_\xi^{y_{\xi,N}} \mathbf K^{(1,1)}_{2J}\left(z_{\xi,N},v\right)\upd v = \int_0^y\lim_{J\to\infty}\frac1{N^2}\mathbf K^{(1,1)}_{2J}\left(z_{\xi,N},v_{\xi,N}\right)\upd v = \int_0^y \varkappa_\xi(z,v) \upd v
\end{equation}
locally uniformly in $\C\times\R$ by Theorem~\ref{thm:7}. It also follows from \eqref{sumPnat1} applied with $s=\infty$ (in which case $s_{2j}/s_{2J}=1$ and $\lambda=0$) that
\[
\lim_{J\to\infty}\frac2{J^2}\sum_{j=0}^{J-1}P_j^{1/2,-1/2}\left(z_{\xi,N}^2\right) = M_{1/2,1/2}(z\xi) = 4\int_0^1u\left(M^\prime(z\xi u)-M(z\xi u)\right)\upd u,
\]
where we used \eqref{Mab+1} and \eqref{Ms}. Hence, it holds that
\begin{equation}
\label{comb2}
\lim_{J\to\infty}\frac1{2sN}\sum_{j=0}^{J-1}\widetilde\pi_{2j}\left(z_{\xi,N}\right) =\frac{\omega(z\xi)}4\int_0^1\lambda u\left(M^\prime(z\xi u)-M(z\xi u)\right)\upd u
\end{equation}
locally uniformly in $\C$. Once more, we deduce from \eqref{sumPnat1}
and \eqref{Mab+1} that, for $i \in \{0,1\}$, 
\[
\lim_{J\to\infty}\frac{2^i}{J^{1+i}}\sum_{j=0}^{J-1}P_j^{i+1/2,-3/2}\left(z_{\xi,N}^2\right) = M_{i+1/2,-1/2}(z\xi) = (1+i)\int_0^1 u^i M_{i+1/2,-3/2}(z\xi u)\upd u.
\]
Thus, we get that
\begin{equation}
\label{comb3}
\lim_{J\to\infty}\frac{2\xi}N\sum_{j=0}^{J-1}\widetilde\pi_{2j+1}\left(z_{\xi,N}\right) = \frac{\omega(z\xi)}4\int_0^1 \left(M(z\xi u)-\lambda uM^\prime(z\xi u)\right) \upd u
\end{equation}
locally uniformly in $\C$, where we again used \eqref{Ms}.  Combining \eqref{comb1}---\eqref{comb3}, we get that
\begin{eqnarray}
\lim_{J\to\infty} \frac1N\mathbf K^{(1,2)}_{2J}\big(z_{\xi,N},y_{\xi,N}\big) &=& -\int_0^y\varkappa_\xi(z,v)\upd v + \frac{\omega(z\xi)}4\int_0^1(1-\lambda u)M(z\xi u)\upd u \nonumber \\
\label{comb4} 
&=& -DA_\xi(z,y).
\end{eqnarray}
This finishes the proof of the second and the third cases in
\eqref{three-kernels} since
\[
\left\{
\begin{array}{lll}
\displaystyle \mathbf K^{(2,1)}_{2J}\left(z_{\xi,N},y_{\xi,N}\right) &=& \iota(z_{\xi,N})\mathbf K^{(1,1)}_{2J}\left(\overline z_{\xi,N},y_{\xi,N}\right)\smallskip \\
\displaystyle \mathbf K^{(2,2)}_{2J}\left(z_{\xi,N},y_{\xi,N}\right) &=& \iota(z_{\xi,N})\mathbf K^{(1,2)}_{2J}\left(\overline z_{\xi,N},y_{\xi,N}\right)
\end{array}.
\right.
\]

To prove the first equality in \eqref{three-kernels}, notice that
\eqref{comb4} provides us with the terms on the anti-diagonal of $\mathbf K_{2J}(x,y)$. Thus, we only need to compute the limit of the $\mathbf K^{(2,2)}_{2J}\big(x_{\xi,N},y_{\xi,N}\big)$ (observe the presence of $\frac12\sgn(y_{\xi,N}-x_{\xi,N})=\frac12\sgn(y-x)$ in \eqref{eq:21}). To this end,we get from Lemma~\ref{lem:EO} that
\begin{eqnarray}
\mathbf K^{(2,2)}_{2J}(x,y) &=& 2\sum_{j=0}^{J-1}\left[ - \left(\xi+\int_\xi^x\widetilde\pi_{2j}(v)\upd v\right)\left(\frac1{4s} - \int_\xi^y\widetilde\pi_{2j+1}(v)\upd v\right) \right. \nonumber \\
&& \hspace{1.35in} \left. + \left(\frac1{4s} - \int_\xi^x\widetilde\pi_{2j+1}(v)\upd v\right)\left(\xi+\int_\xi^y\widetilde\pi_{2j}(v)\upd v\right)\right] \nonumber \\
&=& \int_\xi^x\int_\xi^y \mathbf K^{(1,1)}_{2J}(u,v)\upd v\upd u +\left(\int_\xi^y-\int_\xi^x\right)\sum_{j=0}^{J-1}\left(\frac1{2s}\widetilde\pi_{2j}(v)+2\xi\widetilde\pi_{2j+1}(v)\right)\upd v. \nonumber
\end{eqnarray}
As before, we get that
\[
\int_\xi^{x_{\xi,N}}\int_\xi^{y_{\xi,N}} \mathbf K^{(1,1)}_{2J}(u,v)\upd v\upd u = N^{-2}\int_0^x\int_0^y \mathbf K^{(1,1)}_{2J}(u_{\xi,N},v_{\xi,N})\upd v\upd u
\]
and therefore this term approaches $\int_0^x\int_0^y\varkappa_\xi(u,v)\upd v\upd u$ locally uniformly in $\R\times\R$. Moreover, we have that
\[
\int_\xi^{x_{\xi,N}}\sum_{j=0}^{J-1}\left(\frac1{2s}\widetilde\pi_{2j}(v)+2\xi\widetilde\pi_{2j+1}(v)\right)\upd v = N^{-1}\int_0^x\sum_{j=0}^{J-1}\left(\frac1{2s}\widetilde\pi_{2j}(v_{\xi,N})+2\xi\widetilde\pi_{2j+1}(v_{\xi,N})\right)\upd v
\]
and therefore this term converges to
\[
\int_0^x\frac{\omega(v\xi)}4\int_0^1(1-\lambda u)M(v\xi u)\upd u \upd v.
\]
Since the second integral for this sum can be handled similarly, we see that $\mathbf K^{(2,2)}_{2J}\big(x_{\xi,N},y_{\xi,N}\big)$ approaches $A_\xi(x,y)$ as claimed.

{\bf Case $N=2J+1$:} As in the case of even $N$, we only need to consider the scaled limits of $\mathbf K^{(1,2)}_N$ and $\mathbf K^{(2,2)}_N$. It follows from \eqref{IDS2} that
\[
\mathbf K^{(1,2)}_{2J+1}(z,y) - \mathbf K^{(1,2)}_{2J}(z,y) = - 2\sum_{j=0}^{J-1}\frac{s_{2j}}{s_{2J}}\bigg[\widetilde\pi_{2J}(z)\epsilon\widetilde\pi_{2j+1}(y)-\epsilon\widetilde\pi_{2J}(y)\widetilde\pi_{2j+1}(z)\bigg] + \frac{\widetilde\pi_{2J}(z)}{s_{2J}}
\]
which is equal to
\[
-\int_\xi^y\left(\mathbf K^{(1,1)}_{2J+1}(z,v)-\mathbf K^{(1,1)}_{2J}(z,v)\right)\upd v - \sum_{j=0}^{J-1}\frac{s_{2j}}{s_{2J}}\left(\frac1{2s}\widetilde\pi_{2J}(z)+2\xi\widetilde\pi_{2j+1}(z)\right)+ \frac{\widetilde\pi_{2J}(z)}{s_{2J}}
\]
by Lemma~\ref{lem:EO}. We get that
\begin{equation}
\label{amlostthere1}
\begin{array}{l}
\displaystyle \frac1N\int_\xi^{y_{\xi,N}}\left(\mathbf K^{(1,1)}_{2J+1}(z_{\xi,N},v)-\mathbf K^{(1,1)}_{2J}(z_{\xi,N},v)\right)\upd v \smallskip \\
\displaystyle \hspace{1.7in} = \frac1{N^2}\int_0^y\left(\mathbf K^{(1,1)}_{2J+1}(z_{\xi,N},v_{\xi,N})-\mathbf K^{(1,1)}_{2J}(z_{\xi,N},v_{\xi,N})\right)\upd v,
\end{array}
\end{equation}
which converges locally uniformly to zero by Theorem~\ref{thm:7}. Further, we have that
\begin{equation}
\label{amlostthere2}
\lim_{J\to\infty}\frac1{2sN}\widetilde\pi_{2J}\left(z_{\xi,N}\right)\sum_{j=0}^{J-1}\frac{s_{2j}}{s_{2J}} = \frac{\omega\big(z\xi\big)}4\frac{\lambda\sqrt{1-\lambda}}{1+\sqrt{1-\lambda}} M_{1/2,-1/2}(z\xi)
\end{equation}
by \eqref{Pnat1} and \eqref{sumPnat1} applied at $z=0$ with any pair of parameters such that $\gamma=0$.  Moreover, we get from Lemmas \ref{lem:asymptoticat1} \& \ref{lem:miracle} that
\begin{equation}
\label{amlostthere3}
\lim_{J\to\infty}\frac{2\xi}N \sum_{j=0}^{J-1}\frac{s_{2j}}{s_{2J}}\widetilde\pi_{2j+1}\left(z_{\xi,N}\right) = \frac{\omega\big(z\xi\big)}4(1-\lambda)M_{1/2,-1/2}(z\xi).
\end{equation}
At last, we have that $\lim_{J\to\infty}s_{2J}^{-1}=\sqrt{1-\lambda}/2$, which, in combination with \eqref{Pnat1}, yields that
\begin{equation}
\label{amlostthere4}
\lim_{J\to\infty}\frac1{s_{2J}N}\widetilde \pi_{2J}\left(z_{\xi,N}\right) = \frac{\omega\big(z\xi\big)}4\sqrt{1-\lambda}M_{1/2,-1/2}(z\xi).\end{equation}
Combining the conclusion of \eqref{amlostthere1} with \eqref{amlostthere2}---\eqref{amlostthere4}, we get that
\begin{equation}
\label{oops}
\lim_{J\to\infty}N^{-1}\left(\mathbf K^{(1,2)}_{2J+1}\big(z_{\xi,N},y_{\xi,N}\big) - \mathbf K^{(1,2)}_{2J}\big(z_{\xi,N},y_{\xi,N}\big)\right) = 0
\end{equation}
locally uniformly in $\C\times\R$.

Finally, let us settle the case $x,y\in\R$. The only difference from above is in the $\mathbf K^{(2,2)}_{2J+1}$ component. It holds that $\mathbf K^{(2,2)}_{2J+1}(x,y) - \mathbf K^{(2,2)}_{2J}(x,y)$ is equal to
\[
- 2\sum_{j=0}^{J-1}\frac{s_{2j}}{s_{2J}}\bigg[\epsilon\widetilde\pi_{2J}(x)\epsilon\widetilde\pi_{2j+1}(y)-\epsilon\widetilde\pi_{2J}(y)\epsilon\widetilde\pi_{2j+1}(x)\bigg] + \frac{\epsilon\widetilde\pi_{2J}(x)-\epsilon\widetilde\pi_{2J}(y)}{s_{2J}},
\]
which itself is equal to
\[
\begin{array}{l}
\displaystyle \int_\xi^x\int_\xi^y\left(\mathbf K^{(1,1)}_{2J+1}(u,v)-\mathbf K^{(1,1)}_{2J}(u,v)\right)\upd v \upd u \smallskip \\
\displaystyle \hspace{1in} + \left(\int_\xi^y-\int_\xi^x\right)\left(-\sum_{j=0}^{J-1}\frac{s_{2j}}{s_{2J}}\left[\frac1{2s}\widetilde\pi_{2J}(v)+2\xi\widetilde\pi_{2j+1}(v) \right] + \frac{\widetilde \pi_{2J}(v)}{s_{2J}} \right)\upd v.
\end{array}
\]
The claim of the theorem now follows by appealing to \eqref{oops} and the conclusion of \eqref{amlostthere1}.
\end{proof}

\subsection{Proofs of Theorems~\ref{thm:Inside}---\ref{thm:Outside}}

For the proof of Theorem~\ref{thm:Inside} we shall need the following restatement of Lemma~\ref{lem:EO}.
\begin{lemma}
\label{lem:EOat0}
For $y\in(-1,1)$, it holds that
\begin{equation}
\label{EOpiat0}
\left\{
\begin{array}{lll}
(\epsilon\widetilde\pi_{2n})(y) &=& \displaystyle -\int_0^y\widetilde\pi_{2n}(u)\upd u, \smallskip \\
(\epsilon\widetilde\pi_{2n+1})(y) &=& \displaystyle -\frac14P_{n+1}^{-1/2,-3/2}(0) - \int_0^y\widetilde\pi_{2n+1}(u)\upd u.
\end{array}
\right.
\end{equation}
\end{lemma}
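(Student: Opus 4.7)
The strategy is to reduce the claim to evaluations at $y=0$ via the relation $D\epsilon f=-f$ noted after Theorem~\ref{thm:5}: for any integrable $f$ one has $\epsilon f(y)=\epsilon f(0)-\int_0^y f(u)\,\upd u$. Thus \eqref{EOpiat0} follows once I verify $(\epsilon\widetilde\pi_{2n})(0)=0$ and $(\epsilon\widetilde\pi_{2n+1})(0)=-\frac14 P_{n+1}^{-1/2,-3/2}(0)$.

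A parity observation handles the first. From Theorem~\ref{thm:2}, $\pi_{2n}$ contains only even powers of $z$ and $\pi_{2n+1}$ only odd powers; since $\max\{1,|u|\}^{-s}$ is even, $\widetilde\pi_{2n}$ is even and $\widetilde\pi_{2n+1}$ is odd on $\R$. Setting $u=0$ in \eqref{eq:22} gives $\epsilon g(0)=\frac12\int_\R g(t)\sgn(t)\,\upd t$, which vanishes for even $g$ and equals $\int_0^\infty g(t)\,\upd t$ for odd $g$. This establishes the first identity at once and reduces the second identity to the evaluation of $\int_0^\infty\widetilde\pi_{2n+1}(t)\,\upd t$.

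For this integral I use $\int_0^\infty t^{2k+1}\max\{1,t\}^{-s}\upd t = s/\bigl((2k+2)(s-2k-2)\bigr)$. The factor $s-2k-2$ in the denominator cancels the identical factor in the coefficients of $\pi_{2n+1}$ from Theorem~\ref{thm:2}, leaving
\[
(\epsilon\widetilde\pi_{2n+1})(0)=-\frac{1}{8\pi}\sum_{k=0}^n\frac{\Gamma(k+3/2)\Gamma(n-k-1/2)}{\Gamma(k+2)\Gamma(n-k+1)}.
\]
After the shift $j=k+1$, the sum becomes $\sum_{j=1}^{n+1}\Gamma(j+1/2)\Gamma(n-j+1/2)/\bigl(\Gamma(j+1)\Gamma(n-j+2)\bigr)$. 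The complete sum from $j=0$ to $n+1$ equals $-2\pi P_{n+1}^{-1/2,-3/2}(1)$ by \eqref{pes}, and this vanishes because the Chu--Vandermonde identity $P_m^{\alpha,\beta}(1)=c_m(\alpha+\beta+1)$ gives $P_{n+1}^{-1/2,-3/2}(1)=c_{n+1}(-1)=0$. Hence the tail $j\geq1$ equals the negative of the $j=0$ term, namely $-\sqrt\pi\,\Gamma(n+1/2)/\Gamma(n+2)$. Multiplying by $-1/(8\pi)$ yields $\Gamma(n+1/2)/\bigl(8\sqrt\pi\,\Gamma(n+2)\bigr)$, and a direct reading of \eqref{pes} at $z=0$ (using $\Gamma(-1/2)=-2\sqrt\pi$) confirms this equals $-\frac14 P_{n+1}^{-1/2,-3/2}(0)$.

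The argument is a direct parallel of the proof of Lemma~\ref{lem:EO}, but with the antiderivative anchored at $0$ rather than at $\pm 1$; the only real challenge is tracking factors of $\sqrt\pi$ and the sign from $\Gamma(-1/2)$ through the Chu--Vandermonde reduction.
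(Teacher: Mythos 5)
Your proof is correct, and it takes a somewhat different and arguably cleaner route than the paper's. The paper's proof starts from its Lemma~\ref{lem:EO} (the $\xi=\pm1$ anchored version), splits $-\int_\xi^y=-\int_0^y+\int_0^\xi$, and then evaluates the residual definite integral $\int_0^1\widetilde\pi_k$ via the identities $P_n^{-1/2,-1/2}(1)=P_n^{1/2,-3/2}(1)=1$ and $P_n^{-1/2,-3/2}(1)=0$. You instead bypass Lemma~\ref{lem:EO} entirely by invoking the antiderivative relation $\epsilon f(y)=\epsilon f(0)-\int_0^y f$ directly and then exploiting parity: $\widetilde\pi_{2n}$ is even so $\epsilon\widetilde\pi_{2n}(0)=0$ immediately, while $\widetilde\pi_{2n+1}$ is odd so $\epsilon\widetilde\pi_{2n+1}(0)=\int_0^\infty\widetilde\pi_{2n+1}$, which you then compute from the explicit coefficients in Theorem~\ref{thm:2}. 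The cancellation of the factor $s-(2k+2)$ between the moments $\int_0^\infty t^{2k+1}\max\{1,t\}^{-s}\upd t=s/\bigl((2k+2)(s-2k-2)\bigr)$ and the coefficients of $\pi_{2n+1}$ is exactly what makes the constant $s$-independent, and your Chu--Vandermonde reduction $P_{n+1}^{-1/2,-3/2}(1)=c_{n+1}(-1)=0$ correctly isolates the $j=0$ term as the surviving constant. The paper's route has the advantage of reusing its Lemma~\ref{lem:EO}, while yours has the advantage of not needing it and making the role of parity transparent; both require the same evaluation-at-$1$ input, just applied at index $n$ versus $n+1$.
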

\begin{proof}
We start with formulae \eqref{EOpi}. Since $-\int_1^y=-\int_0^y+\int_0^1$, we just need to compute the latter integral in both cases. It follows from Theorem~\ref{thm:2} and \eqref{pes} that
\begin{equation}
\label{needthisone}
\int_0^1 \widetilde\pi_{2n}(u)\upd u = P_n^{-1/2,-1/2}(1) = 1
\end{equation}
by \eqref{Pnat1}. Analogously, we have that
\begin{eqnarray}
\int_0^1 \widetilde\pi_{2n+1}(u)\upd u &=& \frac14\left(P_n^{-1/2,-3/2}(1) - P_{n+1}^{-1/2,-3/2}(0) - \frac1sP_n^{1/2,-3/2}(1)\right) \nonumber \\
&=&  -\frac14 P_{n+1}^{-1/2,-3/2}(0) - \frac1{4s}, \nonumber
\end{eqnarray}
where $P_n^{-1/2,-3/2}(1)=0$ by \eqref{Pnat1}. Combining the above equalities with \eqref{EOpi} yields \eqref{EOpiat0}. 
\end{proof}

\begin{proof}[Proof of Theorem~\ref{thm:Inside}]
To prove the theorem we need to show that the matrix kernel \eqref{eq:21} converges to \eqref{three-kernels} with $A=A_\D$ defined by \eqref{ISNInside}. As before, we employ \eqref{IDS1} \& \eqref{IDS2}.

{\bf Case $N=2J$:} Recall that formula \eqref{DSNK} holds for all $z,w$. Thus, $\mathbf K^{(1,1)}_{2J}(z,w)$ is equal to the right-hand side of \eqref{DSNK} when $z,w\in\D$. Therefore, we need to analyze the behavior of $K_j^{(i)}$, see \eqref{Ki}, in $\D\times\D$. Limit \eqref{SumInside2} yields that
\[
\lim_{J\to\infty}K_J^{(i)}\big(z^2,w^2\big) = \frac1{2\pi}\int_\T\frac{\Lambda_{-1/2,-3/2}(\tau)|\upd\tau|}{\big(1-z^2\overline\tau\big)^{3/2}\big(1-w^2\tau\big)^{(2i+1)/2}}
\]
locally uniformly in $\D\times\D$. As $\Lambda_{-1/2,-3/2}(\tau)=\sqrt{-\tau}$, one can readily verify that the functions $\mathbf K^{(1,1)}_{2J}(z,w)$ converge to $DA_\D D(z,w)$. This proves the complex/complex case the theorem.

Let now $y\in(-1,1)$. Then it follows from Lemma~\ref{lem:EOat0} that
\begin{eqnarray}
\mathbf K^{(1,2)}_{2J}(z,y) &=& -\int_0^y \mathbf K^{(1,1)}_{2J}(z,v)\upd v - \frac12\sum_{j=0}^{J-1}\pi_{2j}(z)P_{j+1}^{-1/2,-3/2}(0) \nonumber \\
&=& -\int_0^y \mathbf K^{(1,1)}_{2J}(z,v)\upd v - \frac12\left(K_{J+1}^{1/2,-1/2,-1/2,-3/2}(z^2,0)-K_{J+1}^{1/2,-3/2,-1/2,-3/2}(z^2,0)\right) \nonumber
\end{eqnarray}
where we used \eqref{pies} and \eqref{nn-1}. The first term in the last sum converges to
\[
-\int_0^y DA_\D D(z,v)\upd v = DA_\D(z,0) - DA_\D(z,y)
\]
and the second term converges to
\[
-\frac1{4\pi}\int_\T\frac{\left(\Lambda_{-1/2,-3/2}(\tau)-\Lambda_{-3/2,-3/2}(\tau)\right)|\upd\tau|}{\big(1-z^2\overline\tau\big)^{3/2}} = -DA_\D(z,0),
\]
by \eqref{SumInside2} and the computation
\[
\Lambda_{-1/2,-3/2}(\tau)-\Lambda_{-3/2,-3/2}(\tau) = - \Lambda_{-3/2,-1/2}(\tau) = -\Lambda_{-1/2,-3/2}(\overline\tau).
\]
This yields the complex/real case of the theorem.

Let now $x,y\in(-1,1)$. It follows from Lemma~\ref{lem:EOat0} that
\[
\mathbf K^{(2,2)}_{2J}(x,y) = \int_0^x\int_0^y \mathbf K^{(1,1)}_{2J}\upd v\upd u + \left(\int_0^x-\int_0^y\right)\left(\frac12\sum_{j=0}^{J-1}\pi_{2j}(u)P_{j+1}^{-1/2,-3/2}(0)\right)\upd u.
\]
Immediately, we get that
\[
\lim_{N\to\infty}\mathbf K^{(2,2)}_{2J}(x,y) = \int_0^x\int_0^y DA_\D D(u,v) \upd v\upd u  + \left(\int_0^x-\int_0^y\right) DA_\D(u,0)\upd u = A_\D(x,y),
\]
where for the last equality we used the fact that $A_\D(x,y)$ is anti-symmetric and is zero at $(0,0)$. This finishes the proof of the real/real case.

{\bf Case $N=2J+1$:} It holds by \eqref{convergence1} and \eqref{pies} that
\[
\left\{
\begin{array}{lcl}
\pi_{2J}(z) &=& c_J(-1/2)\left((1-z)^{-3/2}+o_J(1)\right) \bigskip\\
\pi_{2j+1}(w) &=& \frac w4 c_j(-3/2)\left(\left(1+\frac1s\right)(1-w)^{-3/2} - \frac3s(1-w)^{-5/2} + o_j(1)\right)
\end{array}
\right.
\]
locally uniformly for $z,w\in\D$. As the numbers $s_{2j}/s_{2J}$ are bounded and $c_J(-1/2)\sim (J+1)^{-1/2}$, $c_j(-3/2)\sim(j+1)^{-3/2}$ by \eqref{ckalphaas}, we get that
\[
\pi_{2J}(z)\sum_{j=0}^{J-1}\frac{s_{2j}}{s_{2J}}\pi_{2j+1}(w) = o_J(1)
\]
locally uniformly in $\D\times\D$. Therefore, $\mathbf K^{(1,1)}_{2J+1}$ has the same limit as $\mathbf K^{(1,1)}_{2J}$ as $J\to\infty$. 

Further, by \eqref{EOpiat0}, we have that
\begin{align*}
& \left(\mathbf K^{(1,2)}_{2J+1}-\mathbf K^{(1,2)}_{2J}\right)(z,y) = \\
& \hspace{2cm} -\int_0^y \left(\mathbf K^{(1,1)}_{2J+1}-\mathbf
  K^{(1,1)}_{2J}\right)(z,v)\upd v +
\frac{\pi_{2J}(z)}2\sum_{j=0}^{J-1}\frac{s_{2j}}{s_{2J}}P_{j+1}^{-1/2,-3/2}(0)
+ \frac{\pi_{2J}(z)}{s_{2J}}.
\end{align*}
 As $P_{j+1}^{-1/2,-3/2}(0)=c_{j+1}(-3/2)$ and $s_{2J}>1$, this difference converges to zero locally uniformly in $\D\times\D$. 

Finally, observe that the difference $\left(\mathbf K^{(2,2)}_{2J+1}-\mathbf K^{(2,2)}_{2J}\right)(x,y)$ is equal to
\begin{align*}
& \int_0^x\int_0^y \left(\mathbf K^{(1,1)}_{2J+1}-\mathbf
  K^{(1,1)}_{2J}\right)(u,v)\upd v\upd u  \\
& \hspace{2cm} +
\left(\int_0^y-\int_0^x\right)\left(\frac{\pi_{2J}(v)}2\sum_{j=0}^{J-1}\frac{s_{2j}}{s_{2J}}P_{j+1}^{-1/2,-3/2}(0)
  + \frac{\pi_{2J}(v)}{s_{2J}}\right)\upd v, 
\end{align*}
which converges to zero locally uniformly in $\D\times\D$. This finishes the proof of the theorem.
\end{proof}

The following lemma is needed for the proof of Theorem~\ref{thm:DSN}.

\begin{lemma}
\label{lem:2j+1-out}
It holds that
\[
\lim_{J\to\infty} \frac{\sqrt J}{s-2J-1}\frac1{z^{2J}}\sum_{j=0}^{J-1}\frac{s_{2j}}{s_{2J}}\pi_{2j+1}(z) = \frac \lambda{4\sqrt\pi}\frac1{\sqrt{z^2-1}}
\]
locally uniformly in $\Om$.
\end{lemma}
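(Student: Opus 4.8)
The plan is to mirror the structure of the proof of Lemma~\ref{lem:sums}, but now working outside the unit circle rather than on it, and keeping track of the precise polynomial growth rate in $J$. First I would use \eqref{pies} to write $\pi_{2j+1}(z)=\frac z4\big[(1+\frac1s)\pi_{2j}^{1/2,-3/2}(z)-\frac3s\pi_{2j}^{3/2,-3/2}(z)\big]=\frac z4\big[(1+\frac1s)P_j^{1/2,-3/2}(z^2)-\frac3sP_j^{3/2,-3/2}(z^2)\big]$, so that the sum in question becomes a combination of $\sum_{j=0}^{J-1}\frac{s_{2j}}{s_{2J}}P_j^{\alpha,-3/2}(z^2)$ for $\alpha\in\{1/2,3/2\}$. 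Since $P_j^{\beta,\alpha}(0)=c_j(\alpha)$ in the notation \eqref{ckalpha}, the key asymptotic input is \eqref{convergence2}, which gives $P_j^{\alpha,-3/2}(z^2)=\big[(1-z^{-2})^{1/2}+o_j(1)\big]c_j(\alpha)z^{2j}$ uniformly on $\overline\Om$ (note $\alpha>0$, $\beta=-3/2<0$, so \eqref{convergence2} applies uniformly on $\overline\Om$).

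Next I would handle the weighted sum $\sum_{j=0}^{J-1}\frac{s_{2j}}{s_{2J}}c_j(\alpha)z^{2j}$. By Lemma~\ref{lemma:2} the ratio $s_{2j}/s_{2J}$ is an increasing sequence in $j$, bounded, and by the computation in the proof of Lemma~\ref{lem:asymptoticat1} it equals $(1+o_J(1))\sqrt{(1-\lambda)/(1-jJ^{-1})}$ when $\lambda<1$ (and is $o_J(1)$ when $\lambda=1$, in which case both sides of the claimed identity vanish after the factor $s-2J-1$ is divided out—this edge case needs a separate short remark). Pulling out the dominant term $z^{2(J-1)}$ and using that $c_j(\alpha)\sim(j+1)^\alpha$ grows slower than $z^{2j}$ decays relative to $z^{2J}$, one sees the sum is dominated by the top few terms: $\frac1{z^{2J}}\sum_{j=0}^{J-1}\frac{s_{2j}}{s_{2J}}c_j(\alpha)z^{2j}=\frac{c_{J-1}(\alpha)}{z^2}\big(1+o_J(1)\big)\sum_{k\ge0}\frac{c_{J-1-k}(\alpha)}{c_{J-1}(\alpha)}\frac{s_{2(J-1-k)}}{s_{2J}}z^{-2k}$, and since for fixed $k$ both $c_{J-1-k}(\alpha)/c_{J-1}(\alpha)\to1$ and $s_{2(J-1-k)}/s_{2J}\to1$, this behaves like $\frac{c_{J-1}(\alpha)}{z^2}\cdot\frac1{1-z^{-2}}$ up to $o_J(1)$; normality of the family (justified exactly as in Lemma~\ref{lem:kernel} via subharmonicity and the maximum principle) upgrades pointwise convergence to locally uniform convergence on $\Om$. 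One then divides by $(s-2J-1)/\sqrt J$; using Lemma~\ref{lemma:2} one checks $s_{2J}=(1+o_J(1))\sqrt{(s-2J)/?}$-type asymptotics give $s_{2J}^{-1}\to\sqrt{1-\lambda}/2$ and $c_{J-1}(\alpha)\sim(J)^\alpha/\Gamma(1+\alpha)$ via \eqref{ckalphaas}, while $(s-2J-1)\sim s(1-\lambda)$; tracking the powers of $J$ and combining the $\alpha=1/2$ term (coefficient $1+\frac1s\to1$) with the $\alpha=3/2$ term (coefficient $-\frac3s\to0$ after multiplication, but contributing the $\lambda$ factor since $\frac1s\sim\lambda/N$) produces the stated limit with the $\Gamma(1/2)=\sqrt\pi$ and the $\frac\lambda4$ prefactor, and $(1-z^{-2})^{1/2}\cdot\frac1{1-z^{-2}}=(1-z^{-2})^{-1/2}$ converts to $\frac1{\sqrt{z^2-1}}$ after extracting the appropriate factor of $z$.

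The main obstacle, as usual in this circle of arguments, is bookkeeping the exact constants: getting the $\lambda$ (rather than some other combination of $\lambda$ and $1$) out of the interplay between the $(1+\frac1s)P_j^{1/2,-3/2}$ and $\frac3s P_j^{3/2,-3/2}$ terms, and correctly combining the three $J$-dependent scales ($c_{J-1}(\alpha)\sim J^\alpha$, $s_{2J}^{-1}\to\sqrt{1-\lambda}/2$, $(s-2J-1)^{-1}$) so that the net power of $J$ is $\sqrt J$ and the net constant is $\frac\lambda{4\sqrt\pi}$. A secondary subtlety is that \eqref{convergence2} is only uniform on $\overline\Om$ for the parameter range $\alpha>0,\beta<0$, which is exactly our situation, so no extension of that lemma is needed; but one must still verify the normal-family argument to promote the Fourier-coefficient-style pointwise limit to locally uniform convergence on $\Om$, and handle the degenerate case $\lambda=1$ where the stated right-hand side is $0$ and the left-hand side vanishes because $s_{2j}/s_{2J}=o_J(1)$.
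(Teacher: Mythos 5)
Your plan parallels the paper's initial setup (reduce to the explicit coefficients from Theorem~\ref{thm:2} and the asymptotics from \eqref{convergence2}) but then diverges, and the divergence opens a real gap. Two separate problems:

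\textbf{The edge case $\lambda=1$ is misread.} You claim that when $\lambda=1$ ``both sides of the claimed identity vanish.'' The right-hand side is $\frac{\lambda}{4\sqrt\pi}\frac{1}{\sqrt{z^2-1}}$, which at $\lambda=1$ equals $\frac{1}{4\sqrt\pi\sqrt{z^2-1}}\neq0$. What happens when $\lambda=1$ and $c=\lim(s-2J-1)<\infty$ is that the unnormalized sum $\sum_{j}\frac{s_{2j}}{s_{2J}}\pi_{2j+1}(z)/z^{2J}$ decays, but the prefactor $\frac{\sqrt J}{s-2J-1}$ blows up at exactly the compensating rate; the product has the nonzero limit stated. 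You also invoke ``$s_{2j}/s_{2J}=(1+o_J(1))\sqrt{(1-\lambda)/(1-jJ^{-1})}$'' from Lemma~\ref{lem:asymptoticat1}, but that asymptotic is for $j/J$ bounded away from $1$, whereas the relevant terms here have $J-j$ fixed; in that regime $s_{2j}/s_{2J}\to1$, not $\sqrt{(1-\lambda)/(1-jJ^{-1})}$.

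\textbf{The error control via \eqref{pies}$+$\eqref{convergence2} fails when $c<\infty$.} Applying \eqref{convergence2} separately to $P_j^{1/2,-3/2}$ and $P_j^{3/2,-3/2}$ yields
\[
\pi_{2j+1}(z) = \frac{z^{2j+1}}{4}f(z)\,c_j(1/2)\,\frac{s-2j-2}{s} \;+\; \frac{z^{2j+1}}{4}\Big[\big(1+\tfrac1s\big)o_j(1)\,c_j(1/2) - \tfrac3s\,o_j'(1)\,c_j(3/2)\Big],
\]
where the two $o$'s are independent. The main term is $\asymp \sqrt j\,(s-2j)/s$, while the second error contribution is $\asymp o_j(1)\,j^{3/2}/s$. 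Their ratio is $\asymp o_j(1)\,j/(s-2j)$, which is \emph{not} $o_j(1)$ uniformly once $s-2j=O(1)$ — exactly the case $\lambda=1$, $c<\infty$, which is the only interesting one in Theorem~\ref{thm:DSN}. So the naive ``sum the two $P_j$ series and combine'' plan cannot close without a sharper input. The paper avoids this by working directly with Theorem~\ref{thm:2}: it rewrites the coefficient $s-2(k+1)=(s-1-2j)+(2i-1)$ with $i=j-k$, obtaining an asymptotic (\ref{2j+1-as}) whose error is \emph{multiplicative} on the factor $(s-2-2j)$, and then uses the algebraic identity $(s-2-2j)s_{2j}=(s-3-2j)s_{2j+2}$ together with $c_{j+1}(1/2)=(1+O(1/j))c_j(1/2)$ to telescope $\sum_j\frac{s_{2j}}{s_{2J}}\pi_{2j+1}$ into $B_J-B_0$ plus a controlled remainder. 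That telescoping is the step your plan is missing, and it is what makes the proof uniform in $c\in[1,\infty]$.

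The normal-family scaffolding you invoke (subharmonicity, maximum principle, normality, then convergence of Laurent coefficients) is used in the paper too, and it is sound; the gap is upstream, in obtaining a uniform-in-$j$ relative error for $\pi_{2j+1}$ and then summing without losing a factor of $J/(s-2J)$.
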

\begin{proof}
Using notation \eqref{ckalpha}, we get from Theorem~\ref{thm:2} that
\begin{eqnarray}
\pi_{2j+1}(z) &=& \frac {z^{2j+1}}{4s}\sum_{k=0}^j\big(s-2(k+1)\big)c_k(1/2)c_{j-k}(-3/2)z^{-(j-k)} \nonumber \\
&=& \frac {z^{2j+1}}{4s}c_j(1/2)\sum_{i=0}^j\big(s-2j-2+2i\big)\frac{c_{j-i}(1/2)}{c_j(1/2)}\frac{c_i(-3/2)}{z^{2i}} \nonumber \\
&=& \frac {z^{2j+1}}{4s}c_j(1/2)\left(\big(s-1-2j\big)\sum_{i=0}^j\frac{c_{j-i}(1/2)}{c_j(1/2)}\frac{c_i(-3/2)}{z^{2i}}-\sum_{i=0}^j\frac{c_{j-i}(1/2)}{c_j(1/2)}\frac{c_i(-1/2)}{z^{2i}}\right). \nonumber
\end{eqnarray}
Thus, using the standard normal family argument, we get that
\begin{eqnarray}
\pi_{2j+1}(z) &=& z^{2j+1}\frac{c_j(1/2)}{4s}\left(\big(s-1-2j\big)\left(f(z)+o_j(1)\right)-\left(1/f(z)+o_j(1)\right)\right) \nonumber \\
\label{2j+1-as}
&=& \frac{z^{2j+1}}{f(z)}\frac{c_j(1/2)}{4s}\left((s-2-2j)(1+o_j(1)) -  z^{-2}(s-1-2j)\right)
\end{eqnarray}
locally uniformly in $\Om$, where $f(z)=\sqrt{1-1/z^2}$. Observe that 
\[
c_j(1/2)=\left(1-\frac{1}{2j}\right)c_{j-1}(1/2) 
\]
and that $(s-2-2j)s_{2j} = (s-3-2j)s_{2j+2}$ by Lemma~\ref{lemma:2}. Therefore,
 \[
\big(s_{2j}/s_{2J}\big)\pi_{2j+1}(z) = (1+o_j(1))B_{j+1}(z) - (1+o_j(1))B_j(z),
\]
where we set
\[
B_j(z):=\frac{z^{2j-1}}{f(z)}\frac{c_j(1/2)}{4s}\frac{s_{2j}}{s_{2J}}(s-1-2j).
\]
Hence,
\[
\sum_{j=0}^{J-1}\frac{s_{2j}}{s_{2J}}\pi_{2j+1}(z) = B_J(z) - B_0(z) + \sum_{j=1}^{J-1}o_{J-j}(1)B_{J-j}(z). 
\]
For each fixed $j\geq0$, we have that
\[
\lim_{J\to\infty}\frac{\sqrt J}{s-2J-1}\frac{B_{J-j}(z)}{z^{2J}} = \frac{1+2jc^{-1}}{z^{2j+1}f(z)}\frac{\lambda}{4\sqrt\pi}\lim_{J\to\infty}\frac{s_{2J-2j}}{s_{2J}},
\]
where $c:=\lim_{J\to\infty}(s-2J-1)$ ($c^{-1}=0$ when $c=\infty$). The latter limit exists and is finite by Lemma~\ref{lemma:2} (clearly, equal to 1 when $j=0$). Moreover,
\[
\lim_{J\to\infty}\frac{\sqrt J}{s-2J-1}\frac{B_0(z)}{z^{2J}} = 0
\]
locally uniformly in $\Om$. Thus, the claim of the lemma now follows from the standard normal family argument.
\end{proof}

\begin{proof}[Proof of Theorem~\ref{thm:DSN}] We rely heavily on
  \eqref{eq:21}, \eqref{IDS1} \& \eqref{IDS2}, which will be used
  without explicit mention. 

{\bf Case $N=2J$:} In order to prove \eqref{eq:19}, we shall need to
repeat the argument leading to the proof of \eqref{SumOutside}. 
As in the previous lemma, set $f(z):=\sqrt{1-1/z^2}$. We get from
\eqref{2j+1-as} and \eqref{convergence2} that 
\begin{equation}
\label{monstrous}
\begin{array}{l}
\displaystyle \frac2{s-2J}\frac{f(z)f(w)}{(zw)^{2J-1}}\sum_{j=0}^{J-1}\big[\pi_{2j}(z)\pi_{2j+1}(w) -\pi_{2j}(w)\pi_{2j+1}(z)\big] =\smallskip \\
\hspace{.5in} \displaystyle \sum_{j=0}^{J-1}\frac{c_{J-1-j}^2(1/2)}{2s}\left(1+\frac{2j}{s-2J}\right)\frac{w(1+o_{J-1-j}(1))-z(1+o_{J-1-j}(1))}{(zw)^{2j+1}} + \smallskip \\
\hspace{.5in} \displaystyle +\sum_{j=0}^{J-1}\frac{c_{J-1-j}^2(1/2)}{2s}\left(1+\frac{2j+1}{s-2J}\right)\frac{w(1+o_{J-1-j}(1))-z(1+o_{J-1-j}(1))}{(zw)^{2j+2}}. 
\end{array}
\end{equation}
Since $\lim_{J\to\infty}(s-2J)=c$, by \eqref{ckalphaas} we have that
\[
\lim_{J\to\infty} \frac{c_{J-1-j}^2(1/2)}{2s} = \lim_{J\to\infty} \frac Ns \frac{c_{J-1-j}^2(1/2)}{4J} = \frac\lambda\pi,
\]
we deduce by employing the normal family argument that the right-hand side of \eqref{monstrous} converges to
\begin{eqnarray}
\frac\lambda\pi (w-z)\left[\sum_{j=0}^\infty\frac{1+2jc^{-1}}{(zw)^{2j+1}}+\sum_{j=0}^\infty\frac{1+(2j+1)c^{-1}}{(zw)^{2j+2}}\right] &=& \frac\lambda\pi \frac{w-z}{zw}\sum_{j=0}^\infty\frac{1+jc^{-1}}{(zw)^j} \nonumber \\
&=& \frac\lambda\pi\frac{w-z}{zw-1}\left[1+\frac{c^{-1}}{zw-1}\right] \nonumber
\end{eqnarray}
locally uniformly in $\Om$, from which the claim of the theorem easily follows.

{\bf Case $N=2J+1$:}
We get from \eqref{convergence2} and Lemma~\ref{lem:2j+1-out} that
\[
\lim_{J\to\infty}\frac2{s-2J-1}\frac{\pi_{2J}(z)}{(zw)^{2J+1}}\sum_{j=0}^{J-1}\frac{s_{2j}}{s_{2J}}\pi_{2j+1}(w) = \frac \lambda\pi\frac1{w\sqrt{z^2-1}\sqrt{w^2-1}}
\]
locally uniformly in $\Om\times\Om$. Then, as $\lim_{J\to\infty}(s-2J)=c+1$ now, it holds that
\begin{eqnarray}
\lim_{J\to\infty}\frac{|zw|^s}{(zw)^{2J+1}}\frac{DS_{2J+1}(z,w)}{s-2J-1} &=& \lim_{J\to\infty}\frac{|zw|^s}{(zw)^{2J+1}}\frac{DS_{2J}(z,w)}{s-2J-1} + \frac\lambda\pi\frac1{zw}\frac{w-z}{\sqrt{z^2-1}\sqrt{w^2-1}} \nonumber \\
&=& \frac\lambda\pi\frac{w-z}{\sqrt{z^2-1}\sqrt{w^2-1}}\frac1{zw}\left[\frac{1+c^{-1}}{zw-1}+\frac{c^{-1}}{(zw-1)^2}+1\right] \nonumber
\end{eqnarray}
locally uniformly in $\Om\times\Om$. Since
\[
\frac{1+c^{-1}}{zw-1}+\frac{c^{-1}}{(zw-1)^2}+1 = \frac{zw}{zw-1} + \frac{c^{-1}}{zw-1} +\frac{c^{-1}}{(zw-1)^2} =  \frac{zw}{zw-1} + \frac{zwc^{-1}}{zw-1},
\]
the theorem follows.
\end{proof}

For the proof of Theorem~\ref{thm:Outside} we shall need the following lemma.

\begin{lemma}
\label{lem:EOatInfty}
For $|y|>1$, it holds that
\begin{equation}
\label{EOOutside}
\left\{
\begin{array}{lll}
(\epsilon\widetilde\pi_{2n})(y) &=& \displaystyle -\frac{s_{2n}}2\xi -\int_{\xi\infty}^y \widetilde\pi_{2n}(u) \upd u
\displaystyle  ,
  \smallskip \\ 
(\epsilon\widetilde\pi_{2n+1})(y) &=& \displaystyle -\int_{\xi\infty}^y \widetilde\pi_{2n+1}(u) \upd u,
\end{array}
\right.
\end{equation}
where $\xi=\pm1$ and the numbers $s_k$ were defined in \eqref{s_k}.
\end{lemma}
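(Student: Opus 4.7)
The plan is to reduce Lemma~\ref{lem:EOatInfty} to the antiderivative identity recorded for the $\epsilon$-operator just after Theorem~\ref{thm:5}. Recall that for any real $y$ and any antiderivative $F$ of an integrable $f$ on $\R$,
\[
\epsilon f(y) = -F(y)+\frac{F(+\infty)+F(-\infty)}{2}.
\]
Since $s>N$, the weighted polynomials $\widetilde\pi_k(u)=\pi_k(u)\max\{1,|u|\}^{-s}$ are integrable on $\R$, so this identity applies.

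The natural choice of antiderivative here is
\[
F_k(y):=\int_{\xi\infty}^y\widetilde\pi_k(u)\upd u,
\]
which vanishes at $+\infty$ if $\xi=+1$ and at $-\infty$ if $\xi=-1$. In either case, $F_k(\xi\infty)=0$ and $F_k(-\xi\infty)=-\xi\int_\R\widetilde\pi_k(u)\upd u=-\xi s_k$, so
\[
\frac{F_k(+\infty)+F_k(-\infty)}{2}=-\frac{\xi s_k}{2}.
\]
Substituting back yields
\[
\epsilon\widetilde\pi_k(y)=-\frac{\xi s_k}{2}-\int_{\xi\infty}^y\widetilde\pi_k(u)\upd u
\]
for every $y\in\R$ (and in particular for $|y|>1$).

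Specializing to $k=2n$ gives the first line of \eqref{EOOutside} directly from the definition \eqref{s_k}. For $k=2n+1$, the observation recorded just before Lemma~\ref{lemma:2}---namely that $\widetilde\pi_{2n+1}$ is an odd function of $x$, hence $s_{2n+1}=0$---eliminates the boundary term and yields the second line of \eqref{EOOutside}. There is no real obstacle; the only thing to check carefully is the sign bookkeeping in the $(\xi=\pm1)$ cases, which is handled uniformly by the factor $-\xi s_k/2$ above.
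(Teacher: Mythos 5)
Your proof is correct, and it takes a cleaner route than the paper's. You apply the antiderivative identity for $\epsilon$ directly, choosing $F_k(y)=\int_{\xi\infty}^y\widetilde\pi_k$ so that $F_k(\xi\infty)=0$ and $F_k(-\xi\infty)=-\xi s_k$; the constant term is then simply $-\xi s_k/2$ in all cases, and the odd case collapses since $s_{2n+1}=0$. The paper instead derives the lemma from Lemma~\ref{lem:EO} (whose integrals are anchored at $\xi=\pm1$) and then re-anchors them at $\xi\infty$, which forces it to compute the specific tail and mid-range integrals $\int_1^\infty\widetilde\pi_{2n+1}=\frac1{4s}P_n^{1/2,-3/2}(1)=\frac1{4s}$ and $\int_0^1\widetilde\pi_{2n}=P_n^{-1/2,-1/2}(1)=1$ (via Proposition~\ref{prop:asymptoticat1}) before recombining. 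Your approach avoids these polynomial evaluations entirely: you only use integrability (which, as you note, holds since $s>N>\deg\pi_k+1$), the definition of $s_k$, and the parity of $\widetilde\pi_{2n+1}$. Both proofs are correct; yours is more economical and makes the role of $s_k$ as a boundary term transparent, at the price of not re-deriving the identities $P_n^{1/2,-3/2}(1)=P_n^{-1/2,-1/2}(1)=1$ that the paper happens to need elsewhere anyway. One small additional observation, which you already make, is that your identity holds for all $y\in\R$; the hypothesis $|y|>1$ in the statement is there only because that is where the lemma is applied.
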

\begin{proof}
We start with equation \eqref{EOpi}. Assume $y>1$. Then
$-\int_1^y=-\int_\infty^y-\int_1^\infty$. It holds by
Theorem~\ref{thm:2}, \eqref{pes} and \eqref{Pnat1} that
\[
 - \int_1^\infty \widetilde\pi_{2n+1}(u) \upd u = -\frac1{4s}P_n^{1/2,-3/2}(1) = -\frac1{4s}.
\]
 To
finish the proof of the second equality in \eqref{EOOutside}, we only
need to observe that, since $\widetilde\pi_{2n+1}(u)$ is an odd function,
\[
\int_{\infty}^y \widetilde\pi_{2n+1}(u) \upd u = \int_{-\infty}^y
\widetilde\pi_{2n+1}(u) \upd u.
\]
Analogously, we get that
\[
-\int_1^\infty \widetilde\pi_{2n}(u) \upd u = -\left(\int_0^\infty-\int_0^1\right)\widetilde\pi_{2n}(u) \upd u = -\frac{s_{2n}}2+1,
\]
where we used the fact that $\widetilde\pi_{2n}$ is even, \eqref{s_k},
and \eqref{needthisone}. Now, since 
\[
-\int_{\infty}^y \widetilde\pi_{2n}(u) \upd u = s_{2n} -
\int_{-\infty}^y \widetilde\pi_{2n}(u) \upd u, 
\]
the first equality in \eqref{EOOutside} follows. The case $y<-1$ can
be handled similarly. 
\end{proof}

\begin{proof}[Proof of Theorem~\ref{thm:Outside}] To prove the theorem
  we need to show that the matrix kernel \eqref{eq:21} converges to
  \eqref{three-kernels} with $A=A_\Om$ defined by \eqref{A_Om}. Again,
  we shall utilize \eqref{IDS1} \& \eqref{IDS2} without mentioning it
  . Clearly, the case $u,v\in\Om\setminus\R$ follows immediately from
  Theorem~\ref{thm:DSN} (see also \eqref{DSNtoZero} and \eqref{K_N-hat}). 

{\bf Case $N=2J$:} According to Lemma~\ref{lem:EOatInfty}, it holds that
\begin{eqnarray}
\mathbf K^{(1,2)}_{2J}(z,y) &=& - \int_{\xi_y\infty}^y \mathbf K^{(1,1)}_{2J}(z,v)\upd v + \xi_y\sum_{j=0}^{J-1}s_{2j}\widetilde\pi_{2j+1}(z), \nonumber \\
\mathbf K^{(2,2)}_{2J}(x,y) &=& \int_{\xi_x\infty}^x\int_{\xi_y\infty}^y \mathbf K^{(1,1)}_{2J}(u,v)\upd v\upd u +\left(\xi_x\int_{\xi_y\infty}^y-\xi_y\int_{\xi_x\infty}^x\right)\sum_{j=0}^{J-1}s_{2j}\widetilde\pi_{2j+1}(u)\upd u, \nonumber
\end{eqnarray}
where $\xi_u=\sgn(u)$, $u\in\R\setminus\{0\}$. From
Lemma~\ref{lemma:2}, we know that 
\[
s_{2j} = 2\big(1+o_s(1)\big)\sqrt\frac{s+1}2
\frac{\Gamma\left(\frac{s-2J-1}{2}\right)}{\Gamma\left(\frac{s-2J}{2}\right)}
\frac{s_{2j}}{s_{2J}},
\]
and therefore when $y\in\R$ and
$\lim_{J\to\infty}(s-2J)=c<\infty$ Lemma~\ref{lem:2j+1-out} implies 
\[
\lim_{J\to\infty}\frac{|zy|^{2J}}{(zy)^{2J}}\sum_{j=0}^{J-1}s_{2j}\widetilde\pi_{2j+1}(z) 
=  \frac{c-1}{2\sqrt\pi}\frac{\Gamma\left(\frac
    {c-1}2\right)}{\Gamma\left(\frac c2\right)}\frac
1{|z|^c\sqrt{z^2-1}}. 
\]
 Furthermore, the above limit equals zero when
 $\lim_{J\to\infty}(s-2J)=\infty$. Since
 $|zy|^{2J}/(zy)^{2J}=|zv|^{2J}/(zv)^{2J}$ for $y,v\in\R$ and
 $|uv|^{2J}=(uv)^{2J}$ for $u,v\in\R$, the claim of the theorem
 follows. 

{\bf Case $N=2J+1$:} by Lemma~\ref{lem:EOatInfty}, it holds that
\begin{equation}
\label{S-2J+1-1}
\mathbf K^{(1,2)}_{2J+1}(z,y) = \mathbf K^{(1,2)}_{2J}(z,y) -
\xi_y\sum_{j=0}^{J-1}s_{2j}\widetilde\pi_{2j+1}(z) -
\int_{\xi_y\infty}^y\left(\mathbf K^{(1,1)}_{2J+1}-\mathbf
  K^{(1,1)}_{2J}\right)(z,v)\upd v   +
\frac{\widetilde\pi_{2J}(z)}{s_{2J}}. 
\end{equation}
Since
$|zy|^{2J+1}/(zy)^{2J+1}=\xi_y|z|^{2J+1}/z^{2J+1}=|zv|^{2J+1}/(zv)^{2J+1}$,
when $c=\lim_{J\to\infty}(s-2J-1)$, we deduce from the first part of
the proof that 
\begin{equation}
\label{S-2J+1-2}
\lim_{J\to\infty}\frac{|zy|^{2J+1}}{(zy)^{2J+1}}\left[\mathbf
  K^{(1,2)}_{2J}(z,y) -
  \xi_y\sum_{j=0}^{J-1}s_{2j}\widetilde\pi_{2j+1}(z) \right]=
-\int_{\xi_y\infty}^y \frac{B(z,v)}{zv}\upd v.
\end{equation}
Furthermore, it holds that
\[
\begin{array}{l}
\displaystyle \frac{|zy|^{2J+1}}{(zy)^{2J+1}}\int_{\xi_y\infty}^y\left(\mathbf K^{(1,1)}_{2J+1}-\mathbf K^{(1,1)}_{2J}\right)(z,v)\upd v = \int_{\xi_y\infty}^y\left(1-\frac1{zv}\right)\frac{|zv|^{2J+1}}{(zv)^{2J}}\mathbf K^{(1,1)}_{2J}(z,v)\upd v \bigskip \\
\hspace{4cm} \displaystyle +
\int_{\xi_y\infty}^y\left(\frac{|zv|^{2J+1}}{(zv)^{2J+1}}\mathbf
  K^{(1,1)}_{2J+1}(z,v)-\frac{|zv|^{2J+1}}{(zv)^{2J}}\mathbf
  K^{(1,1)}_{2J}(z,v)\right)\upd v,
\end{array}
\]
and therefore
\begin{equation}
\label{S-2J+1-3}
\lim_{J\to\infty}\frac{|zy|^{2J+1}}{(zy)^{2J+1}}\int_{\xi_y\infty}^y\left(\mathbf K^{(1,1)}_{2J+1}-\mathbf K^{(1,1)}_{2J}\right)(z,v)\upd v = \int_{\xi_y\infty}^y\left(1-\frac1{zv}\right)B(z,v)\upd v.
\end{equation}
It also true that
\begin{equation}
\label{S-2J+1-4}
\lim_{J\to\infty}\frac{|zy|^{2J+1}}{(zy)^{2J+1}}\frac{\widetilde\pi_{2J}(z)}{s_{2J}} = \frac{\xi_y}{|z|^c\sqrt{z^2-1}}\lim_{J\to\infty}\frac{c_J(1/2)}{s_{2J}} = \frac1{\sqrt\pi}\frac{\Gamma\left(\frac{c+1}2\right)}{\Gamma\left(\frac c2\right)}\frac{\xi_y}{|z|^c\sqrt{z^2-1}}.
\end{equation}
Substituting \eqref{S-2J+1-2}---\eqref{S-2J+1-4} into
\eqref{S-2J+1-1}, we obtain the desired result. Going through the
steps above, it is rather straightforward to show that the
corresponding limit is zero when $\lim_{J\to\infty}(s-2J)=\infty$.  

Finally, observe that by Lemma~\ref{lem:EOatInfty},
\[
\begin{array}{l}
\displaystyle \mathbf K^{(2,2)}_{2J+1}(x,y) = \mathbf K^{(2,2)}_{2J}(x,y) + \left(\xi_y\int_{\xi_x\infty}^x-\xi_x\int_{\xi_y\infty}^y\right)\sum_{j=0}^{J-1}s_{2j}\widetilde\pi_{2j+1}(u)\upd u \bigskip \\
\hspace{1cm} \displaystyle +
\int_{\xi_x\infty}^x\int_{\xi_y\infty}^y\left(\mathbf
  K^{(1,1)}_{2J+1}-\mathbf K^{(1,1)}_{2J}\right)(u,v)\upd v\upd u   +
\left(\int_{\xi_y\infty}^y -
  \int_{\xi_x\infty}^x\right)\frac{\widetilde\pi_{2J}(u)}{s_{2J}}\upd
u + \frac{\xi_y-\xi_x}2.
\end{array}
\]
The last case of the theorem is now proved by doing steps similar to \eqref{S-2J+1-2}---\eqref{S-2J+1-4} and observing that
\[
\xi_x\xi_y \left(\frac{\xi_y-\xi_x}2+\frac12\sgn(x-y)\right) = \frac12\sgn(x-y). \qedhere
\]
\end{proof}

\subsection{Proof of Theorem~\ref{thm:expected}}

For the proof of the theorem we shall need several auxiliary computations.

\begin{lemma}
\label{lem:EOagain}
It holds that
\[
\left\{
\begin{array}{lcll}
\displaystyle \epsilon\widetilde\pi_{2n}(y) &=& \displaystyle  -yP_n^{-1/2,-1/2}(y^2), & |y|\leq 1, \smallskip \\
\displaystyle \epsilon\widetilde\pi_{2n+1}(y) &=& \displaystyle  \frac{y^2}{4s}P_n^{1/2,-3/2}(y^2), & |y|\geq1.
\end{array}
\right.
\]
In particular, $\epsilon\widetilde\pi_{2n}(\pm1)=\mp1$ and $\epsilon\widetilde\pi_{2n+1}(\pm1)=\frac1{4s}$.
\end{lemma}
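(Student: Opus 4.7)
The plan is to prove each identity by direct evaluation, starting from the integral representations supplied by Lemmas~\ref{lem:EOat0} and~\ref{lem:EOatInfty} and substituting the explicit polynomial formulas from Theorem~\ref{thm:2}. The whole argument is essentially a book-keeping exercise with Gamma-function identities.

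For the first identity, on the interval $(-1,1)$ the weight $\max\{1,|u|\}^{-s}$ is identically one, so $\widetilde\pi_{2n}=\pi_{2n}$ on the domain of integration and Lemma~\ref{lem:EOat0} reduces the claim to evaluating $-\int_0^y\pi_{2n}(u)\,\upd u$. I would substitute the explicit series for $\pi_{2n}$ from Theorem~\ref{thm:2}, integrate termwise to pick up a factor $y^{2k+1}/(2k+1)$, and then use the identity $\Gamma(k+3/2)=\tfrac{2k+1}{2}\Gamma(k+1/2)$ to cancel the denominator $2k+1$. After pulling out a factor of $y$ and using $\Gamma(1/2)^2=\pi$, the remaining sum is exactly $P_n^{-1/2,-1/2}(y^2)$ by the definition \eqref{pes}, yielding the claim. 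Continuity extends the identity to $y=\pm 1$, and the evaluation $P_n^{-1/2,-1/2}(1)=1$—which follows from Proposition~\ref{prop:asymptoticat1} at $z=0$ with $\gamma=1+\alpha+\beta=0$—gives $\epsilon\widetilde\pi_{2n}(\pm 1)=\mp 1$.

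For the second identity, since $\widetilde\pi_{2n+1}$ is an odd function, the constant term in Lemma~\ref{lem:EOatInfty} vanishes and oddness lets one combine the $y>1$ and $y<-1$ cases into
\[
\epsilon\widetilde\pi_{2n+1}(y)=\int_{|y|}^\infty \pi_{2n+1}(u)\,u^{-s}\,\upd u.
\]
Substituting the series from Theorem~\ref{thm:2} and using $\int_{|y|}^\infty u^{2k+1-s}\,\upd u = |y|^{2k+2-s}/(s-2k-2)$—valid because $s>N\ge 2n+2$ ensures each integrand is integrable at infinity—the key algebraic miracle is that the factor $s-(2k+2)$ appearing in the coefficients of $\pi_{2n+1}$ cancels exactly the $s-2k-2$ arising from the antiderivative, leaving a sum with coefficients independent of $s$. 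Collecting the resulting $y$-dependence, extracting the common prefactor $1/(4s)$, and using $\Gamma(3/2)\Gamma(-1/2)=-\pi$ to recognize the remaining sum as $P_n^{1/2,-3/2}(y^2)$ completes the evaluation. The boundary values $\epsilon\widetilde\pi_{2n+1}(\pm 1)=1/(4s)$ then follow from $P_n^{1/2,-3/2}(1)=1$, which again is Proposition~\ref{prop:asymptoticat1} at $z=0$ with $\gamma=0$.

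The only step worth singling out is the cancellation of $(s-2k-2)$ between the coefficients of $\pi_{2n+1}$ and the integration—this is precisely what collapses the naive $s$-dependence and makes the right-hand side recognizable as $P_n^{1/2,-3/2}(y^2)$ times a simple prefactor. Beyond this cancellation and the Gamma-function bookkeeping, the argument has no conceptual difficulty.
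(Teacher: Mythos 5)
Your approach is essentially the paper's. The paper evaluates $\epsilon\big(x^{2k}\max\{1,|x|\}^{-s}\big)(y)$ and $\epsilon\big(x^{2k+1}\max\{1,|x|\}^{-s}\big)(y)$ directly in the two regimes, rewrites $\pi_{2n+1}$ to make the factor $(s-2k-2)$ explicit, and lets the cancellation you identify do the work; going through Lemmas~\ref{lem:EOat0} and~\ref{lem:EOatInfty} amounts to the same computation, and the Gamma-function bookkeeping you describe (in particular $\Gamma(k+3/2)=\tfrac{2k+1}{2}\Gamma(k+1/2)$ and $\Gamma(3/2)\Gamma(-1/2)=-\pi$) is exactly what is needed. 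Your appeal to Proposition~\ref{prop:asymptoticat1} for $P_n^{-1/2,-1/2}(1)=P_n^{1/2,-3/2}(1)=1$ is also the paper's route.

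One point you gloss over: carry the second computation to the end and the $y$-dependence after the cancellation is $|y|^{2k+2-s}$, so factoring out gives
\[
\epsilon\widetilde\pi_{2n+1}(y)=\frac{|y|^{2-s}}{4s}\,P_n^{1/2,-3/2}(y^2),\qquad |y|\ge 1,
\]
not $\frac{y^2}{4s}P_n^{1/2,-3/2}(y^2)$ as the lemma is printed. This appears to be a typo in the statement: the subsequent use of the lemma in the proof of Theorem~\ref{thm:expected} produces the denominator $s-i-k-3/2$, which requires the $|y|^{2-s}$ prefactor (the printed $y^2$ would give $s-2i-2k-3$ instead). The two expressions agree at $y=\pm1$, so the stated boundary values $\epsilon\widetilde\pi_{2n+1}(\pm1)=\frac1{4s}$ are unaffected, but you should have flagged the discrepancy rather than asserting that your evaluation reproduces the printed formula.
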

\begin{proof}
Evaluating the integrals appearing in the proof of Lemma~\ref{lem:EO}, we get that
\[
\left\{
\begin{array}{lcll}
\displaystyle \epsilon\left(x^{2k}\max\big\{1,|x|\big\}^{-s}\right)(y) &=& \displaystyle  -\frac{y^{2k+1}}{2k+1}, & |y|\leq 1, \smallskip \\
\displaystyle \epsilon\left(x^{2k+1}\max\big\{1,|x|\big\}^{-s}\right)(y) &=& \displaystyle  \frac{|y|^{2k+2-s}}{s-2k-2}, & |y|\geq1.
\end{array}
\right.
\]
Thus, the formula for $\epsilon\widetilde\pi_{2n}(y)$ follows from \eqref{pies} \& \eqref{pes} and the fact that $(2k+1)c_k(-1/2)=c_k(1/2)$, see \eqref{ckalpha} for the definition of these constants. Moreover, we can rewrite the formula for $\pi_{2n+1}$ in Theorem~\ref{thm:2} as
\[
\pi_{2n+1}(z) = \frac1{4s}\sum_{k=0}^n(s-2k-2)c_k(1/2)c_{n-k}(-3/2)x^{2k+1}
\]
from which formula for $\epsilon\widetilde\pi_{2n+1}(y)$ easily follows. The final claim of the lemma is an immediate consequence of Proposition~\ref{prop:asymptoticat1}.
\end{proof}

\begin{lemma}
\label{lem:whowouldguess}
For $x>-1$, it holds that
\[
\sum_{m=0}^M \frac{\Gamma(m+1/2)}{\Gamma(m+1)}\frac{\Gamma(M-m+1+x)}{\Gamma(M-m+5/2+x)} = \frac4{2M+3+2x}\frac{\Gamma(M+3/2)}{\Gamma(M+1)}\frac{\Gamma(1+x)}{\Gamma(3/2+x)}.
\]
\end{lemma}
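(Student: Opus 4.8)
The identity to prove is
\[
\sum_{m=0}^M \frac{\Gamma(m+1/2)}{\Gamma(m+1)}\frac{\Gamma(M-m+1+x)}{\Gamma(M-m+5/2+x)} = \frac4{2M+3+2x}\frac{\Gamma(M+3/2)}{\Gamma(M+1)}\frac{\Gamma(1+x)}{\Gamma(3/2+x)}, \qquad x>-1.
\]

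The plan is to recognize this as a special case of the Chu--Vandermonde summation (equivalently, the $m-n+1+\beta$ type partial-fraction identity already exploited in Lemma~\ref{lemma:7}). First I would rewrite each factor using the constants $c_k(\alpha)=\Gamma(k+1+\alpha)/(\Gamma(\alpha+1)\Gamma(k+1))$ from \eqref{ckalpha}: the term $\Gamma(m+1/2)/\Gamma(m+1)$ is $\Gamma(1/2)c_m(-1/2)$, and $\Gamma(M-m+1+x)/\Gamma(M-m+5/2+x)$ is $B(M-m+1+x,3/2)/\Gamma(3/2) = \Gamma(1+x)\,c_{M-m}(x)/(\Gamma(3/2)\,c_{M-m}(x+3/2))$ after using the Beta-function relation \eqref{beta}. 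Collecting constants, the sum becomes a multiple of $\sum_{m=0}^M c_m(-1/2)\,\Gamma(M-m+1+x)/\Gamma(M-m+5/2+x)$, which is a ${}_2F_1$ evaluated at $1$. Concretely, the left-hand side is
\[
\frac{\Gamma(M+1+x)}{\Gamma(M+5/2+x)}\sum_{m=0}^M \frac{(1/2)_m}{m!}\,\frac{(-M-x)_m}{(-M-3/2-x)_m},
\]
which is ${}_2F_1(1/2,-M-x;\,-M-3/2-x;\,1)\cdot \Gamma(M+1+x)/\Gamma(M+5/2+x)$ (the series terminates because $\Gamma(m+1/2)/\Gamma(m+1)$ causes no termination but the reindexing $m\mapsto M-m$ does; one should be slightly careful about which Pochhammer terminates, and it is cleanest to just substitute $k=M-m$ at the outset).

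The cleanest route, and the one I would actually write, avoids hypergeometric bookkeeping entirely: apply Lemma~\ref{lemma:7} directly. Substituting $k = M-m$, the sum is
\[
\sum_{k=0}^M \frac{\Gamma(M-k+1/2)}{\Gamma(M-k+1)}\frac{\Gamma(k+1+x)}{\Gamma(k+5/2+x)},
\]
and $\Gamma(k+1+x)/\Gamma(k+5/2+x) = \Gamma(1+x)/\Gamma(3/2)\cdot 1/\bigl((k+1+x)(k+2+x)\cdots\bigr)$ — more usefully, it equals $\frac{\Gamma(1+x)}{\Gamma(3/2)}\cdot\frac{1}{\G{k+1+x}/\G{k}\cdots}$; the precise move is to write it as a constant times $B(k+1+x,3/2)$ and expand $B$ via a $1/(k + \text{something})$ factor times a ratio of Gammas of $k$ and $M-k$, at which point Lemma~\ref{lemma:7} with the parameters $a=1/2$, $b = 3/2+x$ (and the "$x$" variable of the lemma specialized to a constant) collapses the sum to a single telescoping product. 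One checks that the product $\frac{x(x-1)\cdots(x-M+1)}{(x+b)(x+b-1)\cdots(x+b-M)}$ produced by the lemma, with the appropriate argument, simplifies to $\frac{1}{2M+3+2x}\cdot\frac{\Gamma(M+3/2)}{\Gamma(M+1)}$ up to the already-extracted Gamma constants; this is a direct Gamma-function manipulation. The only genuine subtlety — and the step I would flag as the main obstacle — is matching the parameters $(a,b)$ and the free variable in Lemma~\ref{lemma:7} to the present sum without an off-by-one error, since the lemma has the summation index appearing in two places (in $\frac{1}{x-k+b}$ and in the product $\prod_{j=1}^k\frac{j-b}{j+a-1}$), and here after the $k=M-m$ substitution the dependence on the index is slightly different; I would resolve this by first verifying the claim for small $M$ (say $M=0,1$) to pin down the correct normalization, then invoking the lemma.

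As an alternative, if the parameter-matching proves awkward, a self-contained induction on $M$ works: denote the left-hand side $S_M(x)$ and the right-hand side $R_M(x)$. Split off the $m=M$ term of $S_M(x)$, which is $\frac{\Gamma(M+1/2)}{\Gamma(M+1)}\frac{\Gamma(1+x)}{\Gamma(5/2+x)}$, and recognize the remaining sum $\sum_{m=0}^{M-1}$ as $S_{M-1}(x+1)$ after shifting. This gives a recursion $S_M(x) = S_{M-1}(x+1) + \frac{\Gamma(M+1/2)}{\Gamma(M+1)}\frac{\Gamma(1+x)}{\Gamma(5/2+x)}$, and one verifies $R_M(x) - R_{M-1}(x+1)$ equals the same extra term by a short Gamma computation; the base case $M=0$ is $\frac{\Gamma(1/2)\Gamma(1+x)}{\Gamma(1)\Gamma(5/2+x)} = \frac{4}{3+2x}\cdot\frac{\Gamma(3/2)}{\Gamma(1)}\cdot\frac{\Gamma(1+x)}{\Gamma(3/2+x)}$, which reduces to $\Gamma(1/2)/\Gamma(5/2+x) = \frac{4}{(3+2x)}\Gamma(3/2)/\Gamma(3/2+x)$, i.e. $\Gamma(3/2+x)/\Gamma(5/2+x) = \frac{1}{3/2+x}$ — true. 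This inductive version is essentially identical in spirit to how \eqref{Ineedaname} was proved in the excerpt, so I would model the write-up on that.
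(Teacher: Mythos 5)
Your fallback argument is exactly the paper's proof: induct on $M$, split off the $m=M$ term (which is $\tfrac{\Gamma(M+1/2)}{\Gamma(M+1)}\tfrac{\Gamma(1+x)}{\Gamma(5/2+x)}$), identify the remaining sum as $S_{M-1}(x+1)$, and verify that $R_M(x)-R_{M-1}(x+1)$ produces that same term via the factorization $(M+1/2)(3/2+x)-M(1+x)=\tfrac{2M+3+2x}{4}$. Your base-case computation is also correct. By contrast, the primary route you sketch via Lemma~\ref{lemma:7} is the part that genuinely does not go through directly, for the reason you yourself flag: Lemma~\ref{lemma:7} is a partial-fraction identity built around a simple pole $1/(x-k+b)$ in a free parameter, whereas here the factor $\Gamma(M-m+1+x)/\Gamma(M-m+5/2+x)$ has a half-integer Gamma shift and is not a rational function of the index, so the sum is not a direct specialization of that lemma without a further Beta-integral or $\;{}_2F_1$ detour. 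In short, the induction is not merely the fallback but the correct (and the paper's) argument; the Lemma~\ref{lemma:7} route would need substantially more work than ``matching parameters.''
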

\begin{proof}
The case $M=0$ is elementary. Thus, it only remains to complete the inductive step. The sum we are computing is equal to
\[
\sum_{m=0}^{M-1} \frac{\Gamma(m+1/2)}{\Gamma(m+1)}\frac{\Gamma(M-1-m+1+x+1)}{\Gamma(M-1-m+5/2+x+1)} + \frac{\Gamma(M+1/2)}{\Gamma(M+1)}\frac{\Gamma(1+x)}{\Gamma(5/2+x)}
\]
and, by the inductive hypothesis, to
\[
\frac4{2(M-1)+3+2(x+1)}\frac{\Gamma(M+1/2)}{\Gamma(M)}\frac{\Gamma(1+x)}{\Gamma(5/2+x)} + \frac{\Gamma(M+1/2)}{\Gamma(M+1)}\frac{\Gamma(1+x)}{\Gamma(5/2+x)}
\]
and hence to
\[
\frac4{2M+3+2x}\frac{\Gamma(M+1/2)}{\Gamma(M+1)}\frac{\Gamma(1+x)}{\Gamma(5/2+x)}\left[M(1+x)+\frac{2M+3+2x}4\right].
\]
This finishes the proof as the term in square brackets factors as $(M+1/2)(3/2+x)$.
\end{proof}
\begin{lemma}
\label{lem:thelast}
It holds for $x>0$ that
\[
\sum_{j=0}^{J-1}\frac{\Gamma(j+1/2)}{\Gamma(j+1)}\frac{\Gamma(j+x)}{\Gamma(j+3/2+x)} = \frac2x\frac{\Gamma(J+1/2)}{\Gamma(J)}\frac{\Gamma(J+x)}{\Gamma(J+1/2+x)}.
\]
\end{lemma}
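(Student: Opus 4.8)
The plan is to prove Lemma~\ref{lem:thelast} by induction on $J$, after first isolating the mechanism that makes the sum collapse. Introduce the abbreviations
\[
c_j := \frac{\Gamma(j+\frac12)}{\Gamma(j+1)}\frac{\Gamma(j+x)}{\Gamma(j+\frac32+x)}, \qquad U_J := \frac{\Gamma(J+\frac12)}{\Gamma(J)}\frac{\Gamma(J+x)}{\Gamma(J+\frac12+x)},
\]
so that the left-hand side of the lemma is $\sum_{j=0}^{J-1} c_j$ and the right-hand side is $\frac2x U_J$. The first step is to check the telescoping identity $c_J = \frac2x\big(U_{J+1}-U_J\big)$. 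To verify it, I would use the functional equation $\Gamma(t+1)=t\Gamma(t)$ repeatedly to rewrite both $U_J$ and $U_{J+1}$ as scalar multiples of $c_J$, obtaining $U_J = J\big(J+\frac12+x\big)c_J$ and $U_{J+1} = \big(J+\frac12\big)(J+x)c_J$; subtracting gives $U_{J+1}-U_J = c_J\big[(J+\tfrac12)(J+x) - J(J+\tfrac12+x)\big] = \frac x2\,c_J$, which is the claim.

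Granting this, the induction is immediate. For the base case I would take $J=0$: the sum on the left is empty, hence zero, and the right-hand side is zero as well because $1/\Gamma(0)=0$; alternatively one checks $J=1$ directly using $\Gamma(\frac32)=\frac12\Gamma(\frac12)$ and $\Gamma(1+x)=x\Gamma(x)$. For the inductive step, assuming $\sum_{j=0}^{J-1}c_j = \frac2x U_J$, I would write
\[
\sum_{j=0}^{J}c_j = \frac2x U_J + c_J = \frac2x U_J + \frac2x\big(U_{J+1}-U_J\big) = \frac2x U_{J+1},
\]
which is exactly the asserted formula with $J$ replaced by $J+1$.

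I do not anticipate any real obstacle here: the entire content of the argument is the trivial polynomial identity $(J+\tfrac12)(J+x) - J(J+\tfrac12+x) = \tfrac x2$, and everything else is routine bookkeeping with the Gamma recursion. I would only note in passing that the hypothesis $x>0$ guarantees that every Gamma value occurring is finite and nonzero, so that these manipulations are legitimate.
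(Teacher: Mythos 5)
Your proof is correct and is essentially the same argument as the paper's: the paper also proceeds by induction on $J$, and its key step, checking that $(J-1)(J-\tfrac12+x)+\tfrac x2$ factors as $(J-\tfrac12)(J-1+x)$, is (after the shift $J\mapsto J+1$) exactly your polynomial identity $(J+\tfrac12)(J+x)-J(J+\tfrac12+x)=\tfrac x2$. Your reformulation via the telescoping relation $c_J=\tfrac2x(U_{J+1}-U_J)$ is a tidy way to package the same computation.
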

\begin{proof}
The case $J=1$ is trivial. Thus, we need only to show the inductive step. The sum we are computing is equal to
\[
\sum_{j=0}^{J-2}\frac{\Gamma(j+1/2)}{\Gamma(j+1)}\frac{\Gamma(j+x)}{\Gamma(j+3/2+x)} + \frac{\Gamma(J-1/2)}{\Gamma(J)}\frac{\Gamma(J-1+x)}{\Gamma(J+1/2+x)},
\]
which, by the inductive hypothesis, can be written as
\[
\frac2x\frac{\Gamma(J-1/2)}{\Gamma(J-1)}\frac{\Gamma(J-1+x)}{\Gamma(J-1/2+x)} + \frac{\Gamma(J-1/2)}{\Gamma(J)}\frac{\Gamma(J-1+x)}{\Gamma(J+1/2+x)},
\]
which, in turn, is equal to
\[
\frac2x\frac{\Gamma(J-1/2)}{\Gamma(J)}\frac{\Gamma(J-1+x)}{\Gamma(J+1/2+x)}\left[(J-1)\left(J-\frac12+x\right)+\frac x2\right].
\]
As the term in square brackets factors as $(J-1/2)(J-1+x)$, the proof of the lemma is complete.
\end{proof}

\begin{proof}[Proof of Theorem~\ref{thm:expected}]

{\bf Case $N=2J$:} For a measurable set $A\subseteq\R$, denote by $N_A$ the number of real roots of a random polynomial of degree $N$ from the real Mahler ensemble that belong to $A$. It follows from \eqref{eq:16}, \eqref{MatrixKernels}, \eqref{eq:21}, and \eqref{IDS1} that
\begin{eqnarray}
E[N_A] &=& \int_A \Pf \big[\mathbf K_N(x,x)\big] \upd\mu_\R(x) = \int_A \Pf\left[\begin{array}{cc}0 & \varkappa_N\epsilon(x,x)\smallskip\\ \epsilon\varkappa_N(x,x) & 0 \end{array}\right] \upd\mu_\R(x) \nonumber \\
\label{EA}
&=& \sum_{n=0}^{J-1}\int_A2\big[\widetilde\pi_{2n}(x)\epsilon\widetilde\pi_{2n+1}(x) - \widetilde\pi_{2n+1}(x)\epsilon\widetilde\pi_{2n}(x)\big]\upd\mu_\R(x). 
\end{eqnarray}

Let $A=[-1,1]$, where we abbreviate $N_\mathsf{in}=N_{[-1,1]}$. Recall that for $x\in[-1,1]$, $\widetilde\pi_k(x)=\pi_k(x)$. Further, since $\epsilon$ operator for real arguments essentially amounts to anti-differentiation, see the paragraph after Theorem~\ref{thm:5}, we also have that $(\epsilon\widetilde\pi_k)^\prime(x)=-\widetilde\pi_k(x)$. Therefore,
\begin{eqnarray}
E[N_\mathsf{in}] &=& \sum_{n=0}^{J-1}\left[ -2\left.\epsilon\widetilde\pi_{2n+1}(x)\epsilon\widetilde\pi_{2n}(x)\right|_{-1}^1 - 4\int_{-1}^1\pi_{2n+1}(x)\epsilon\widetilde\pi_{2n}(x)\upd\mu_\R(x)\right] \nonumber \\
\label{Ein}
&=:& Js^{-1} + \sum_{n=0}^{J-1} I_n,
\end{eqnarray}
where we used the second conclusion of
Lemma~\ref{lem:EOagain}. Appealing to Lemma~\ref{lem:EOagain} once
more, as well as to Theorem~\ref{thm:2}, we obtain that $I_n$ equals
\[
\frac1{\pi s}\sum_{m=0}^n\frac{\Gamma(m+1/2)}{\Gamma(m+1)}\frac{\Gamma(n-m+1/2)}{\Gamma(n-m+1)}\left[-\frac2\pi\sum_{k=0}^n\frac{\Gamma(k+3/2)}{\Gamma(k+1)}\frac{\Gamma(n-k-1/2)}{\Gamma(n-k+1)}\frac{s-2k-2}{2k+2m+3}\right].
\]
Since $s-2k-2=-(2k+2m+3)+(s+2m+1)$ and applying Lemma~\ref{lemma:7} with $a=3/2$, $b=-1/2$, and  $x=-m-1$, we see that the term in square brackets from the above equation is equal to
\[
-2P_n^{1/2,-3/2}(1) - (s+2m+1)\frac{\big(-(m+1)\big)\cdots\big(-(m+1)-(n-1)\big)}{\big(-(m+3/2)\big)\cdots\big(-(m+3/2)-n\big)}.
\]
As $P_n^{1/2,-3/2}(1) = P_n^{-1/2,-1/2}(1)=1$ by Proposition~\ref{prop:asymptoticat1}, we further get that
\begin{equation}
\label{ilsdgh}
I_n = -\frac2s + \frac1{\pi s}\sum_{m=0}^n(s+2m+1)\frac{\Gamma(m+1/2)}{\Gamma(m+1)}\frac{\Gamma(n-m+1/2)}{\Gamma(n-m+1)}\frac{\Gamma(n+m+1)}{\Gamma(n+m+5/2)}\frac{\Gamma(m+3/2)}{\Gamma(m+1)}.
\end{equation}
Observe that
\[
\frac{\Gamma(m+1/2)}{\Gamma(m+1)}\frac{\Gamma(m+3/2)}{\Gamma(m+1)} = 1+ O\left(\frac1{m+1}\right)
\]
by \eqref{ckalphaas}. Thus, upon replacing $m$ by $n-m$, we get that
\[
I_n = -\frac2s + \frac1{\pi s}\sum_{m=0}^n\left(1+ O\left(\frac1{n-m+1}\right)\right)\big[s+2n+2-2m-1\big]\frac{\Gamma(m+1/2)}{\Gamma(m+1)}\frac{\Gamma(2n-m+1)}{\Gamma(2n-m+5/2)}. 
\]
It follows from Lemma~\ref{lem:whowouldguess} applied with $M=n$ and $x=M$ that
\[
\sum_{m=0}^n\frac{\Gamma(m+1/2)}{\Gamma(m+1)}\frac{\Gamma(2n-m+1)}{\Gamma(2n-m+5/2)} = \frac4{4n+3}
\]
and therefore
\[
\sum_{m=0}^n\left(1 + O\left(\frac1{n-m+1}\right)\right)\frac{\Gamma(m+1/2)}{\Gamma(m+1)}\frac{\Gamma(2n-m+1)}{\Gamma(2n-m+5/2)} = \frac4{4n+3} + O\left(\frac{\log n}{(n+1)^{3/2}}\right),
\]
where the term $O(\cdot)$ follows from the estimates
\[
\frac{\Gamma(m+1/2)}{\Gamma(m+1)}\frac{\Gamma(2n-m+1)}{\Gamma(2n-m+5/2)} \leq \frac{\Gamma(n+1)}{\Gamma(n+5/2)}.
\]
Thus, we deduce that
\[
\begin{array}{l}
\displaystyle I_n = \frac{s+2n+2}{\pi s}\left(\frac4{4n+3}+ O\left(\frac{\log n}{(n+1)^{3/2}}\right)\right) \smallskip \\
\hspace{4cm}\displaystyle  -\frac2s - \frac2\pi\sum_{m=0}^n\big(1+ o_{n-m}(1)\big)\frac{\Gamma(m+3/2)}{\Gamma(m+1)}\frac{\Gamma(2n-m+1)}{\Gamma(2n-m+5/2)}.
\end{array}
\]
Using the monotonicity of the second fraction in the sum above once more and since
\[
\sum_{m=0}^n\frac{\Gamma(m+3/2)}{\Gamma(m+1)} = \frac23\frac{\Gamma(n+5/2)}{\Gamma(n+1)},
\]
it holds that
\[
\frac23\frac{\Gamma(n+5/2)}{\Gamma(n+1)}\frac{\Gamma(2n+1)}{\Gamma(2n+5/2)} \leq \sum_{m=0}^n\frac{\Gamma(m+3/2)}{\Gamma(m+1)}\frac{\Gamma(2n-m+1)}{\Gamma(2n-m+5/2)} \leq \frac23.
\]
It now easily follows from \eqref{ckalphaas} that
\[
I_n = \frac1\pi\left(\frac1{n+3/4} + O\left(\frac{\log n}{(n+1)^{3/2}}\right)\right) + \frac{\big\{\text{a term, which is uniformly bounded with }n\big\}}s.
\]
By plugging the above expression into \eqref{Ein}, we obtain the first claim of the theorem.

Let now $A=\R\setminus(-1,1)$, in which case we abbreviate $N_\mathsf{out}=N_{\R\setminus(-1,1)}$. As the integrand of \eqref{EA} is an even function of $x$, we can write
\begin{eqnarray}
E[N_\mathsf{out}] &=& \sum_{n=0}^{J-1}\left[ 4\left.\epsilon\widetilde\pi_{2n+1}(x)\epsilon\widetilde\pi_{2n}(x)\right|_1^\infty + 8\int_1^\infty\epsilon\widetilde\pi_{2n+1}(x)\widetilde\pi_{2n}(x)\upd\mu_\R(x)\right] \nonumber \\
\label{Eout}
&=:& Js^{-1} + \sum_{n=0}^{J-1} I_{n,s},
\end{eqnarray}
where we used the second conclusion of Lemma~\ref{lem:EOagain} once
more. As before, by appealing to Lemma~\ref{lem:EOagain} and
Theorem~\ref{thm:2}, we deduce that $I_{n,s}$ equals
\[
\frac2{\pi s}\sum_{i=0}^n\frac{\Gamma(i+3/2)}{\Gamma(i+1)}\frac{\Gamma(n-i+1/2)}{\Gamma(n-i+1)}\left[-\frac1\pi\sum_{k=0}^n\frac{\Gamma(k+3/2)}{\Gamma(k+1)}\frac{\Gamma(n-k-1/2)}{\Gamma(n-k+1)}\frac1{s-i-k-3/2}\right].
\]
The term in square brackets can be summed up using Lemma~\ref{lemma:7} applied with $a=3/2$, $b=-1/2$, and $x=s-i-1$, to yield
\begin{equation}
\label{l;ds}
I_{n,s} =\frac2{\pi s}\sum_{i=0}^n\frac{\Gamma(i+3/2)}{\Gamma(i+1)}\frac{\Gamma(n-i+1/2)}{\Gamma(n-i+1)}\frac{\Gamma(s-i)}{\Gamma(s-i-1/2)}\frac{\Gamma(s-i-n-3/2)}{\Gamma(s-i-n)}.
\end{equation}
By plugging the above expression into \eqref{Eout} we get
\[
E[N_\mathsf{out}]  = Js^{-1} + \frac2{\pi 
  s}\sum_{i=0}^{J-1}
\frac{\Gamma(i+3/2)}{\Gamma(i+1)}
\frac{\Gamma(s-i)}{\Gamma(s-i-1/2)}\sum_{n=i}^{J-1}\frac{\Gamma(n-i+1/2)}{\Gamma(n-i+1)}\frac{\Gamma(s-i-n-3/2)}{\Gamma(s-i-n)}.   
\]
Since
\[
\sum_{n=i}^{J-1}\frac{\Gamma(n-i+1/2)}{\Gamma(n-i+1)}\frac{\Gamma(s-i-n-3/2)}{\Gamma(s-i-n)}
=
\sum_{m=0}^{J-1-i}\frac{\Gamma(m+1/2)}{\Gamma(m+1)}\frac{\Gamma(s-2i-m-3/2)}{\Gamma(s-2i-m)}, 
\]
we get from Lemma~\ref{lem:whowouldguess} applied with $M=J-1-i$ and $x=s-J-i-3/2$ that
\begin{align*}
 E[N_\mathsf{out}]  = Js^{-1} + \frac2{\pi
  s}\sum_{i=0}^{J-1}
\frac{\Gamma(i+3/2)\Gamma(s-i)\Gamma(s-J-i-1/2)\Gamma(J-i+1/2)}{\Gamma(i+1)
  \Gamma(s-i-1/2)\Gamma(s-J-i)\Gamma(J-i)} 
\frac2{s-1-2i} \smallskip \\
 = Js^{-1} + \frac2{\pi
  s}\sum_{m=0}^{J-1}
\frac{\Gamma(J-m+1/2) \Gamma(s-J+m+1) \Gamma(\Delta+m-1/2)
  \Gamma(m+3/2)}{\Gamma(J-m) \Gamma(s-J+m+1/2) \Gamma(\Delta+m) \Gamma(m+1)} 
\frac1{m+\Delta/2}, 
\end{align*}
where $\Delta:=s-N+1$. Using \eqref{ckalphaas}, we can rewrite the sum above as
\[
\frac2{\pi s}\sum_{m=0}^{J-1}\frac{\Gamma(J-m+1/2)}{\Gamma(J-m)}\frac{\Gamma(s-J+m+1)}{\Gamma(s-J+m+1/2)}\sqrt{\frac{m+1}{m+\Delta}}\left(1+O\left(\frac1{m+1}\right)\right)\frac1{m+\Delta/2}.
\]
Since
\[
\sqrt{\frac{m+1}{m+\Delta}}\frac{\Gamma(J-m+1/2)}{\Gamma(J-m)}\frac{\Gamma(s-J+m+1)}{\Gamma(s-J+m+1/2)} \leq \frac{\Gamma(J+1/2)}{\Gamma(J)}\frac{\Gamma(s)}{\Gamma(s-1/2)},
\]
we have that
\begin{align*}
& \frac2{\pi
  s}\sum_{m=0}^{J-1}\frac{\Gamma(J-m+1/2)}{\Gamma(J-m)}\frac{\Gamma(s-J+m+1)}{\Gamma(s-J+m+1/2)}\sqrt{\frac{m+1}{m+\Delta}}O\left(\frac1{m+1}\right)\frac1{m+\Delta/2}
\\
& \hspace{10cm} = \sqrt{Ns^{-1}}O_N(1),
\end{align*}
and therefore
\[
E[N_\mathsf{out}] = \sqrt{Ns^{-1}}O_N(1) + \frac2{\pi s}\sum_{m=0}^{J-1}\frac{\Gamma(J-m+1/2)}{\Gamma(J-m)}\frac{\Gamma(s-J+m+1)}{\Gamma(s-J+m+1/2)}\sqrt{\frac{m+1}{m+\Delta}}\frac1{m+\Delta/2}.
\]
Furthermore, as
\[
1-\sqrt{\frac{m+1}{m+\Delta}} = \frac{\Delta-1}{\sqrt{m+\Delta}(\sqrt{m+\Delta}+\sqrt{m+1})} \leq \frac{\Delta}{m+\Delta/2},
\]
and since, estimating as before,
\[
\frac2{\pi s}\sum_{m=0}^{J-1}\frac{\Gamma(J-m+1/2)}{\Gamma(J-m)}\frac{\Gamma(s-J+m+1)}{\Gamma(s-J+m+1/2)}\frac{\Delta}{(m+\Delta/2)^2} \leq \sqrt{Ns^{-1}}O_N(1)\int_0^\infty\frac{\Delta\upd x}{(x+\Delta/2)^2},
\]
where the integral is equal to 2, we have that
\[
E[N_\mathsf{out}] = \sqrt{Ns^{-1}}O_N(1) + \frac2{\pi s}\sum_{m=0}^{J-1}\frac{\Gamma(J-m+1/2)}{\Gamma(J-m)}\frac{\Gamma(s-J+m+1)}{\Gamma(s-J+m+1/2)}\frac1{m+\Delta/2}.
\]
Continuing on the path of estimates, observe that
\[
\frac{\Gamma(J-m+1/2)}{\Gamma(J-m)}\frac{\Gamma(s-J+m+1)}{\Gamma(s-J+m+1/2)} = \sqrt{(J-m)(s-J+m)}\left(1+O\left(\frac1{J-m}\right)\right)
\]
by \eqref{ckalphaas}. As $s-J=\Delta+J-1$ and respectively
\[
\frac{\sqrt{s-J+m}}{m+\Delta/2}\sqrt{(J-m)}O\left(\frac1{J-m}\right) \leq O\left(\sqrt{\frac J{J-m}}\right),
\]
we get that
\[
\frac2{\pi s}\sum_{m=0}^{J-1}\sqrt{(J-m)(s-J+m)}O\left(\frac1{J-m}\right)\frac1{m+\Delta/2} \leq Ns^{-1}O_N(1)
\]
and therefore
\[
E[N_\mathsf{out}] = \sqrt{Ns^{-1}}O_N(1) + \frac2{\pi s}\sum_{m=0}^{J-1}\sqrt{(J-m)(s-J+m)}\frac1{m+\Delta/2}.
\]
Now, it holds that
\[
\sqrt{J(s-J)}-\sqrt{(J-m)(s-J+m)} = \frac{m(m-1+\Delta)}{\sqrt{J(s-J)}+\sqrt{(J-m)(s-J+m)}} \leq \frac{m(2m+\Delta)}{\sqrt{J(s-J)}}
\]
and that
\[
\frac2{\pi s}\sum_{m=0}^{J-1}\frac{m(2m+\Delta)}{\sqrt{J(s-J)}}\frac1{m+\Delta/2} = \frac1\pi\frac{\sqrt{J}(J-1)}{s\sqrt{s-J}} \leq \frac1\pi Js^{-1},
\]
where we used the fact that $s-J>J$. Hence,
\[
E[N_\mathsf{out}] = \sqrt{Ns^{-1}}O_N(1) + \frac2\pi\frac{\sqrt{J(s-J)}}s\sum_{m=0}^{J-1}\frac1{m+\Delta/2}.
\]
Finally, it only remains to notice that
\[
\sum_{m=1}^J\frac1{m+\Delta/2} \leq \int_0^J\frac{\upd x}{x+\Delta/2} \leq \sum_{m=0}^{J-1}\frac1{m+\Delta/2},
\]
which yields that
\[
\sum_{m=0}^{J-1}\frac1{m+\Delta/2} = \log(N+\Delta) - \log\Delta + O_N(1) = -\log\big(1-Ns^{-1}\big) + O_N(1).
\]

{\bf Case $N=2J+1$:} It follows from \eqref{IDS2} that to prove the asymptotic formula for $E[N_\mathsf{in}]$, we need to show that
\begin{equation}
\label{justthisone}
\int_{-1}^1\left(- 2\sum_{n=0}^{J-1}\frac{s_{2n}}{s_{2J}}\big[\widetilde\pi_{2J}(x)\epsilon\widetilde\pi_{2n+1}(x)-\epsilon\widetilde\pi_{2J}(x)\widetilde\pi_{2n+1}(x)\big] + \frac{\widetilde\pi_{2J}(x)}{s_{2J}}\right)\upd\mu_\R(x) = O_N(1).
\end{equation}
One can explicitly compute exactly as in Lemma~\ref{lem:EOagain} that
\begin{equation}
\label{;a}
s_{2J}^{-1}\int_{-1}^1\widetilde\pi_{2J}(x)\upd\mu_\R(x) = 2s_{2J}^{-1}P_J^{-1/2,-1/2}(1) =2s_{2J}^{-1} = \frac{\Gamma\left(\frac {s+1}2\right)}{\Gamma\left(\frac{s+2}2\right)}\frac{\Gamma\left(\frac{s-N+1}2\right)}{\Gamma\left(\frac{s-N}2\right)} \leq 1
\end{equation}
by Proposition~\ref{prop:asymptoticat1}, the definition $s_{2J}$, see \eqref{s_k}, and \eqref{ckalphaas}. To estimate the remaining part of the integral observe that
\[
2\int_{-1}^1\left(\widetilde\pi_{2J}(x)\epsilon\widetilde\pi_{2n+1}(x)-\epsilon\widetilde\pi_{2J}(x)\widetilde\pi_{2n+1}(x)\right)\upd\mu_\R(x) = \frac1s - 4\int_{-1}^1\epsilon\widetilde\pi_{2J}(x)\widetilde\pi_{2n+1}(x)\upd\mu_\R(x)
\]
exactly as in \eqref{Ein}. Moreover, as in \eqref{ilsdgh}, we have that the above quantity is equal to
\begin{equation}
\label{poiu}
-\frac1s + \frac1{\pi s}\sum_{m=0}^J(s+2m+1)\frac{\Gamma(m+1/2)}{\Gamma(m+1)}\frac{\Gamma(J-m+1/2)}{\Gamma(J-m+1)}\frac{\Gamma(n+m+1)}{\Gamma(n+m+5/2)}\frac{\Gamma(m+3/2)}{\Gamma(m+1)}.
\end{equation}
Observe that
\[
\sum_{m=0}^J\frac{\Gamma(J-m+1/2)}{\Gamma(J-m+1)}\frac{\Gamma(n+m+1)}{\Gamma(n+m+5/2)} = \frac4{2(J+n)+3}\frac{\Gamma(J+3/2)}{\Gamma(J+1)}\frac{\Gamma(n+1)}{\Gamma(n+3/2)} \leq \frac{O_N(1)}{\sqrt{J(n+1)}}
\]
by performing the substitution $m\mapsto J-m$ and applying Lemma~\ref{lem:whowouldguess} with $M=J$ and $x=n$, as well as by using \eqref{ckalphaas}. Since
\[
\frac{s+2m+1}{\pi s}\frac{\Gamma(m+1/2)}{\Gamma(m+1)}\frac{\Gamma(m+3/2)}{\Gamma(m+1)} =O(1),
\]
the sum in \eqref{poiu} is bounded by a constant times $1/\sqrt{J(n+1)}$. As the numbers $s_{2n}$ increase with $n$ and hence $s_{2n}/s_{2J}\leq1$ for $n\leq J$, and since $\sum_{n=0}^{J-1}1/\sqrt{J(n+1)}=O_N(1)$, we see that \eqref{justthisone} indeed takes place.

To prove the asymptotic formula for $E[N_\mathsf{out}]$, we need to show that
\begin{align}
&\int_{\R\setminus(-1,1)}\left(-
  2\sum_{n=0}^{J-1}\frac{s_{2n}}{s_{2J}}\big[\widetilde\pi_{2J}(x)\epsilon\widetilde\pi_{2n+1}(x)-\epsilon\widetilde\pi_{2J}(x)\widetilde\pi_{2n+1}(x)\big]
  + \frac{\widetilde\pi_{2J}(x)}{s_{2J}}\right)\upd\mu_\R(x) \nonumber
\\  
\label{andthisone}
&\hspace{10cm} =
\sqrt{Ns^{-1}}O_N(1). 
\end{align}
We immediately deduce from the definition of $s_{2J}$ and \eqref{;a} that
\[
s_{2J}^{-1}\int_{\R\setminus(-1,1)}\widetilde\pi_{2J}(x)\upd\mu_\R(x) = 1 - \frac{\Gamma\left(\frac{s+1}2\right)}{\Gamma\left(\frac{s+2}2\right)}\frac{\Gamma\left(\frac{s-N+1}2\right)}{\Gamma\left(\frac{s-N}2\right)} = Ns^{-1}O_N(1) ,
\]
where, as usual, we used \eqref{ckalphaas}. Further, we have as in
\eqref{Eout} that 
\begin{align*}
&2\int_{\R\setminus(-1,1)}\left(\widetilde\pi_{2J}(x)\epsilon\widetilde\pi_{2n+1}(x)-\epsilon\widetilde\pi_{2J}(x)\widetilde\pi_{2n+1}(x)\right)\upd\mu_\R(x) \\
& \hspace{4cm} = \frac1s +
8\int_1^\infty\epsilon\widetilde\pi_{2n+1}(x)\widetilde\pi_{2J}(x)\upd\mu_\R(x). 
\end{align*}
The same computation as in \eqref{l;ds} tells us that we need to estimate the quantity
\begin{equation}
\label{;qefsd}
\frac2{\pi s}\sum_{n=0}^{J-1}\frac{s_{2n}}{s_{2J}}\sum_{i=0}^J\frac{\Gamma(i+3/2)}{\Gamma(i+1)}\frac{\Gamma(J-i+1/2)}{\Gamma(J-i+1)}\frac{\Gamma(s-i)}{\Gamma(s-i-1/2)}\frac{\Gamma(s-i-n-3/2)}{\Gamma(s-i-n)},
\end{equation}
where we dispensed with the term $\sum_{n=0}^{J-1}\frac{s_{2n}}{s_{2J}}\frac1s$ as it is bounded above by $Js^{-1}$. Since $s>2J$, we have that
\begin{eqnarray}
\sum_{n=0}^{J-1}\frac{\Gamma\left(\frac{s-1}2-n\right)}{\Gamma\left(\frac s2-n\right)}\frac{\Gamma(s-i-n-3/2)}{\Gamma(s-i-n)} &\leq& \sum_{n=0}^{J-1}\frac{\Gamma(J-n-1/2)}{\Gamma(J-n)}\frac{\Gamma(s-i-n-3/2)}{\Gamma(s-i-n)} \nonumber \\
&=& \sum_{j=0}^{J-1}\frac{\Gamma(j+1/2)}{\Gamma(j+1)}\frac{\Gamma(j+s-J-i-1/2)}{\Gamma(j+s-J-i+1)} \nonumber \\
&=& \frac2{s-J-i-1/2}\frac{\Gamma(J+1/2)}{\Gamma(J)}\frac{\Gamma(s-i-1/2)}{\Gamma(s-i)} \nonumber
\end{eqnarray}
by Lemma~\ref{lem:thelast} applied with $x=s-J-i-1/2$. Hence, \eqref{;qefsd} is bounded above by
\[
\frac2{\pi s}\frac{\Gamma(J+1/2)}{\Gamma(J)}\frac{\Gamma\left(\frac{s-N+1}2\right)}{\Gamma\left(\frac{s-N}2\right)}\sum_{i=0}^J\frac{\Gamma(i+3/2)}{\Gamma(i+1)}\frac{\Gamma(J-i+1/2)}{\Gamma(J-i+1)}\frac2{s-J-i-1/2}
\]
and respectively, upon replacing $s$ by $2J$, it is bounded above by
\[
\frac4{\pi s}\frac{\Gamma(J+1/2)}{\Gamma(J)}\frac{\Gamma\left(\frac{s-N+1}2\right)}{\Gamma\left(\frac{s-N}2\right)}\sum_{i=0}^J\frac{\Gamma(i+3/2)}{\Gamma(i+1)}\frac{\Gamma(J-i-1/2)}{\Gamma(J-i+1)} \ll \frac{\sqrt{N(s-N)}}s,
\]
where we used \eqref{ckalphaas} and the fact that the sum on the left-hand side of the above inequality is nothing else but $P_J^{1/2,-3/2}(1)$, which is equal to 1 according to Proposition~\ref{prop:asymptoticat1}. This finishes the proof of \eqref{andthisone} and therefore of the theorem.
\end{proof}

\section{Acknowledgments}

This paper was many years in the making, and the first author (CDS)
talked with a number of people during the course of its development.
Firstly, we thank Alexei Borodin and Percy Deift for enlightening
early conversations (this was before the derivation of the
skew-orthogonal polynomials and kernel asymptotics for Ginibre's real
ensemble, and before I knew anything of random matrices, matrix
kernels and the ilk; Alexei and Percy were very patient in guiding me
down this path).  We also thank Brian Rider who has visited this
problem with us on and off over the last half dozen years.  Since this
problem is ultimately a spin-off of the first author's thesis, albeit
one very far from the original intentions of said thesis, we would be
remiss if we did not thank Jeff Vaaler for many years of helpful and
encouraging conversations revolving around Mahler's measure and its
various star bodies. 

\appendix

\section{Random Normal Matrices}
\label{sec:rand-norm-matr}

Normal matrices are square matrices which commute with their adjoints.
That is, a matrix $\mathbf Z \in \C^{N \times N}$ is normal if
$\mathbf{Z Z}^{\ast} = \mathbf{Z}^{\ast} \mathbf{Z}$.  By the spectral
theorem, if $\mathbf Z$ is normal, there exist a unitary $N \times N$
matrix $\mathbf U$ and a diagonal matrix $\bs \Lambda$ such that
\begin{equation}
\label{eq:10}
\mathbf Z = \mathbf{U}^{\ast} \bs \Lambda \mathbf{U}.
\end{equation}  
Given a normal matrix $\mathbf X \in \R^{N \times N}$, there exists
an orthogonal matrix $\mathbf O$ and a block diagonal matrix $\bs
\Gamma$ such that
\begin{equation}
\label{eq:11}
\mathbf X = \mathbf O^{\transpose}  \bs \Gamma \mathbf O,
\qquad \mbox{where} \qquad 
\bs \Gamma = \begin{bmatrix}
\alpha_1 \\ & \ddots \\ & & \alpha_L \\ & & & \mathbf B_1 \\ & & & & \ddots \\
& & & & & \mathbf B_M 
\end{bmatrix},
\end{equation}
and the $\alpha_{\ell} \in \R$ and the $\mathbf B_m \in \R^{2 \times 2}$ are
of the form 
\[
\mathbf B_m = \begin{bmatrix}
x_m & y_m \\
-y_m & x_m
\end{bmatrix}.
\]
Clearly, in this situation, $L + 2M = N$.  

We will denote the set of complex normal
matrices by $\mathcal N_N(\C)$ and the set of real normal matrices by
$\mathcal N_N(\R)$.   $\mathcal N_N(\C)$ and $\mathcal N_N(\R)$ are
naturally embedded in $\C^{N \times N}$ and $\R^{N \times N}$
respectively.  The canonical metrics on $\C^{N \times N}$ and $\R^{N
  \times N}$ induce metrics on $\mathcal N_N(\C)$ and $\mathcal
N_N(\R)$, and from these induced metrics we arrive at natural volume
forms in these sets.  These volume forms in turn induce measures on $\mathcal
N_N(\C)$ and $\mathcal N_N(\R)$ which we will denote by
$\theta_{\C}$ and $\theta_{\R}$ respectively.  

Equations (\ref{eq:10}) and (\ref{eq:11}) yield {\em spectral
  parametrizations} of $\mathcal N_N(\C)$ and $\mathcal N_N(\R)$---the
coordinates of which we refer to as {\em spectral variables}.  Among
the spectral variables are those which represent the eigenvalues of
normal matrices.  The remaining variables are derived from the
eigenvectors.  In the case of $\mathcal N_N(\C)$ we may produce a
canonical measure $\xi_{N}$ on the sets of eigenvalues (as identified
with $\C^N$) by integrating the pull back of $\theta_{\C}$ under the
spectral parametrization with respect to the eigenvalue coordinates
over the entire unitary group.  This removes any dependency of the
measure $\theta_{\C}$ on $\mathbf U$, and what we find is that $\xi_N$
encodes the local behavior of sets of eigenvalues of matrices in
$\mathcal N_N(\C)$.  As we shall see, $\xi_N$ is absolutely continuous
with respect to Lebesgue measure on $\C^N$ and its Radon-Nikodym
derivative is the familiar-looking Vandermonde term which demonstrates
how the eigenvalues of random normal matrices tend to repel each other.

We may likewise produce a canonical measure on the set of
eigenvalues of real normal matrices as identified with 
\[
\bigcup_{(L, M) \atop L + 2M = N} \R^L \times \C^M.
\]
For a particular pair $(L,M)$ such that $L + 2M = N$, we will call
$\R^L \times \C^M$ a {\em sector} of the space of eigenvalues.  In the
case of real normal ensembles, the canonical measure on the set of
eigenvalues induces measures $\xi_{L,M}$ on each of the sectors $\R^L
\times \C^M$, and we shall see that $\xi_{L,M}$ is absolutely
continuous with respect to Lebesgue measure on $\R^L \times \C^M$.  As
in the case of $\mathcal N_N(\C)$, the Radon-Nikodym derivative of
$\xi_{L,M}$ with respect to Lebesgue measure on $\R^L \times \C^M$
demonstrates repulsion among the eigenvalues of random real normal
matrices.    

\subsection{The Spectral Parametrization and the Induced Measure on
  Eigenvalues }

\subsubsection{Complex Normal Matrices}
\label{sec:compl-norm-matr}

The joint density of eigenvalues of complex normal matrices
\cite{MR1643533} is well known, but we recall the derivation here as
it motivates the discussion of real normal matrices.  See also
\cite{MR652932} for an exposition on calculations of this flavor.

$\mathcal N_N(\C)$ inherits the Hermitian metric from $\C^{N \times
  N}$  given by $\Tr( d \mathbf Z \, d \mathbf Z^{\ast} )$ where

\[
\mathbf Z = \left[ z_{m,n} \right]_{m,n=1}^N \qquad \mbox{and} \qquad d\mathbf Z = \left[ dz_{m,n} \right]_{m,n=1}^N.
\]
First we write
\[
\Tr(d \mathbf Z \, d \mathbf Z^{\ast}) = \Tr( \mathbf U^{\ast} \mathbf
U \, d \mathbf Z \, \mathbf U^{\ast} \mathbf U \, d\mathbf Z^{\ast})
= \Tr(  \mathbf U \, d \mathbf Z \, \mathbf U^{\ast} \mathbf U \,
d\mathbf Z^{\ast} \mathbf U^{\ast}).
\]
Using the change of variables (\ref{eq:10}), we have 
\[
\mathbf U \, d\mathbf Z \, \mathbf U^{\ast}=  d\bs \Lambda
 + \mathbf U \, d\mathbf U^{\ast} \, \bs \Lambda  + \bs \Lambda
 \, d\mathbf U \, \mathbf U^{\ast}.
\]
Since $\mathbf U \mathbf U^{\ast} = \mathbf I$, 
\[
d\mathbf S := \mathbf U \, d\mathbf U^{\ast} = - d\mathbf U \, \mathbf
U^{\ast},
\]
and hence
\[
\mathbf U \, d\mathbf Z \, \mathbf U^{\ast}=  d\bs \Lambda
 + d\mathbf S \, \bs \Lambda  - \bs \Lambda \, d\mathbf S = d\bs
 \Lambda + \left[ d\mathbf S, \bs \Lambda \right],
\]
where the brackets in the latter expression represent the commutator.
Clearly then,
\[
\mathbf U \, d \mathbf Z \, \mathbf U^{\ast} \mathbf U \,
d\mathbf Z^{\ast} \mathbf U^{\ast} = d\bs \Lambda \, d\bs
\Lambda^{\ast} + \left[ d\mathbf S, \bs \Lambda\right] d\bs \Lambda^{\ast} +
  d\bs \Lambda \left[ d\mathbf S, \bs \Lambda^{\ast}\right] + \left[
      d\mathbf S, \bs \Lambda\right] \left[ d\mathbf S, \bs \Lambda^{\ast}\right].
\]
It is easily seen that $\Tr \left[ d\mathbf S, \bs \Lambda \right] =
0$ and therefore $\Tr \big( \left[ d\mathbf S, \bs \Lambda \right] \,
  d\bs \Lambda^{\ast} \big) = 0$.  By similar reasoning, $\Tr\big(   d\bs \Lambda
  \left[ d\mathbf S, \bs \Lambda^{\ast}\right] \big) = 0$, and hence
\begin{equation}
\label{eq:12}
\Tr(d\mathbf Z \, d\mathbf Z^{\ast}) = \Tr (d\bs \Lambda \, d\bs
\Lambda^{\ast} ) + \Tr\big(  \left[ d\mathbf S, \bs \Lambda\right] \left[
    d\mathbf S, \bs \Lambda^{\ast}\right] \big). 
\end{equation}
Setting $d \mathbf S = \left[ ds_{m,n} \right]_{m,n=1}^N$ and $\bs
\Lambda = \left[ \delta_{m,n} \lambda_m \right]_{m,n=1}^N$, then 
\[
[d\mathbf S, \bs \Lambda] = \big[ d s_{m,n} (\lambda_n - \lambda_m)
\big]_{m,n=1}^N \qquad \mbox{and} \qquad [d\mathbf S, \bs \Lambda^{\ast}] =
\big[ d s_{m,n} (\overline{\lambda}_n - \overline{\lambda}_m) \big]_{m,n=1}^N,
\]
and
\begin{align*}
 \Tr \left( [d\mathbf S, \bs \Lambda][d\mathbf S, \bs \Lambda^{\ast}]
  \right)
& = \sum_{m=1}^N \sum_{n=1}^N ds_{m,n} (\lambda_n - \lambda_m) \,
  ds_{n,m} (\overline{\lambda}_m - \overline{\lambda}_n) \\
& = -\sum_{m<n} ds_{m,n} \,  ds_{n,m} |\overline{\lambda}_m -
\overline{\lambda}_n|^2 - \sum_{m > n} ds_{m,n}  \,
  ds_{n,m} |\overline{\lambda}_m - \overline{\lambda}_n|^2.
\end{align*}
Using the fact that $d\mathbf S$ is antihermitian, that is $ds_{n,m} =
- d\overline{s}_{m,n}$, we find
\begin{align*}
 \Tr \left( [d\mathbf S, \bs \Lambda][d\mathbf S, \bs \Lambda^{\ast}]
 \right) = 2 \sum_{m<n} | ds_{m,n} |^2 |\overline{\lambda}_m -
\overline{\lambda}_n|^2,
\end{align*}
and hence, by (\ref{eq:12}),
\[
 \Tr \left(d \mathbf Z \, d\mathbf Z^{\ast} \right) = \sum_{n=1}^N | d
 \lambda_n |^2 + 2 \sum_{m<n} | ds_{m,n} 
 |^2 |{\lambda}_m - {\lambda}_n|^2.
\]
Finally, we define the vector of `spectral variables' 
\[
d \mathbf v = (d\lambda_1, \ldots, d\lambda_N, ds_{1,2}, \ldots,
ds_{1,N}, ds_{2,3}, \ldots, ds_{2,N}, \ldots, ds_{N-1,N})
\]
and thus,
\[
\Tr(d \mathbf Z \, d\mathbf Z^{\ast}) = d\overline{\mathbf
  v}^{\transpose} \, \mathbf G \, d{\mathbf v}, 
\]
where $\mathbf G$ is the Hermitian metric tensor
\[
\begin{bmatrix}
\, \mathbf I \\
& 2 | \lambda_1 - \lambda_2 |^2 \\
& & \ddots \\
& & & 2 | \lambda_1 - \lambda_N |^2 \\
& & & & 2 | \lambda_2 - \lambda_3 |^2 \\
& & & & & \ddots \\
& & & & & & 2 | \lambda_2 - \lambda_N|^2 \\
& & & & & & & \ddots \\
& & & & & & & & 2 | \lambda_{N-1} - \lambda_N|^2
\end{bmatrix}
\]
and $\mathbf I$ is the $N \times N$ identity matrix.  The metric
tensor induces a volume form on $\mathcal N_N(\C)$ as parametrized by
the spectral variables as given by
\[
d\omega = | \det \mathbf G | \bigg\{ \bigwedge_{n=1}^N d\lambda_n \bigg\} \wedge
\bigg\{ \bigwedge_{m < n} ds_{m,n} \bigg\},
\]
and it is easily seen that 
\[
\det \mathbf G = 2^{N(N-1)/2} \bigg\{ \prod_{m<n} | \lambda_m - \lambda_n
|^2 \bigg\}.
\]
Integrating out the spectral variables which correspond to the entries
of $d \mathbf S$---and therefore only on the eigenvectors of $\mathbf
Z$---we are left with a form dependent only on the eigenvalues.  That
is, there exists a constant $C_N$, depending only on $N$, so that the
induced volume form on eigenvalues is given by
\begin{equation}
\label{eq:13}
d\omega_{\mathrm{eigs}} = C_N \bigg\{ \prod_{m < n} | \lambda_m -
\lambda_n |^2 \bigg\} d\lambda_1 \wedge \cdots \wedge d\lambda_N.
\end{equation}

\subsubsection{Real Normal Matrices}

Here we fix $L$ and $M$ so that $L + 2M = N$ and we suppose that
$\mathbf X, \mathbf O$ and $\bs \Gamma$ are given as in (\ref{eq:11}).
Next we set $\beta_m = x_m + \mathrm{i} y_m; m=1,2,\ldots,
M$ and define the $N \times N$ matrices
\[
\bs \Lambda = \begin{bmatrix}
\alpha_1 \\
& \ddots \\
& & \alpha_L \\
& & & \beta_1 \\
& & & & \overline{\beta}_1 \\
& & & & & \ddots \\
& & & & & & \beta_M \\
& & & & & & & \overline{\beta_M}
\end{bmatrix},
\] 
and
\[
\mathbf Y = \begin{bmatrix}
1 \\
& \ddots \\
& & 1 \\
& & & \mathbf C \\
& & & & \ddots \\
& & & & & \mathbf C
\end{bmatrix} \qquad \mbox{where} \qquad \mathbf{C} = \frac{1}{\sqrt
  2}  
\begin{bmatrix} 
1 & -\mathrm{i} \\
-\mathrm{i} & 1
\end{bmatrix};
\]
the upper left block of $\mathbf Y$ is the $L \times L$ identity
matrix, where the lower right block is block diagonal consisting of
$M$ non-zero blocks.  It is easily seen that $\mathbf Y$ is unitary,
and $\mathbf Y \bs \Gamma \mathbf Y^{\ast} = \bs \Lambda$.  That is,
if we define $\mathbf U = \mathbf O \mathbf Y$, 
\[
\mathbf X = \mathbf U \bs \Lambda \mathbf U^{\ast}.
\]
It follows from arguments in Section~\ref{sec:compl-norm-matr} that
\begin{align*}
\Tr(d \mathbf X \, d\mathbf X^{\ast}) &= \sum_{\ell=1}^L d
\alpha_{\ell}^2 + 2 \sum_{m=1}^M | d\beta_m |^2 + 2 \sum_{j < k}^L |d
s_{j,k}|^2 | \alpha_j - \alpha_k |^2 \\ &+ 2 \sum_{\ell=1}^L \sum_{m=1}^M
ds_{\ell, L + 2m-1} \left| \alpha_{\ell} - \beta_m\right|^2 + ds_{\ell, L +
    2m} \left| \alpha_{\ell} - \overline \beta_m \right|^2 \\
& + 2\sum_{m < n}^M ds_{L + 2m-1, L + 2n-1} | \beta_m - \beta_n |^2 +
ds_{L + 2m-1, L + 2n} | \beta_m - \overline{\beta}_n |^2 \\
& + 2\sum_{m < n}^M ds_{L + 2m, L + 2n} | \overline \beta_m -
\overline \beta_n |^2 + ds_{L + 2m, L + 2n-1} | \overline \beta_m -
{\beta}_n |^2 \\
& + 2 \sum_{m=1}^M |\beta_m - \overline \beta_m |^2.
\end{align*}

We observe that, 
\[
d\mathbf S = \mathbf U \, d\mathbf U^{\ast} = \mathbf O \mathbf Y \, d(
\mathbf O \mathbf Y)^{\ast} = \mathbf O \mathbf Y \mathbf Y^{\ast} \, d
\mathbf O^{\transpose} = \mathbf O \, d\mathbf O^{\transpose},
\]
and consequently $d \mathbf S$ is independent of $\mathbf Y$.  We also
note that $|d \beta_m |^2 = dx_m^2 + d y_m^2$.

Like in the case of complex normal matrices, we introduce spectral
variables
\[
d\alpha_1, \ldots, d\alpha_L, dx_1, dy_1, \ldots,
dx_M, dy_M, ds_{1,2}, \ldots, ds_{1,N},
ds_{2,3}, \ldots, ds_{2,N}, \ldots, ds_{N-1,N}.
\]
It is easy to compute the Riemannian metric with respect to these
variables, and the volume form on $\mathcal N_N(\R)$ is given by
\[
\omega = \sqrt{ | \det \mathbf G | } \bigg\{ \bigwedge_{\ell=1}^L
d\alpha_{\ell} \bigg\} \wedge \bigg\{ \bigwedge_{m=1}^M dx_m
\wedge dy_m \bigg\}\wedge \bigg\{ \bigwedge_{m < n}
ds_{m,n} \bigg\},
\]
where
\begin{align*}
| \det \mathbf G | &= 2^{N(N-1)/4} 2^M \bigg\{ \prod_{j < k}^L | \alpha_j -
\alpha_k | \bigg\} \bigg\{ \prod_{\ell=1}^L \prod_{m=1}^M |
\alpha_{\ell} - \beta_m |^2 \bigg\} \\
& \hspace{5cm} \times \bigg\{ \prod_{m < n}^M | \beta_m
- \beta_n |^2 \bigg\} \bigg\{ \prod_{m=1}^M 2 | \ip{\beta_m} | \bigg\}.
\end{align*}
We conclude by integrating over the entries of $d \mathbf S$ that
there is a constant $c_N$ depending only on $N$, such that the induced
volume form in eigenvalues is given by
\begin{align}
\omega_{\mathrm{eigs}} &= c_N 2^M \bigg\{ \prod_{j < k}^L | \alpha_j -
\alpha_k | \bigg\} \bigg\{ \prod_{\ell=1}^L \prod_{m=1}^M |
\alpha_{\ell} - \beta_m |^2 \bigg\} \label{eq:14} \\
& \hspace{1cm} \times \bigg\{ \prod_{m < n}^M | \beta_m
- \beta_n |^2 \bigg\} \bigg\{ \prod_{m=1}^M 2 | \ip{\beta_m} |
\bigg\} \bigg\{ \bigwedge_{\ell=1}^L d\alpha_{\ell} \bigg\} \wedge
\bigg\{ \bigwedge_{m=1}^M dx_m \wedge dy_m \bigg\}. \nonumber
\end{align}

\subsection{The Induced Measure on Eigenvalues}

Given $\bs \lambda \in \C^N$, we define the Vandermonde matrix and
determinant by
\[
\Delta(\bs \lambda) = \det \mathbf V(\bs \lambda) \qquad \mbox{where} 
\qquad 
\mathbf V(\bs \lambda) = \left[ \lambda_n^{m-1} \right]_{m,n=1}^N.
\]
More generally, given a family of monic polynomials $\mathbf p = (p_1,
p_2, \ldots, p_N)$ with $\deg p_n = n-1$ we define the Vandermonde
matrix for the family $\mathbf p$ by
\[
\mathbf V^{\mathbf p}(\bs \lambda) = \left[
p_m(\lambda_n)
\right]_{m,n=1}^N.
\]
We will call such a family of polynomials a {\em complete} set of
monic polynomials.  It is easily seen that 
\[
\det \mathbf V^{\mathbf p}(\bs \lambda) = \det \mathbf V^{\mathbf
  p}(\bs \lambda) = \prod_{m < n} (\lambda_n - \lambda_m) = \Delta(\bs
\lambda).  
\]
It follows that, in the case of $\mathcal N_N(\C)$, the measure on
eigenvalues $\xi_N$ induced by (\ref{eq:13}) is given by
\[
\upd \xi_N( \bs \lambda) = C_N | \Delta(\bs \lambda) |^2 \, \upd\mu_{\C}^N(\bs \lambda),
\]
where $\mu_{\C}^N$ is Lebesgue measure on $\C^N$. 

Similarly, for $\mathcal N_N(\R)$, and the sector of eigenvalues
represented by $\R^L \times \C^M$:  Given $\bs \beta = (\beta_1, \ldots,
\beta_M) \in \C^M$ and $\bs \alpha = (\alpha_1,
\ldots, \alpha_L) \in \R^L$ then the measure on eigenvalues
$\xi_{L,M}$ induced by (\ref{eq:14}) is given by
\[
\upd\xi_{L,M}(\bs \alpha, \bs \beta) = c_N 2^M | \Delta(\bs \alpha, \bs
\beta) | \, \upd\mu_{\R}^L(\bs \alpha) \, \upd\mu_{\C}^M(\bs \beta),
\]
where $\Delta(\bs \upalpha, \bs \upbeta)$ is the
Vandermonde determinant in the variables $\alpha_1, \ldots, \alpha_L$,
$\beta_1, \overline{\beta}_1, \ldots, \beta_M, \overline{\beta}_M$.  

\bibliography{bibliography}

\begin{center}
\noindent\rule{4cm}{.5pt}
\vspace{.25cm}

\noindent {\sc \small Christopher D.~Sinclair}\\
{\small Department of Mathematics, University of Oregon, Eugene OR 97403} \\
email: \href{mailto:csinclai@uoregon.edu}{\tt csinclai@uoregon.edu}
\vspace{.25cm}

\noindent {\sc \small Maxim L.~Yattselev}\\
{\small Department of Mathematical Sciences, Indiana University	 --- Purdue University Indianapolis, 402~North Blackford Street, Indianapolis, IN 46202} \\
email: \href{mailto:maxim@math.iupui.edu}{\tt maxim@math.iupui.edu} 
\end{center}

\end{document}